\newif\ifarxiv\arxivtrue
\newcommand{\dt}[1]{\textbf{\emph{#1}}} 
\newcommand{\sep}{\mathbin{\mid}} 
\newcommand{\Sep}{\mathbin{\:\mid\:}} 
\newcommand{\means}[1]{\llbracket\, #1 \,\rrbracket}
\newcommand{\eqdef}{\mathrel{\,\hat{=}\,}}
\newcommand{\mkt}[1]{\llparenthesis\, #1 \,\rrparenthesis}
\renewcommand{\neg}{\mbox{\relsize{-1}$\lnot$}}
\newcommand{\union}{\cup}
\newcommand{\intersect}{\cap}
\newcommand{\proves}{\vdash}
\newcommand{\aftqua}{.\:}
\newcommand{\all}[2]{\forall #1 \aftqua #2}
\newcommand{\some}[2]{\exists #1 \aftqua #2}
\newcommand{\nat}{\mathbb{N}} 
\newcommand{\Z}{\mathbb{Z}} 
\newcommand{\imp}{\Rightarrow}
\newcommand{\impby}{\Leftarrow}
\renewcommand{\iff}{\Leftrightarrow}
\newcommand{\ok}{\mathconst{ok}}
\newcommand{\okf}{\mathconst{okf}}
\newcommand{\okft}{\mathconst{okft}}
\newcommand{\okg}{\mathconst{okg}}
\newcommand{\enab}{\mathconst{enab}}
\newcommand{\totalIf}{\mathconst{total\mbox{-}if}}
\newcommand{\lab}{\mathconst{lab}}
\newcommand{\labs}{\mathconst{labs}}
\newcommand{\sub}{\mathconst{sub}}
\newcommand{\subg}{\mathconst{subg}}
\newcommand{\fsuc}{\mathconst{fsuc}}
\newcommand{\ghost}{\mathconst{ghost}}
\newcommand{\erase}{\mathconst{erase}}
\newcommand{\erasegc}{\mathconst{erasegc}}
\newcommand{\addPC}{\mathconst{add}^{pc}}
\newcommand{\map}{\mathconst{map}}
\newcommand{\indep}{\mathconst{indep}} 
\newcommand{\nfax}{\mathconst{nfax}} 
\newcommand{\aprod}{\prod}
\newcommand{\dom}{\mathconst{dom}}  
\newcommand{\nbl}{\shortrightarrow} 
\newcommand{\encode}{\tilde} 
\newcommand{\eplus}{{\mbox{\relsize{-4}$\oplus$}}}
\newcommand{\POST}{\mathconst{post}} 
\newcommand{\Var}{\mathconst{Var}} 
\newcommand{\aut}{\mathconst{aut}}
\newcommand{\init}{\mathit{init}}
\newcommand{\fin}{\mathit{fin}}
\newcommand{\trans}{\mapsto} 
\newcommand{\tranSeg}[1]{\stackrel{#1}{\longmapsto}} 
\newcommand{\biTrans}{\Mapsto} 
\newcommand{\biTranSeg}[1]{\stackrel{#1}{\Mapsto}} 
\newcommand{\ctrans}{\rightarrowtriangle} 
\newcommand{\config}[2]{\langle #1,\: #2\rangle}
\newcommand{\kateq}{\mathrel{\simeq}} 
\newcommand{\relKAT}{\mathfrak{L}} 
\newcommand{\anyKAT}{\mathfrak{M}} 
\newcommand{\mathconst}[1]{\mbox{\upshape\textsf{#1}}} 
\newcommand{\rn}[1]{\textsc{\relsize{-1}#1}} 
\newcommand{\update}[3]{[#1\, |\, #2\scol\, #3]} 
\newcommand{\scol}{\mathord{:}} 
\newcommand{\leftex}[1]{ \raisebox{.25ex}{\relsize{-1}$\langle\hspace*{-2.1pt}[$} #1 \raisebox{.25ex}{\relsize{-1}$\langle\hspace*{-2.5pt}]$} }
\newcommand{\rightex}[1]{ \raisebox{.25ex}{\relsize{-1}$[\hspace*{-2.5pt}\rangle$} #1 \raisebox{.25ex}{\relsize{-1}$]\hspace*{-2.1pt}\rangle$} }
\newcommand{\leftF}[1]{\leftex{#1}}
\newcommand{\rightF}[1]{\rightex{#1}}
\newcommand{\bothF}[1]{%
  \raisebox{.25ex}{\relsize{-1}$\langle\hspace*{-2.1pt}[$}
  #1 
  \raisebox{.25ex}{\relsize{-1}$]\hspace*{-2.1pt}\rangle$ } 
}
\newcommand{\hint}[1]{{\qquad\mbox{#1}}}
\newcommand{\keyw}[1]{\ensuremath{\mathsf{#1}}} 
\newcommand{\skipc}{\keyw{skip}}
\newcommand{\lskipc}[1]{\keyw{skip}^{#1}}
\newcommand{\lassg}[3]{#2:=^{#1}#3}
\newcommand{\ifgc}[1]{\keyw{if}\ {#1}\ \keyw{fi}}
\newcommand{\dogc}[1]{\keyw{do}\ {#1}\ \keyw{od}}
\newcommand{\lifgc}[2]{\keyw{if}^{#1}\ {#2}\ \keyw{fi}}
\newcommand{\ldogc}[2]{\keyw{do}^{#1}\ {#2}\ \keyw{od}}
\newcommand{\gcsep}{\talloblong} 
\newcommand{\gcto}{\mathrel{\shortrightarrow}} 
\renewcommand{\P}{\mathcal{P}}  
\renewcommand{\S}{\mathcal{S}} 
\newcommand{\Lrel}{\mathcal{L}} 
\newcommand{\Q}{\mathcal{Q}} 
\newcommand{\R}{\mathcal{R}} 
\newcommand{\T}{\mathcal{T}} 
\newcommand{\fdot}{\mathconst{dot}} 
\newcommand{\subst}[3]{{#1}^{#2}_{#3}}
\definecolor{light-gray}{gray}{0.88}
\definecolor{dark-gray}{gray}{0.25}
\newcommand{\graybox}[1]{\colorbox{light-gray}{#1}} 
\newcommand{\specSym}{\leadsto}
\newcommand{\rspecSym}{\ensuremath{\mathrel{\mbox{\footnotesize$\thickapprox\hspace*{-.4ex}>$}}}}
\newcommand{\spec}[2]{#1\specSym #2} 
\newcommand{\rspec}[2]{#1\rspecSym #2} 
\newcommand{\norm}[3]{#1 \mathbin{/} #2 \hookrightarrow  #3}
\newcommand{\normSmall}[3]{#1 \mathbin{/} #2 \hookrightarrow #3}
\newcommand{\spc}{\mathord{!}}
\newcommand{\tpc}{\mathord{?}}
\newcommand{\quant}[4]{(#1\,#2 \::\: #3 \::\: #4)}
\newenvironment{ditemize}{
\begin{list}{$\bullet$}{%
\setlength{\itemsep}{0pt}\setlength{\rightmargin}{0pt}%
\setlength{\leftmargin}{.6em}\setlength{\parsep}{0ex}}}{
\end{list}}
\renewcommand{\mod}{\mathbin{\mathsf{mod}}}
\newcommand{\KE}{\mathit{KE}} 
\newcommand{\Hyp}{\mathconst{Hyp}} %
\begin{document}

\newtheorem{remark}[theorem]{Remark} 

\ifarxiv
\pagestyle{plain} 
\fi


\title{Making Relational Hoare Logic Alignment Complete}

\author{Anindya Banerjee}
\affiliation{\institution{IMDEA Software Institute}\country{Spain}}
\author{Ramana Nagasamudram}
\author{David A. Naumann}
\affiliation{\institution{Stevens Institute of Technology}\country{USA}}
\authornote{The second and third authors were partially supported by 
NSF grant CNS 1718713.}

\begin{abstract} 
In relational verification, judicious alignment of computational steps facilitates proof of relations between programs using simple relational assertions. Relational Hoare logics (RHL) provide compositional rules that embody various alignments of executions.  Seemingly more flexible alignments can be expressed in terms of product automata based on program transition relations. A single degenerate alignment rule (self-composition), atop a complete Hoare logic, comprises a RHL that is complete in the ordinary logical sense. The notion of alignment completeness was previously proposed as a more satisfactory measure, and some rules were shown to be alignment complete with respect to a few ad hoc forms of alignment automata. This paper proves alignment completeness with respect to a general class of alignment automata, for a RHL comprised of standard rules together with  a rule of semantics-preserving rewrites based on Kleene algebra with tests. Besides solving the open problem of general alignment completeness, this result bridges between human-friendly syntax-based reasoning and automata representations that facilitate automated verification.
\end{abstract}

\maketitle

\emph{\color{red}NOTE: the results in this unpublished document have been incorporated into a more recent document with substantial additional results.
See ``Alignment complete relational Hoare logics for some and all'',
ArXiv 2307.10045.}

\section{Introduction}\label{sec:intro}

A ubiquitous problem in programming is reasoning about relational properties, such as equivalence between two programs in the sense that from any given input they produce the same output.  Relational properties are also of interest for a single program. For example, a basic notion in security is noninterference: any two executions with the same public inputs should result in the same public outputs, even if the secret inputs differ.  

For unary, pre-post properties, Turing~\cite{TuringChecking49}
reasoned using assertions at intermediate points in a program, expressing facts that hold at intermediate steps of computation.
Floyd~\cite{Floyd67} formalized the inductive assertion method as a sound and complete technique for proving correctness of imperative programs.  Hoare~\cite{Hoare69} formalized the method as a logic of correctness judgments about programs, which has come to be called a logic of programs.
Hoare's logic not only formalizes the inductive assertion method in terms of the textual representation of programs (vs.\ control flow graphs/automata in the earlier works), it also
formalizes reasoning that manipulates specifications and combines proofs
(e.g., the rules of consequence and disjunction).

Relational properties involve the alignment of pairs of states in two executions.  At the least,
the initial and final states are aligned: the specification comprises a pre-relation,
meant to constrain the initial pair of states, and a post-relation meant to constrain the pair of final states.  For effective reasoning one usually aligns intermediate points as well.  Alignment of executions is often expressed by alignment of program text ---think of a programmer viewing two versions of a similar program, side by side on screen.
Formally, alignment can be expressed Floyd-style, by a product of control-flow graphs or automata, and it can also be expressed syntactically.
To give an example of the latter, write $\{\R\} c\sep d \{\S\}$ for the judgment that any pair of terminated executions of commands $c,d$ from states related by $\R$ end in states related by $\S$.  If $c$ and $d$ are both sequences, say $c=c_0;c_1$ and $d=d_0;d_1$, 
we may wish to align the intermediate points, so we can decompose the reasoning to judgments $\{\R\} c_0\sep d_0 \{\Q\}$ and $\{\Q\} c_1\sep d_1 \{\S\}$
from which $\{\R\} c_0;c_1\sep d_0;d_1 \{\S\}$ follows by general rule.

A number of researchers have introduced relational Hoare logics with such rules~\cite{Francez83,Benton:popl04,Yang07relsep} and by now there are many works on relational verification using either automata representations or program logics.  
The fundamental way to evaluate a logic is in terms of soundness and completeness ---but
there is a problem with completeness.  Consider this rule:
from $\{\R\} c\sep \skipc \{\Q\}$ and $\{\Q\} \skipc\sep d \{\S\}$
infer $\{\R\} c\sep d \{\S\}$. It embodies a degenerate form of alignment where 
the left program's final states are aligned with initial states of program on the right.
Such an alignment is useful: If $c$ and $d$ act on disjoint variables, the unary program $c;d$ can be used as a ``product program'' representing their paired executions and reducing the problem 
to unary reasoning~\cite{BartheDArgenioRezk}. 
Put differently, the judgments $\{\R\} c\sep \skipc \{\Q\}$ and $\{\Q\} \skipc\sep d \{\S\}$
 can as well be construed as unary judgments
$\{\R\} c \{\Q\}$ and $\{\Q\} d \{\S\}$ and proved using (unary) Hoare logic.
But sequential alignment has high cost in the complexity of assertions needed.  The intermediate $\Q$ may need to give a full functional description of the left 
computation ---even if $c$ and $d$ are identical and with lockstep alignment it suffices to use very simple relations.  
The problem is that the sequential alignment rule, together with unary rules, is complete on its own, in the usual sense: true judgments can be proved
(relative to reasoning about assertions~\cite{Cook78}).  

To solve the problem, the notion of \emph{alignment completeness}~\cite{NagasamudramN21} 
relates provability in program logic with provability using alignment product automata.  
In~\cite{NagasamudramN21}, several rules of relational Hoare logic are justified by completeness with respect to particular classes of alignment automata, such as lockstep products.  
Alignment using automata has been found effective in practice.  Automata can express data-dependent alignments in addition to those which are described purely in term of control structure.  Here is one formulation: an alignment product is a transition system based on the transition systems of the programs being related, together with conditions $L$, $R$, and $J$ on state-pairs, that designate whether the product should take a step on the left only, or the right, or a joint step on both sides.  
\emph{The main contribution of this paper is a relational Hoare logic and proof that it is alignment complete with respect to general alignment products.}

For data-dependent alignment, the crucial inference rule is Beringer's conditionally aligned loop rule~\cite{Beringer11},
which uses assertions to express conditions under which one considers left-only (or right-only or joint) iteration of the loop bodies.  But with product automata, one may have alignment of arbitrary subprogram steps.  Such considerations led researchers to propose various additional inference rules such as one that aligns a sequence on the left with a loop on the right~\cite{Francez83}, but this path seems fraught.  
A key insight is to use pure control-structure rewrites, such as loop unrolling, to manipulate the programs under consideration to be more similar syntactically, 
thereby broadening the applicability of same-structure rules~\cite{BartheGHS17,BNN16}.

Our relational Hoare logic relies on entailment of (relational) assertions but it also relies on another judgment: command equivalence.  
Equivalences such as loop unrolling reflect the meaning of control structures,
independent of the meaning of conditional expressions and assignments.
Kleene algebra with tests~\cite{Kozen97} (KAT) is a robust formalism that treats such equivalences
propositionally (in the sense of propositional logic).
Our logic uses KAT as means to prove command equivalence, rather than 
relying on custom rules or an oracle for semantic equivalence.

\subsection{Outline and contributions}

Sect.~\ref{sec:related} touches on additional related work.
Sect.~\ref{sec:logics} spells out the standard imperative language and semantics with which we work.  To expose the main ideas we use partial correctness semantics of imperative programs over integer variables.  We use Dijkstra's guarded command language~\cite{AptOld3} which 
facilitates succinct expression of a normal form central to our approach.   
This section gives the rules of unary and relational Hoare logic including the crucial rewrite rules that allow commands to be replaced by equivalent ones.
These rules are used to derive some one-sided rules (i.e., relate with skip).

Sect.~\ref{sec:aut} spells out the representation of programs by automata,
i.e., transition systems with explicit control points, and the inductive assertion method which is based on annotating control points.  It also formalizes alignment automata and shows they are sound and semantically complete for proving relational judgments.  A key ingredient for soundness is the notion of \emph{adequacy} for an alignment automaton, which means that the automaton covers all pairs of executions of the programs to be aligned.
Adequacy is easy to achieve, taking the alignment conditions $L,R,J$ to all be true, but stronger conditions are needed for alignment to facilitate reasoning with simple relational assertions. 

Sect.~\ref{sec:KATnf} is devoted to a key technical result which is our 
\underline{first contribution:} using KAT, any command can be rewritten to an equivalent command in \emph{automaton normal form}, a form that mimics a fetch-execute instruction loop using an explicit program counter variable.  

Sect.~\ref{sec:autUnary} uses automaton normal form for unary reasoning, 
making our \underline{second contribution:} a Floyd completeness theorem
which says any proof using the inductive assertion method can be 
represented in Hoare logic using essentially the same assertions.  
This has conceptual interest because it connects different reasoning methods
(and we speculate that it may have value as basis for end-to-end foundational verification with tools combining the methods). 
A similar result was already proved, by a different technique, in~\cite{NagasamudramN21}.
In both cases the proofs bring to light techniques which are then generalized to the
relational case.

Sect.~\ref{sec:acomplete} is the main result and 
\underline{third contribution:} the proof that our relational Hoare logic, dubbed RHL+, is alignment complete.  
That is, any proof of a relational judgment using an alignment automaton 
can be represented in RHL+ using essentially the same assertions
and the same left/right/joint alignment conditions.  
This relies on automaton normal form for the programs being related,
in a way similar to what is done in Sect.~\ref{sec:autUnary}.
Adequacy of the alignment automaton corresponds in an elegant way
to the side condition of the conditionally aligned loop rule.
In unary Hoare logic, the disjunction rule enables reasoning by cases but it is not necessary for completeness.  The relational disjunction rule plays an important role in our completeness result.  

Sect.~\ref{sec:cook} revisits Cook completeness in light of our results.
Completeness of Hoare logic is a direct consequence of 
the theorem of Sect.~\ref{sec:autUnary} together with completeness of the inductive assertion method.  
Completeness of RHL+ is similarly a consequence of the theorem of 
Sect.~\ref{sec:acomplete} together with completeness of adequate alignment automata.
By contrast with prior Cook completeness results for relational Hoare logics,
which rely on the sequential product rule and a complete unary Hoare logic,
RHL+ makes no use of unary judgments.

Sect.~\ref{sec:discuss} concludes with remarks on how the results can be generalized to richer programming languages.  

Sect.~\ref{sec:examples} comprises three worked examples illustrating the 
ideas used in Sects.~\ref{sec:autUnary} and~\ref{sec:acomplete}.
The first example is a unary verification, done by rewriting the program to automaton normal form as in the general proof in Sect.~\ref{sec:autUnary}.
The second example is a relational verification using lockstep alignment,
and the third is a relational verification using alignment conditioned on data.
These examples are meant to introduce the reader to the key ingredients of the later proofs, and as motivation for some technical details.

\subsection{Related work}\label{sec:related}

Aside from Nagasamudram and Naumann~\cite{NagasamudramN21} we are not aware of any
other alignment completeness results.   
Their alignment completeness results are all for special forms 
of alignment product that are contrived to closely reflect certain proof rules.
In passing they describe general alignment automata.

Cook completeness has been proved for several relational Hoare logics (e.g.,~\cite{SousaD2016,BartheGHS17}),
using in each case the completeness of a single proof rule (sequential product or self composition) that reduces a relational judgment to a unary one, together with a complete Hoare logic.
As we observe in Sect.~\ref{sec:discuss}, we obtain Cook completeness as a corollary of alignment completeness and without recourse to a unary logic.

We are not aware of unary Hoare logics that feature a rewriting rule,
except those in connection with relational logic;
but verification tools often use correctness-preserving rewriting and translation.
The relational Hoare logics of Barthe et al~\cite{BartheGHS17} and Banerjee et al~\cite{BNN16} feature a rewriting rule.
In both of these works, a custom set of rules is provided for command equivalence, 
such as loop unrolling.  The equivalence judgment of~\cite{BartheGHS17} includes a relational precondition whereas that of~\cite{BNN16} is unconditional.  
Both works use rewriting to broaden the applicability of a small set of syntax-directed rules, for specific verifications and in the case of~\cite{BartheGHS17} also to derive additional rules 
(along the lines \rn{AlgnIf} and \rn{IfSkip} in Sect.~\ref{sec:derived}).  
Rewriting by command equivalence is combined with relational reasoning 
in an extension of KAT called BiKAT~\cite{AntonopoulosEtal2023}.

The rule of conditionally aligned loops appears first in Beringer~\cite{Beringer11};
variations appear in Barthe et al~\cite{BartheGHS17},
in the full version of Banerjee et al~\cite{BNN16},
and in~\cite{BNNN19}.
Several published logics only support lockstep alignment of loops,
or lockstep until one terminates.

What we call relational property is a kind of 2-safety~\cite{TerauchiA2005}.
Cartesian Hoare logic~\cite{SousaD2016,PickFG18} reasons about $k$-safety, for $k$ that is fixed throughout a proof.
D’Osualdo et al~\cite{DosualdoFD2022} develop a logic for $k$-safety that features rules for combining judgments with varying $k$;
they have a completeness result based essentially on the degenerate sequential alignment
as explained in~\cite[appendix C]{DosualdoFD2022}.
The system of D'Osualdo et al includes a rewriting rule based on semantic refinement 
(i.e., refinement is not formalized in a deductive system but rather must be proved
in the metalogic).  

With regard to termination, unary Hoare logics are for total or partial correctness.  
The $\forall\forall$ relational judgment considered in this paper is like partial correctness.
Benton's~\cite{Benton:popl04}  work introducing the term relational Hoare logic uses co-termination: either both terminate in related states or neither terminates.
Another form is relative termination~\cite{hawblitzelklr13} which says if the left run terminates then so does the right.  This form has the advantage that relational judgments compose transitively.  
Rinard and Marinov~\cite{RinardMarinov99,RinardCredible99} give a logical formulation of verification conditions for a $\forall\exists$ relation, to prove correctness of compiler transformations acting on control flow graphs.
Antonopoulos et al~\cite{AntonopoulosEtal2023} call this form forward simulation,
and provide deductive rules for both forward simulation and backward simulation
in a KAT-based algebraic framework.

Kozen~\cite{Kozen97} uses KAT to prove that every command is simulated by one with a single loop,
effectively using additional primitives to snapshot the values of branch conditions.
Our normal form is like that of Hoare and Sampaio~\cite{HoareHS93} which uses an
explicit program counter variable.
They prove every program can be reduced to normal form using a set of refinement laws, demonstrating a kind of completeness of the laws.
They use data refinement to justify the introduction of the program counter, which is akin to our use of the ghost elimination rule to enable using the normal form to prove correctness of the original program.

Results like our adequacy Proposition~\ref{prop:adequate-sound}
have been proved for several notions of alignment product that are similar to ours.
(The term fairness is sometimes used for what we call adequacy.)
Our $L,R,J$ corresponds to the ``alignment predicate'' in Churchill et al~\cite{ChurchillP0A19},
the ``composition function'' in Shemer et al~\cite{ShemerGSV19},
and ``scheduler'' in Unno et al~\cite{UnnoTerauchiKoskinen21}.
These works focus on automated search for good alignments and annotations 
using solvers for restricted assertion languages.

The survey of Hoare logic by Apt and Olderog~\cite{AptO19} explains its origins in
the work of Turing~\cite{TuringChecking49} and Floyd~\cite{Floyd67}.
Beckert and Ulbrich survey relational verification~\cite{BeckertU18}
and some notions of relational Hoare logic are sketched in~\cite{NaumannISOLA20}.
Another survey of relational logic can be found in the related work section of Banerjee et al~\cite{BNNN19}.
Kov{\'a}cs et al~\cite{KovacsSF13} introduce the term alignment
and make very clear that alignment is ultimately about executions not syntax.
Almost all works on relational verification involve alignment in some form;
an exception is the Trace logic of Barthe et al~\cite{BartheEGGKM2019} which 
allows to express relations between various intermediate steps without
distinguishing any particular alignment.

To streamline the technical development and focus on what's interesting
we use shallow embedding of assertions as is popular in work on program logics
(e.g.,~\cite{NipkowCSL02,OHearn2019,Pierce:SF2}).
This sidesteps the need to formalize expressiveness of the assertion language or 
to formulate completeness as being relative to provability of entailments~\cite{Cook78,AptOld3}.



\section{The programming language and its relational Hoare logic}\label{sec:logics}

\subsection{The language and its (unary) Hoare logic}\label{sec:unary}

Floyd's formulation of the inductive assertion method is based on annotation of control points.  In our development, we use program syntax with labelled control points,
and automata using explicit control points.
The labelled guarded command syntax (\dt{GCL}) is defined as follows, where $x$ ranges over integer variables, $e$ ranges over integer and boolean expressions, $n$ ranges over integer literals.
\[ \begin{array}[t]{lcl}
c & ::= & \lskipc{n} \:\mid\: \lassg{n}{x}{e} \:\mid\: c;c 
       \:\mid\: \lifgc{n}{gcs} \:\mid\: \ldogc{n}{gcs} \\
gcs & ::= & e \gcto c \:\mid\: e \gcto c \gcsep gcs 
\end{array} \]
We omit details about expressions except to note that boolean expressions are given from some primitives $bprim$ and the logical operators are written $\land,\lor,\neg$.
A \dt{guarded command} has the form $e \gcto c$ where $e$ is a boolean expression.
The category $gcs$ is essentially non-empty lists of guarded commands and we sometimes treat it as such.

This somewhat silly command $c0$ will be a running example.  
Here $\mod$ means remainder.
\[ \graybox{c0:} \quad
        \lassg{1}{x}{y} ;
        \keyw{do}^2 x > 0 \gcto 
                            \begin{array}[t]{l}
                            \keyw{if}^3 \: x\mod 2 = 0 \gcto \lassg{4}{x}{x-1} \\
                            \gcsep \: x\mod 2 \neq 0 \gcto \lassg{5}{x}{x-2} 
                             ~ \keyw{fi} ~ \keyw{od}
                             \end{array} 
\]

A \dt{variable store} is a mapping from program variables to their values.
Later we consider automata, with arbitrary sets of stores that need not be variable stores.
But in this section we say simply ``store'', meaning variable store.

We assume expressions $e$ are always defined.
We write $\means{e}(s)$ for the value of expression $e$ in store $s$.  
In case $e$ is a boolean expression, the value is in $\{true,\mathit{false}\}$; otherwise 
it is in $\Z$.
For boolean expression $e$, we sometimes treat $\means{e}$ as a set of states 
as opposed to the characteristic function thereof.

To define partial correctness, we use a standard big-step semantics.
We write $\means{c}\, s\, t$ to express that from initial store $s$ the command $c$ can terminate with final store $t$.  
(Omitted definitions can be found in the appendix and in the Coq development.)
We call $\means{c}$ the \dt{denotation} of $c$.

Later we define a predicate, $\ok$, on commands that says their labels are unique and positive.  Labels play an important role in some results, for which the $\ok$ condition is needed. But many definitions and results do not involve labels or require $\ok$; for those definitions and results we omit labels, meaning that any labels are allowed.

For any $gcs$, define $\enab(gcs)$ as the disjunction of the guards.
For example, $\enab(x>0\gcto y:=1 \gcsep z < 1\gcto y:=2)$ is $x>0 \lor z<1$.

The standard semantics for guarded commands considers $\lifgc{}{gcs}$ to fail if none of its guards is enabled~\cite{AptOld3}.
Modeling failure would clutter the semantics without shedding any light on our main result.
So we disallow such ifs, as follows.

\begin{definition}[programming language]\label{def:lang}
A \dt{well formed command} is a command $c$ according to the grammar, such that
\begin{itemize}
\item $c$ is typable in the sense that guards are boolean expressions and both integer and boolean operators are used sensibly, considering that all variables have type int.
\item $c$ satisfies the \dt{$\totalIf$} condition which says:
for every subprogram $\lifgc{}{gcs}$ of $c$, 
$\enab(gcs)$ is semantically equivalent to true.  That is, $\means{\enab(gcs)}=\means{true}$
\end{itemize}
In the sequel we say \dt{command} to mean well formed command.
We refer to the language of well formed commands as the \dt{guarded command language}, (\dt{GCL}).
\end{definition}

We choose this semantic formulation in order to highlight the minimal requirements needed 
for the alignment completeness theorem.  In a logic without a rewrite rule, the $\totalIf$ condition can simply be enforced by side condition in the rule for if-commands:
$precond\imp\enab(gcs)$ (see~\cite[rule 32]{AptOld3}).
This is insufficient in our logics, because if-commands can be rewritten to other forms.

\begin{remark}\label{rem:totIfSyn}
\upshape
Here is a syntactic way to ensure the semantic condition $\totalIf$
on $\lifgc{}{gcs}$.
Require that for each $e\gcto c$ in $gcs$,
there is $e^-\gcto d$ in $gcs$ for some $d$, where $e^-$ 
is the syntactic complement of $e$.
That is, $(\neg e)^- \eqdef e$ and if $e$ does not have outermost negation
then $e^- \eqdef \neg e$.
This subsumes the usual if-else:\footnote{Note 
  that a single if-else rules out commands like
  $\lifgc{}{x\geq 0 \gcto y:=0 \gcsep x \leq 0 \gcto y:=1}$
  which satisfy $\totalIf$ but are not deterministic.
  But we can write it as 
  $\lifgc{}{ x\geq 0 \gcto y:=0 \gcsep x \ngeq 0 \gcto y:=1 \gcsep
             x > 0 \gcto y:=0 \gcsep x \not > 0 \gcto y:=1 }$.}
$\lifgc{}{e\gcto c \gcsep \neg e\gcto d}$.
\qed\end{remark}

\paragraph{Program logic} 

The partial correctness judgment is written $c:\spec{P}{Q}$ rather than the conventional $\{P\}c\{Q\}$.
We treat predicates as shallowly embedded in the metalanguage,
so $P$ and $Q$ are sets of stores.
But we use formula notation for clarity, e.g., $P\land Q$ means their intersection.
A boolean expression $e$ in a formula stands for $\{s\mid\means{e}(s)=true\}$,
(which we sometimes write as $\means{e}$).
We often write  $P\imp Q$ to mean valid implication, i.e., $P\subseteq Q$.
For a set $P$ of variable stores, we use substitution notation $\subst{P}{x}{e}$
with standard meaning: $s\in\subst{P}{x}{e}$ iff $\update{s}{x}{\means{e}(s)}\in P$.
Finally, existential quantification is defined by
$\some{x}{P} \eqdef \{ s \mid \some{v}{\update{s}{x}{v}\in P}\}$.
To express that a formula is independent of a variable,
we define \graybox{$\indep(x,P)$} to mean that $P=\some{x}{P}$.

We use $\models$ to indicate \dt{valid} correctness judgment, defined as follows:
\begin{equation}\label{eq:valid}
\graybox{$\models c: \spec{P}{Q}$}
\quad\eqdef\quad
\mbox{For all $s,t$ such that 
$\means{c}\, s\, t$,
if $s\in P$ then $t\in Q$.} 
\end{equation}

\begin{figure}
\begin{mathpar}

\mprset{sep=1.3em}

\inferrule[Rewrite]{
    c: \spec{P}{Q} \\ c\kateq d
}{
    d: \spec{P}{Q} }

\inferrule[Ghost]{
    c: \spec{P}{Q} \\ \ghost(x,c) \\ \indep(x,P)\\ \indep(x,Q) 
}{
    \erase(x,c) : \spec{P}{Q}
}

\inferrule[Do]{
   c: \spec{e\land P}{P} \mbox{ for every $e\gcto c$ in $gcs$}
}{
    \dogc{gcs}: \spec{P}{P\land \neg \enab(gcs) }
}

\inferrule[Asgn]{}{ \lassg{}{x}{e} : \spec{\subst{P}{x}{e}}{P} }

\inferrule[Skip]{}{ 
\skipc:\spec{P}{P} 
}

\inferrule[Seq]
{
c:\spec{P}{R} \\ d:\spec{R}{Q}
}{
c;d : \spec{P}{Q} 
}

\inferrule[If]
{
c: \spec{e\land P}{Q} \mbox{ for every $e\gcto c$ in $gcs$}
}{
    \ifgc{gcs}: \spec{P}{Q}
}

\inferrule[Conseq]
{
P\imp R \\ 
c : \spec{R}{S} \\
S\imp Q
}{
c : \spec{P}{Q} 
}

\inferrule[False]{}{ 
c:\spec{\mathit{false}}{P} 
}

\end{mathpar}
\caption{
Proof rules of HL+
}\label{fig:HLplus}
\end{figure}
Figure \ref{fig:HLplus} gives rules of the proof system we call \dt{HL+}.
It comprises the standard rules of Hoare logic (denoted \dt{HL}) plus a rule for elimination of ghost variables~\cite{FilliatreGP16} and a rule for rewriting programs.  

In a system where assertions are given by formulas, rule \rn{Ghost}
would require that variable $x$ is not in the free variables of $P$ or $Q$.
For our shallow embedding using sets of stores this is expressed by $\indep(x,P)$.
We define $\ghost(x,c)$  to mean that variable $x$ occurs in $c$ only in assignments to $x$.  
Define $\erase(x,c)$ to be $c$ with every assignment to $x$ replaced by skip.

Because we are using shallow embedding for assertions, HL+ does not need to be accompanied by a formal system for proving entailments between assertions.
But the logic also uses command equality $c\kateq d$, in rule \rn{Rewrite}.  
For this we rely on Kleene algebra with tests (\dt{KAT}) as formalized 
in Sect.~\ref{sec:KATnf}.
For soundness of rule \rn{rewrite} we just need that the relation $\kateq$ implies equal denotations, i.e., $c\kateq d$ implies $\means{c} = \means{d}$.
The formal statement of soundness for HL+ is Prop.~\ref{prop:HLsound} in Sect.~\ref{sec:KATequiv} where we define $\kateq$.


\subsection{Relational specs and proof rules}

For relational pre- and post-conditions we use (binary) relations on stores,
by shallow embedding just like for unary predicates.  
To express that a unary predicate holds in the left store of a pair,
we write \graybox{$\leftF{P}$} for the set of pairs $(s,t)$ where $s\in P$.
Combined with our treatment of boolean expressions $e$ as predicates,
this means $\leftF{e}$ says that $e$ is true in the left store.
Similarly, $\rightF{P}$ is the set of $(s,t)$ where $t\in P$.
Notation: 
\graybox{$\bothF{P\sep Q}$} abbreviates  $\leftF{P}\land\rightF{Q}$,
and \graybox{$\bothF{P}$} abbreviates  $\bothF{P\sep P}$.
We also write $\leftex{e}=\rightex{e'}$ for the set of $(s,t)$ where
$\means{e}(s) = \means{e'}(t)$.

For relation $\R$ on stores we write 
$\subst{\R}{x|}{e|}$ for substitution of $e$ for $x$ in the left state.
Similarly $\subst{\R}{|x}{|e}$ substitutes on the right
and $\subst{\R}{x|x'}{e|e'}$ does both.
The definitions are a straightforward generalization of unary substitution,
e.g., $(s,t)\in \subst{\R}{x|x'}{e|e'}$ 
iff $( \update{s}{x}{v}, \update{t}{x'}{e'})\in \R$.
We write $\some{x|x'}{\R}$ for quantification over $x$ on the left side and $x'$ on the right, again generalizing the unary version.
We need this form to define \graybox{$\indep(x|x',\R)$} which says $\R$ is independent from $x$ on the left and $x'$ on the right:  $\indep(x|x',\R)$ iff  $\R = \some{x|x'}{\R}$.

For store relations $\R,\S$,
a relational spec is written $\rspec{\R}{\S}$.
Relational correctness for $c$ and $c'$ is written $c\sep c': \rspec{\R}{\S}$
(as opposed to $\{\R\} c\sep c' \{\S\}$ that we use in Sect.~\ref{sec:intro}).
We use $\models$ to indicate valid judgment,
and define \graybox{$\models c\sep c': \rspec{\R}{\S}$} $\eqdef$
\[ 
\mbox{For all $s,s',t,t'$ such that 
$\means{c}\, s\, t$ and $\means{c'}\, s'\, t'$,
if $(s,s')\in \R$ then $(t,t')\in \S$.}
\]

\begin{remark}\upshape
Guards in guarded commands need not be mutually exclusive, so commands may be nondeterministic.  However, with interesting relations as postconditions,
the above $\forall\forall$ notion of correctness is only validated by programs 
that are sufficiently deterministic.  Indeed, a spec like $\rspec{\leftex{in}=\rightex{in}}{\leftex{out}=\rightex{out}}$ expresses determinacy.
\qed\end{remark}

\begin{figure*}[t]
\begin{small}
\begin{mathpar}

\inferrule[rRewrite]{ c\sep c': \rspec{\R}{\S} \\ c\kateq d \\  c'\kateq d' }
{ d\sep d': \rspec{\R}{\S}  }

\inferrule[rGhost]{
c\sep c': \rspec{\R}{\S} \\ \ghost(x,c) \\ \ghost(x',c') \\ 
     \indep(x|x',\R) \\ \indep(x|x',\S) }
{   \erase(x,c)\sep \erase(x',c') : \rspec{\R}{\S}  }

\inferrule[dIf]{
  c\sep c' : \rspec{\R\land \leftF{e}\land\rightF{e'}}{\S} 
\quad\mbox{for all $e\gcto c$ in $gcs$ and $e'\gcto c'$ in $gcs'$}
}{
  \ifgc{gcs} \Sep \ifgc{gcs'} : 
     \rspec{\R}{\S}
}

\inferrule[dAsgn]{}{
x:=e \sep x':=e' : \rspec{\subst{\R}{x|x'}{e|e'}}{\R}
}

\inferrule[AsgnSkip]{}{
x:=e \sep \skipc  : \rspec{\subst{\R}{x|}{e|}}{\R}
}

\inferrule[SkipAsgn]{}{
\skipc\sep x:=e : \rspec{\subst{\R}{|x}{|e}}{\R}
}

\inferrule[dSkip]{}{
\skipc \sep \skipc : \rspec{\R}{\R}
}

\inferrule[dSeq]{
  c\sep c' : \rspec{\R}{\S} \\
  d\sep d' : \rspec{\S}{\T}
}{
  c;d \Sep c';d' : \rspec{\R}{\T}
}


\inferrule[dDo]{
  {\begin{array}{l}
  c\sep \skipc : \rspec{\Q\land \leftF{e}\land\Lrel }{\Q}
\quad\mbox{for all $e\gcto c$ in $gcs$} 
\\
  \skipc\sep c' : \rspec{\Q\land \rightF{e'}\land\R }{\Q}
\quad\mbox{for all $e'\gcto c'$ in $gcs'$} 
\\
  c\sep c' : \rspec{\Q\land \leftF{e}\land\rightF{e'} \land \neg\Lrel \land \neg\R}{\Q}
\quad\mbox{for all $e\gcto c$ in $gcs$ and $e'\gcto c'$ in $gcs'$} 
\\
\Q\imp (\leftex{\enab(gcs)} = \rightex{\enab(gcs')}  
          \lor (\Lrel \land \leftF{\enab(gcs)})
          \lor (\R \land \rightF{\enab(gcs')}))
  \end{array}}
}{ 
  \dogc{gcs} \Sep \dogc{gcs'} : \rspec{\Q}{\Q\land \neg\leftF{\enab(gcs)}\land\neg\rightF{\enab(gcs')}}
}

\inferrule[rConseq]{
  \P\imp \R \\
  c\sep d : \rspec{\R}{\S} \\
  \S \imp \Q 
}{
  c\sep d : \rspec{\P}{\Q} \\
}

\inferrule[rDisj]{
  c\sep d : \rspec{\Q}{\S} \\
  c\sep d : \rspec{\R}{\S} \\
}{
  c\sep d : \rspec{\Q\lor\R}{\S} \\
}

\inferrule[rFalse]{}{
  c\sep d : \rspec{\mathit{false}}{\R} 
}

\end{mathpar}
\end{small}
\vspace*{-1ex}
\caption{The rules of RHL+}
\label{fig:RHL}
\end{figure*}

Fig.~\ref{fig:RHL} gives the proof rules for relational judgments, including ``diagonal'' rules 
like \rn{dIf} that relate two commands of the same kind,
and one-sided rules \rn{AsgnSkip} and \rn{SkipAsgn}.  
We are not aware of a relational Hoare logic that has been formulated for guarded commands,
but the rules are straightforward adaptations of rules found in prior work,
and they have been proved sound (see Prop.~\ref{prop:RHLsound}).
Rule \rn{rGhost} is a straightforward adaptation of the unary rule which is well known.

\subsection{Derived rules}\label{sec:derived}

Our selection of rules is not minimal for alignment completeness---\rn{rIf} can be omitted!
However, we are not arguing that in practice one should make maximal use of \rn{rRewrite} as
we will do to prove alignment completeness.  In practice, a deductive proof should follow the original program structure to the extent possible, and also make use of derived rules like this one sometimes presented as a basic rule in relational Hoare logics (e.g.,~\cite{Benton:popl04,Yang07relsep}):
\[\inferrule[AlgnIf]{
  c \sep c' : \rspec{\R\land \leftF{e}}{\S} \\
  d \sep d' : \rspec{\R\land\leftF{\neg e}}{\S} \\ 
  \R\imp \leftF{e}=\rightF{e'} 
}{
  \lifgc{}{e\gcto c \gcsep \neg e\gcto d} \sep 
  \lifgc{}{e'\gcto c' \gcsep \neg e'\gcto d'} :\rspec{\R}{\S} 
}\]
It is easily derived using \rn{dIf} and \rn{rFalse}.  So is the similar rule for lockstep alignment of loops.

The one-sided rule \rn{SeqSkip} is derivable using \rn{dSeq} and \rn{rRewrite} with the equivalence $\skipc;\skipc\kateq\skipc$.
\[\inferrule[SeqSkip]{
  c \sep\skipc : \rspec{\R}{\Q} \\
  d \sep \skipc : \rspec{\Q}{\S} 
}{
  c;d \sep \skipc : \rspec{\R}{\S}
}\]
By repeated use of \rn{rDisj} and \rn{rConseq} one can derive
\[
\inferrule[rDisjN]{
  c\sep d : \rspec{\Q_i}{\S} \mbox{ for all $i\in X$} \\
 \mbox{$X$ is a finite set}
}{
  c\sep d : \rspec{\quant{\lor}{i}{i\in X}{\Q_i}}{\R} \\
}
\]

For our alignment completeness result we do not need other one-sided rules.
However, one can derive general rules \rn{IfSkip} and \rn{DoSkip} (and right-side rules as well),
obtaining a left-side rule for each construct in the language, and 
\emph{mutatis mutandis} for the right.  
Here is the left-side rule for if:
\[\inferrule[IfSkip]{
  c\sep\skipc : \rspec{\R\land \leftF{e}}{\S} 
\quad\mbox{for all $e\gcto c$ in $gcs$}
}{
  \lifgc{}{gcs} \Sep \skipc : \rspec{\R}{\S}
}
\]
To derive it, first use \rn{rIf} to derive
$ \lifgc{}{gcs} \Sep 
  \lifgc{}{true \gcto \skipc} : \rspec{\R}{\S}
$, 
Now apply \rn{rRewrite} using the equivalence
 $\lifgc{}{true \gcto \skipc} \kateq \skipc$ (and reflexivity of $\kateq$).

Here is a one-sided loop rule:
\[\inferrule{ 
  c\sep \skipc : \rspec{\Q\land \leftF{e}}{\Q} \quad \mbox{for all $e\gcto c$ in $gcs$}
}{
  \ldogc{}{gcs} \sep \skipc : \rspec{\Q}{\Q\land\neg\leftF{\enab(gcs)}}
}
\]
We can get the conclusion by \rn{rRewrite}, using the equivalence $\ldogc{}{\mathit{false}\gcto\lskipc{}} \kateq \lskipc{}$,
from 
\[  \ldogc{}{gcs} \sep \ldogc{}{\mathit{false}\gcto\lskipc{}} : \rspec{\Q}{\Q\land\neg\leftF{\enab(gcs)}} \]
The latter follows by \rn{rConseq} from 
\begin{equation}\label{eq:X}
 \ldogc{}{gcs} \sep \ldogc{}{\mathit{false}\gcto\lskipc{}} : \rspec{\Q}{\Q\land\neg\leftF{\enab(gcs)}\land\neg\rightF{\mathit{false}}} 
 \end{equation}
(Note that $\enab(\mathit{false}\gcto\lskipc{})$ is $\mathit{false}$.) 
To prove (\ref{eq:X}) we instantiate \rn{dDo} with $\Lrel:=true$ and $\R:=true$.
The side condition holds: it is equivalent to 
$\Q \imp (\leftex{\enab(gcs)} = \rightex{\mathit{false}}) \lor  \leftF{\enab(gcs)}$,
which amounts to the tautology $\Q\imp \neg\leftF{\enab(gcs)} \lor  \leftF{\enab(gcs)}$.
The joint premises have the form 
\[ c\sep \skipc : \rspec{\Q\land \leftF{e}\land\rightF{\mathit{false}}\land\neg true\land\neg true}{\Q} \]
for $e\gcto c$ in $gcs$; these can be derived using \rn{rFalse} and \rn{rConseq}.
Similarly for the right-only premises, which have precondition $\rightF{\mathit{false}}$.
The left-only premises have the form 
\[ c\sep \skipc : \rspec{\Q\land \leftF{e}\land true}{\Q} \]
which follow by \rn{rConseq} from the corresponding premise 
$c\sep \skipc : \rspec{\Q\land \leftF{e}}{\Q}$ of the one-sided loop rule.

\section{Alignment automata, adequacy, and verification conditions}\label{sec:aut}

\subsection{Automata, alignment automata and adequacy}

We adapt a number of technical definitions from~\cite{NagasamudramN21},
where automata are formulated in a way that can represent program semantics using, in essence, a finite control flow graph.

\begin{definition}\label{def:automaton}
An \dt{automaton} is a tuple 
\[ (Ctrl,Sto,\init,\fin,\trans) \]
where $Sto$ is a set (called the data stores),
$Ctrl$ is a finite set (the control points)
that contains distinct elements $\init$ and $\fin$,
and ${\trans} \subseteq (Ctrl\times Sto)\times(Ctrl\times Sto)$ is the transition relation.
We require 
$(n,s)\trans (m,t)$ to imply $n\neq \fin$ and $n\neq m$ 
and call these the \dt{finality} and \dt{non-stuttering} conditions respectively.
Absence of stuttering loses no generality and facilitates definitions involving product automata.
A \dt{state} of the automaton is an element of $Ctrl\times Sto$.
\end{definition}
Let $s,t$ range over stores and $n,m$ over control points.

We use the term ``alignment product'' informally, in reference to various constructions in the literature.
We define a particular construction that we call alignment automaton.

\begin{definition}\label{def:alignProd}
\begin{sloppypar}
Let $A = (Ctrl,Sto,\init,\fin,\trans)$ and 
$A' = (Ctrl',Sto',\init',\fin',\trans')$ be automata.
Let $L$, $R$, and $J$ be subsets of $(Ctrl\times Ctrl')\times (Sto\times Sto')$.
The \dt{alignment automaton} \graybox{$\aprod(A,A',L,R,J)$} is 
the automaton 
\end{sloppypar}
\[ ((Ctrl\times Ctrl'), (Sto\times Sto'), (\init,\init'), (\fin,\fin'), \biTrans) \]
where $\biTrans$ is defined by:
$((n,n'),(s,s')) \biTrans ((m,m'),(t,t'))$ iff one of these conditions holds:
\begin{list}{}{}
\item[LO:]\quad
$((n,n'),(s,s'))\in L$ and
$(n,s)\trans(m,t)$ and $(n',s')=(m',t')$
\item[RO:]\quad
$((n,n'),(s,s'))\in R$ and 
$(n,s)=(m,t)$ and $(n',s')\trans'(m',t')$
\item[JO:]\quad
$((n,n'),(s,s'))\in J$ 
and $(n,s)\trans(m,t)$ and $(n',s')\trans'(m',t')$
\end{list}
The names for the cases refer to \underline{l}eft-\underline{o}nly, 
\underline{r}ight-\underline{o}nly, and \underline{jo}int steps.
The sets $L$, $R$, and $J$ are called \dt{alignment conditions}.
\end{definition}

Notice that the states of $\aprod(A,A',L,R,J)$ are 
$((Ctrl\times Ctrl')\times (Sto\times Sto'))$
according to Def.~\ref{def:automaton}.
So $L$, $R$, and $J$ are sets of alignment automaton states.

We sometimes write $[n|n']$ for the set of states where control is at $(n,n')$, i.e.,
\[ \graybox{$[n|n']$} \eqdef \{ ((i,i'),(s,s')) \mid n=i \land n'=i'\} \]
For example $[\fin|\fin']$ is the set of terminated states.
Similarly, let $\graybox{$[n|*]$} \eqdef \{ ((i,i'),(s,s')) \mid n=i \}$.

\begin{example}\label{eg:alignaut}
Taking $L$ and $R$ to be false and $J$ true, we obtain an automaton that 
runs $A$ and $A'$ in lockstep.
Taking $L,R,J$ all true, we obtain a nondeterministic alignment automaton that represents very many alignments.  Taking $L,R,J$ all false, we obtain an alignment automaton that represents no alignments whatsoever.

Taking $J$ to be false, $L$ to be $[*|\init']$, and $R$ to be $[\fin|*]$, we obtain an automaton that runs only $A$, unless it terminates, in which case it proceeds to run $A'$.
\qed\end{example}

\begin{remark}
\upshape
One generalization of Def.~\ref{def:alignProd} changes ``iff'' to ``only if'', i.e.,
transitions via $\biTrans$ must satisfy LO, RO, or JO, but it is not required to have all such transitions.  For example, according to Def.~\ref{def:alignProd},
if the left automaton $A$ has two successors from $(n,s)$, 
and $((n,m),(s,t))$ is in $L$, then there are two transitions 
from $((n,m),(s,t))$ via $\biTrans$.  The generalization would allow one of the transitions to be omitted.  We eschew it because it requires a more complicated treatment of adequacy
that does not shed light on our main result.
Another generalization of Def.~\ref{def:alignProd} replaces $(Ctrl\times Ctrl')$
by an arbitrary finite set equipped with a projection to $(Ctrl\times Ctrl')$.
This is the formulation in~\cite{NagasamudramN21}; its motivation is handling fault-avoiding partial correctness, which again we avoid for the sake of simplicity. 
For automata from programs, instead of endowing the product with extra control, one can add ghost code (e.g., a step counter) to be exploited by the alignment conditions.
So our formulation is essentially as general.  
\qed
\end{remark}

To be of use in proving relational properties,
an alignment automaton should be chosen that covers all the possible outcomes of the underlying automata.

\begin{definition}[adequacy]\label{def:adequacy}
Consider an alignment automaton $\aprod(A,A',L,R,J)$
and relation $\Q\subseteq Sto\times Sto'$.
The alignment automaton  is \dt{$\Q$-adequate} provided for
all $(s,s')\in\Q$ and $t,t'$ with
$(\init,s)\trans^*(\fin,t)$ and $(\init',s')\trans'^*(\fin',t')$,
we have
$((\init,\init'),(s,s'))\biTrans^* ((\fin,\fin'),(t,t'))$.
\end{definition}

An alignment product may be adequate for reasons that are specific to the 
underlying automata, for example it may not cover all traces but still cover all outcomes.
It may even cover all traces of stores, while not covering all traces when control points are considered.
We focus on products that are adequate in the sense that they cover 
all traces.

Our formulation of Def.~\ref{def:adequacy} is meant to be very general, 
but for some purposes it is useful to slightly restrict the alignment conditions $L,R,J$.

\begin{definition}[live alignment conditions] 
A triple $(L,R,J)$ is \dt{live} for automata $A,A'$ provided
\[ \begin{array}{l} 
\all{ ((n,n'),(s,s'))\in L }{(n,s)\in\dom(\trans) }, \\
\all{ ((n,n'),(s,s'))\in R }{(n',s')\in\dom(\trans') }, \mbox{and}\\
\all{ ((n,n'),(s,s'))\in J }{(n,s)\in\dom(\trans) \land (n',s')\in\dom(\trans')} 
\end{array}\]
\end{definition}
In Example~\ref{eg:alignaut}, the first example is not live, because $J$ 
includes all states including those that satisfy $[\fin|\fin']$ and have no successors.

\begin{lemma}\label{lem:LRJmin}
\upshape
Consider any $A,A',L,R,J$ and let 
\[ 
\begin{array}{l}
L^\nbl = \{((n,n'),(s,s'))\in L \mid (n,s)\in\dom(\trans)\} \\
R^\nbl = \{((n,n'),(s,s'))\in R \mid (n',s')\in\dom(\trans')\} \\
J^\nbl = \{((n,n'),(s,s'))\in J \mid  (n,s)\in\dom(\trans) \land (n',s')\in\dom(\trans')\}
\end{array}
\]
The alignment automaton $\aprod(A,A',L^\nbl,R^\nbl,J^\nbl)$ is identical to   
$\aprod(A,A',L,R,J)$,
and $(L^\nbl,R^\nbl,J^\nbl)$ is live for $A,A'$.
\end{lemma}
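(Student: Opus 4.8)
The plan is to unfold the definition of the alignment automaton (Def.~\ref{def:alignProd}) and observe that the finality/non-stuttering constraints built into the notion of automaton (Def.~\ref{def:automaton}) already force any actual transition to originate from a control--store pair that has a successor. Concretely, I would argue that for any $A,A',L,R,J$, the transition relation $\biTrans$ of $\aprod(A,A',L,R,J)$ is unchanged when $L,R,J$ are replaced by $L^\nbl,R^\nbl,J^\nbl$. Since the two alignment automata have the same $Ctrl$, $Sto$, $\init$, $\fin$ components by construction, equality of the transition relations gives that the automata are literally identical.

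The argument for transition-equality is a direct case split on the clauses LO, RO, JO. Suppose $((n,n'),(s,s')) \biTrans ((m,m'),(t,t'))$ via clause LO using $L$; then $((n,n'),(s,s'))\in L$ and $(n,s)\trans(m,t)$, so in particular $(n,s)\in\dom(\trans)$, hence $((n,n'),(s,s'))\in L^\nbl$ and the same LO-transition is available in $\aprod(A,A',L^\nbl,R^\nbl,J^\nbl)$. Symmetrically for RO via $R$ (giving $(n',s')\in\dom(\trans')$) and for JO via $J$ (giving both membership conditions). The converse inclusion is immediate because $L^\nbl\subseteq L$, $R^\nbl\subseteq R$, $J^\nbl\subseteq J$, so any transition of $\aprod(A,A',L^\nbl,R^\nbl,J^\nbl)$ is already a transition of $\aprod(A,A',L,R,J)$. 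This establishes the first claim.

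For the liveness claim, I would just read off the definition of \textbf{live}: the defining conjuncts are, respectively, that every state in the first component lies in $\dom(\trans)$ on the left, every state in the second component lies in $\dom(\trans')$ on the right, and every state in the third lies in both. But those are exactly the membership conditions used to carve out $L^\nbl$, $R^\nbl$, $J^\nbl$ from $L$, $R$, $J$, so liveness of $(L^\nbl,R^\nbl,J^\nbl)$ for $A,A'$ holds by construction.

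None of this is hard; the only point that needs a moment's care is making precise that "identical automaton" means equal as tuples, and that equality of transition relations is what the case split delivers — there is no obstacle of substance here, which is presumably why the result is stated as a lemma and used as plumbing for the adequacy and completeness arguments later. If one wanted to be fully careful one could also remark that $\dom(\trans)$ is well-defined since $\trans$ is a relation, and that the finality condition $n\neq\fin$ plays no role in the proof (it is subsumed by the stronger fact that $(n,s)\in\dom(\trans)$ whenever a transition fires).
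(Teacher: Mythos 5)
Your proof is correct and follows essentially the same route as the paper's: both establish that the non-transition components of the two products coincide trivially and that the transition relations agree by a case split on LO/RO/JO, using the fact that a fired transition forces membership in $\dom(\trans)$ (the paper phrases this contrapositively — a state in $L\setminus L^\nbl$ admits no LO-transition under either condition — but it is the same observation), with liveness holding by construction. No gaps.
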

\begin{proof}
Liveness is by construction. To show that the two alignment automata are the same,
note first that they have identical sets of control points, stores, initial and final points.  To see that the transition relations are also identical,
consider  condition LO of Def.~\ref{def:alignProd},
instantiated by $L$ versus $L^\nbl$.
For any $((n,n'),(s,s'))$ in $L$ but not in $L^\nbl$, 
there is no transition by $A$ from $(n,s)$, so condition LO holds 
for neither $L$ nor $L^\nbl$.  
\emph{Mut.\ mut.} for $R$ and $J$.
\end{proof}

Later we construct automata from programs, and for such automata every state has a successor except when control is final (Lemma~\ref{lem:autLive}).  For programs,
an arbitrary triple $(L,R,J)$ can be made live very simply:
$(L\setminus[\fin|*],R\setminus[*|\fin'],J\setminus[\fin|\fin'])$
where $\setminus$ is set subtraction.
In light of Lemma~\ref{lem:LRJmin}, restricting to live alignment conditions loses no generality.  This lets us slightly streamline the technical development, while retaining 
the very general formulation of Def.~\ref{def:alignProd}.

\medskip
\textsc{Assumption 1}.  
\emph{From here onward, all considered alignment conditions are assumed to be live for their  underlying automata.}

The following strengthening of adequacy ensures that all underlying traces are covered, not just outcomes.  

\begin{definition}[manifest adequacy] 
Consider an alignment automaton $\aprod(A,A',L,R,J)$ and relation $\Q\subseteq Sto\times Sto'$.
The alignment automaton  is \dt{manifestly $\Q$-adequate} provided 
that $L \lor R \lor J \lor [\fin|\fin']$ is $\Q$-invariant.
That is, $L\lor R \lor J \lor [\fin|\fin']$ holds at every state that is reachable from some 
$((\init,\init'),(s,s'))$ such that $(s,s')\in\Q$.
\end{definition}

\begin{lemma} 
\upshape
If alignment automaton $\aprod(A,A',L,R,J)$ is manifestly $\Q$-adequate 
then it is $\Q$-adequate (in the sense of Def.~\ref{def:adequacy}).
\end{lemma}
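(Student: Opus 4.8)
The goal is to show that manifest $\Q$-adequacy implies $\Q$-adequacy. The plan is to take an arbitrary pair $(s,s')\in\Q$ together with terminating underlying runs $(\init,s)\trans^*(\fin,t)$ and $(\init',s')\trans'^*(\fin',t')$, and construct an alignment-automaton run from $((\init,\init'),(s,s'))$ to $((\fin,\fin'),(t,t'))$. The natural measure to induct on is the combined number of steps left to execute on the two sides, i.e. the total length of the remaining portions of the two fixed runs; I would phrase this as an induction on $k+k'$ where the current state is $((n,n'),(u,u'))$ with $(n,u)\trans^k(\fin,t)$ and $(n',u')\trans'^{k'}(\fin',t')$, maintaining the invariant that $((n,n'),(u,u'))$ is reachable in the alignment automaton from $((\init,\init'),(s,s'))$ with $(s,s')\in\Q$.

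First I would handle the base case: if $k=k'=0$ then $n=\fin$ and $n'=\fin'$ and $u=t$, $u'=t'$, so we are already at $((\fin,\fin'),(t,t'))$ and the empty run suffices. For the inductive step, the current state is reachable from a $\Q$-related initial state, so manifest adequacy gives that $L\lor R\lor J\lor[\fin|\fin']$ holds there. If $[\fin|\fin']$ holds we are in the base case (and if only one side is final, I note that case is subsumed — e.g. if $n=\fin$ but $n'\neq\fin'$, then by finality $(n,u)$ has no successor, so $k=0$; by Assumption~1 the live conditions $L$ and $J$ both require a left successor, hence neither $L$ nor $J$ can hold, so $R$ holds). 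Otherwise one of $L$, $R$, $J$ holds. In each case I take the corresponding step: if $L$ holds, since $(n,u)\in\dom(\trans)$ and the fixed left run has $k>0$ remaining steps with first step $(n,u)\trans(m,v)$ — here I use the non-stuttering/finality structure, but really just that the fixed run's first transition is some such $(m,v)$ — I perform the LO transition $((n,n'),(u,u'))\biTrans((m,n'),(v,u'))$, landing on a state that is still reachable from the $\Q$-related start and whose remaining left-run length is $k-1$, so $k-1+k' < k+k'$ and induction applies. Symmetrically for $R$ (decreasing $k'$) and for $J$ (decreasing both, using that $J$ being live forces successors on both sides and $k,k'>0$, where $k,k'>0$ follows because if, say, $k=0$ then $n=\fin$, contradicting liveness of $J$).

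The one genuine subtlety — and the main obstacle — is making sure the LO/RO/JO step taken actually stays on the \emph{fixed} terminating runs rather than wandering onto some other branch of the nondeterministic underlying automaton. Since Definition~\ref{def:alignProd} uses ``iff'', the alignment automaton contains \emph{all} LO transitions out of a state in $L$, including ones following non-canonical successors; I must therefore explicitly choose the transition that follows the first step of the fixed run $(n,u)\trans^k(\fin,t)$, which is legitimate precisely because that transition is among those generated by condition LO. This is where I would be careful to state the inductive hypothesis in terms of \emph{a particular} pair of remaining runs, not merely the existence of terminating runs from the current state. Everything else is routine bookkeeping, and reachability of the successor state is immediate since we extend a reachable run by one $\biTrans$ step, and $\Q$-relatedness of the original start state is preserved trivially.
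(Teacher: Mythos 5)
Your proposal is correct and follows essentially the same route as the paper: the paper states the same strengthened claim (reachability of the current pair from a $\Q$-related start, plus fixed terminating remainders on each side) and proves it by induction on the sum of the lengths of the two remaining runs, using manifest adequacy to obtain $L\lor R\lor J\lor[\fin|\fin']$ and liveness to take an LO/RO/JO step along the fixed runs. Your extra care about staying on the fixed runs and about the one-side-final case makes explicit points the paper treats more briefly, but the argument is the same.
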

\begin{proof}
Informally, for any pair of terminated traces $T,T'$ of $A$ and $A'$,
where $\Q$ holds initially, there is a terminated trace of 
$\aprod(A,A',L,R,J)$ of which $T$ and $T'$ are the left and right
``projections''.
This implies $\Q$-adequacy. 
The idea is essentially liveness of the alignment automaton: owing to invariance of $L \lor R \lor J \lor [\fin|\fin']$, 
we can always take a step via $\biTrans$ on one or both sides---unless both sides are stuck or terminated, in which case the given pair of traces is covered.

To be precise we make the following claim:\\
If $(s,s')\in\Q$, $((\init,\init'),(s,s'))\biTrans^*((n,t),(n',t'))$,
$(n,t)\trans^*(\fin,u)$, and $(n',t')\trans'^*(\fin',u')$
then $((\init,\init'),(s,s'))\biTrans^*((\fin,u),(\fin',u'))$.

The claim implies $\Q$-adequacy, by taking $((n,t),(n',t'))=((\init,s),(\init',s'))$.
The claim is proved by induction on the sum of the lengths of the runs 
$(n,t)\trans^*(\fin,u)$, and $(n',t')\trans'^*(\fin',u')$.
In the base case the sum is zero; then we have $((n,t),(n',t'))=((\fin,u),(\fin',u'))$ and we are done.

For the induction step, 
suppose $(s,s')\in\Q$ and $((\init,\init'),(s,s'))\biTrans^*((n,t),(n',t'))$.
Suppose $(n,t)\trans^*(\fin,u)$, and $(n',t')\trans'^*(\fin',u')$ where at least one of these has a non-zero number of steps.
Since $((n,t),(n',t'))$ is reached from $\Q$, by manifest $\Q$-adequacy we have 
$((n,t),(n',t')) \in L \lor R \lor J$.  Here  $[\fin|\fin']$ can be excluded owing to the non-zero length unary computation.
So a step can be taken via $\biTrans$ from $((n,t),(n',t'))$, via the LO, RO, or JO condition
in Def.~\ref{def:alignProd}, owing to liveness of $(L,R,J)$.  Consider the LO case.  The unary state $(n,t)$ must not be final, so there is $(m,r)$ with $(n,t)\trans(m,r)\trans^*(\fin,u)$.
The alignment automaton can go by LO to $((m,r),(n',t'))$.
So we have 
$((\init,\init'),(s,s'))\biTrans^*((m,r),(n',t'))$
and traces $(m,r)\trans^*(\fin,u)$, $(n',t')\trans'^*(\fin',u')$ 
for which the induction hypothesis applies, yielding the result.
Cases RO and JO are similar.
\end{proof}

\subsection{Correctness of automata}

Generalizing slightly from Sect.~\ref{sec:logics},
we consider specs $\spec{P}{Q}$ where $P$ and $Q$ are sets of automaton stores,
not necessarily variable stores.
Satisfaction of a spec by an automaton is written 
\graybox{$A \models \spec{P}{Q}$} and defined to mean:
\[ \mbox{For all $s,t$ such that $(\init,s)\trans^*(\fin,t)$,
if $s\in P$ then $t\in Q$.} 
\]

Let $A$ and $A'$ be automata with store sets $Sto$ and $Sto'$ respectively.
Again generalizing slightly from Sect.~\ref{sec:logics},
we consider relational specs $\rspec{\R}{\S}$
where $\R$ and $\S$ are relations from $Sto$ to $Sto'$.
Satisfaction of the spec by the pair of automata $A,A'$ is written 
\graybox{$A,A'\models\rspec{\R}{\S}$} and defined 
to mean:
\begin{quote}
For all $s,s',t,t'$
such that $(\init,s)\trans^*(\fin,t)$ and $(\init',s')\trans'^*(\fin',t')$,
if $(s,s')\in\R$ then $(t,t')\in\S$.
\end{quote}

A store relation $\Q$ for $A,A'$ can be seen as a store predicate on the stores
of an alignment automaton $\aprod(A,A',L,R,J)$ because the latter are pairs of stores.
Hence, for relational spec $\rspec{\Q}{\S}$, 
the unary spec $\spec{\Q}{\S}$ makes sense 
for $\aprod(A,A',L,R,J)$.

\begin{proposition}[adequacy sound and complete]\label{prop:adequate-sound}
\upshape
Suppose that $\aprod(A,A',L,R,J)$ is $\Q$-adequate.
Then $\aprod(A,A',L,R,J)\models \spec{\Q}{\S}$
if and only if 
$A,A'\models\rspec{\Q}{\S}$.
\end{proposition}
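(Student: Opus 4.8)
I would prove the two implications separately. Neither is difficult: the content of the $(\Leftarrow)$ direction is a routine ``projection'' lemma, and the $(\Rightarrow)$ direction is essentially a direct appeal to $\Q$-adequacy.

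For $(\Leftarrow)$, assume $A,A'\models\rspec{\Q}{\S}$ and take any $(s,s'),(t,t')$ with $(s,s')\in\Q$ and $((\init,\init'),(s,s'))\biTrans^*((\fin,\fin'),(t,t'))$. The key step is a lemma saying that every terminated trace of $\aprod(A,A',L,R,J)$ projects to terminated traces of the two underlying automata: if $((\init,\init'),(s,s'))\biTrans^*((\fin,\fin'),(t,t'))$ then $(\init,s)\trans^*(\fin,t)$ and $(\init',s')\trans'^*(\fin',t')$. I would prove this by induction on the number of $\biTrans$ steps. In the base case both sides are already at $(\fin,\fin')$ with $s=t$, $s'=t'$, so the empty unary runs suffice. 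For the step, consider the first transition and case-split on LO/RO/JO from Def.~\ref{def:alignProd}: in the LO case the left component advances by one $\trans$ step while the right is unchanged, symmetrically for RO, and in the JO case both advance by one step; in each case prepend this step to the unary runs supplied by the induction hypothesis. Given the lemma, project the $\biTrans^*$ run to obtain $(\init,s)\trans^*(\fin,t)$ and $(\init',s')\trans'^*(\fin',t')$, then apply $A,A'\models\rspec{\Q}{\S}$ with $(s,s')\in\Q$ to conclude $(t,t')\in\S$. Note this direction does not use adequacy at all.

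For $(\Rightarrow)$, assume $\aprod(A,A',L,R,J)\models\spec{\Q}{\S}$ and take $s,s',t,t'$ with $(\init,s)\trans^*(\fin,t)$, $(\init',s')\trans'^*(\fin',t')$, and $(s,s')\in\Q$. Since the alignment automaton is $\Q$-adequate, Def.~\ref{def:adequacy} gives immediately $((\init,\init'),(s,s'))\biTrans^*((\fin,\fin'),(t,t'))$. Then $(s,s')\in\Q$ together with $\aprod(A,A',L,R,J)\models\spec{\Q}{\S}$ yields $(t,t')\in\S$, which is exactly what is needed for $A,A'\models\rspec{\Q}{\S}$.

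The only place requiring any care is the projection lemma, specifically keeping straight which component moves in each of the LO, RO, and JO cases; but this is entirely mechanical. One small point that makes the lemma clean is that $\aprod(A,A',L,R,J)$ is itself a legitimate automaton (its finality and non-stuttering conditions follow from Def.~\ref{def:alignProd} given those of $A$ and $A'$), so no separate well-formedness checks are needed; and by Assumption~1 and Lemma~\ref{lem:LRJmin} we may freely assume $(L,R,J)$ live, which is not actually needed here but costs nothing. Thus I expect no real obstacle, only the bookkeeping in the induction.
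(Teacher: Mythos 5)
Your proposal is correct and follows essentially the same route as the paper: the $(\Rightarrow)$ direction is an immediate application of $\Q$-adequacy plus the product's spec, and the $(\Leftarrow)$ direction rests on projecting a terminated product trace to terminated unary traces (which the paper dispatches with ``by definition of the alignment product'' where you spell out the induction and LO/RO/JO case split). Your observation that adequacy is used only in one direction is also consistent with the paper's argument.
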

\begin{proof}
Suppose $\aprod(A,A',L,R,J)\models \spec{\Q}{\S}$.
To prove $A,A'\models\rspec{\Q}{\S}$, consider any pair of terminated 
traces of $A$ and $A'$ from 
initial states $(\init,s)$ and $(\init',s')$ such that $(s,s')\in\Q$.
By adequacy and definition of the alignment automaton,
we can run $\aprod(A,A',L,R,J)$ from $((\init,\init'),(s,s'))$ and get a terminated trace
with the same final pair of stores.  Those stores satisfy $\S$, 
by $\aprod(A,A',L,R,J)\models \spec{\Q}{\S}$.

Conversely, suppose $A,A'\models\rspec{\Q}{\S}$.  
To prove $\aprod(A,A',L,R,J)\models \spec{\Q}{\S}$, consider any terminated trace of 
$\aprod(A,A',L,R,J)$ from initial state 
$((\init,\init'),(s,s'))$ with $(s,s')\in\Q$.
By definition of the alignment product, there are traces of $A$ and $A'$ 
which reach the same final stores.  Those stores satisfy $\S$,
by $A,A'\models\rspec{\Q}{\S}$.  
\end{proof}

\subsection{Inductive assertion method} 

The inductive assertion method~\cite{Floyd67} (\dt{IAM}) is formulated 
in terms of automata.
Given automaton $A$ and spec $\spec{P}{Q}$,
an \dt{annotation} is a function $an$ from control points to store predicates 
such that $an(\init)=P$ and $an(\fin)=Q$.
The requirement $\init\neq\fin$ in Def.~\ref{def:automaton} ensures that annotations exist for any spec.
In Floyd's original formulation, an annotation only needs to be defined on a subset of control points that cut every loop in the control flow graph, but such an annotation can always be extended to one for all control points.  Focusing on what are called ``full'' annotations in~\cite{NagasamudramN21} helps streamline our development.

We \dt{lift} $an$ to a function $\hat{an}$ that yields states:
$\hat{an}(n) = \{ (m,s) \mid s\in an(n) \}$
or equivalently 
\begin{equation}\label{eq:hatALT}
(m,s)\in\hat{an}(n) \quad\mbox{iff}\quad s\in an(n) \quad\mbox{for all $m,n,s$} 
\end{equation}
For each pair $(n,m)$ of control points
there is a \dt{verification condition} (\dt{VC}):
\begin{equation}\label{eq:VC}
\POST(\tranSeg{n,m})(\hat{an}(n)) \subseteq \hat{an}(m) 
\end{equation}
Here $\POST$ gives the direct image (i.e., strongest postcondition) of a relation,
and \graybox{$\tranSeg{n,m}$} is the transition relation restricted to starting control point $n$ and ending   $m$, i.e.,
\begin{equation}\label{eq:tranSeg}
\mbox{
$(i,s)\tranSeg{n,m}(j,t) \; $ iff $\; i=n$, $j=m$, and $(n,s)\trans(m,t)$
}
\end{equation}
The VC (\ref{eq:VC}) says that for every transition from control point $n$ and store $s\in an(n)$,
if the step goes to control point $m$ with store $t$, then $t$ is in $an(m)$.
Annotation $an$ is \dt{valid} if the VC is  true for every pair $(n,m)$ of control points.

In most automata, including those we derive from programs, some pairs $(n,m)$ have no transitions,
in other words $\tranSeg{n,m}$ is empty.
In that case the VC (\ref{eq:VC}) is true regardless of $an(n)$ and $an(m)$.

\begin{proposition}[soundness of IAM~\cite{Floyd67}]\label{prop:IAM}
\upshape
If $A$ has an annotation for $\spec{P}{Q}$ that is valid then $A\models \spec{P}{Q}$.
\end{proposition}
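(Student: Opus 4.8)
The statement to prove is Proposition~\ref{prop:IAM}: a valid annotation for $\spec{P}{Q}$ implies $A\models\spec{P}{Q}$. This is the classical soundness of the inductive assertion method, and the natural proof is by induction on the length of a terminated run. First I would unfold the definition of $A\models\spec{P}{Q}$: fix stores $s,t$ with $(\init,s)\trans^*(\fin,t)$ and $s\in P$, and aim to show $t\in Q$. Since $an(\init)=P$ and $an(\fin)=Q$, it suffices to prove the loop-invariant-style claim that for every finite run $(\init,s)\trans^* (n,u)$, if $s\in an(\init)$ then $u\in an(n)$.

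The plan is to prove that claim by induction on the number of transition steps in the run. In the base case the run has zero steps, so $(n,u)=(\init,s)$ and $u=s\in P = an(\init)$, as required. For the induction step, suppose $(\init,s)\trans^*(n,u)\trans(m,v)$ and the induction hypothesis gives $u\in an(n)$. By the definition \eqref{eq:hatALT} of the lift, $(n,u)\in\hat{an}(n)$ (taking the control component to be $n$ itself — the lift ignores the control component, which is exactly why \eqref{eq:hatALT} is phrased ``for all $m$''). Since $(n,u)\trans(m,v)$, by definition \eqref{eq:tranSeg} of the restricted relation we have $(n,u)\tranSeg{n,m}(m,v)$, hence $(m,v)\in\POST(\tranSeg{n,m})(\hat{an}(n))$. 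The VC \eqref{eq:VC} for the pair $(n,m)$ — which holds because $an$ is valid — then gives $(m,v)\in\hat{an}(m)$, and unfolding the lift once more yields $v\in an(m)$. This closes the induction.

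Finally I would apply the claim to the given terminated run $(\init,s)\trans^*(\fin,t)$, obtaining $t\in an(\fin)=Q$; since $s,t$ were arbitrary with $s\in P$, this is precisely $A\models\spec{P}{Q}$.

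**Main obstacle.** There is no real obstacle here — this is a textbook argument. The only mildly delicate points are bookkeeping ones: being careful that the lift $\hat{an}$ genuinely discards the control coordinate (so that the ``current'' state $(n,u)$ lies in $\hat{an}(n)$ regardless of how we label its control component), and noting that pairs $(n,m)$ with $\tranSeg{n,m}$ empty contribute vacuously true VCs, so validity over \emph{all} pairs is no stronger than what the induction actually uses. I would keep the exposition to a few lines and not belabor these.
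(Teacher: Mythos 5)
Your proof is correct and is exactly the classical argument the paper relies on: the paper itself states Proposition~\ref{prop:IAM} without proof, citing Floyd, and the standard proof is precisely your induction on run length showing that every reachable state $(n,u)$ satisfies $u\in an(n)$, with the VCs supplying the inductive step. Nothing to add.
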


\begin{proposition}[semantic completeness of IAM~\cite{Floyd67}]\label{prop:iamcomplete}
\upshape
Suppose $A\models \spec{P}{Q}$.
Then there is an annotation for $\spec{P}{Q}$ that is valid.
\end{proposition}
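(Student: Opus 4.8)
The plan is to exhibit the classical Floyd annotation --- which assigns to each control point the set of stores reachable there from the precondition $P$ (equivalently, the strongest postcondition accumulated along all prefixes) --- and to check directly that it satisfies every verification condition. Assuming $A\models\spec{P}{Q}$, define
\[ an(n)\ \eqdef\ \{\, s\in Sto \mid \exists s_0\in P.\ (\init,s_0)\trans^*(n,s)\,\} \qquad (n\neq\fin), \]
and put $an(\fin)\eqdef Q$. First, $an$ really is an annotation for $\spec{P}{Q}$: $an(\fin)=Q$ holds by fiat, and $an(\init)=P$ since, as in Floyd's flowchart setting (and for the program-derived automata of Sect.~\ref{sec:aut}), $\init$ is never the target of a transition, so the only run reaching $\init$ is the empty one and $(\init,s_0)\trans^*(\init,s)$ forces $s=s_0$. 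The only reason to treat $\fin$ by hand is that we are permitted to \emph{enlarge} the reachable set at $\fin$ up to $Q$; this is harmless because $\fin$ has no outgoing transition, so no verification condition has $\fin$ as its source.

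Next I verify the verification condition (\ref{eq:VC}), $\POST(\tranSeg{n,m})(\hat{an}(n))\subseteq\hat{an}(m)$, for every pair $(n,m)$ of control points. If there is no transition from control $n$ to control $m$ --- in particular whenever $n=\fin$, by finality --- then $\tranSeg{n,m}=\emptyset$ and the condition holds vacuously. Otherwise, unfolding the lifting (\ref{eq:hatALT}) and the definition (\ref{eq:tranSeg}) of $\tranSeg{n,m}$, the condition reduces to: for all $s\in an(n)$ and all $t$ with $(n,s)\trans(m,t)$, we have $t\in an(m)$. Fix such $s$ and $t$. Since $n\neq\fin$, membership $s\in an(n)$ yields some $s_0\in P$ with $(\init,s_0)\trans^*(n,s)$, hence $(\init,s_0)\trans^*(n,s)\trans(m,t)$, i.e.\ $(\init,s_0)\trans^*(m,t)$. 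If $m\neq\fin$ this is exactly $t\in an(m)$. If $m=\fin$ it says $(\init,s_0)\trans^*(\fin,t)$ with $s_0\in P$, so $A\models\spec{P}{Q}$ gives $t\in Q=an(\fin)$. In every case the verification condition holds, so $an$ is valid, which is what had to be shown.

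The argument is almost entirely bookkeeping; the one delicate point is the boundary behaviour. We are obliged to assign \emph{exactly} $P$ to $\init$ and \emph{exactly} $Q$ to $\fin$, and this is precisely why the reachable-set (strongest-postcondition) annotation is the right choice: it already equals $P$ at $\init$ (given that $\init$ is not revisited), whereas the dual weakest-precondition annotation would in general assign $\init$ a proper superset of $P$ and thus fail to be an annotation in the first place. The hypothesis $A\models\spec{P}{Q}$ itself enters exactly once --- and only for the verification conditions whose target control point is $\fin$.
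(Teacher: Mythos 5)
Your construction is the one the paper sketches: annotate each non-final control point with the set of stores reachable there from $P$ (the strongest postcondition), set $an(\fin)=Q$, and discharge the verification conditions. The body of the argument --- vacuity when the source is $\fin$, using $A\models\spec{P}{Q}$ exactly on edges into $\fin$, and the observation that enlarging the reachable set at $\fin$ up to $Q$ is harmless because $\fin$ has no successors --- is correct and is a faithful elaboration of the paper's one-line proof.

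The gap is the claim that $\init$ is never the target of a transition, which you need both for $an(\init)=P$ and for the vacuity of every VC whose target is $\init$. Definition~\ref{def:automaton} imposes only finality and non-stuttering; it says nothing about incoming edges at $\init$, and your parenthetical assertion that the program-derived automata of Sect.~\ref{sec:aut} have this property is false: if $c$ begins with a loop, then $\init=\lab(c)$ is the loop header and $\fsuc$ routes the end of each iteration back to it. Concretely, take $c=\ldogc{1}{x>0\gcto\lassg{2}{x}{x-1}}$ with $P:\;x=5$ and $Q:\;\mathit{true}$. The VC for $(1,2)$ forces $\{x=5\}\subseteq an(2)$, while the VC for $(2,1)$ forces $an(2)\subseteq\subst{an(1)}{x}{x-1}=\{x=6\}$, so \emph{no} valid annotation with $an(1)=P$ exists; your reachable-set definition would give $an(\init)=\{x\mid 0\le x\le 5\}\neq P$, so it is not an annotation in the paper's sense, and patching by fiat ($an(\init):=P$, as the paper's own sketch does) breaks the VC into $\init$ in the same way. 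So the statement needs an extra hypothesis --- that $\init$ has no incoming transitions, arranged e.g.\ by prefixing a fresh entry point or a leading $\skipc$ --- or the definition of annotation must be weakened to $P\subseteq an(\init)$. Your proof goes through verbatim once such a hypothesis is in place (and, to be fair, the paper's sketch silently needs the same repair), but as written the step ``the only run reaching $\init$ is the empty one'' is unjustified and, for this paper's automata, false.
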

To prove completeness, one sets $an(\init) = P$,
$an(\fin) = Q$,
and for all other control points $n$ set $an(n)$ to be the strongest invariant
of traces from $P$.

Combining Prop.~\ref{prop:IAM} with Prop.~\ref{prop:adequate-sound} we get the following.

\begin{corollary}[soundness of alignment automata]
\label{cor:relIAM}
\upshape
Suppose that $\aprod(A,A',L,R,J)$ is $\Q$-adequate and 
$an$ is an annotation of $\aprod(A,A',L,R,J)$ for $\spec{\Q}{\S}$.
If $an$ is valid then $A,A'\models \rspec{\Q}{\S}$.
\end{corollary}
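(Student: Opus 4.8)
The plan is to derive the corollary purely by composing the two preceding results, Prop.~\ref{prop:IAM} (soundness of IAM) and Prop.~\ref{prop:adequate-sound} (adequacy is sound and complete). First I would make explicit the type-matching that was already noted just before Prop.~\ref{prop:adequate-sound}: since $\Q$ and $\S$ are relations from $Sto$ to $Sto'$, they are sets of pairs of stores, and the stores of $\aprod(A,A',L,R,J)$ are precisely such pairs. Hence $\spec{\Q}{\S}$ is a legitimate unary spec for $\aprod(A,A',L,R,J)$, and the hypothesis that $an$ is a valid annotation of $\aprod(A,A',L,R,J)$ for $\spec{\Q}{\S}$ is meaningful.

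Next I would apply Prop.~\ref{prop:IAM} to the automaton $\aprod(A,A',L,R,J)$ and the spec $\spec{\Q}{\S}$: since $an$ is an annotation for $\spec{\Q}{\S}$ that is valid, we obtain $\aprod(A,A',L,R,J)\models\spec{\Q}{\S}$. Then I would invoke Prop.~\ref{prop:adequate-sound}, whose hypothesis ($\Q$-adequacy of $\aprod(A,A',L,R,J)$) is exactly the remaining assumption of the corollary; the left-to-right ("sound") direction of that biconditional, together with $\aprod(A,A',L,R,J)\models\spec{\Q}{\S}$ from the previous step, yields $A,A'\models\rspec{\Q}{\S}$, which is the conclusion. (Only the "sound" direction of Prop.~\ref{prop:adequate-sound} is needed here, not its "complete" converse.)

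I do not expect any real obstacle: the corollary is a one-line consequence of the two propositions, and the only point that deserves a sentence of care is the observation that a store relation for $A,A'$ doubles as a store predicate for the alignment automaton, so that "valid annotation for $\spec{\Q}{\S}$" and "$\Q$-adequate" can be fed, respectively, into Prop.~\ref{prop:IAM} and Prop.~\ref{prop:adequate-sound} without any reformulation.
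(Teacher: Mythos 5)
Your proof is correct and is exactly the argument the paper intends: the corollary is stated as the combination of Prop.~\ref{prop:IAM} (applied to the alignment automaton, using the observation that a store relation for $A,A'$ is a store predicate for $\aprod(A,A',L,R,J)$) with the left-to-right direction of Prop.~\ref{prop:adequate-sound}. Nothing further is needed.
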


Manifest adequacy is important for two reasons.  First, it ensures coverage of all traces, not just outcomes, and so ensures soundness of alignment automata for fault-avoiding partial correctness; but that is beyond the scope of this paper.  Second, it is amenable to inductive proof. 
A sufficient condition is for 
$L \lor R \lor J \lor [\fin|\fin']$ to be an inductive invariant, but
the following is a better alternative.

\begin{lemma}\label{lem:anLRJF}
\upshape 
Suppose $an$ is a valid annotation of $\aprod(A,A',L,R,J)$ for $\rspec{\Q}{\S}$.
If $an(i,j)\imp L \lor R \lor J \lor [\fin|\fin']$ for every $i,j$
then $\aprod(A,A',L,R,J)$ is manifestly $\Q$-adequate.
\end{lemma}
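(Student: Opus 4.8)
The plan is to reduce manifest $\Q$-adequacy to the hypothesis about $an$ by showing that $an$, being valid, is an inductive invariant of the alignment automaton starting from $\Q$-states, and then chaining the pointwise implication $an(i,j)\imp L\lor R\lor J\lor[\fin|\fin']$. Recall that manifest $\Q$-adequacy requires $L\lor R\lor J\lor[\fin|\fin']$ to hold at every state reachable from some $((\init,\init'),(s,s'))$ with $(s,s')\in\Q$. So it suffices to prove two things: (i) every such reachable state $((i,j),(t,t'))$ satisfies $(t,t')\in an(i,j)$; and (ii) from (i) and the hypothesis, conclude $((i,j),(t,t'))\in L\lor R\lor J\lor[\fin|\fin']$. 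Step (ii) is immediate by instantiating the hypothesis at the control pair $(i,j)$ and using the lifting convention~\eqref{eq:hatALT} (a state is in $\hat{an}(i,j)$ iff its store is in $an(i,j)$), so the real content is step (i).

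For step (i) I would argue by induction on the length of the $\biTrans$-run witnessing reachability. In the base case the state is $((\init,\init'),(s,s'))$ with $(s,s')\in\Q$; since $an$ is an annotation for $\rspec{\Q}{\S}$ we have $an(\init,\init')=\Q$ (using that for the alignment automaton $\aprod(A,A',L,R,J)$ the initial control point is $(\init,\init')$), hence $(s,s')\in an(\init,\init')$. For the inductive step, suppose $((i,j),(t,t'))$ is reached and satisfies $(t,t')\in an(i,j)$, and $((i,j),(t,t'))\biTrans((i_1,j_1),(u,u'))$. By definition of $\tranSeg{(i,j),(i_1,j_1)}$, this transition lies in that restricted relation, so its source state $((i,j),(t,t'))$ lies in $\hat{an}(i,j)$ (by the induction hypothesis and~\eqref{eq:hatALT}), and therefore the target $((i_1,j_1),(u,u'))$ lies in $\POST(\tranSeg{(i,j),(i_1,j_1)})(\hat{an}(i,j))$. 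Validity of $an$ — the VC~\eqref{eq:VC} for the pair $((i,j),(i_1,j_1))$ — gives $\POST(\tranSeg{(i,j),(i_1,j_1)})(\hat{an}(i,j))\subseteq\hat{an}(i_1,j_1)$, hence $((i_1,j_1),(u,u'))\in\hat{an}(i_1,j_1)$, i.e.\ $(u,u')\in an(i_1,j_1)$, closing the induction.

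The main obstacle, such as it is, is bookkeeping rather than mathematics: one must be careful that the VC quantifies over \emph{all} pairs of control points of the alignment automaton, so it genuinely covers the transition $((i,j),(t,t'))\biTrans((i_1,j_1),(u,u'))$ regardless of which of the LO/RO/JO clauses produced it — the inductive assertion method treats the alignment automaton as just another automaton, and the VCs already range over every control-point pair. No separate case analysis on LO/RO/JO is needed for step (i); the adequacy machinery in the preceding lemma already shows why the disjunction $L\lor R\lor J\lor[\fin|\fin']$ is the right invariant. So the proof is essentially a two-line combination: (a) any valid annotation is an invariant of the reachable states (a standard consequence of the VCs, essentially a restatement of Prop.~\ref{prop:IAM}'s inductive core), and (b) the hypothesized pointwise implication then propagates that invariant into the manifest-adequacy condition.
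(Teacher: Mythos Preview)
Your proposal is correct and follows the same approach as the paper's proof, which is essentially a one-sentence sketch: validity of $an$ makes (the control-indexed family) $an$ a $\Q$-invariant, and the pointwise implications then yield $\Q$-invariance of $L\lor R\lor J\lor[\fin|\fin']$. You have simply spelled out the induction on run length that underlies the phrase ``validity of $an$ implies that $L\lor R\lor J\lor[\fin|\fin']$ is $\Q$-invariant''; the paper leaves that implicit (it is the inductive core of Prop.~\ref{prop:IAM}), and also notes parenthetically that $L\lor R\lor J\lor[\fin|\fin']$ need not itself be inductive, which is exactly why your detour through $an$ is needed.
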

\begin{proof}
Given the implications $an(i,j)\imp L \lor R \lor J \lor [\fin|\fin']$,
validity of $an$ implies that $L \lor R \lor J \lor [\fin|\fin']$ 
is $\Q$-invariant (although it need not be an inductive invariant on its own).  
\end{proof}

Together, Corollary~\ref{cor:relIAM} and Lemma~\ref{lem:anLRJF} provide a method to verify
$A,A'\models \rspec{\Q}{\S}$:
Find alignment conditions $L,R,J$ and annotation $an$ of $\aprod(A,A',L,R,J)$
such that the annotation is valid and 
$an(i,j)\imp L \lor R \lor J \lor [\fin|\fin']$ for every $i,j$.

\begin{remark}\upshape
Churchill et al~\cite{ChurchillP0A19} and Shemer et al~\cite{ShemerGSV19}
use essentially this method, with various techniques for finding $L,R,J$
expressible in SMT-supported assertion languages.  
Churchill et al 
also use testing to evaluate whether a candidate $L,R,J$ is invariant.  
This suggests  a variation on the method:
First find $L,R,J$ such that $L \lor R \lor J \lor [\fin|\fin']$ 
is an inductive $\Q$-invariant of $\aprod(A,A',L,R,J)$.
Then find a valid annotation.
\qed\end{remark}

\begin{example}
This example shows that the alignment conditions of an alignment automaton
need not be exhaustive.
For brevity, write $sk^i$ for the assignment $\lassg{i}{y}{y}$ which acts like skip.
\[
\begin{array}{ll}
c: & 
\lassg{1}{x}{x+1} ; \:
sk^2; \:
\lassg{3}{x}{x+1} ; \:
sk^4; \:
\lassg{5}{x}{x+1} \\
d: & 
sk^1; \:
\lassg{2}{x}{x+1} ;
\lassg{3}{x}{x+1} ;
\lassg{4}{x}{x+1} ; \:
sk^5
\end{array}
\]
Then we have
\[ \models c\sep d :\rspec{x=x'}{x=x'} \]
and it can be proved using this alignment of control point pairs,
where we use 6 as final label.
\[ (1,1), (1,2), (2,3), (3,3), (4,4), (5,4), (6,5), (6,6) \]
This is represented by an alignment automaton
using 
$L\eqdef[2|3]\lor[4|4]$,
$R\eqdef[1|1]\lor[6|5]$, and 
$J\eqdef[1|2]\lor[3|3]\lor[5|4]$.  Note that $(L,R,J)$ is live.
The same automaton is obtained from 
$L\eqdef[2|*]\lor[4|*]$,
$R\eqdef[*|1]\lor[*|5]$, and the same $J$; again this is live. 

Define $an(i,j)$ to be $x=x'$, for all $(i,j)$ in
\[ \{ (1,1), (1,2), (2,3), (3,3), (4,4), (5,4), (6,5), (6,6) \} \]
and $an(i,j)=\mathit{false}$ for all others.
The annotation is valid and the alignment automaton is manifestly
$(x=x')$-adequate.

Note that $L\lor R\lor J\lor [\fin|\fin']$ does not include all states,
e.g., it excludes states with control $(3,2)$.
Note also that $L\lor R\lor J\lor [\fin|\fin']$ is $(x=x')$-invariant, i.e., holds in 
all states reachable for an initial state satisfying the precondition $x=x'$.  

If we drop $[2|3]$ from $L$ then $L\lor R\lor J\lor [\fin|\fin']$ is no longer invariant,
since it does not hold when control is at $(2,3)$; and the automaton is no longer adequate.
\qed\end{example}


\begin{corollary}[semantic completeness of alignment automata]
\label{cor:relIAMcomplete}
\upshape
Suppose $A,A'\models \rspec{\S}{\T}$.
Then there are $L,R,J$ and an annotation $an$ of $\aprod(A,A',L,R,J)$ for $\spec{\S}{\T}$ 
such that the alignment automaton is manifestly $\S$-adequate 
and the annotation is valid.
\end{corollary}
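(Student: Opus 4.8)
The plan is to prove the corollary with the degenerate \emph{sequential} alignment of Example~\ref{eg:alignaut} --- run $A$ to completion, then run $A'$ --- since with that alignment semantic completeness of alignment automata reduces directly to completeness of the inductive assertion method. Set $J \eqdef \mathit{false}$, $L \eqdef [*|\init'] \setminus [\fin|*]$, and $R \eqdef [\fin|*] \setminus [*|\fin']$, and write $B \eqdef \aprod(A,A',L,R,J)$. (The set subtractions make $(L,R,J)$ live, as demanded by Assumption~1; by Lemma~\ref{lem:LRJmin} they do not change the automaton, and for automata arising from programs, by Lemma~\ref{lem:autLive}, these are the only liveness trimmings needed.)

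First I would show that $B$ is $\S$-adequate. Given $(s,s') \in \S$ and terminated runs $(\init,s) \trans^* (\fin,t)$ and $(\init',s') \trans'^* (\fin',t')$, observe that by the finality condition a terminating run visits $\fin$, resp.\ $\fin'$, only at its last state, so every proper prefix ends at a non-final control point that has a successor. Hence $B$ can replay the first run by LO steps --- every intermediate $B$-state has left control $\neq \fin$ and right control $\init'$, hence lies in $L$ --- reaching $((\fin,\init'),(t,s'))$, and then replay the second run by RO steps, reaching $((\fin,\fin'),(t,t'))$. So $B$ is $\S$-adequate, and therefore by Proposition~\ref{prop:adequate-sound} the hypothesis $A,A' \models \rspec{\S}{\T}$ yields $B \models \spec{\S}{\T}$. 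Proposition~\ref{prop:iamcomplete}, applied to $B$ viewed as a plain automaton whose stores are store-pairs, then gives a valid annotation $an$ of $B$ for $\spec{\S}{\T}$: concretely, $an$ assigns $\S$ to $(\init,\init')$, $\T$ to $(\fin,\fin')$, and to every other control pair the strongest invariant of $B$'s traces from $\S$, which is $\mathit{false}$ at control pairs unreachable in $B$.

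It remains to upgrade $\S$-adequacy to \emph{manifest} $\S$-adequacy, for which I would invoke Lemma~\ref{lem:anLRJF}: it suffices to check $an(i,j) \imp L \lor R \lor J \lor [\fin|\fin']$ for every control pair $(i,j)$. The point is that for the sequential alignment $L$, $R$, and $J$ depend only on control points, and the control pairs reachable in $B$ are exactly those of the form $(i,\init')$ with $i$ reachable in $A$, or $(\fin,m)$ with $m$ reachable in $A'$. Every state at control $(i,\init')$ lies in $L$ when $i \neq \fin$ and in $R$ when $i = \fin$ (using $\init' \neq \fin'$), and every state at control $(\fin,m)$ lies in $R$ when $m \neq \fin'$ and in $[\fin|\fin']$ when $m = \fin'$. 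For a control pair $(i,j)$ not reachable in $B$ we have $an(i,j) = \mathit{false}$, with the sole exception $(i,j) = (\fin,\fin')$ --- which need not be reachable when $A$ or $A'$ diverges --- and there $[\fin|\fin']$ contains every state at that control. Hence the implication holds at every $(i,j)$, and Lemma~\ref{lem:anLRJF} gives that $B$ is manifestly $\S$-adequate, completing the proof.

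The step I expect to be most delicate is the last one: identifying precisely the reachable control pairs of the sequential alignment (so that the invariance condition of Lemma~\ref{lem:anLRJF} collapses to a purely control-flow check), and handling the degenerate cases in which one side never terminates, so that $(\fin,\fin')$ is unreachable yet still annotated by $\T$. For the automata of interest --- those built from programs --- this is frictionless, since by Lemma~\ref{lem:autLive} no non-final control point is stuck, so every reachable intermediate state of $B$ really is covered by $L$ or by $R$; for fully general automata one would additionally want the analogous mild condition that $A$ and $A'$ have no reachable non-final control point without a successor, otherwise a reachable ``deadlocked'' state of $B$ could be covered by none of $L$, $R$, $J$, $[\fin|\fin']$.
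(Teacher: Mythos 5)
Your proof is correct, but it takes a genuinely different (and considerably more laborious) route than the paper. The paper simply sets $L=R=J$ to be \emph{all} states except $[\fin|\fin']$; then $L\lor R\lor J\lor[\fin|\fin']$ is the set of all states, so manifest $\S$-adequacy is immediate with no reachability analysis, and the rest is exactly your chain: adequacy, Prop.~\ref{prop:adequate-sound}, Prop.~\ref{prop:iamcomplete}. By instead choosing the sequential alignment you must (i) prove adequacy by an explicit replay argument and (ii) discharge the hypothesis of Lemma~\ref{lem:anLRJF} by characterizing the reachable control pairs of the product and using that the strongest-invariant annotation is $\mathit{false}$ off the reachable set; both steps are carried out correctly (in particular you rightly note that every state at control $(i,\init')$ with $i\neq\fin$ lies in $L$, every state at $(\fin,m)$ lies in $R\lor[\fin|\fin']$, and the remaining control pairs other than $(\fin,\fin')$ are unreachable). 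What your version buys is the observation that the degenerate self-composition alignment already witnesses semantic completeness, which connects nicely to the discussion of \rn{SeqProd} in Sect.~\ref{sec:cook}; what it costs is the extra reachability bookkeeping, and a resulting annotation whose intermediate assertions are full strongest postconditions of one side --- precisely the ``expensive'' alignment the introduction warns about, though that is irrelevant to the corollary as stated. Your closing caveat about stuck non-final states in fully general automata is well taken, but note it applies equally to the paper's own proof (the claim that subtracting $[\fin|\fin']$ ensures liveness presupposes that non-final states have successors, as guaranteed by Lemma~\ref{lem:autLive} for program automata), so it is not a defect of your argument relative to the paper's.
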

\begin{proof}
Let $L,R,J$ be the set of all states except $[\fin|\fin']$,
i.e., $L=R=J=((Ctrl\times Ctrl')\times(Sto\times Sto'))\setminus [\fin|\fin']$,
which is manifestly $true$-adequate and hence $\S$-adequate.
(Subtracting $[\fin|\fin']$ ensures liveness of $L,R,J$.)
By Prop.~\ref{prop:adequate-sound} 
from $A,A'\models \rspec{\S}{\T}$ we have 
$\aprod(A,A',L,R,J) \models \spec{\S}{\T}$.
So by completeness of IAM (Prop.~\ref{prop:iamcomplete}) there is an annotation of 
$\aprod(A,A',L,R,J)$ for $\spec{\S}{\T}$ that is valid.  
\end{proof}

\subsection{Automata from programs and their VCs}

\begin{figure}[t] 
\begin{small}
\begin{mathpar}
\inferrule{ e\gcto c \mbox{ is in } gcs \\ \means{e}(s) = \mathit{true} }
{ \config{ \lifgc{n}{gcs} }{s} \ctrans \config{c}{s} }

\inferrule{ e\gcto c \mbox{ is in } gcs \\ \means{e}(s) = \mathit{true} }
{ \config{ \ldogc{n}{gcs} }{s}  \ctrans 
  \config{ c;\ldogc{n}{gcs} }{s} 
}

\inferrule{ \enab(gcs)(s) = \mathit{false} }
{ \config{\ldogc{n}{gcs} }{s} \ctrans  \config{\lskipc{-n}}{s}
}

\inferrule{
\config{c}{s} \ctrans \config{d}{t} }
{ \config{c;b}{s} \ctrans \config{d;b}{t} }

\inferrule{}{ \config{\lassg{n}{x}{e}}{s} \ctrans \config{\lskipc{-n}}{\update{s}{x}{\means{e}(s)}} }

\inferrule{}{ \config{\lskipc{n};c}{s} \ctrans \config{c}{s} }

\end{mathpar}
\end{small}
\vspace*{-3ex}
\caption{Transition semantics (with $n$ ranging over $\Z$).}\label{fig:progtrans}
\end{figure}

Labels on commands serve as basis for defining the automaton for a program,
to which end we make the following definitions.
Write \graybox{$\ok(c)$} to say no label in $c$ occurs more than once and all labels are positive.  
We have $\ok(c0)$ for the running example $c0$ defined in Sect.~\ref{sec:unary}.
Write $\lab(c)$ for the label of $c$ and $\labs(c)$ for its set of labels.
(Details in Appendix~\ref{app:details}.)

For small-step semantics,
we write $\config{c}{s} \ctrans \config{d}{t}$ if command $c$ with store $s$ transitions to continuation command $d$ and store $t$.
The transition relation $\ctrans$ is defined in Fig.~\ref{fig:progtrans}.

Negative labels are used in the small-step semantics
in a way that facilitates defining the automaton of a program.
In a configuration reached from an $\ok$ command, 
the only negative labels are those introduced by the transition for assignment
and the transition for termination of a loop.
For every $c,s$, either $\config{c}{s}$ has a successor via $\ctrans$
or $c$ is $\lskipc{n}$ for some $n\in\Z$.  (This relies on $\totalIf$ of Def.~\ref{def:lang}.)
When a negative label is introduced, the configuration is either terminated 
or has the form $\config{\lskipc{-n};c}{s}$
in which case the next transition is 
$\config{\lskipc{-n};c}{s} \ctrans \config{c}{s}$.

\begin{lemma}[big/small step consistency]\label{lem:bigsmall}
\upshape
For any $c,s,t$,
\[ 
\means{c}\, s\, t
\quad\mbox{ iff }\quad
\config{c}{s} \ctrans^* \config{\lskipc{n}}{t}
\mbox{ for some $n$}
\]
\end{lemma}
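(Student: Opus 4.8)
The plan is to prove both directions by induction — the left-to-right direction by induction on the derivation of the big-step judgment $\means{c}\,s\,t$, and the right-to-left direction by induction on the length of the small-step run $\config{c}{s}\ctrans^*\config{\lskipc{n}}{t}$, strengthened as needed so the induction goes through for sequential compositions.

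For the forward direction, I would proceed by structural induction following the inductive definition of $\means{\cdot}$ (which lives in the appendix, but whose clauses are the standard big-step rules for $\skipc$, assignment, sequencing, $\ifgc{gcs}$, and $\ldogc{gcs}$). The base cases ($\skipc$ and assignment) are immediate from the corresponding single small-step rules in Fig.~\ref{fig:progtrans}, noting that assignment steps to $\lskipc{-n}$, which is still of the form $\lskipc{m}$ for some $m\in\Z$. For sequencing $c;d$, the key auxiliary fact is a \emph{context lemma}: if $\config{c}{s}\ctrans^*\config{\lskipc{n}}{t}$ then $\config{c;b}{s}\ctrans^*\config{\lskipc{n};b}{t}\ctrans\config{b}{t}$, using the congruence rule for $;$ repeatedly and then the $\config{\lskipc{n};c}{s}\ctrans\config{c}{s}$ rule. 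Then the big-step rule for $c;d$ decomposes into runs for $c$ and $d$, which the IH turns into small-step runs, glued by the context lemma. The $\ifgc{}$ case uses the transition $\config{\lifgc{n}{gcs}}{s}\ctrans\config{c}{s}$ for the enabled branch (enabledness is guaranteed by $\totalIf$) composed with the IH for that branch. The loop case $\ldogc{n}{gcs}$ splits on whether $\enab(gcs)(s)$ is true: if false, one step to $\lskipc{-n}$; if true, unfold once via $\config{\ldogc{n}{gcs}}{s}\ctrans\config{c;\ldogc{n}{gcs}}{s}$, apply the IH to the body $c$ and (by the big-step loop semantics) the IH to the residual loop, and stitch with the context lemma.

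For the backward direction, I would first establish a \emph{determinacy/shape} observation: any terminated configuration reached from $\config{c}{s}$ has the form $\config{\lskipc{n}}{t}$, and that small-step runs can be decomposed at the boundaries of sequential composition. Concretely I would prove, by induction on run length, that if $\config{c}{s}\ctrans^*\config{\lskipc{n}}{t}$ then $\means{c}\,s\,t$, casing on the structure of $c$ and the first transition. The subtle point is sequencing: from $\config{c;d}{s}\ctrans^*\config{\lskipc{n}}{t}$ one must argue the run factors as $\config{c;d}{s}\ctrans^*\config{\lskipc{m};d}{r}\ctrans\config{d}{r}\ctrans^*\config{\lskipc{n}}{t}$ for some intermediate $m,r$ — i.e. the left component runs to a skip before control passes to $d$. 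This follows because the only transitions applicable to a configuration of the form $\config{c';d}{s}$ with $c'$ not a skip are the congruence rule acting on $c'$ (the other $;$-rule requires $c'=\lskipc{m}$), so the prefix of the run is forced to simulate a run of $c'$ until it reaches a skip. Similarly, for loops one reasons about how many times the body is entered before the $\enab(gcs)(s)=\mathit{false}$ transition fires, doing an inner induction on the number of iterations. The assignment and $\lifgc{}{}$ cases are one-step-then-IH.

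The main obstacle is the sequencing decomposition in the backward direction: making precise that a small-step run starting from $c;d$ necessarily has the "finish $c$, then do $d$" shape, which requires a small lemma that the congruence rule and the skip-elimination rule are the only ways to make progress on $c;b$, plus care that the intermediate skip label $m$ is irrelevant to the big-step judgment (which ignores labels). A clean way to organize this is to prove, as a standalone lemma by induction on run length, that $\config{c;b}{s}\ctrans^*\config{d'}{t}$ with $\config{d'}{t}$ terminated implies there exist $m,r$ with $\config{c}{s}\ctrans^*\config{\lskipc{m}}{r}$ and $\config{b}{r}\ctrans^*\config{d'}{t}$; the loop case then reduces to this lemma applied to the unfolding $c;\ldogc{n}{gcs}$. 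Everything else is routine bookkeeping with Fig.~\ref{fig:progtrans}.
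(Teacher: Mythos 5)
The paper states this lemma without proof (it is deferred to the omitted definitions and the Coq development), so there is no in-text argument to compare against; your proposal is the standard and correct way to prove it. The two key ingredients you identify are exactly the right ones: the forward context lemma lifting $\config{c}{s}\ctrans^*\config{\lskipc{m}}{r}$ to $\config{c;b}{s}\ctrans^*\config{b}{r}$ via the congruence and skip-elimination rules, and the backward factorization lemma showing a terminating run from $\config{c;b}{s}$ must first drive $c$ to a skip (which holds because, by inspection of Fig.~\ref{fig:progtrans}, the congruence rule and the rule for $\lskipc{n};c$ are the only rules whose conclusion has a sequence on the left). One tiny imprecision: the $\skipc$ base case of the forward direction is not discharged by "the corresponding single small-step rule" --- a standalone $\lskipc{m}$ is stuck (as the paper itself notes) --- but by the zero-step reflexive case of $\ctrans^*$, taking $n=m$; and for assignment you correctly observe that the target label $-n$ is still a legal witness since $n$ ranges over $\Z$. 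With that reading, the argument goes through.
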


Write \graybox{$\sub(n,c)$} for the sub-command of $c$ with label $n$, if $n$ is in $\labs(c)$.
Let $c$ and $\fin$ such that $\ok(c)$ and $\fin\notin\labs(c)$.
We write $\fsuc(n,c,\fin)$ for the \dt{following successor} of $n$ in the control flow graph of $c$, in the sense made precise in Fig.~\ref{fig:fsuc}.
Note that $\fin$ serves as a final or exit label.
(The definition of $\fsuc(n,c,\fin)$ assumes $\ok(c)$, $n\in\labs(c)$, and $\fin\notin\labs(c)$.)
The key case in the definition is for loops: the following successor is the control point after termination of the loop.
For the running example, we have $\fsuc(2,c0,6) = 6$ and $\fsuc(4,c0,6)=2$.

\begin{figure}[t]
\begin{small}
\[
\begin{array}{l@{\;}c@{\;}l}
\fsuc(n,\lskipc{n},f)          & \eqdef & f \\
\fsuc(n,\lassg{n}{x}{e},f)     & \eqdef & f \\
\fsuc(n,c;d,f)                 & \eqdef & \fsuc(n,c,\lab(d)) \mbox{ , if $n\in\labs(c)$} \\
                               & \eqdef & \fsuc(n,d,f) \mbox{ , otherwise} \\ 
\fsuc(n,\lifgc{n}{gcs}, f) &\eqdef & f \\
\fsuc(m,\lifgc{n}{gcs}, f) &\eqdef & \fsuc(m,c,f) \mbox{ , if $e\gcto c \in gcs$ and $m\in\labs(c)$} \\
\fsuc(n,\ldogc{n}{gcs}, f) &\eqdef & f \\
\fsuc(m,\ldogc{n}{gcs}, f) &\eqdef & \fsuc(m,c,n) \mbox{ , if $e\gcto c \in gcs$ and $m\in\labs(c)$} 
\end{array}
\]
\end{small}
\vspace*{-1ex}
\caption{Following successor $\fsuc(n,c,f)$}
\label{fig:fsuc}
\end{figure}

Define \graybox{$\okf(c,f)$} (``ok, fresh'') to abbreviate the conjunction
of $\ok(c)$ and $f\notin\labs(c)$.

\begin{definition}\label{def:aut}
Suppose $\okf(c,f)$. 
The \dt{automaton of $c$ for $f$}, written \graybox{$\aut(c, f)$}, is
\[ (\labs(c)\union \{ f \}, (\Var\to\Z), \lab(c), f, \trans) \] 
where 
$(n,s)\trans (m,t)$ iff either
\begin{small}
\begin{ditemize}
\item $
\some{d}{ \config{\sub(n,c)}{s}\ctrans\config{d}{t} 
   \land \lab(d) > 0 \land m = \lab(d) }$, or 
\item 
$\some{d}{ \config{\sub(n,c)}{s}\ctrans\config{d}{t} \land \lab(d) < 0 \land m = \fsuc(n,c,f) }$, or 
\item $\sub(n,c)=\lskipc{n} \land m = \fsuc(n,c,f) \land t = s$
\end{ditemize}
\end{small}
\end{definition}
The first two cases use the semantics of Fig.~\ref{fig:progtrans}
for a sub-command on its own.  
The second case uses $\fsuc$ for a sub-command that takes a terminating step 
(either an assignment or a loop).
The third case handles skip, which on its own would be stuck 
but which should take a step when it occurs as part of a sequence.

\begin{example}\label{ex:steps}
We remark that the small step and automaton computations are similar but do not quite match 
step-by-step.\footnote{In~\cite{NagasamudramN21}, 
using similar definitions, 
the authors claim (Lemma 4) that the automaton steps correspond one-to-one with steps via $\ctrans$. This is not quite true, because 
traces via $\ctrans$ include the steps of the form 
$\config{\lskipc{n};c}{s} \ctrans \config{c}{s}$ where $n<0$.
There are no corresponding steps in the automaton, as the examples show.
This does not undercut the other results in~\cite{NagasamudramN21},
which only depend on initial and final configurations of a program
versus its automaton.
}
For the command $\lassg{1}{x}{0};\lassg{2}{y}{0}$ we have (for some stores $s_0,s_1,s_2$)
the terminated computation 
\[ \config{\lassg{1}{x}{0};\lassg{2}{y}{0}}{s_0}
\ctrans
\config{\lskipc{-1};\lassg{2}{y}{0}}{s_1}
\ctrans
\config{\lassg{2}{y}{0}}{s_1}
\ctrans
\config{\lskipc{-2}}{s_2}
\]
This corresponds to the following computation of 
$\aut((\lassg{1}{x}{0};\lassg{2}{y}{0}),3)$:
\[ (1,s_0)\trans(2,s_1)\trans(3,s_2) \]
As another example, we have
\[ \config{\lskipc{1};\lassg{2}{y}{0}}{t_0}
\ctrans
\config{\lassg{2}{y}{0}}{t_0}
\ctrans
\config{\lskipc{-2}}{t_1}
\]
and for $\aut((\lskipc{1};\lassg{2}{y}{0}),3)$ we have 
\[ (1,t_0)\trans(2,t_0)\trans(3,t_1) \]
The state $(3,t_1)$ is terminated.
There is no transition from $(3,t_1)$ because none of the three
conditions in Def.~\ref{def:aut} holds,
noting that $\sub(3,(\lskipc{1};\lassg{2}{y}{0}))$
is not defined.
\qed\end{example}

\begin{lemma}[automaton liveness]\label{lem:autLive}
\upshape
The only stuck states of $\aut(c, f)$ are terminated ones, i.e.,  where the control is $f$.
\end{lemma}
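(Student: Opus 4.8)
The plan is to fix an arbitrary state $(n,s)$ of $\aut(c, f)$ (recall $\okf(c,f)$ is in force, so the control points are $\labs(c)\cup\{f\}$ and $f\notin\labs(c)$) and split on whether $n=f$. The easy half --- every state with control $f$ is stuck --- is immediate: each of the three disjuncts defining $\trans$ in Def.~\ref{def:aut} mentions $\sub(n,c)$, but $\sub(f,c)$ is undefined since $f\notin\labs(c)$, so none of them can fire. (This also follows from the finality condition built into Def.~\ref{def:automaton}.)

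The substantive half is: if $n\in\labs(c)$ then $(n,s)$ has a successor. First I would observe that $\sub(n,c)$ is a well-formed command with label $n$ satisfying $\ok$, since well-formedness --- in particular the $\totalIf$ condition --- and $\ok$ are both inherited by subprograms of $c$. Then I would apply the totality observation stated just after Fig.~\ref{fig:progtrans} (the one resting on $\totalIf$): either $\sub(n,c) = \lskipc{n}$, or $\config{\sub(n,c)}{s}$ has a $\ctrans$-successor, say $\config{d}{t}$. In the first case the third disjunct of Def.~\ref{def:aut} gives $(n,s)\trans(\fsuc(n,c,f),s)$. In the second, the sign of $\lab(d)$ selects the disjunct: $\lab(d)>0$ triggers the first (with $m=\lab(d)$), and $\lab(d)<0$ triggers the second (with $m=\fsuc(n,c,f)$); the value $\lab(d)=0$ cannot occur, because $\ok(c)$ makes every label of $\sub(n,c)$ positive and the only negative labels a single $\ctrans$ step introduces are negations of positive labels. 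In all cases $(n,s)$ is not stuck.

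The only friction I anticipate is in two small auxiliary facts I lean on: that $\fsuc(n,c,f)$ is well-defined and always names an actual control point (a label of $c$, or $f$) whenever $\okf(c,f)$ and $n\in\labs(c)$ --- an easy structural induction over the clauses of Fig.~\ref{fig:fsuc} --- and the observation about the sign of $\lab(d)$ after a single small step. Neither is deep, and both are the sort of fact already available from the appendix definitions of $\sub$, $\lab$, and $\fsuc$; granting them, the proof is just the case split above read against Def.~\ref{def:aut}.
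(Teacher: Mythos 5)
Your proof is correct and follows the same route as the paper's: the paper's proof is a two-sentence sketch observing that for $n\neq f$ the label $n$ names a subcommand of $c$ which, by the small-step semantics (relying on $\totalIf$ in the if-case), yields a successor under one of the three disjuncts of Def.~\ref{def:aut}. Your version merely fills in the details (the sign analysis of $\lab(d)$, well-definedness of $\fsuc$) that the paper leaves implicit.
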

\begin{proof}
For any state $(n,s)$ where $n\neq f$, $n$ is the label of a command in $c$
and by definition of the small step relation $\ctrans$,
and the automaton transition $\trans$, 
there at least one successor of $(n,s)$.
In the case where $n$ labels an if-command, 
this relies on the $\totalIf$ condition (Def.~\ref{def:lang}).
\end{proof}

\begin{lemma}[automaton consistency]\label{lem:autConsistent}
\upshape
Suppose $\ok(c)$ and $f\notin\labs(c)$ and  let $n=\lab(c)$.
For any $s,t$ we have 
\[ \means{c}\, s\, t \quad\mbox{iff}\quad (n,s)\trans^*(f,t) \quad\mbox{(in $\aut(c,f)$)} \]
Hence 
\[ \models c:\spec{P}{Q} \quad\mbox{iff}\quad \aut(c,f)\models\spec{P}{Q} \] 
for any $P,Q$, and also 
\[ \models c\sep c':\rspec{\Q}{\S} \quad\mbox{iff}\quad \aut(c,f),\aut(c',f')\models\rspec{\Q}{\S} \] 
for any $\Q,\S$ and any $c',f'$ with $\ok(c')$ and $f'\notin\labs(c')$.
\end{lemma}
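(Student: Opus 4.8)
The plan is to establish the displayed biconditional $\means{c}\,s\,t \iff (\lab(c),s)\trans^*(f,t)$ in $\aut(c,f)$ first; the two ``Hence'' consequences then follow by merely unfolding the four definitions of validity. For the unary claim, $\models c:\spec{P}{Q}$ quantifies over pairs $(s,t)$ with $\means{c}\,s\,t$ and $\aut(c,f)\models\spec{P}{Q}$ over pairs with $(\lab(c),s)\trans^*(f,t)$; the biconditional identifies these ranges, so the two statements coincide. For the relational claim one applies the biconditional twice, to $(c,f)$ and to $(c',f')$ (using $\okf(c',f')$), so that $\models c\sep c':\rspec{\Q}{\S}$ and $\aut(c,f),\aut(c',f')\models\rspec{\Q}{\S}$ quantify over exactly the same quadruples $(s,s',t,t')$.

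I would prove the biconditional by structural induction on $c$; throughout, $\okf(c,f)$ supplies $\ok(c)$, $f\notin\labs(c)$, and $\lab(c)\in\labs(c)$, so $\aut(c,f)$ and every use of $\fsuc$ below is well defined. In the base cases, unfolding Def.~\ref{def:aut} against the single applicable rule of Fig.~\ref{fig:progtrans} shows that $\aut(\lskipc{n},f)$ has exactly the transition $(n,s)\trans(f,s)$, and $\aut(\lassg{n}{x}{e},f)$ has exactly $(n,s)\trans(f,\update{s}{x}{\means{e}(s)})$, with $f$ terminal in both; hence $(n,s)\trans^*(f,t)$ holds iff $t=s$ (resp.\ $t=\update{s}{x}{\means{e}(s)}$), which is precisely $\means{\lskipc{n}}\,s\,t$ (resp.\ $\means{\lassg{n}{x}{e}}\,s\,t$).

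For the compound constructs the key tool is a \emph{restriction} observation, read off from Fig.~\ref{fig:fsuc} and from the fact that $\sub(n,-)$ descends into the relevant subcommand: for $c=c_1;c_2$, the transitions of $\aut(c_1;c_2,f)$ at control points in $\labs(c_1)$ are exactly those of $\aut(c_1,\lab(c_2))$, and at points in $\labs(c_2)\union\{f\}$ exactly those of $\aut(c_2,f)$; for $c=\lifgc{n}{gcs}$, $(n,s)$ has exactly the transitions $(n,s)\trans(\lab(c_i),s)$ for each $e_i\gcto c_i$ in $gcs$ with $\means{e_i}(s)=\mathit{true}$ (some such branch exists, by $\totalIf$), and at points in $\labs(c_i)$ the transitions agree with $\aut(c_i,f)$. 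Since $\ok(c)$ forces all of these label sets to be pairwise disjoint (and e.g.\ $\lab(c_2)\notin\labs(c_1)$), a trace can leave a subcommand's label set only through its designated exit and can never cross between branches; so a trace $(\lab(c),s)\trans^*(f,t)$ decomposes into a trace of $\aut(c_1,\lab(c_2))$ from $\lab(c_1)$ to some $(\lab(c_2),u)$ followed by a trace of $\aut(c_2,f)$ from there to $(f,t)$ (resp.\ a choice of enabled branch $c_i$ together with a trace of $\aut(c_i,f)$ from $\lab(c_i)$ to $(f,t)$), and two (resp.\ one) applications of the induction hypothesis yield $\means{c_1}\,s\,u$ and $\means{c_2}\,u\,t$, i.e.\ $\means{c_1;c_2}\,s\,t$ (resp.\ $\means{\lifgc{n}{gcs}}\,s\,t$); the converse directions reconcatenate the component traces inside $\aut(c,f)$.

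The loop case $c=\ldogc{n}{gcs}$ is the only iterative one, and I expect it to be the main obstacle. At the loop head $(n,s)$ the automaton has the exit transition $(n,s)\trans(f,s)$ exactly when $\enab(gcs)(s)=\mathit{false}$ (the loop-termination rule produces a negative label whose $\fsuc$ is $f$) and a transition $(n,s)\trans(\lab(c_i),s)$ for each enabled $e_i\gcto c_i$; and the crucial clause $\fsuc(m,\ldogc{n}{gcs},f)=\fsuc(m,c_i,n)$ of Fig.~\ref{fig:fsuc}, for $m\in\labs(c_i)$, means that inside a body the automaton behaves as $\aut(c_i,n)$ and, on finishing the body, returns to $n$ rather than to $f$. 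Hence any trace $(n,s)\trans^*(f,t)$ factors uniquely into finitely many body passes $(n,u_k)\trans(\lab(c_{i_k}),u_k)\trans^*(n,u_{k+1})$ --- each, by the induction hypothesis applied to body $c_{i_k}$ with exit label $n$, a witness for $\means{c_{i_k}}\,u_k\,u_{k+1}$ with guard $e_{i_k}$ true at $u_k$ --- followed by a final exit step at a store (namely $t$) where every guard is false; this is exactly one unfolding of the least-fixpoint meaning of $\ldogc{n}{gcs}$, and conversely every finite unfolding builds such a trace. The forward direction is cleanly organized by induction on the number of visits to $n$ along the trace, the backward direction by induction on the fixpoint approximants (equivalently on the height of the big-step derivation). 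The genuine care required is the label bookkeeping: propagating $\ok$ and ``exit label fresh'' to subcommands (all consequences of $\okf(c,f)$), using disjointness of body and branch label sets so that traces neither jump between branches nor leak out of a body, and keeping straight that the body automaton $\aut(c_i,n)$ sits inside $\aut(\ldogc{n}{gcs},f)$ with $n$ --- not $f$ --- as its final point; everything else is mechanical unfolding of Def.~\ref{def:aut} against Fig.~\ref{fig:progtrans}.
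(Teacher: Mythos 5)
Your proof is correct, but it takes a different route from the paper's. The paper factors the first equivalence through the small-step semantics: it invokes the big/small-step consistency Lemma~\ref{lem:bigsmall} ($\means{c}\,s\,t$ iff $\config{c}{s}\ctrans^*\config{\lskipc{k}}{t}$ for some $k$) and then establishes a global correspondence between terminated $\ctrans$-traces of the whole command and runs of $\aut(c,f)$ --- the delicate point there being the stutter steps $\config{\lskipc{-n};c}{s}\ctrans\config{c}{s}$ that have no automaton counterpart (Example~\ref{ex:steps}). You instead bypass the whole-program small-step trace entirely and argue by structural induction on $c$, using ``restriction'' lemmas showing that $\aut(c,f)$ restricted to the labels of a subcommand coincides with that subcommand's own automaton for the appropriate exit label ($\lab(c_2)$ for the first component of a sequence, $n$ for a loop body), with $\ok$-disjointness preventing traces from leaking across components. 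The paper's route is more modular --- it reuses a standard lemma and isolates the automaton/small-step mismatch in one place --- while yours is more self-contained and makes the compositional structure of $\aut(c,f)$ explicit; the label bookkeeping you flag as the ``genuine care required'' is exactly where the same work reappears in your version. Your derivation of the two ``Hence'' clauses by unfolding the validity definitions matches the paper.
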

\begin{proof}
The proof of the first equivalence uses Lemma~\ref{lem:bigsmall}
and also the fact that 
$\config{c}{s}\ctrans^*\config{\lskipc{k}}{t}$ (for some $k$)
iff 
$(n,s)\trans^*(f,t)$
(noting the slight complication shown in Example~\ref{ex:steps}).
The other equivalences follow by definitions from the first.
\end{proof}

\paragraph{Verification conditions}

\begin{figure*}
\begin{footnotesize}
\begin{tabular}{lll}
if $\sub(n,c)$ is\ldots & and $m$ is \ldots & then the VC for $(n,m)$ is equivalent to\ldots \\\hline

$\lskipc{n}$ &
$\fsuc(n,c,f)$ & $an(n)\imp an(m)$ 
\\

$\lassg{n}{x}{e}$ &
$\fsuc(n,c,f)$ & $an(n)\imp \subst{an(m)}{x}{e}$ 
\\

$\lifgc{n}{gcs}$ & $\lab(d)$ where $e\gcto d$ is in $gcs$
& $an(n)\land e \imp an(m)$ 
\\

$\ldogc{n}{gcs}$ & $\lab(d)$ where $e\gcto d$ is in $gcs$
& $an(n)\land e \imp an(m)$ 
\\

$\ldogc{n}{gcs}$ & $\fsuc(n,c,f)$ & $an(n)\land \neg \enab(gcs) \imp an(m)$
\\[.5ex]
\multicolumn{3}{l}{
In all other cases, there are no transitions from $n$ to $m$ so the VC is $true$ by definition.}
\end{tabular}
\end{footnotesize}
\vspace*{-1ex}
\caption{VCs for the automaton $aut(c,f)$ of $\ok$ program $c$ and annotation $an$.}
\label{fig:VC}
\end{figure*}

By inspection of the transition semantics in Fig.~\ref{fig:progtrans},
there are five kinds of transitions
for $\ctrans$ and also for the automaton relation $\trans$ derived from it.
(There is a sixth rule for $\ctrans$ that says a transition can occur for the first command in a sequence, but that is used together with one of the other five, 
and it is not relevant to Def.~\ref{def:aut}.)
Thus there are five kinds of verification conditions, which can be derived from 
the semantic definitions.

\begin{lemma}[VCs for programs]\label{lem:VCprog}
\upshape
Consider $c$ and $f$ such that $\ok(c,f)$, and 
let $an$ be an annotation of $\aut(c,f)$.
For each pair $n,m$ of labels, the VC of Eqn.~(\ref{eq:VC}) 
can be expressed as in Fig.~\ref{fig:VC}.
\end{lemma}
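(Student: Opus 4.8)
The plan is to unfold the verification condition of Eqn.~(\ref{eq:VC}) for a fixed pair $(n,m)$ of control points and then case split on the shape of $\sub(n,c)$, since that shape determines which of the three clauses of Def.~\ref{def:aut} can fire and hence which $m$ have $\tranSeg{n,m}$ nonempty. First I would rewrite the VC into a pointwise form: using the equivalence~(\ref{eq:hatALT}) --- the control component is irrelevant in $\hat{an}$ --- together with~(\ref{eq:tranSeg}), the inclusion $\POST(\tranSeg{n,m})(\hat{an}(n))\subseteq\hat{an}(m)$ is equivalent to saying that for all stores $s,t$, if $s\in an(n)$ and $(n,s)\trans(m,t)$ then $t\in an(m)$. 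In particular, whenever there is no transition from $n$ to $m$, i.e.\ $\tranSeg{n,m}$ is empty, the VC holds vacuously; this disposes of the last row of Fig.~\ref{fig:VC}, so it remains to identify, for each form of $\sub(n,c)$, exactly which $m$ admit a transition and what that step does to the store.

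Then I would run through the four forms that a command carrying label $n$ can have ($\lskipc{n}$, $\lassg{n}{x}{e}$, $\lifgc{n}{gcs}$, $\ldogc{n}{gcs}$; sequences carry no label). In each case the relevant facts are read off Fig.~\ref{fig:progtrans} for the sub-command on its own and then matched against the three bullets of Def.~\ref{def:aut}, using $\ok(c)$ to know the sign of continuation labels. For $\sub(n,c)=\lskipc{n}$ the first two bullets are inapplicable (a bare skip has no $\ctrans$-step), so only the third bullet fires, giving the single transition $(n,s)\trans(\fsuc(n,c,f),s)$, and the pointwise VC becomes $an(n)\imp an(m)$ for $m=\fsuc(n,c,f)$. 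For $\sub(n,c)=\lassg{n}{x}{e}$ the only $\ctrans$-step produces continuation $\lskipc{-n}$, whose label is negative, so the second bullet fires with $m=\fsuc(n,c,f)$ and $t=\update{s}{x}{\means{e}(s)}$; the VC then reads $s\in an(n)\imp\update{s}{x}{\means{e}(s)}\in an(m)$, which is exactly $an(n)\imp\subst{an(m)}{x}{e}$ by the meaning of substitution from Sect.~\ref{sec:unary}. For $\sub(n,c)=\lifgc{n}{gcs}$ each $\ctrans$-step goes to a branch $d$ with $e\gcto d$ in $gcs$ and $\means{e}(s)=\mathit{true}$, and $\lab(d)>0$ by $\ok(c)$, so the first bullet fires with $m=\lab(d)$ and $t=s$ --- and, crucially, there is no step to $\fsuc(n,c,f)$ --- yielding $an(n)\land e\imp an(m)$. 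For $\sub(n,c)=\ldogc{n}{gcs}$ there are two kinds of $\ctrans$-step: unrolling to $d;\ldogc{n}{gcs}$ with $e\gcto d$ in $gcs$ and $\means{e}(s)=\mathit{true}$ (label $\lab(d)>0$, first bullet, $m=\lab(d)$, giving $an(n)\land e\imp an(m)$), and exiting to $\lskipc{-n}$ when $\neg\enab(gcs)$ holds (negative label, second bullet, $m=\fsuc(n,c,f)$, giving $an(n)\land\neg\enab(gcs)\imp an(m)$).

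I expect the only real subtlety --- a bookkeeping matter more than a hard one --- is keeping straight which bullet of Def.~\ref{def:aut} fires for each form and, correspondingly, which pairs $(n,m)$ genuinely carry no transition: namely, that an if-command never steps to its following successor (so there is no $m=\fsuc(n,c,f)$ row for ifs), that a do-command steps to $\fsuc(n,c,f)$ only along its exit branch, and that for skip it is the third clause of Def.~\ref{def:aut}, not the small-step rules of Fig.~\ref{fig:progtrans}, that supplies the step. The substitution identity in the assignment case is the only genuine calculation, and it is immediate from the definition of $\subst{\,\cdot\,}{x}{e}$. Putting the four cases together gives precisely the table of Fig.~\ref{fig:VC}.
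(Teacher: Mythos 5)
Your proposal is correct and follows essentially the same route as the paper's proof: unfold the VC via the definitions of $\POST$, $\tranSeg{n,m}$, and $\hat{an}$ into the pointwise form ``$s\in an(n)$ and $(n,s)\trans(m,t)$ imply $t\in an(m)$'', then case-split on the shape of $\sub(n,c)$ and match against the clauses of Def.~\ref{def:aut}, with the substitution identity handling the assignment case. Your explicit accounting of which bullet of Def.~\ref{def:aut} fires in each case (and which pairs carry no transition) is the same bookkeeping the paper performs, just stated up front rather than inline.
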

Recall that we are treating assertions as sets of stores, but with formula notations.
So in Fig.~\ref{fig:VC}, $an(n)\land e$ means $an(n)\intersect\means{e}$.
The entries in Fig.~\ref{fig:VC} are obtained 
from the condition (\ref{eq:VC}) on $\hat{an}$ by straightforward 
unfolding of definitions ---in particular, Def.~\ref{def:aut} of $\aut(c,f)$
and the definition of $\ctrans$ in Fig.~\ref{fig:progtrans}.

\begin{proof}
\textbf{Case} $\lskipc{n}$ with $m = \fsuc(n,c,f)$.
To show: the VC is equivalent to $an(n)\imp an(m)$.

The VC of (\ref{eq:VC}) is
\[ \POST(\tranSeg{n,m})(\hat{an}(n)) \subseteq \hat{an}(m) \]
which equivales (using def $\POST$)
\[ \all{i,s,j,t}{ (i,s)\in\hat{an}(n) \land (i,s)\tranSeg{n,m}(j,t)
     \; \imp \; (j,t)\in \hat{an}(m) } \]
which equivales (using def $\tranSeg{n,m}$ and one-point rule of predicate calculus)
\[ \all{s,t}{ (n,s)\in\hat{an}(n) \land (n,s)\trans(m,t)
     \imp (m,t)\in \hat{an}(m) } \]
which equivales (using def (\ref{eq:hatALT}))
\[ \all{s,t}{ s\in an(n) \land (n,s)\trans(m,t) \imp t\in an(m) } \]
which equivales (using def $\trans$ for the case of skip with successor $m$)
\[ \all{s,t}{ s\in an(n) \land s = t \imp t\in an(m) } \]
which equivales (using one-point rule)
\[ \all{s}{ s\in an(n) \imp s\in an(m) } \]
This is equivalent to $an(n)\subseteq an(m)$ which we write as $an(n)\imp an(m)$.

\medskip
\textbf{Case}
$\lassg{n}{x}{e}$ with $m=\fsuc(n,c,f)$. To show: the VC is equivalent to $an(n)\imp \subst{an(m)}{x}{e}$.

As in the previous case, the VC is equivalent to 
\[ \all{s,t}{ s\in an(n) \land (n,s)\trans(m,t) \imp t\in an(m) } \]
which equivales (using def $\trans$ for case assignment $\lassg{n}{x}{e}$ with successor $m$)
\[ \all{s}{ s\in an(n) \imp \update{s}{x}{\means{e}(s)}\in an(m) } \]
which equivales (using def of substitution)
\[ \all{s}{ s\in an(n) \imp s \in \subst{an(m)}{x}{e} } \]
which we write as $an(n) \imp \subst{an(m)}{x}{e}$.

\medskip
\textbf{Case}
$\lifgc{n}{gcs}$ with $m=\lab(d)$ where $e\gcto d$ is in $gcs$.
To show: the VC is equivalent to $an(n)\land e \imp an(m)$.

As in the previous case, the VC is equivalent to 
\[ \all{s,t}{ s\in an(n) \land s\in\means{e} \land (n,s)\trans(m,t) \imp t\in an(m) } \]
which equivales (using def $\trans$ for if-case with successor $m$ and guard $e$)
\[ \all{s,t}{ s\in an(n) \land s\in\means{e} \land s=t \imp t\in an(m) } \]
which equivales (using one-point rule)
\[ \all{s}{ s\in an(n) \land s\in\means{e} \imp s\in an(m) } \]
which we write as $an(n)\land e \imp an(m)$.

The two cases for loop are similar.
%
\end{proof}

\begin{remark}\upshape
For practical purposes, it is straightforward to define the control flow graph of a program,
and thereby identify pairs $n,m$ for which there are definitely no transitions,
e.g., $(1,3)$ in $c0$. 
These can be excluded from the definition of valid annotation,
because $\tranSeg{n,m}$ is empty so the VC holds regardless of the annotation. 
We have no other need to use control flow graphs so for simplicity we do not exclude the vacuous VCs.
\qed\end{remark}

\begin{example}\label{ex:c0an}
For the running example $c0$, suppose we have some $P,Q$ and annotation $an$ 
of $\aut(c0,6)$ for $\spec{P}{Q}$.
So $an(1)=P$, $an(6)=Q$.
Here are the nontrivial VCs according to Fig.~\ref{fig:VC}.
For assignment, we note that $an(n)\imp \subst{an(m)}{x}{e}$ 
is equivalent to $\models \lassg{n}{x}{e}:\spec{an(n)}{an(m)}$, by semantics of assignment
and substitution.
\begin{equation}\label{eq:VCc0}
\begin{array}{l}
an(1)\imp\subst{an(2)}{x}{y} \\
x>0 \land an(2) \imp an(3) \\
\neg(x>0) \land an(2) \imp an(6)\\
x\mod 2=0 \land an(3) \imp an(4) \\
x\mod 2\neq 0 \land an(3) \imp an(5) \\
an(4)\imp\subst{an(2)}{x}{x-1} \\
an(5)\imp\subst{an(2)}{x}{x-2}
\end{array}
\end{equation}
Suppose that the VCs hold, i.e., the annotation is valid.
Then by Prop.~\ref{prop:IAM} we have $\aut(c0,f)\models \spec{P}{Q}$
so by Lemma~\ref{lem:autConsistent} we have
$\models c0:\spec{P}{Q}$.
\qed\end{example}

From $\models c0:\spec{P}{Q}$, 
by completeness
of HL~\cite{Cook78,AptOld3} there is a proof of $c0:\spec{P}{Q}$ in HL.
Even more, by the Floyd completeness result of~\cite{NagasamudramN21} 
there is a proof in HL that uses just the judgments associated with the 
annotated automaton. 
In Sect.~\ref{sec:c0unary} we show how to obtain a proof in HL+ that embodies the automaton-based proof explicitly. 

\subsection{Relational VCs}\label{sec:RVC}

\begin{figure*}
\begin{small}
\begin{tabular}{lll}
if $\sub(n,c)$ is \ldots     & 
and $m$ is\ldots &
then the VC for $((n,n'),(m,n'))$ is equivalent to $\ldots$ 
\\\hline
$\lskipc{n}$ & $\fsuc(n,c,f)$ & $L \land \breve{an}(n,n')\imp \hat{an}(m,n')$ 
\\
$\lassg{n}{x}{e}$ & $\fsuc(n,c,f)$ & $L\land \breve{an}(n,n')\imp \subst{\hat{an}(m,n')}{x|}{e|}$ 
\\
$\lifgc{n}{gcs}$ & $\lab(d)$ where $e\gcto d$ is in $gcs$ 
& $L\land \breve{an}(n,n')\land \leftF{e} \imp \hat{an}(m,n')$ 
\\ 
$\ldogc{n}{gcs}$ & $\lab(d)$ where $e\gcto d$ is in $gcs$ 
& $L\land \breve{an}(n,n')\land \leftF{e} \imp \hat{an}(m,n')$ 
\\ 
$\ldogc{n}{     gcs}$ & $\fsuc(n,c,f)$ 
& $L\land \breve{an}(n,n')\land \neg\leftF{\enab(gcs)} \imp \hat{an}(m,n')$ 
\\
\multicolumn{3}{l}{
In all other cases, there are no transitions from $(n,n')$ to $(m,n')$ so the VC is $true$ by definition.}
\end{tabular}
\end{small}
\vspace*{-1ex}
\caption{The left-only VCs for annotation $an$ of $\aprod(\aut(c,f),\aut(c',f'),L,R,J)$.}
\label{fig:RVClo}
\end{figure*}

\setlength{\dashlinedash}{.5ex}
\setlength{\dashlinegap}{1ex}

\begin{figure*}
\begin{footnotesize}
\begin{tabular}{lll}
if $\begin{array}[t]{l} 
    \sub(n,c) \\
    \sub(n',c') 
    \end{array}$
    are\ldots     & 
and $\begin{array}[t]{l} m \\ m' \end{array}$ are\ldots &
then the VC for $((n,n'),(m,m'))$ is equivalent to $\ldots$ \\\hline

$\lskipc{n}$ &
$\fsuc(n,c,f)$ & $J \land \breve{an}(n,n')\imp \hat{an}(m,m')$ 
\\
$\lskipc{n'}$ &
$\fsuc(n',c',f')$ & 
\\\hdashline

$\lassg{n}{x}{e}$ &
$\fsuc(n,c,f)$ & $J \land \breve{an}(n,n')\imp \subst{\hat{an}(m,m')}{x|x'}{e|e'}$ 
\\
$\lassg{n'}{x'}{e'}$ &
$\fsuc(n',c',f')$ & 
\\\hdashline

$\lifgc{n}{gcs}$ & $\lab(d)$ where $e\gcto d$ is in $gcs$ 
& $J \land \breve{an}(n,n')\land \leftF{e} \land \rightF{e'} \imp \hat{an}(m,m')$ 
\\
$\lifgc{n'}{gcs'}$ & $\lab(d')$ where $e'\gcto d'$ is in $gcs'$
& 
\\\hdashline

$\ldogc{n}{gcs}$ & $\lab(d)$ where $e\gcto d$ is in $gcs$ 
& $J \land \breve{an}(n,n')\land \leftF{e} \land \rightF{e'} \imp \hat{an}(m,m')$ 
\\
$\ldogc{n'}{gcs'}$ & $\lab(d')$ where $e'\gcto d'$ is in $gcs'$
& 
\\\hdashline

$\ldogc{n}{gcs}$ & $\fsuc(n,c,f)$  
& $J \land \breve{an}(n,n')\land \neg\leftF{\enab(gcs)} \land \neg\rightF{\enab(gcs')} \imp \hat{an}(m,m')$ 
\\
$\ldogc{n'}{gcs'}$ & $\fsuc(n',c',f')$ & 
\\\hdashline

$\ldogc{n}{gcs}$ & $\lab(d)$ where $e\gcto d$ is in $gcs$ 
& $J \land \breve{an}(n,n')\land \leftF{e} \land \neg\rightF{\enab(gcs')} \imp \hat{an}(m,m')$ 
\\
$\ldogc{n'}{gcs'}$ & $\fsuc(n',c',f')$ & 
\\\hdashline

$\ldogc{n}{gcs}$ & $\fsuc(n,c,f)$ 
& $J \land \breve{an}(n,n')\land \neg\leftF{\enab(gcs)} \land \rightF{e'} \imp \hat{an}(m,m')$ 
\\
$\ldogc{n'}{gcs'}$ & $\lab(d)$ where $e\gcto d$ is in $gcs$ 
\\\hdashline

$\lassg{n}{x}{e}$ & $\fsuc(n,c,f)$ & $J \land \breve{an}(n,n')\imp \subst{\hat{an}(m,m')}{x|}{e|}$  \\
$\lskipc{n'}$ & $\fsuc(n',c',f')$ & 
\\\hdashline


$\lassg{n}{x}{e}$ & $\fsuc(n,c,f)$ 
& $J \land \breve{an}(n,n')\land \rightF{e'} \imp \subst{\hat{an}(m,m')}{x|}{e|}$ 
\\
$\lifgc{n'}{gcs'}$ & $\lab(d')$ where $e'\gcto d'$ is in $gcs'$
\\\hdashline

$\lassg{n}{x}{e}$ & $\fsuc(n,c,f)$ & $J\land \breve{an}(n,n')\land \rightF{e'} \imp \subst{\hat{an}(n,m')}{x|}{e|}$  \\
$\ldogc{n'}{gcs'}$ & $\lab(d')$ where $e'\gcto d'$ is in $gcs'$ \\\hdashline

$\lassg{n}{x}{e}$ & $\fsuc(n,c,f)$ & $J\land \breve{an}(n,n')\land \neg\rightF{e'} \imp \subst{\hat{an}(n,m')}{x|}{e|}$  \\
$\ldogc{n'}{gcs'}$ & $\fsuc(n',c',f')$ & 

\\[.5ex]
\multicolumn{3}{l}{Omitted: the other 14 cases with nontrivial VCs.}
\end{tabular}

\end{footnotesize}
\vspace*{-1ex}
\caption{Selected joint VCs for $\aprod(\aut(c,f),\aut(c',f'),L,R,J)$ and annotation $an$.}
\label{fig:RVCjo}
\end{figure*}

Consider an alignment automaton $\aprod(A,A',L,R,J)$ where the underlying automata $A$ and $A'$ are obtained from programs by Def.~\ref{def:aut}.
An annotation of $\aprod(A,A',L,R,J)$ thus maps pairs of control points of $A,A'$ to
relations on variable stores.  
Verification conditions are associated with tuples $((n,n'),(m,m'))$,
for which take $n:=(n,n')$ and $m:=(m,m')$ in Eqn.~(\ref{eq:VC}). 

For unary VCs, there are five cases in Fig.~\ref{fig:VC}, corresponding to the five transition rules in Fig.~\ref{fig:progtrans}
as noted preceding Lemma~\ref{lem:VCprog}.
For a given pair $((n,n'),(m,m'))$ of alignment automaton control points,
the transitions go only via LO, or only via RO, or only via JO in Def.~\ref{def:alignProd}.
If $n=m$ (resp.\ $n'=m'$) the transitions are via LO (resp.\ RO),
because the non-stuttering condition for automata (Def.~\ref{def:automaton})
ensures there is no unary transition on the side where control does not change.
If $n\neq m $ and $n'\neq m'$ the transitions only go via JO.
In terms of the fixed-control transition relation used 
in the definition (\ref{eq:VC}), applied to alignment automata control, we have that
$\tranSeg{(n,n'),(m,n')}$ only has transitions via LO,
$\tranSeg{(n,n'),(n,m')}$ only has transitions via RO,
and if $n\neq m $ and $n'\neq m'$ then
$\tranSeg{(n,n'),(m,m')}$ only has transitions via JO.

Fig.~\ref{fig:RVClo} gives the five VCs for transitions that go by the LO condition.  
The VCs for RO are symmetric (and omitted).
There are 25 combinations for JO transitions of an alignment automaton; some of their VCs are in Fig.~\ref{fig:RVCjo}.
Note that, by contrast with Fig.~\ref{fig:VC} 
and Lemma~\ref{lem:VCprog} we do not eliminate lift (hat) notation 
in Figs.~\ref{fig:RVClo} and~\ref{fig:RVCjo}.
That is because the alignment conditions $L$, $R$, $J$ are sets of states not sets of stores.
We return to this later, using the ``program counter'' variable $pc$ to encode the control part of the state.
For abbreviation we define 
\[ \graybox{$\breve{an}(n,n')$} = \hat{an}(n,n')\land [n|n'] \]
(Mnemonic: tipped c for control.)

Besides using  using formula symbols for intersections and set inclusion
(as we do throughout the paper),
there is another abuse of notation in Figs.~\ref{fig:RVClo} and~\ref{fig:RVCjo},
namely the use of $\leftF{e}$ and $\rightF{e}$ for sets of states as opposed to sets of stores.
Yet another abuse is substitution notation for alignment automaton states,
lifted from store relations to state relations.
(The interested reader can adapt the definitions from the ones for store relations.) 

\begin{lemma}[VCs for program alignment automata] 
\upshape
Consider $\ok$ programs $c$ and $c'$, with alignment automaton
$\aprod(\aut(c,f),\aut(c',f'),L,R,J)$.
Let $an$ be an annotation.
For each pair $(n,n'),(m,m')$ of alignment automaton control points, the VC 
can be expressed 
as in Fig.~\ref{fig:RVCjo}, Fig.~\ref{fig:RVClo}, and in the omitted RO cases that are symmetric to Fig.~\ref{fig:RVClo}.
\end{lemma}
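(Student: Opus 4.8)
The plan is to mirror the proof of Lemma~\ref{lem:VCprog}, unfolding the verification condition of Eqn.~(\ref{eq:VC}) for the alignment automaton $\aprod(\aut(c,f),\aut(c',f'),L,R,J)$ one transition clause at a time and reusing the unary case analysis already carried out there.

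First I would fix a pair of alignment-automaton control points $((n,n'),(m,m'))$ and classify which of the clauses LO, RO, JO of Def.~\ref{def:alignProd} can contribute a transition restricted to this pair. Because each underlying automaton is non-stuttering (Def.~\ref{def:automaton}), a LO step forces $n\neq m$ and $n'=m'$, a RO step forces $n=m$ and $n'\neq m'$, and a JO step forces both $n\neq m$ and $n'\neq m'$. Hence: if $n=m$ and $n'=m'$ there are no transitions and the VC is vacuously $true$; if $n\neq m$ and $n'=m'$ only LO applies; if $n=m$ and $n'\neq m'$ only RO applies; and if $n\neq m$ and $n'\neq m'$ only JO applies. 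This already shows that Fig.~\ref{fig:RVClo}, Fig.~\ref{fig:RVCjo}, and the omitted RO table are exhaustive, and that their case splits are mutually exclusive.

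Next, for the LO case I would unfold $\POST(\tranSeg{(n,n'),(m,n')})(\hat{an}(n,n'))\subseteq\hat{an}(m,n')$ exactly as in Lemma~\ref{lem:VCprog}: by definition of $\POST$, of $\tranSeg{}$, and the one-point rule it becomes $\all{s,s',t}{ (s,s')\in\breve{an}(n,n') \land ((n,n'),(s,s'))\in L \land (n,s)\trans(m,t) \imp (t,s')\in\hat{an}(m,n')}$, where the control pins $[n|n']$ and $[m|n']$ are absorbed into the $\breve{an}$ abbreviation. Now $(n,s)\trans(m,t)$ is a transition of $\aut(c,f)$, so I substitute for it the store-level characterization already computed inside the proof of Lemma~\ref{lem:VCprog} (one of the five cases of Fig.~\ref{fig:VC}), according to the shape of $\sub(n,c)$ and which successor $m$ is: for $\lskipc{n}$ with $m=\fsuc(n,c,f)$ it is $s=t$; for $\lassg{n}{x}{e}$ it is $t=\update{s}{x}{\means{e}(s)}$; for an if- or do-guard $e\gcto d$ with $m=\lab(d)$ it is $s\in\means{e}\land s=t$; and for a loop exit it is $s\notin\means{\enab(gcs)}\land s=t$. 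Eliminating $t$ by the one-point rule (or, for assignment, folding into the left-substitution $\subst{\cdot}{x|}{e|}$) and writing $\leftF{e}$ for ``$e$ holds in the left store'' yields precisely the entries of Fig.~\ref{fig:RVClo}. The RO case is word-for-word symmetric, with the roles of the two automata and of left/right interchanged, giving the omitted table.

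Finally, for the JO case I would run the same unfolding with both conjuncts $(n,s)\trans(m,t)$ and $(n',s')\trans'(m',t')$ present together with $((n,n'),(s,s'))\in J$; the VC becomes an implication whose antecedent is $J\land\breve{an}(n,n')$ conjoined with the store-level characterizations of the two unary transitions, and whose consequent is $\hat{an}(m,m')$ after eliminating $t$ and $t'$. There are $5\times 5=25$ combinations, one per pair of unary transition kinds, each obtained by plugging the corresponding two rows of Fig.~\ref{fig:VC} into this template; Fig.~\ref{fig:RVCjo} records a representative selection and the remaining ones are identical in form. I do not expect a genuine obstacle — the whole argument is bookkeeping over unfoldings — but the step requiring care is exactly the ``abuse of notation'' flagged just before the lemma: $L,R,J$ and the $\leftF{e}$, $\rightF{e'}$ appearing in the tables are sets of \emph{states}, not of stores, and substitution is lifted from store relations to state relations, so one must check at each step that absorbing the control pins into $\breve{an}$ while keeping the state-level readings of $L,R,J$ (and of the lifted substitutions) is consistent. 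This is where a careless proof would slip, and it is the reason the hat notation is retained in Figs.~\ref{fig:RVClo} and~\ref{fig:RVCjo} rather than eliminated as in Fig.~\ref{fig:VC}.
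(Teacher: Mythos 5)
Your proposal is correct and follows essentially the same route as the paper's proof: classify the pair $((n,n'),(m,m'))$ by which of LO/RO/JO can fire (using non-stuttering), then unfold $\POST$, $\tranSeg{}$, and the hat lift with the one-point rule, plugging in the store-level transition characterizations from the unary case analysis, and keeping everything at the level of state predicates so that $L,R,J$ and $\breve{an}$ can be conjoined — exactly the point the paper makes about not eagerly eliminating the control variables. The paper only writes out the left-only skip case and declares the rest similar, so your explicit template for the remaining LO/RO/JO combinations is just the same argument stated once for all cases.
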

\begin{proof}
Consider the case of left-only $\lskipc{n}$ for $((n,n'),(m,n'))$ where $m = \fsuc(n,c,f)$.
The VC is
    \[ \POST(\biTranSeg{(n,n'),(m,n')})(\hat{an}(n,n')) \subseteq \hat{an}(m,n') \]
which is equivalent to (by defs)
\[ \all{i,i',s,s',j,j',t,t'}{
    \begin{array}[t]{l}
     ((i,i'),(s,s'))\in \hat{an}(n,n') \land 
     ((i,i'),(s,s'))\biTranSeg{(n,n'),(m,n')} ((j,j'),(t,t')) \\
      \imp ((j,j'),(t,t')) \in\hat{an}(m,n') 
     \end{array} 
} \]
which is equivalent (using def (\ref{eq:hatALT}) of hat) to
\[ \all{i,i',s,s',j,j',t,t'}{
    \begin{array}[t]{l}
     (s,s')\in an(n,n') \land 
     ((i,i'),(s,s'))\biTranSeg{(n,n'),(m,n')} ((j,j'),(t,t')) \\
      \imp (t,t') \in an(m,n') 
     \end{array} 
} \]
Because control does not change, the transition must go by condition LO as noted earlier, so
the preceding is equivalent to 
\[ \all{i,i',s,s',j,j',t}{
    \begin{array}[t]{l}
     (s,s')\in an(n,n') \land 
     ((i,i'),(s,s'))\biTranSeg{(n,n'),(m,n')} ((j,j'),(t,s'))  \\
     \land \: ((i,i'),(s,s'))\in L \imp (t,s') \in an(m,n') 
     \end{array} 
} \]
which is equivalent, by definition (\ref{eq:tranSeg}), to 
\[ \all{i,i',s,s',j,j',t}{
    \begin{array}[t]{l}
     (s,s')\in an(n,n') \land 
     ((i,i'),(s,s'))\biTrans ((j,j'),(t,s')) \land (i,i')=(n,n') \land (j,j')=(m,n')\\ 
     \land \: ((i,i'),(s,s'))\in L \imp (t,s') \in an(m,n') 
     \end{array} 
} \]
which is equivalent, by one-point rule of predicate calculus, to
\[ 
\all{i,i',s,s',t}{
    \begin{array}[t]{l}
     (s,s')\in an(n,n') \land 
     ((n,n'),(s,s'))\biTrans ((m,n'),(t,s')) \land (n,n')=(i,i') \\
     \land \: ((i,i'),(s,s'))\in L \imp (t,s') \in an(m,n') 
     \end{array} 
} 
\]
At this point we use that the command at $n$ is skip and $m$ is its following successor,
so the above is equivalent to 
\[ 
\all{i,i',s,s'}{
    \begin{array}[t]{l}
     (s,s')\in an(n,n') \land (n,n')=(i,i') 
     \land ((i,i'),(s,s'))\in L \imp (s,s') \in an(m,n') 
     \end{array} 
} \]
which is equivalent (using def (\ref{eq:hatALT}) of lift, and  def of the control predicate $[n|n']$) to 
\[ 
\all{i,i',s,s'}{
    \begin{array}[t]{l}
     ((i,i'),(s,s'))\in \hat{an}(n,n') \land ((i,i'),(s,s'))\in[n|n'] 
     \land ((i,i'),(s,s'))\in L \imp \\ 
     ((i,i'),(s,s')) \in \hat{an}(m,n') 
     \end{array} 
} \]
which we can write as $L\land \hat{an}(n,n')\land [n|n'] \imp \hat{an}(m,n')$,
and even more succinctly as 
$L \land \breve{an}(n,n') \imp \hat{an}(m,n')$ using the $\breve{an}$ abbreviation.

It is for the sake of this last step that we did not eagerly simplify using the one-point rule with $(i,i')=(n,n')$.   For the quantifier-free version to make sense, everything needs to be a predicate of the same type.

All the left-only cases are proved similarly, as are the right-only and joint cases.
\end{proof}

\begin{remark}\upshape
There are many 
$((n,n'),(m,n'))$ with no transitions, because there are no transitions along $(n,m)$ or $(n',m')$ in the underlying automata.  The purpose of alignment conditions $L,R,J$ is to make pairs $(n,n')$ unreachable in the product, so they can be annotated with false and so have trivial VCs.
Typically $L,R,J$ are chosen so joint steps are only taken when both sides 
are similar programs, so combinations like ``assignment on the left and if on the right''  ---the last four rows in Fig.~\ref{fig:RVCjo} and the ones omitted--- are less likely to be relevant.

In these terms, the automaton used in the proof of Cor.~\ref{cor:relIAMcomplete} 
is the worst possible, in that it has the maximal number of nontrivial verification conditions.
\qed\end{remark}

\subsection{Encoding relational VCs in terms of store relations}

The unary VCs are given in terms of store predicates.
For relational VCs, although the annotation comprises store relations,
the relational VCs involve alignment conditions ($L,R,J$) which are sets of 
alignment automaton states.  
So (as noted earlier) the relational VCs are given in terms of states, using $\hat{an}$ 
and $\breve{an}$ in Figs.~\ref{fig:RVClo} and~\ref{fig:RVCjo}.  

For use in RHL+ proofs about programs in automaton normal form,
we will encode the VCs in terms of relations on stores that use 
a variable $pc$ (for ``program counter'') to encode control information.  
In Sect.~\ref{sec:RVC} we give exact characterizations of the VCs.
Those ``primary VCs'' entail the encoded ones, and the latter suffice for our purposes.

We often use the following abbreviations for assignments and tests of 
the chosen $pc$ variable.
\begin{equation}\label{eq:pcabbrev}
\mbox{
\graybox{$\spc n$} for $pc:=n$
\qquad 
\graybox{$\tpc n$} for $pc=n$ 
\qquad  for literal $n\in\nat$
}
\end{equation}
To be precise, we define $\spc n$ to be $\lassg{0}{pc}{n}$, choosing $0$ arbitrarily as the label. Our uses of $\spc n$ will be in contexts where labels are irrelevant
in commands that are not required to be $\ok$.
Note that the notations $\spc n$ and $\tpc n$ depend, implicitly, on the choice of the variable $pc$.

\begin{definition}\label{def:pc:encode}
Let $R$ be a set of states of an alignment automaton, 
and suppose the stores of the underlying automata are variable stores.
Let $x$ be a variable.
We say $R$ is \dt{independent from $x$ on both sides} provided
$R=\some{x|x}{R}$, or equivalently 
\begin{equation}\label{eq:indep} ((n,n'),(s,s'))\in R \imp 
((n,n'),(\update{s}{x}{i},\update{s'}{x}{j})) \in R 
\quad\mbox{for all $i,j$}
\end{equation}
Let $pc$ be a variable such that $R$ is independent from $pc$ on both sides.
Define \graybox{$\encode{R}$} to be a relation on stores that
encodes $R$ using $pc$ as follows.
\[ \encode{R} \eqdef \{ (s,s') \mid  ((s(pc),s'(pc)),(s,s'))\in R  \} \]
We call $\encode{R}$ the \dt{$pc$-encoded $R$}, or just ``encoded $R$''. 
\end{definition}

\begin{lemma}\label{lem:state-to-store}
\upshape
Let $R$ be a relation on alignment automaton states
that is independent of $pc$ on both sides.
If $(s,s') \in \encode{R}\land\bothF{\tpc n\sep\tpc n'} $
then $((n,n'),(s,s')) \in R$.
\end{lemma}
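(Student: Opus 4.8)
The plan is to prove this by straightforward unfolding of the definitions; I do not expect any real obstacle here, since the statement is an immediate consequence of how $\encode{R}$ and the $pc$-tests are defined. The only thing to be careful about is bookkeeping with the implicit dependence on the chosen variable $pc$.

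First I would split the hypothesis $(s,s') \in \encode{R}\land\bothF{\tpc n\sep\tpc n'}$ into its two conjuncts. By Definition~\ref{def:pc:encode}, membership in $\encode{R}$ says exactly $((s(pc),s'(pc)),(s,s'))\in R$. The second conjunct $\bothF{\tpc n\sep\tpc n'}$ abbreviates $\leftF{\tpc n}\land\rightF{\tpc n'}$, and since $\tpc n$ denotes the predicate $pc=n$, membership in it says exactly $s(pc)=n$ and $s'(pc)=n'$. Then I would substitute $s(pc)=n$ and $s'(pc)=n'$ into $((s(pc),s'(pc)),(s,s'))\in R$, obtaining $((n,n'),(s,s'))\in R$, which is the conclusion.

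Two remarks worth recording in the write-up. First, the hypothesis that $R$ is independent of $pc$ on both sides (part of the setup of Definition~\ref{def:pc:encode}) is not actually used for this direction; it is needed elsewhere, e.g.\ when recovering state relations from store relations or when reasoning about assignments to $pc$ interacting with $\encode{R}$. Second, the notations $\encode{R}$, $\tpc n$, and $\tpc n'$ all depend implicitly on the fixed choice of the variable $pc$, so the statement is to be read with that choice held fixed throughout; no genuine difficulty arises from this, but it is the one place where sloppiness could creep in.
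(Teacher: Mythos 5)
Your proof is correct and is essentially the paper's own argument: unfold $\bothF{\tpc n\sep\tpc n'}$ to get $s(pc)=n$ and $s'(pc)=n'$, then apply the definition of $\encode{R}$. Your side remark that the independence-from-$pc$ hypothesis is not needed for this direction is also accurate.
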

\begin{proof}
Suppose $(s,s')\in  \encode{R}\land\bothF{\tpc n\sep\tpc n'}$.
Then $s(pc)=n$ and $s'(pc)=n'$
by definition of $\bothF{\tpc n\sep\tpc n'}$,
hence $((n,n'),(s,s'))\in R$ by definition of $\encode{R}$.
\end{proof}

\begin{figure*}
\begin{small}
\begin{tabular}{lll}
if $\sub(n,c)$ is \ldots     & 
and $m$ is\ldots &
then the encoded VC for $((n,n'),(m,n'))$ is $\ldots$ 
\\\hline
$\lskipc{n}$ & $\fsuc(n,c,f)$ & $\encode{L}\land\bothF{\tpc n\sep\tpc n'} \land an(n,n')\imp an(m,n')$ 
\\
$\lassg{n}{x}{e}$ & $\fsuc(n,c,f)$ & $\encode{L}\land\bothF{\tpc n\sep\tpc n'}\land an(n,n')\imp \subst{an(m,n')}{x|}{e|}$ 
\\
$\lifgc{n}{gcs}$ & $\lab(d)$ where $e\gcto d$ is in $gcs$ 
& $\encode{L}\land\bothF{\tpc n\sep\tpc n'}\land an(n,n')\land \leftF{e} \imp an(m,n')$ 
\\ 
$\ldogc{n}{gcs}$ & $\lab(d)$ where $e\gcto d$ is in $gcs$ 
& $\encode{L}\land\bothF{\tpc n\sep\tpc n'}\land an(n,n')\land \leftF{e} \imp an(m,n')$ 
\\ 
$\ldogc{n}{     gcs}$ & $\fsuc(n,c,f)$ 
& $\encode{L}\land\bothF{\tpc n\sep\tpc n'}\land an(n,n')\land \neg\leftF{\enab(gcs)} \imp an(m,n')$ 
\end{tabular}
\end{small}
\vspace*{-1ex}
\caption{The left-only $pc$-encoded VCs for annotation $an$ of  $\aprod(\aut(c,f),\aut(c',f'),L,R,J)$.}
\label{fig:RVClo-encoded}
\end{figure*}

\begin{figure*}
\begin{footnotesize}
\begin{tabular}{lll}
if $\begin{array}[t]{l} 
    \sub(n,c) \\
    \sub(n',c') 
    \end{array}$
    are\ldots     & 
and $\begin{array}[t]{l} m \\ m' \end{array}$ are\ldots &
then the encoded VC for $((n,n'),(m,m'))$ is $\ldots$ \\\hline

$\lskipc{n}$ &
$\fsuc(n,c,f)$ & $\encode{J}\land\bothF{\tpc n\sep\tpc n'} \land an(n,n')\imp an(m,m')$ 
\\
$\lskipc{n'}$ &
$\fsuc(n',c',f')$ & 
\\\hdashline

$\lassg{n}{x}{e}$ &
$\fsuc(n,c,f)$ & $\encode{J}\land\bothF{\tpc n\sep\tpc n'} \land an(n,n')\imp \subst{an(m,m')}{x|x'}{e|e'}$ 
\\
$\lassg{n'}{x'}{e'}$ &
$\fsuc(n',c',f')$ & 
\\\hdashline

$\lifgc{n}{gcs}$ & $\lab(d)$ where $e\gcto d$ is in $gcs$ 
& $\encode{J} \land\bothF{\tpc n\sep\tpc n'}\land an(n,n')\land \leftF{e} \land \rightF{e'} \imp an(m,m')$ 
\\
$\lifgc{n'}{gcs'}$ & $\lab(d')$ where $e'\gcto d'$ is in $gcs'$
& 

\end{tabular}

\end{footnotesize}
\vspace*{-1ex}
\caption{Selected $pc$-encoded joint VCs for annotation $an$ of $\aprod(\aut(c,f),\aut(c',f'),L,R,J)$.}
\label{fig:RVCjo-encoded}
\end{figure*}

\begin{lemma}\label{lem:liftRVC}
\upshape
Let $an$ be an annotation of an alignment automaton 
$\aprod(\aut(c,f),\aut(c',f'),L,R,J)$ for commands $c,c'$.
Suppose $pc$ is a fresh variable in the sense that 
it does not occur in $c$ or $c'$, and all of $L,R,J$ are independent from
$pc$ on both sides.
Then each of the left-only relational VCs of Fig.~\ref{fig:RVClo} 
implies the same condition with $\encode{L}\land\bothF{\tpc n\sep\tpc n'}$ in place of $L\land[n|n']$, and $\hat{an}$ changed to $an$,
as shown in Fig.~\ref{fig:RVClo-encoded}; \emph{mut.\ mut.} for $R$ in the right-only VCs.
The joint relational VCs of Fig.~\ref{fig:RVCjo} imply the same 
condition with 
$\encode{J}\land\bothF{\tpc n\sep\tpc n'}$ in place of $J\land[n|n']$,
as shown in Fig.~\ref{fig:RVCjo-encoded}.
\end{lemma}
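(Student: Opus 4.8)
The plan is to prove each implication directly, case by case, by showing that the hypotheses of the encoded VC are strong enough to re-derive the hypotheses of the corresponding primary VC of Fig.~\ref{fig:RVClo} or Fig.~\ref{fig:RVCjo}, at which point the primary VC gives the conclusion (after stripping lift notation). The key observation, supplied by Lemma~\ref{lem:state-to-store}, is that the store-level conjunct $\encode{L}\land\bothF{\tpc n\sep\tpc n'}$ forces any store pair $(s,s')$ in it to satisfy $((n,n'),(s,s'))\in L$; equivalently, at the state level, $\encode{L}\land\bothF{\tpc n\sep\tpc n'}$ (read as a set of states via lift) entails $L\land[n|n']=L\land\breve{}\ldots$, i.e.\ it entails the conjunct $L\land[n|n']$ that appears in $\breve{an}(n,n')$. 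So the encoded hypothesis is at least as strong as the primary one on the relevant control slice.

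Concretely, for the left-only skip case I would argue as follows. Let $(s,s')\in\encode{L}\land\bothF{\tpc n\sep\tpc n'}\land an(n,n')$. Viewing this store pair as the state $((n,n'),(s,s'))$ — legitimate because $\bothF{\tpc n\sep\tpc n'}$ pins $pc$ to $n$ on the left and $n'$ on the right — we have $((n,n'),(s,s'))\in L$ by Lemma~\ref{lem:state-to-store}, $((n,n'),(s,s'))\in[n|n']$ trivially, and $((n,n'),(s,s'))\in\hat{an}(n,n')$ since $s$-component membership in $an(n,n')$ is exactly membership in the lift. Hence $((n,n'),(s,s'))\in L\land\breve{an}(n,n')$, so the primary VC of Fig.~\ref{fig:RVClo} gives $((n,n'),(s,s'))\in\hat{an}(m,n')$, i.e.\ $(s,s')\in an(m,n')$, which is the encoded conclusion. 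The other left-only cases (assignment, if, the two loop sub-cases) are identical except that one additionally carries the conjunct $\leftF{e}$ or $\neg\leftF{\enab(gcs)}$ — which is a pure store predicate, unaffected by the state/store passage — and, for assignment, the conclusion involves the substitution $\subst{\cdot}{x|}{e|}$; here freshness of $pc$ (it does not occur in $c$, so $e$ does not mention $pc$, and $x\neq pc$) guarantees that substituting in $e$ commutes with the $pc$-encoding, so the substituted primary conclusion transfers verbatim to the substituted encoded conclusion. The right-only cases are word-for-word symmetric with $R$ and the right $pc$. The joint cases are handled the same way, with $J$ in place of $L$ and $\bothF{\tpc n\sep\tpc n'}$ encoding both control coordinates at once; each of the twenty-five combinations carries only additional store-level guard conjuncts and possibly one- or two-sided substitutions, so the same three ingredients — Lemma~\ref{lem:state-to-store}, the definition of lift, and freshness of $pc$ — suffice uniformly.

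The main obstacle, such as it is, is bookkeeping rather than mathematical depth: keeping straight the three notational layers (alignment-automaton states, store relations, and the quantifier-free formula notation that conflates $\land$ with intersection and treats $\leftF{e}$, $\rightF{e'}$ sometimes as state sets and sometimes as store sets), and being careful that the substitution conjuncts in the assignment cases really do commute with $pc$-encoding. I would isolate the substitution-commutation fact as a small sub-observation — for a fresh variable $pc$ and an expression $e$ not mentioning $pc$, $\subst{(\encode{R})}{x|}{e|}=\encode{(\subst{R}{x|}{e|})}$ when $x\neq pc$ — prove it once from the definitions of substitution on relations and of $\encode{\cdot}$, and then invoke it in every assignment case. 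With that in hand the whole lemma is a routine case analysis following the template of the skip case above, and I would state it as "each case is proved exactly as the left-only skip case, using Lemma~\ref{lem:state-to-store}, the definition of lift, and (for assignments) commutation of $pc$-encoding with substitution," exhibiting one representative joint case (say, the joint assignment case) in full and leaving the rest to the reader.
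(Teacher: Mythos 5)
Your proposal is correct and follows essentially the same route as the paper's proof: fix a store pair in the encoded hypothesis, use Lemma~\ref{lem:state-to-store} to recover membership in $L$ (resp.\ $R$, $J$) at the state $((n,n'),(s,s'))$, note that $[n|n']$ holds trivially and that the lift definition (\ref{eq:hatALT}) converts between $an$ and $\hat{an}$, then apply the primary VC and convert back. The only quibble is that in the assignment cases the commutation you need is between substitution and the lift $\hat{\cdot}$ (which is immediate since the lift ignores control), not between substitution and the $pc$-encoding $\encode{\cdot}$, which appears only in the hypotheses; this does not affect correctness.
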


\begin{proof}
Consider the left-only skip case.
The primary VC is $L \land \breve{an}(n,n')\imp \hat{an}(m,n')$
(first row of Fig.~\ref{fig:RVClo}),
which abbreviates 
\begin{equation}\label{eq:primVC}
 L \land [n|n'] \land \hat{an}(n,n')\imp \hat{an}(m,n') 
 \end{equation}
We show this implication entails 
the first line in Fig.~\ref{fig:RVClo-encoded}, i.e.,
\[ \encode{L}\land\bothF{\tpc n\sep\tpc n'} \land an(n,n')\imp an(m,n') \]
To prove the latter implication, observe for any $(s,s')$
\[\begin{array}{lll}
    & (s,s')\in \encode{L}\land (s,s')\in\bothF{\tpc n\sep\tpc n'} \land (s,s')\in an(n,n') \\
\imp 
    & ((n,n'),(s,s'))\in L \land (s,s')\in an(n,n') & \mbox{Lemma~\ref{lem:state-to-store}} \\
\iff
    & ((n,n'),(s,s'))\in L \land((n,n'),(s,s'))\in[n|n']\land (s,s')\in an(n,n')  & \mbox{def $[n|n']$} \\
\iff 
    & ((n,n'),(s,s'))\in L \land((n,n'),(s,s'))\in[n|n']\land ((n,n'),(s,s'))\in \hat{an}(n,n') & \mbox{def (\ref{eq:hatALT})} \\
\imp 
    & ((n,n'),(s,s'))\in \hat{an}(m,n') & \mbox{primary VC (\ref{eq:primVC}) } \\
\iff 
    & (s,s')\in an(m,n') & \mbox{def (\ref{eq:hatALT})} \\
    \end{array}
\]
The argument is essentially the same for all cases in Figs.~\ref{fig:RVClo}
and~\ref{fig:RVCjo}.
\end{proof}

\begin{remark}\upshape
Lemma~\ref{lem:state-to-store} is a strict implication,
because $((n,n'),(s,s')) \in R$ does not imply anything about the value of $pc$ in $s$ and $s'$.
However, 
we can recover $R$ from $\tilde{R}$.
Given any store relation $\S$, 
define $f(\S) 
= \{ ((s(pc),s'(pc)),(\update{s}{pc}{n},\update{s'}{pc}{n'})) \mid
     (s,s')\in \S, n\in\nat, n'\in\nat \}$.
Then $f(\encode{R})=R$ provided $R$ is independent from $pc$.
Perhaps this can be used to show the encoded VCs are actually equivalent to
the primary VCs.  But we have no motivation to work out the details.
\qed\end{remark}

\section{Examples: Obtaining a logic proof from an IAM proof}\label{sec:examples}

This section works out examples that are instances of 
Theorems~\ref{thm:FloydComplete} and~\ref{thm:acomplete}.
This serves to expose the key ideas, but some readers may prefer to 
skip to Sect.~\ref{sec:KATnf}.

\subsection{Example $c0$ unary correctness}\label{sec:c0unary}


For the running example command $c0$ in Sect.~\ref{sec:unary},
Example~\ref{ex:c0an} considers  a schematic annotation $an$ of the automaton $\aut(c0,6)$ for a spec $\spec{P}{Q}$.
Assuming the annotation is valid, the judgment is valid, so by completeness of HL there is a proof. 
In this section we sketch what can be seen as an alternate way to prove completeness of HL,
instantiated to this example.
We present the result for two reasons.
First, it makes a connection between automata-based verification methods and HL.
Second, and more importantly, it serves to illustrate several ingredients 
in the main result of the paper.  
These ingredients are also used in the relational examples of Sect.~\ref{sec:lockstep} and Sect.~\ref{sec:condEg}.

We will define a command $d0$ that is in ``automaton normal form'', 
roughly equivalent to the example $c0$,
and then adapt the annotation for reasoning about $d0$.

Let $pc$ be a fresh variable. Specifically, $pc$ must not occur in $c0$, 
and $P$, $Q$, and each of the given predicates $an(i)$ should be independent from $pc$.  
Define the following program obtained from $c0$ by adding assignments to $pc$ corresponding to labels.
\[ 
\begin{array}{l}
\graybox{$c0^+:$} \quad
 \spc 1; \lassg{1}{x}{y} ;
        \spc 2; \keyw{do}^2 x > 0 \gcto 
                            \spc 3; \!\!\!\begin{array}[t]{l}
                            \keyw{if}^3  x\mod 2 = 0 \gcto \spc 4;\lassg{4}{x}{x-1} \\
                            \gcsep x\mod 2 \neq 0 \gcto \spc 5;\lassg{5}{x}{x-2} 
                             ~ \keyw{fi} ~ ; \spc 2 ~ \keyw{od} 
                             \,;\; \spc 6
                             \end{array}
                             \end{array}
 \]
On termination, $c0^+$ sets $pc$ to $6$, just as the automaton $\aut(c,6)$ of the program does.
This is an instance of a general construction defined later: 
$c0^+ = \addPC(c0) ; \spc 6$.  

A key point is that $c0^+$ is equivalent to another command, $\spc 1;d0$,
that is expressed in a way that reflects the transitions of $\aut(c0,6)$.
As discussed later, we can prove that 
\begin{equation}\label{eq:c0d0}
\spc 1; d0 \kateq c0^+
\end{equation}
where $d0$ is defined by 
\[ \graybox{d0:}\quad \keyw{do} 
\begin{array}[t]{l}
\tpc 1 \gcto x:=y; \spc 2 \\
      \gcsep \tpc 2 \land x > 0 \gcto \spc 3 \\
      \gcsep \tpc 2 \land x \ngtr 0 \gcto \spc 6  \\  
      \gcsep \tpc 3 \land x\mod 2=0 \gcto \spc 4 \\ 
      \gcsep \tpc 3 \land x\mod 2\neq 0 \gcto \spc 5 \\
      \gcsep \tpc 4 \gcto \lassg{}{x}{x-1}; \spc 2 \\ 
      \gcsep \tpc 5 \gcto \lassg{}{x}{x-2}; \spc 2  
      \quad \keyw{od}
      \end{array}
\]
We call $\spc 1;d0$ the \dt{automaton normal form} of $c0^+$.
Later we introduce a relation between commands and normal forms, in terms of which we have $\norm{c0}{6}{gcs}$ where $gcs$ is the body of $d0$.

Note that $\enab(gcs)$ is 
\[ \tpc 1 \lor (\tpc 2 \land x > 0) \lor (\tpc 2 \land x \ngtr 0) \lor (\tpc 3 \land x\mod 2=0) \lor (\tpc 3 \land x\mod 2\neq 0) \lor \tpc 4 \lor \tpc 5 \]
This simplifies to $1 \leq pc < 6$ which can also be written
$\tpc 1 \lor \tpc 2 \lor \ldots \lor \tpc 5$.

In HL+ we prove $c0: \spec{P}{Q}$ as follows.
Define $I$, using the given annotation $an$, as follows:
\[ I: \quad 1 \leq pc \leq 6 \land (\land i : 0\leq i\leq 6 : pc=i \imp an(i)) \]
Later we will prove  $c: \spec{e\land I}{I}$ for each guarded command $e\gcto c$ in $d0$.
Then by rule \rn{Do}, and \rn{Conseq} to simplify the enabling condition, we get
\[ d0: \spec{I}{I \land \neg(1\leq pc < 6)} \]
Now $I \land \neg(1\leq pc < 6)$ implies $I\land pc=6$ which implies $Q$
(recall that we assume $an(6)= Q$).  
Moreover $pc=1\land P$ implies $I$ (using the assumption that $an(1)=P$).
So by the consequence rule, from the above we get 
\[ d0: \spec{pc=1\land P}{Q} \]
Using the assignment rule we get $pc:=1 : \spec{P}{pc=1\land P}$ (recall the assumption that $pc$ is fresh for $P$). 
So by the sequence rule we get 
\[ \spc 1; d0 : \spec{P}{Q} \]
Now using (\ref{eq:c0d0}) and 
the \rn{Rewrite} rule we have 
\[ c0^+ : \spec{P}{Q} \]
By the choice of $pc$ we have $\ghost(pc,c0^+)$
and $P,Q$ are independent from $pc$, 
so by rule \rn{Ghost} we have 
\[ \erase(pc,c0^+) : \spec{P}{Q} \]
Now $\erase(pc,c0^+)$ is similar to $c0$ but different: it looks like
\[ 
 \lskipc{}; \lassg{1}{x}{y} ;
        \lskipc{}; \keyw{do}^2 x > 0 \gcto 
                            \lskipc{}; \keyw{if}^3 \ldots ~ \keyw{fi} ~ \keyw{od};\lskipc{} 
\]
But we can show that
\[ \erase(pc,c0^+) \kateq c0 \]
(essentially because $\skipc;d\kateq d$ and $d;\skipc\kateq d$,
and the fact that $\kateq$ is a congruence).
So using rule \rn{Rewrite} a second time, we finally arrive at our goal:
\[ c0 : \spec{P}{Q} \]
To complete the above proof, we need to discharge the proof obligations for the use of the \rn{Do} rule.  That is, for each $e\gcto c$ in the body of $d0$ 
we need $c:\spec{I\land e}{I}$.  

For the case \graybox{$\tpc 1 \gcto x:=y;\spc 2$} in $d0$, the proof is as follows.
We have $an(1)\imp\subst{an(2)}{x}{y}$ by a VC in Eqn.~(\ref{eq:VCc0}).
Using rules \rn{Asgn} and \rn{Conseq} we can prove 
$x:=y : \spec{an(1)}{an(2)}$.
Hence by \rn{Conseq} we have
$ x:=y : \spec{\tpc 1 \land an(1)}{an(2)} $.
By rule \rn{Asgn} and independence of $an(2)$ from $pc$ we  have
$pc:=2 : \spec{an(2)}{\tpc 2 \land an(2)}$
hence by \rn{Seq} we have
$x:=y; \spc 2 : \spec{\tpc 1\land an(1)}{\tpc 2 \land an(2)}$
hence by \rn{Conseq} and definition of $I$ we get
$x:=y; \spc 2  : \spec{I\land \tpc 1}{I}$ as required.

For the case \graybox{$\tpc 2 \land x > 0 \gcto \spc 3$} the proof is as follows.
By assign rule and $pc$ not in $FV(an(3))$ and $pc$ different from $x$, we have
$\spc 3 : \spec{3=3 \land an(3) }{ \tpc 3 \land an(3) }$.
(Note that $\subst{(\tpc 3)}{pc}{3}$ is $3=3$.)
Hence by consequence rule using the VC $x>0 \land an(2) \imp an(3)$ 
in Eqn.~(\ref{eq:VCc0}), 
we get $\spc 3 : \spec{x>0 \land an(2)}{\tpc 3 \land an(3)}$.
Hence by consequence using the $\tpc 2 \land I \imp an(2)$  and  $\tpc 3 \land an(3) \imp I$
(from definition of $I$), we get the required 
$\spc 3 : \spec{I\land \tpc 2 \land x > 0}{I}$.

The remaining cases are:
\graybox{$\tpc 2 \land x \ngtr 0 \gcto \spc 6$},
\graybox{$\tpc 3 \land x\mod 2=0 \gcto \spc 4$},
\graybox{$\tpc 3 \land x\mod 2\neq 0 \gcto \spc 5$},
\graybox{$\tpc 4 \gcto x:=x-1; \spc 2$}, and
\graybox{$\tpc 5 \gcto x:=x-2; \spc 2$}.
They are all proved similarly.

In summary, we constructed a HL+ proof of $c0 : \spec{P}{Q}$ in a way which mimics the automaton based proof.  
The main points of the example are that 
(i) a program is equivalent to one that corresponds to a transition system representation of the program,  
(ii) for such a program, the annotation gives rise to a loop invariant 
which is a boolean combination of the annotation's assertions $an(i)$,
and (iii) the proof obligations for the commands in the automaton normal form body 
correspond closely to the VCs for the annotated automaton.

\begin{remark}\upshape 
One might think we can avoid the last \rn{Rewrite} step above, by defining $\erase$ to remove gratuitous skips.  But one must be careful to ensure that the resulting program is well formed.  For example, we have $\ghost(x,\lassg{}{x}{x})$ and
$\erase(x,\lassg{}{x}{x})$ is $\lskipc{}$; we do not want to eliminate this skip, nor
the one in $\erase(x,\lifgc{}{y>0\gcto \lassg{}{x}{1}\gcsep\ldots})$.
\qed
\end{remark}

\subsection{A lockstep alignment example}\label{sec:lockstep}

In this section we prove a relational judgment using an alignment that runs the automata in lockstep.  Then we show how the annotation can be used as basis for a RHL+ proof.

To streamline notation in the following, we use primed variables to indicate 
their value in the right state, so $x=x'$ abbreviates $\leftF{x}=\rightF{x}$.

We show $c0\sep c0:\rspec{y=y'}{x=x'}$ for the running example (from Sect.~\ref{sec:unary}), repeated here:
\[ \graybox{c0:} \quad
        \lassg{1}{x}{y} ;
        \keyw{do}^2 x > 0 \gcto 
                            \begin{array}[t]{l}
                            \keyw{if}^3 \: x\mod 2 = 0 \gcto \lassg{4}{x}{x-1} \\
                            \gcsep \: x\mod 2 \neq 0 \gcto \lassg{5}{x}{x-2} 
                             ~ \keyw{fi} ~ \keyw{od}
                             \end{array} 
\]
We choose $6$ as final label on both sides,
and use a lockstep alignment automaton,
with $L$ and $R$ false and for $J$ the condition
that the control is in $\{(i,i) \mid 1\leq i \leq 6\}$,
i.e., $J:= [1|1]\lor[2|2]\lor\ldots\lor[5|5]$.
We consider $\aprod(\aut(c0,6),aut(c0,6),\mathit{false},\mathit{false},J)$.
The automaton is manifestly $(y=y')$-adequate:
$J\lor [6|6]$ is preserved by $\biTrans$ for the alignment automaton.
Note that it would remain manifestly $true$-adequate if we included
$[6|6]$ as a disjunct of $J$,
but that would go against Assumption~1 which says we only consider live $(L,R,J)$.

Here is the annotation:
\[\begin{array}{lcl}
an(1,1) &=& (y=y') \\
an(i,i) &=& (x=x') \quad\mbox{for $i$ such that $2\leq i\leq 6$ }\\
an(i,j) &=& \mathit{false} \quad\mbox{for $i,j$ in $1..6$ with $i\neq j$}
\end{array}\] 
Here are the nontrivial VCs:
\begin{equation}\label{eq:VCrel}\begin{array}{l}
an(1,1)\land\bothF{\tpc 1}\imp \subst{an(2,2)}{x|x'}{y|y} \\       
an(2,2)\land\bothF{\tpc 2} \land x > 0 \land x' > 0 \imp an(3,3) \\ 
an(2,2)\land\bothF{\tpc 2} \land x\ngtr 0 \land x'\ngtr 0 \imp  an(6,6) \\
an(3,3)\land\bothF{\tpc 3} \land x\mod 2=0 \land x'\mod 2=0 \imp an(4,4) \\
an(3,3)\land\bothF{\tpc 3} \land x\mod 2\neq 0 \land x'\mod 2\neq 0 \imp an(5,5) \\
an(4,4)\land\bothF{\tpc 4} \imp \subst{an(2,2)}{x|x'}{x-1|x'-1}     \\ 
an(5,5)\land\bothF{\tpc 5}\imp \subst{an(2,2)}{x|x'}{x-2|x'-2}
\end{array}
\end{equation}
All the other VCs have false antecedent or precondition.
For example, one VC is 
$an(2,2)\land\bothF{\tpc 2} \land x > 0 \land x' < 0 \imp an(3,6) $
but $an(2,2)$ says $x=x'$ so the antecedent simplifies to false.
The annotation is valid.

To construct a proof of $c0\sep c0:\rspec{true}{x=x'}$
in RHL+ we use the normal form in (\ref{eq:c0d0}),
i.e., $\spc 1; d0$ where 
\[ d0: \quad \keyw{do} 
\begin{array}[t]{l}
      \tpc 1 \gcto x:=y; \spc 2 \\
      \gcsep \tpc 2 \land x > 0 \gcto \spc 3 \\
      \gcsep \tpc 2 \land x \ngtr 0 \gcto \spc 6  \\  
      \gcsep \tpc 3 \land x\mod 2=0 \gcto \spc 4 \\ 
      \gcsep \tpc 3 \land x\mod 2\neq 0 \gcto \spc 5 \\
      \gcsep \tpc 4 \gcto \lassg{}{x}{x-1}; \spc 2 \\ 
      \gcsep \tpc 5 \gcto \lassg{}{x}{x-2}; \spc 2  
      \quad \keyw{od}
      \end{array}
\]
We will use rule \rn{dDo} instantiated with $\Q,\Lrel,\R:=\S,\mathit{false},\mathit{false}$,
with the following as loop invariant.
\[\S: \quad
\begin{array}[t]{l}
  (pc=1=pc' \imp true) \land \\
  \quant{\land}{i}{2\leq i\leq 6}{pc=i=pc' \imp x=x'} \land \\
  1\leq pc \leq 6 \land pc=pc'
  \end{array}
\]
Note that $1\leq pc \leq 6 \land pc=pc'$ is equivalent to
$\encode{J}\lor \bothF{\tpc 6}$. 

Writing $gcs$ for the body of $d0$, 
the side condition of \rn{dDo} is thus 
\[ \S\imp (\leftF{\enab(gcs)} = \rightF{\enab(gcs)}) 
        \lor (\mathit{false} \land \leftF{\enab(gcs)})
        \lor (\mathit{false} \land \rightF{\enab(gcs)}) \]
which is valid
by $\leftF{pc}=\rightF{pc}$, i.e., $pc=pc'$, from $\S$.  
There are three sets of premises in \rn{dDo}.  The first set relates a guarded command on the left to $\skipc$ on the right, with a precondition including the left alignment condition which here is $\mathit{false}$, so these premises are immediate using rule \rn{rFalse}.
The second set of premises is similar, with $\skipc$ on the left, and again precondition $\mathit{false}$.

What remains is the third set of premises, which require us to relate each guarded command in $d0$, on the left, with each guarded command of $d0$ on the right, under precondition $\encode{J}$.
There are $7\times 7$ such premises, 
each of the form $d\sep d': \rspec{\leftF{e}\land\rightF{e'}\land\S}{\S}$ 
for $e\gcto d$ and $e'\gcto d'$ in $d0$.

Here is one case: for the guarded commands
$ \tpc 1 \gcto x:=y; \spc 2$ and 
$ \tpc 2 \land x > 0 \gcto \spc 3$ we must show 
\[ x:=y;\spc 2 \sep \spc 3 : \rspec{\leftF{\tpc 1}\land\rightF{\tpc 2\land x>0}\land \S}{\S} \]
The precondition has a contradiction:
$\leftF{\tpc 1}$ and $\rightF{\tpc 2}$ contradict the conjunct $pc=pc'$ of $\S$.
So the displayed judgment can be proved using \rn{rConseq} and \rn{rFalse}.

In fact all cases where control is at different places in the two programs, 
or different branches are being taken, can be proved using \rn{rFalse}.
Some of those cases contradict $pc=pc'$, as above.
The other cases contradict $x=x'$, for example the case
of $\tpc 2 \land x > 0 \gcto \spc 3$ versus 
$\tpc 2 \land x \ngtr 0 \gcto \spc 6$.

What remain are the cases where control is at the same point on both sides and the same branch (of the original program) is being taken.  We consider these in turn.

\begin{itemize}
\item $\tpc y \gcto x:=y; \spc 2$.  For this we must show 
\begin{equation}\label{eq:A}
 x:=y; \spc 2 \mid x:=y; \spc 2 : \rspec{\leftF{\tpc 1}\land\rightF{\tpc 1}\land\S}{\S} 
\end{equation}
By a VC in (\ref{eq:VCrel}) we have 
$an(1,1)\land\bothF{\tpc 1}\imp\subst{an(2,2)}{x|x'}{y|y}$
so using \rn{dAsgn} and then \rn{rConseq} we get 
$x:=y \sep x':=y : \rspec{an(1,1)\land\bothF{\tpc 1}}{an(2,2)}$.
By \rn{dAsgn} and $pc$ not free in $an(2,2)$ we have 
\[ \begin{array}{l}
    \spc 2 \sep \spc 2 : \rspec{\leftF{2=2}\land\rightF{2=2}\land  an(2,2)  }{\leftF{\tpc 2}\land\rightF{\tpc 2}\land an(2,2)}
    \end{array}
 \]
hence by \rn{rConseq} we have
\[ \spc 2 \sep \spc 2 : \rspec{an(2,2)}{\leftF{\tpc 2}\land\rightF{\tpc 2}\land an(2,2)} \]
So by \rn{dSeq} we have 
\[ \begin{array}{l}
   x:=y; \spc 2 \sep x:=y; \spc 2 : \rspec{an(1,1)\land\bothF{\tpc 1}}{ \leftF{\tpc 2}\land\rightF{\tpc 2}\land an(2,2)} 
\end{array}
\] 
so by \rn{rConseq} and definition of $\S$ we have (\ref{eq:A}).

\item $\tpc 2 \land x > 0 \gcto \spc 3$.
By \rn{dAsgn}, using that $pc$ is distinct from $x$ and is not free in $an(3,3)$, we have
\[
\begin{array}{l}
 \spc 3 \sep \spc 3 : \rspec{\bothF{3=3 } \land an(3,3) }{ \bothF{\tpc 3} \land an(3,3)} 
 \end{array}
\]
so by \rn{rConseq} we have 
\[\begin{array}{l}
 \spc 3 \sep \spc 3 : \rspec{an(3,3) }{ \bothF{\tpc 3}\land an(3,3)} 
 \end{array}\]
Using the VC $an(2,2)\land\bothF{\tpc 2} \land \bothF{x>0}\imp an(3,3)$
with \rn{rConseq} we get 
\[\begin{array}{l}
 \spc 3 \sep \spc 3 : \rspec{\bothF{x > 0}\land an(2,2)\land\bothF{\tpc 2} }{ \bothF{\tpc 3}\land an(3,3)} 
\end{array}\]
From the definition of $\S$ we have
$\bothF{\tpc 2}\land S \imp an(2,2)$ and 
$\bothF{\tpc 3}\land an(3,3) \imp \S$
so by \rn{rConseq} we get what is required:
\[ \spc 3 \sep \spc 3 : \rspec{\bothF{\tpc 2 \land x > 0}\land \S }{\S}
\]

\item The remaining five cases are proved similarly; the cases are 
$\tpc 2 \land x \ngtr 0 \gcto \spc 6$,
$\tpc 3 \land x\mod 2=0 \gcto \spc 4$,
$\tpc 3 \land x\mod 2\neq 0 \gcto \spc 5$,
$\tpc 4 \gcto \lassg{}{x}{x-1}; \spc 2 $, and 
$\tpc 5 \gcto \lassg{}{x}{x-2}; \spc 2  $
\end{itemize}
Having established the premises and side condition of rule \rn{dDo}, we infer its conclusion:
\[ d0\sep d0: \rspec{\S}{\S\land\neg\leftF{\enab(gcs)}\land\neg\rightF{\enab(gcs)} } \]
As in the proof for unary correctness of $d0$, 
we have that $\enab(gcs)$ is equivalent to $1\leq pc <6$,
so in conjunction with $\S$ we can apply \rn{rConseq} to get 
\[ d0\sep d0: \rspec{\S}{\S\land \bothF{\tpc 6}} \]
Another use of \rn{rConseq} and the definition of $\S$ yields
\[ d0\sep d0: \rspec{\S}{x=x'} \]   
Yet another use of \rn{rConseq} and definition of $\S$ yields 
\[ d0\sep d0: \rspec{\bothF{\tpc 1}}{x=x'} \]   
By \rn{dAsgn} and \rn{rConseq} we have
$\spc 1 \sep \spc 1 : \rspec{true}{\leftF{\tpc 1}\land\rightF{\tpc 1}}$.
Hence by \rn{dSeq} we have 
\[ \spc 1; d0 \Sep \spc 1; d0 :  \rspec{true}{x=x'} \]
Now, using the equivalence (\ref{eq:c0d0}) and rule \rn{rRewrite} we get 
\[ c0^+ \sep c0^+ :  \rspec{true}{x=x'} \]
Next, $pc$ has the ghost property so by \rn{rGhost} we get
\[ \erase(pc,c0^+) \sep \erase(pc,c0^+) :  \rspec{true}{x=x'} \]
As noted in the unary proof (Sect.~\ref{sec:c0unary}) we have $ \erase(pc,c0^+) \kateq c0$,
so by \rn{rRewrite} we get 
\[ c0 \sep c0 :  \rspec{true}{x=x'} \]

\subsection{A conditionally aligned example}\label{sec:condEg}

\begin{figure*}
\begin{footnotesize}
\( 
\begin{array}{ll}
c4: & 
\lassg{1}{y}{x}; \lassg{2}{z}{24}; \lassg{3}{w}{0}; 
\ldogc{4}{
   y\neq 4 \gcto
      \lifgc{5}{ w \mod 2 = 0 \gcto \lassg{6}{z}{z*y}; \lassg{7}{y}{y-1} 
           \gcsep w\mod 2 \neq 0 \gcto \lskipc{8}}
      ; \lassg{9}{w}{w+1} }
\\[.3ex]
c5: & 
\lassg{1}{y}{x}; \lassg{2}{z}{16}; \lassg{3}{w}{0}; 
    \ldogc{4}{y\neq 4 \gcto 
          \lifgc{5}{w \mod 3 = 0 \gcto \lassg{6}{z}{z*2}; \lassg{7}{y}{y-1}
              \gcsep w \mod 3 \neq 0 \gcto \lskipc{8}}
    ; \lassg{9}{w}{w+1} }
\end{array}
\)
\end{footnotesize}
\caption{Commands c4 and c5 adapted from~\cite{NagasamudramN21}}
\label{fig:c4c5}
\end{figure*}

We consider two example programs used in prior work to illustrate the conditionally aligned loop rule (our \rn{dDo}).
We use them to illustrate a non-lockstep alignment automaton
used in an IAM proof and then in the corresponding RHL+ proof.
The two commands are in Fig.~\ref{fig:c4c5}.

Both loops maintain $y\geq 4$ as invariant.
The loop in $c4$ maintains the invariants  $x! \cdot 4! = z \cdot y!$ and 
the loop in $c5$ maintains $2^x\cdot 2^4 = z \cdot 2^y$, owing to the respective 
initializations of $z=4!=24$ and $z=2^4=16$.
(Postfix $!$ means factorial as usual, and is not to be confused with the prefix abbreviation $\spc n$ for $\lassg{}{pc}{n}$.)
The reader may enjoy to use the proof rules to show 
\[ c4 \sep c5 : \rspec{x = x'\land x > 3}{z>z'} \]
using loop alignment conditions $\Lrel := \leftF{ w\mod 2\neq 0 }$ and $\R := \rightF{  w'\mod 3\neq 0 }$
and rule \rn{dDo}. 
But we give an automaton and then verify the program in automaton normal form.
The alignment enables use of annotations with simple predicates, avoiding any reference 
to the factorial or exponential functions.

\begin{remark}\upshape\label{rem:moduli}
Here is an interesting variation on the example.  Replace the moduli 2 (in $c4$) and 3 (in $c5$) by  variables $u$ and $v$ respectively, that are not modified by the program.
Conjoin to the precondition that $1 < u$ and $1 < v'$.
The proofs in this section go through essentially unchanged.
But the variation goes beyond what can be handled by Churchill et al~\cite{ChurchillP0A19}
and similar techniques that are restricted to linear arithmetic in alignment conditions
and in relational invariants.
\qed\end{remark}

We prove 
\begin{equation}\label{eq:specc4c5}
\aut(c4,0) \sep \aut(c5,0) : \rspec{x = x'\land x > 3}{z>z'} 
\end{equation}
using $0$ as final label.

Define $L,R,J$ as follows; recall that 
$[i|j]$ is  set of automaton states where control is at $(i,j)$.
\[ \begin{array}{l}
J:\quad [1|1]\lor[2|2]\lor[3|3]\lor([4|4]\land\leftF{ w\mod 2 = 0}\land \rightF{ w'\mod 3 = 0})\lor  
[5|5]\lor[6|6]\lor[7|7]\lor[9|9]
\\
L: \quad
([4|4]\land\leftF{ w\mod 2\neq 0})
\lor [5|4]
\lor [8|4]
\lor [9|4]
\\
R: \quad
([4|4]\land\rightF{w'\mod 3\neq 0})
\lor [4|5]
\lor [4|8]
\lor [4|9]
\end{array}
\]
Note that $(L,R,J)$ is live,
because assignments are always enabled and at $(5,5)$ the conditional is total.
It is straightforward but tedious to check that the following
is an inductive $(x=x')$-invariant: $J\lor L\lor R\lor [0|0]$.
So the alignment automaton 
$\aprod(\aut(c4,0),\aut(c5,0),L,R,J)$ is manifestly $(x=x')$-adequate.  

\begin{figure}
\(  \begin{array}{ll}
(n,m) & an(n,m) \\\hline
(1,1) & x=x'\land x>3 \qquad\mbox{precondition}\\
(2,2)  & y=y' \land y>3 \\
(3,3)  & \S \\ 
(4,4)  & \S \\
(5,5)  & \S \land y>4 \land w\mod 2 = 0 = w'\mod 3 \\
(6,6)  & \S \land y>4 \land w\mod 2 = 0 = w'\mod 3 \\ 
(7,7)  & \S \land y>4 \land w\mod 2 = 0 = w'\mod 3 \\ 
(9,9)  & \S \\ 
(5,4)  & \S \land w\mod 2 \neq 0 \\
(8,4)  & \S   \\
(9,4)  & \S   \\
(4,5)  & \S \land w'\mod 3\neq 0  \\
(4,8)  & \S    \\
(4,9)  & \S   \\
(0,0) & z>z' \qquad\mbox{postcondition} \\
(\_,\_) & \mathit{false} \quad\mbox{in all other cases}
\end{array}
\)
\caption{Annotation for $c4,c5$ example}\label{fig:c4c5an}
\end{figure}

Using the predicate
\[ \S: \qquad y=y' \land y>3 \land z>z'>0 \]
we define annotation $an$ in Fig.~\ref{fig:c4c5an}.

The automaton has $10^2$ control points $(i,j)$, and thus $10^4$ pairs of control points
for which VCs must hold.
Most are infeasible of course.
For example, consider the VC for $((5,5),(8,8))$.
The automaton only transitions from $(5,5)$ to $(8,8)$ when 
$w\mod 2\neq 0$ and $w'\mod 3\neq 0$ but this contradicts $an(5,5)$ so the VC holds vacuously.
As another example, the automaton can transition from $(5,4)$ to $(6,4)$ when $w\mod 2 = 0$,
but this contradicts $an(5,4)$.

Fig.~\ref{fig:VC45} shows the VCs where the automaton has transitions from states that satisfy the annotation.  
They all follow by propositional reasoning and a bit of linear inequality reasoning.
So we have an IAM proof the programs satisfy their spec (\ref{eq:specc4c5})
in accord with Corollary~\ref{cor:relIAM}. 

\begin{figure*}
\begin{small}
\(\begin{array}{lll}
((n,n'),(m,m'))\hspace*{-3em} & \qquad \mbox{VC for } ((n,n'),(m,m')) \\\hline

((1,1),(2,2)) &
J\land[1|1]\land x=x'\land x>3 \imp \subst{(y=y'\land y>3)}{y|y'}{x|x'}
\\
((2,2),(3,3)) &
J\land[2|2]\land y=y' \land y>3 \imp \subst{\S}{z|z'}{24|16}
\\
((3,3),(4,4)) &
J\land[3|3]\land \S \imp \subst{\S}{w|w'}{0|0}
\\
((4,4),(5,5)) &
J\land[4|4]\land \S\land y\neq 4 \land y'\neq 4 \imp \S\land y>4\land w\mod 2=0=w'\mod 3 
& \mbox{\hspace*{-8em}(proved using $J$)}
\\
((4,4),(0,0)) &
J\land[4|4]\land \S\imp z>z'
\\
((5,5),(6,6)) &
J\land[5|5]\land w\mod 2=0=w'\mod 3 \land\S\land y>4 \imp \S\land y>4\land w\mod 2=0=w'\mod 3 
\\
((6,6),(7,7)) &
J\land[6|6]\land w\mod 2=0=w'\mod 3 \land\S\land y>4 \imp 
    \subst{(\S\land y>4\land w\mod 2=0=w'\mod 3)}{z|z'}{z*y|z'*2}
\\
((7,7),(9,9)) &
J\land[7|7]\land w\mod 2=0=w'\mod 3 \land\S\land y>4 \imp 
    \subst{\S}{y|y'}{y-1|y'-1}
\\
((9,9),(4,4)) &
J\land[9|9]\land\S \imp \subst{\S}{w|w'}{w+1|w'+1}
\\
((4,4),(5,4)) &
L\land[4|4]\land \S\land y\neq 4 \imp \S \land w\mod 2 \neq 0
\\
((5,4),(8,4)) &
L\land[5|4]\land \S\land w\mod 2\neq 0 \imp \S
\\
((8,4),(9,4)) &
L\land[8|4]\land \S\imp \S
\\
((9,4),(4,4)) &
L\land[9|4]\land \S\imp \S
\\
((4,4),(5,4)) &
R\land[4|4]\land \S\land y'\neq 4 \imp \S \land w'\mod 3 \neq 0
\\
((4,5),(4,8)) &
R\land[4|5]\land \S\land w'\mod 3\neq 0 \imp \S
\\
((4,8),(4,9)) &
R\land[4|8]\land \S\imp \S
\\
((4,9),(4,4)) &
R\land[4|9]\land \S\imp \S
\end{array}
\)
\end{small}
\vspace*{-1ex}
\caption{The nontrivial VCs for 
$\aprod(\aut(c4,0),\aut(c5,0,L,R,J)$}
\label{fig:VC45}
\end{figure*}

For a deductive proof in RHL+, 
let $pc$ be a fresh variable, i.e., one distinct from the variables of $c4,c5$ since these are the only variables that occur in the annotation or in the alignment conditions.
Fig.~\ref{fig:Jpc} spells out the $pc$-encoded store relations $\encode{L},\encode{R},\encode{J}$, that express the alignment conditions $L,R,J$ as predicates on $pc$
according to Def.~\ref{def:pc:encode}.

\begin{figure*}
\[ \begin{array}{l}
\encode{J}:
\quad 
\bothF{\tpc 1}\lor\bothF{\tpc 2}\lor\bothF{\tpc 3}\lor(\bothF{\tpc 4}\land\leftF{ w\mod 2 = 0}\land \rightF{ w'\mod 3 = 0})\lor \bothF{\tpc 5}\lor\bothF{\tpc 6}\lor\bothF{\tpc 7}\lor\bothF{\tpc 9}
\\
\encode{L}: 
\quad
(\bothF{\tpc 4}\land\leftF{ w\mod 2\neq 0})
\lor (\bothF{\tpc 5\sep\tpc 4})
\lor (\bothF{\tpc 8\sep\tpc 4}) 
\lor (\bothF{\tpc 9\sep\tpc 4})
\\
\encode{R}: 
\quad
(\bothF{\tpc 4}\land\rightF{ w\mod 3\neq 0})
\lor (\bothF{\tpc 4\sep\tpc 5})
\lor (\bothF{\tpc 4\sep\tpc 8}) 
\lor (\bothF{\tpc 4\sep\tpc 9})
\end{array}
\]
\vspace*{-1ex}
\caption{Store relations $\encode{L},\encode{R},\encode{J}$ for $c4,c5$}\label{fig:Jpc}
\end{figure*}

We aim to prove $\rspec{x=x'\land x>3}{z>z'}$ for the normal forms of $c4$ and $c5$,
instantiating rule \rn{dDo} 
with $\Lrel:=\encode{L}$, $\R:=\encode{R}$, and invariant
$\Q$ defined by
\[ \Q: \quad \S\land(\encode{L}\lor\encode{R}\lor\encode{J}\lor\bothF{\tpc 0}) \]
In the unary example of Sect.~\ref{sec:c0unary},
the invariant $I$ includes a conjunct that constrains the range of $pc$.
In $\Q$ the range on both sides is constrained by 
$\encode{L}\lor\encode{R}\lor\encode{J}\lor\bothF{\tpc 0}$.

The normal form of $c4$ is $\spc 1; d4$ where
\[ \graybox{d4:}\quad \keyw{do} 
\begin{array}[t]{l}
       \tpc 1 \gcto y:=x; \spc 2 \\
\gcsep \tpc 2 \gcto z:=24; \spc 3 \\
\gcsep \tpc 3 \gcto w:=0; \spc 4 \\
\gcsep \tpc 4 \land y\neq 0 \gcto \spc 5 \\
\gcsep \tpc 4 \land y=0 \gcto \spc 0 \\
\gcsep \tpc 5 \land w\mod 2 = 0 \gcto \spc 6 \\
\gcsep \tpc 5 \land w\mod 2 \neq 0 \gcto \spc 8 \\
\gcsep \tpc 6 \gcto z:=z*y; \spc 7 \\
\gcsep \tpc 7 \gcto y:=y-1; \spc 9 \\
\gcsep \tpc 8 \gcto \spc 9 \\
\gcsep \tpc 9 \gcto w:=w+1; \spc 4
\quad \keyw{od}
\end{array}
\]
The normal form of $c5$ is $\spc 1; d5$ where
\[ \graybox{d5:}\quad \keyw{do} 
\begin{array}[t]{l}
       \tpc 1 \gcto y:=x; \spc 2 \\
\gcsep \tpc 2 \gcto z:=16; \spc 3 \\
\gcsep \tpc 3 \gcto w:=0; \spc 4 \\
\gcsep \tpc 4 \land y\neq 0 \gcto \spc 5 \\
\gcsep \tpc 4 \land y=0 \gcto \spc 0 \\
\gcsep \tpc 5 \land w\mod 3 = 0 \gcto \spc 6 \\
\gcsep \tpc 5 \land w\mod 3 \neq 0 \gcto \spc 8 \\
\gcsep \tpc 6 \gcto z:=z*2; \spc 7 \\
\gcsep \tpc 7 \gcto y:=y-1; \spc 9 \\
\gcsep \tpc 8 \gcto \spc 9 \\
\gcsep \tpc 9 \gcto w:=w+1; \spc 4
\quad \keyw{od}
\end{array}
\]
As an abbreviation, we write $\enab(d4)$ for $\enab$ applied to the body of $d4$,
and likewise for $\enab(d5)$.
The conditions $\enab(d4)$ and $\enab(d5)$ are both equivalent to $1\leq pc\leq 9$.

We aim to instantiate rule \rn{dDo} by 
$\Q:=\Q$, $\Lrel:=\encode{L}$, and $\R:=\encode{R}$.
The side condition of \rn{dDo} is
\begin{equation}\label{eq:side-c4c5}
 \Q \imp (\leftF{\enab(d4)} = \rightF{\enab(d5)})
         \lor (\encode{L}\land\leftF{\enab(d4)})
         \lor (\encode{R}\land\rightF{\enab(d5)}) 
\end{equation}
To prove it, distribute in $\Q$ so the antecedent of the side condition has the form
\[ (\S\land\encode{L})\lor(\S\land\encode{R})\lor(\S\land\encode{J})\lor(\S\land\bothF{\tpc 0}) \]
We show each disjunct implies the consequent of the side condition (\ref{eq:side-c4c5}).
\begin{itemize}
\item $\S\land\encode{L}$ implies $\encode{L}\land\leftF{\enab(d4)}$ using the definition of $\encode{L}$.
\item $\S\land\encode{R}$ implies $\encode{R}\land\rightF{\enab(d5)}$ using the definition of $\encode{R}$.
\item $\S\land\encode{J}$ implies both $\leftF{\enab(d4)}$ and $\rightF{\enab(d5)}$ are true
so we have $\leftF{\enab(d4)} = \rightF{\enab(d5)}$.
\item $\S\land\bothF{\tpc 0}$ implies both $\leftF{\enab(d4)}$ and $\rightF{\enab(d5)}$ are false, 
so we have $\leftF{\enab(d4)} = \rightF{\enab(d5)}$.
\end{itemize}

As for the premises of \rn{dDo}, note that the normal form bodies in $d4$ and $d5$ each have 11 cases.  So for \rn{dDo} there are 11 left-only premises, 11 right-only, and 121 joint premises.  


Let's do the left-only premises first --- selected cases of guarded commands in $d4$.
\begin{itemize}
\item for $\tpc 1\gcto y:=x;\spc 2$  we must show
\[ y:=x;\spc 2\sep\skipc : \rspec{\Q\land \leftF{\tpc 1}\land\encode{L}}{\Q} \]
Since $\leftF{\tpc 1}$ contradicts $\encode{L}$, we get the judgment by \rn{rConseq} and \rn{rFalse}.

\item for $\tpc 4\land y\neq 0\gcto \spc 5$ we must show
\[ \spc 5\sep\skipc : \rspec{\Q\land \leftF{\tpc 4\land y\neq 0}\land\encode{L}}{\Q} \]
By \rn{AsgnSkip} we have 
\( \spc 5\sep\skipc : \rspec{\subst{\Q}{pc|}{5|}}{\Q} \),
and we complete the proof using \rn{rConseq}, because 
we can prove
$\Q\land \leftF{\tpc 4\land y\neq 0}\land\encode{L} \imp \subst{\Q}{pc|}{5|}$.
To do so, note that $\subst{\Q}{pc|}{5|}$ is equivalent to 
\[ (\S\land \subst{\encode{L}}{pc|}{5|})\lor
 (\S\land \subst{\encode{R}}{pc|}{5|})\lor
 (\S\land \subst{\encode{J}}{pc|}{5|})\lor
 (\S\land \subst{\bothF{\tpc 0}}{pc|}{5|}) \]
We show $\S\land \subst{\encode{L}}{pc|}{5|}$ follows from the antecedent
$\Q\land \leftF{\tpc 4\land y\neq 0}\land\encode{L}$.
For $\S$, it is immediate from $\Q$.  
Now $\subst{\encode{L}}{pc|}{5|}$ is
\[ 
(\leftF{5=4}\land\rightF{\tpc 4}\land\leftF{ w\mod 2\neq 0})
\lor (\leftF{5=5}\land\rightF{\tpc 4})
\lor (\leftF{5=8}\land\rightF{\tpc 4}) 
\lor (\leftF{5=9}\land\rightF{\tpc 4})
\]
And we have that $\leftF{5=5}\land\rightF{\tpc 4}$ follows from 
$\Q\land \leftF{\tpc 4}\land\encode{L}$,
because $\encode{L}$ and $\leftF{\tpc 4}$ imply $\rightF{\tpc 4}$.

\end{itemize}

Turning to the joint premises,
note that the antecedents  include $\lnot\Lrel\land\lnot\R$,
which together with $\encode{L}\lor\encode{R}\lor\encode{J}\lor\bothF{\tpc 0}$ from the invariant 
implies $\encode{J}\lor\bothF{\tpc 0}$.  

For each $e\gcto c$ in $d4$ and $e'\gcto c'$ in $d5$,
the premise to be established is 
\[ c|c':\rspec{ \Q\land\leftF{e}\land\rightF{e'}\land\lnot\encode{L}\land\lnot\encode{R} }
              { \Q }
\]
which by \rn{rConseq} follows from
\[ c|c':\rspec{ \Q\land\leftF{e}\land\rightF{e'}\land(\encode{J}\lor\bothF{\tpc 0}) }
              { \Q }
\]
which follows by \rn{rDisj} from 
\[ c|c':\rspec{ \Q\land\leftF{e}\land\rightF{e'}\land\bothF{\tpc 0} }
              { \Q }
\]
and 
\begin{equation}\label{eq:c4c5joint}
 c|c':\rspec{ \Q\land\leftF{e}\land\rightF{e'}\land\encode{J} }
              { \Q }
\end{equation}
The first of these then follows using \rn{rConseq} and \rn{rFalse} because the guards $e$ in $d4$, and $e'$ in $d5$, all contradict $\tpc 0$.
This leaves us with the 121 cases of the form (\ref{eq:c4c5joint}).
Many of those cases go by \rn{rConseq} and \rn{rFalse}, for example
the case for $\tpc 1 \gcto y:=x; \spc 2$ and $\tpc 2 \gcto z:=16; \spc 3$
has precondition $\leftF{\tpc 1}\land\rightF{\tpc 2}$ for the guards, which contradicts $\encode{J}$.

The remaining cases correspond to the $J$-cases of the VCs in Fig.~\ref{fig:VC45}.
We consider one case.
\begin{itemize}\item 
$\tpc 6 \gcto \lassg{}{z}{z*y};\spc 7
\Sep
\tpc 6 \gcto \lassg{}{z}{z*2};\spc 7$.
We must show 
\[ \lassg{}{z}{z*y};\spc 7 \Sep \lassg{}{z}{z*2};\spc 7 
: \; \rspec{\Q\land\encode{J}\land\bothF{6}}{\Q} \]
We get $\spc 7\sep\spc 7: \rspec{an(7,7)}{an(7,7)\land\bothF{\tpc 7}}$
using \rn{rAsgn} and \rn{rConseq}, noting that $an(7,7)$ is independent from $pc$.
By definition of $\Q$ we have 
$an(7,7)\land\bothF{\tpc 7} \imp \Q$ so by \rn{rConseq} we get 
$\spc 7\sep\spc 7: \rspec{an(7,7)}{\Q}$.
By \rn{rAsgn} we get 
$\lassg{}{z}{z*y} \sep \lassg{}{z}{z*2}
: \rspec{\subst{an(7,7)}{z|z}{z*y|z*2}}{an(7,7)}$.
By a $pc$-encoded VC we have 
$\encode{J}\land\bothF{\tpc 6}\land an(6,6) \imp \subst{an(7,7)}{z|z}{z*y|z*2}$.
So by \rn{rConseq} we get 
$\lassg{}{z}{z*y} \sep \lassg{}{z}{z*2}
: \rspec{\encode{J}\land\bothF{\tpc 6}\land an(6,6)}{an(7,7)}$.
So by \rn{dSeq} we get
$\lassg{}{z}{z*y};\spc 7 \Sep \lassg{}{z}{z*2};\spc 7 
: \; \rspec{\Q\land\encode{J}\land\bothF{6}}{\Q}$.
\end{itemize}
Once all the premises for \rn{dDo} are proved, we get its conclusion: 
$d4\sep d5: \rspec{\Q}{\Q\land \neg\leftF{\enab(gcs_4)}\land\neg\rightF{\enab(gcs_5)}}$
where $gcs_4$ and $gcs_5$ are the bodies of $d4$ and $d5$. 

By steps like those in Sect.~\ref{sec:lockstep},
we can proceed to get 
$\spc 1;d4 \sep \spc 1;d5 : \rspec{x = x'\land x > 3}{z>z'}$
and continue using using \rn{rRewrite} and \rn{rGhost} to conclude
the proof of 
$ c4 \sep c5 : \rspec{x = x'\land x > 3}{z>z'}$.

\section{Automaton normal form reduction and KAT}\label{sec:KATnf}

Sect.~\ref{sec:embed} defines the representation of commands by expressions in 
Kleene algebra with tests (KAT), and defines command equivalence in terms of KAT equality under certain hypotheses. 
Sect.~\ref{sec:nf} defines the automaton normal form and the reduction of 
a command to its normal form.
Sect.~\ref{sec:KATequiv} presents the particular hypotheses for command equivalence 
that are needed for the normal form theorem.
Sect.~\ref{sec:nfthm} proves that any command, once instrumented with assignments 
to a fresh $pc$ variable, is equivalent to its normal form.
Sect.~\ref{sec:alt} discusses design alternatives for the formalization of command equivalence.  

\subsection{Embedding programs in a KAT}\label{sec:embed}

\newcommand\kK{\mathbb{K}}
\newcommand\kB{\mathbb{B}}
\newcommand\kdot{\mathbin{;}}
\newcommand\kstar{*}
\newcommand\kplus{+}
\newcommand\kneg{\neg}
\newcommand\kNeg[1]{\overline{#1}}
\newcommand\kone{1}
\newcommand\kzero{0}

\begin{definition}\label{def:KAT}
A \dt{Kleene algebra with tests}~\cite{Kozen97} (\dt{KAT}) is a structure $(\kK,\kB,\kplus,\kdot,\kstar,\kneg,\kone,\kzero)$
such that
\begin{itemize}
\item  $\kK$ is a set and $\kB\subseteq \kK$ (elements of $\kB$ are called \dt{tests}).
\item  $\kB$ contains $\kone$ and $\kzero$, and is closed under the operations $\kplus,\kdot,\kneg$,
       and these satisfy the laws of Boolean algebra, with $\kone$ as true
and $\kdot$ as conjunction.
\item $\kK$ is an idempotent semiring and the 
following hold for all $x,y,z$ in $\kK$.
\[\begin{array}{lcl}
\kone \kplus x\kdot x^\kstar &=& x^\kstar \\
\kone \kplus x^\kstar\kdot x &=& x^\kstar \\
y \kplus x\kdot z \leq z &\imp& x^\kstar\kdot y \leq z \\
y \kplus z\kdot x \leq z &\imp& y\kdot x^\kstar \leq z 
\end{array}
\]
\end{itemize}
The ordering $\leq$ is defined by \graybox{$x\leq y$} iff $x+y=y$.
The sequence/conjunction operator $;$ binds tighter than $+$.
\end{definition}

In a \dt{relational model}~\cite{Kozen1996}, $\kK$ is some set of relations on some set $\Sigma$,
with $\kzero$ the empty relation, 
$\kone$ the identity relation on $\Sigma$,
$+$ union of relations,
$\kdot$ relational composition, and $*$ reflexive-transitive closure. 
Moreover  $\kB$ is a set of \dt{coreflexives}, i.e., subsets of the identity relation $\kone$,
and $\kneg$ is complement with respect to $\kone$.
The set $\kK$ needs to be closed under these operations but need not be the full set of relations on $\Sigma$; likewise, $\kB$ need not be all coreflexive relations.
In a relational model, $\leq$ is set inclusion.

The \dt{relational model for GCL}, denoted $\relKAT$, is the relational model comprising all relations on $\Sigma$ where $\Sigma$ is the set of variable stores.\footnote{We could as well choose $\kB$ to be the least set of relations that
contains the coreflexive  $\{(s,s) \mid \means{e}(s) = true \}$ for every primitive boolean expression $e$, 
and is closed under $\kdot,\kplus,\kneg$.
And choose $\kK$ to be the least set of relations that contains $\kB$,
contains $\means{\lassg{}{x}{e}}$ for every assignment $\lassg{}{x}{e}$,
and is closed under the operations.  But the ``full'' model is simpler to formulate in Coq as we have done.
}

In this paper we work with $\relKAT$ and also with equational proofs with equational hypotheses.  The latter are formalized in terms of KAT expressions, together with interpretations that map KAT expressions to a model.  But we also need GCL syntax and its translation to KAT expressions.
Our formalization is meant to be precise but streamlined.

KAT expressions are usually defined with respect to given finite sets of primitive tests and actions~\cite[Sect.\ 2.3]{Kozen1996}.
We only need the special case where the actions are assignment commands and the tests are primitive boolean expressions of GCL; of course there are infinitely many of these.
Define the \dt{KAT expressions} by this grammar.
(Recall that $bprim$ stands for primitive boolean expressions in GCL.) 
\begin{equation}\label{eq:KATexp}
\KE ::= \underline{bprim} \mid \underline{\lassg{}{x}{e}} \mid
\KE + \KE \mid \KE\mathbin{;}\KE \mid \KE^* \mid \neg \KE \mid 1 \mid 0 
\end{equation}
The primitive KAT expression forms $\underline{bprim}$ and $\underline{\lassg{}{x}{e}}$ are written this way to avoid confusion with their counterparts in GCL syntax.

The \dt{GCL-to-KAT translation} \graybox{$\mkt{-}$} 
maps GCL commands and boolean expressions to KAT expressions.
For boolean expressions it is defined by:
\[
\begin{array}{lcl}
\mkt{bprim} &\eqdef& \underline{bprim} \\ 
\mkt{e\land e'} &\eqdef & \mkt{e}\kdot\mkt{e'} \\
\mkt{e\lor e'} &\eqdef & \mkt{e}\kplus\mkt{e'} \\
\mkt{\neg e} &\eqdef & \kneg\mkt{e}
\end{array}\]
For commands, the translation is defined by mutual recursion with the definition of $\mkt{gcs}$,
as follows.\footnote{Using Dijkstra's notation   
   $\quant{+}{dummies}{range}{term}$~\cite{DijkstraScholten}
   rather than the more common $\Sigma$ notation for finite sums.}
\begin{equation}\label{eq:def:mkt}
\begin{array}{l@{\;}c@{\;}l}
\mkt{x:=e} & \eqdef  & \underline{x:=e} \\
\mkt{\skipc} & \eqdef  & 1 \\
\mkt{c;d} & \eqdef  & \mkt{c}\kdot\mkt{d} \\
\mkt{\lifgc{}{gcs}} & \eqdef  & \mkt{gcs} \\
\mkt{\ldogc{}{gcs}} & \eqdef  & \mkt{gcs}^\kstar \kdot \kneg\mkt{\enab(gcs)}  \\[1ex]
\mkt{gcs} & \eqdef  & \quant{\kplus}{e,c}{(e\gcto c) \in gcs}{\mkt{e}\kdot\mkt{c}} 
\end{array} 
\end{equation}
This is an adaptation of the well known translation of imperative programs into a KAT~\cite{Kozen97}.


Let $\anyKAT$ be a model $(\kK^\anyKAT,\kB^\anyKAT,\kplus^\anyKAT,\kdot^\anyKAT,\kstar^\anyKAT,\kneg^\anyKAT,\kone^\anyKAT,\kzero^\anyKAT)$.
Given interpretations $\underline{bprim}^\anyKAT$ and $\underline{x:=e}^\anyKAT$ 
in $\anyKAT$ for each primitive KAT expression,
we get an interpretation $\KE^\anyKAT$ for every $\KE$, 
defined homomorphically as usual; 
for example: 
\[ 
(\KE_0 \kdot \KE_1)^\anyKAT \eqdef \KE_0^\anyKAT \,\kdot^\anyKAT\,\KE_1^\anyKAT 
\qquad
(\KE_0+\KE_1)^\anyKAT \eqdef \KE_0^\anyKAT \,+^\anyKAT\,\KE_1^\anyKAT \]
Henceforth we never write $+^\anyKAT$ because the reader can infer whether the symbol
$+$ is meant as syntax in a KAT expression or as an operation of a particular KAT.

Consider the relational model $\relKAT$ for GCL.
For any primitive boolean expression $bprim$ in GCL,
we define an interpretation in $\relKAT$ for the KAT expression $\underline{bprim}$.
We also define an interpretation for every KAT expression $\underline{x:=e}$.
The definitions are 
$\underline{bprim}^\relKAT \eqdef \{(s,s) \mid \means{bprim}(s) = true \}$
and 
$\underline{\lassg{}{x}{e}}^\relKAT  \eqdef  \means{\lassg{}{x}{e}}$.
As a consequence, we have 
\[ 
\mkt{bprim}^\relKAT = \{(s,s) \mid \means{bprim}(s) = true \} 
\qquad
\mkt{\lassg{}{x}{e}}^\relKAT  = \means{\lassg{}{x}{e}} 
\]
For a command translated to a KAT expression, 
the interpretation in the relational model $\relKAT$ coincides 
with the denotational semantics of GCL.

\begin{lemma}\label{lem:correctInterp} 
\upshape
For all commands $c$ we have $\mkt{c}^\relKAT = \means{c}$.
For all boolean expressions $e$ we have 
$\mkt{e}^\relKAT = \{(s,s) \mid \means{e}(s) = true \}$.
\end{lemma}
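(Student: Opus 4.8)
The plan is to prove both equalities simultaneously by structural induction — the first on the structure of commands $c$, the second on the structure of boolean expressions $e$ — since $\mkt{-}$ on commands is defined by mutual recursion with $\mkt{gcs}$, and the command translation for $\lifgc{}{gcs}$ and $\ldogc{}{gcs}$ invokes $\mkt{\enab(gcs)}$, which is a boolean expression built from the guards. So I would state an appropriately strengthened induction that also covers $\mkt{gcs}^\relKAT = \means{\ifgc{gcs}}$ (equivalently, $\mkt{gcs}^\relKAT = \bigcup_{(e\gcto c)\in gcs}(\means{e}\mathbin{;}\means{c})$ viewed as a relation).

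First I would dispatch the boolean-expression statement. The base case $e = bprim$ holds by definition of $\underline{bprim}^\relKAT$. For $e\land e'$, $e\lor e'$, $\neg e$ I would unfold $\mkt{-}$, use the homomorphic definition of $(-)^\relKAT$, apply the induction hypothesis, and then observe that in the relational model $\relKAT$ the operations $\kdot$, $\kplus$, $\kneg$ on coreflexives $\{(s,s)\mid\means{e}(s)=true\}$ compute intersection, union, and complement-within-$\kone$ respectively; these match $\means{e\land e'}$, $\means{e\lor e'}$, $\means{\neg e}$ by the semantics of boolean connectives. A small point worth noting: since $\means{e}(s)\in\{true,\mathit{false}\}$ is always defined, the coreflexives are genuinely subsets of $\kone$, so $\kneg$ behaves as expected.

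Next, the command statement. For $\lassg{}{x}{e}$ and $\skipc$ the claim is immediate from the interpretation definitions ($\underline{\lassg{}{x}{e}}^\relKAT = \means{\lassg{}{x}{e}}$ and $\kone^\relKAT$ is the identity relation, which is $\means{\skipc}$). For $c;d$, unfold to $\mkt{c}^\relKAT\mathbin{;}\mkt{d}^\relKAT$, apply the IH, and use that relational composition in $\relKAT$ agrees with the big-step semantics of sequencing. For $\lifgc{}{gcs}$, reduce to the $gcs$ claim: $\mkt{\lifgc{}{gcs}} = \mkt{gcs}$, and $\mkt{gcs}^\relKAT = \bigcup_{(e\gcto c)\in gcs}\mkt{e}^\relKAT\mathbin{;}\mkt{c}^\relKAT$, which by the IHs (on the guards $e$ and the bodies $c$, both structurally smaller) equals $\bigcup_{(e\gcto c)\in gcs}\means{e}\mathbin{;}\means{c}$, matching the semantics of $\ifgc{gcs}$. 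The $gcs$ sub-induction is routine: singleton list is one summand, cons adds one summand via $\kplus$ = union. For $\ldogc{}{gcs}$, unfold to $\mkt{gcs}^{\kstar}\mathbin{;}\kneg\mkt{\enab(gcs)}$; by the $gcs$-claim and the boolean claim this is $(\bigcup_{(e\gcto c)\in gcs}\means{e}\mathbin{;}\means{c})^{*}\mathbin{;}\overline{\means{\enab(gcs)}}$, where $*$ is reflexive-transitive closure and the overline is complement within $\kone$, i.e. the coreflexive of states where no guard holds. The remaining task is to check this equals $\means{\ldogc{}{gcs}}$, i.e. that iterating the guarded-command body arbitrarily many times and then stopping when disabled matches the big-step loop denotation.

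The main obstacle is exactly that last loop case: connecting the Kleene-star characterization to the big-step semantics of $\dogc{gcs}$. This requires the standard fixpoint/unfolding argument — each inclusion proved separately, one by induction on the number of loop iterations (using $\kone + x\mathbin{;}x^\kstar = x^\kstar$) and the other using the induction rule $y + x\mathbin{;}z\leq z \imp x^\kstar\mathbin{;}y\leq z$ with $z$ instantiated to $\means{\dogc{gcs}}$ — together with the fact that the $\totalIf$ condition is not needed here but $\means{\enab(gcs)}$ must be computed correctly as the union of the guard coreflexives. In the Coq development this is presumably where the real work sits; in the paper I would state it as following from "the standard agreement between relational-model KAT semantics and big-step semantics of `while`", citing the translation of Kozen~\cite{Kozen97}, and relegate the iteration bookkeeping to the appendix. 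Everything else is a mechanical unfold-and-match.
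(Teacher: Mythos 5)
Your proposal is correct and follows essentially the same route as the paper's proof, which is only a three-sentence sketch: induction on $e$ and then on $c$, base cases by definition, with a secondary induction for the loop case. Your expansion of the loop case (mutual inclusion via iteration-count induction and the $*$-induction rule) is exactly the "secondary induction" the paper alludes to.
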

The proof is by induction on $e$ and then on $c$.
The base cases are by definition.
A secondary induction is used for the loop case.  

\paragraph{KAT provability}

Equivalence of commands will be defined in terms of KAT provability.

\begin{definition}[KAT provability] 
Let $H$ be a set of equations between KAT expressions.
We write \graybox{$H \proves \KE_0 = \KE_1$} to say $\KE_0=\KE_1$ 
is provable by propositional and equational reasoning from $H$ plus 
the axioms of KAT (Def.~\ref{def:KAT}).
\end{definition}
Propositional reasoning is needed for the $*$-induction rules,
i.e., the two implications in Def.~\ref{def:KAT}.
The important fact is that for any model $\anyKAT$ of $KAT$, 
if $H \proves \KE_0 = \KE_1$ and the equations in $H$ hold when interpreted in $\anyKAT$,
then $\KE_0^\anyKAT = \KE_1^\anyKAT$.

Compared with the program notation, KAT expressions serve to put expressions, commands, and guarded commands on the same footing. This is convenient for formulating results such as the following, which uses the empty set of hypotheses.

\begin{lemma}\label{lem:enabmkt}
\upshape
For any $gcs$ we have 
$\proves \mkt{gcs} = \mkt{\enab(gcs)} \kdot \mkt{gcs}$.
\end{lemma}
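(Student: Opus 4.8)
The plan is to unfold both sides via the translation of~(\ref{eq:def:mkt}) and reduce the claim to a short stretch of purely equational reasoning in KAT, which then holds under $\proves$ with the empty hypothesis set since it invokes only the axioms of Definition~\ref{def:KAT}. Set $b \eqdef \mkt{\enab(gcs)}$. First I would record a trivial sub-induction on the list $gcs$ showing $\proves b = \quant{\kplus}{e,c}{(e\gcto c)\in gcs}{\mkt{e}}$, the sum of the translated guards: this uses only $\mkt{e\lor e'} = \mkt{e}\kplus\mkt{e'}$ and associativity of $\kplus$. In particular $b\in\kB$ is a test, and for each guard $e$ of $gcs$ the summand $\mkt{e}$ satisfies $\mkt{e}\leq b$ (by idempotence of $\kplus$, which gives $\mkt{e}\kplus b = b$). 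Since $\mkt{gcs}$ is by definition $\quant{\kplus}{e,c}{(e\gcto c)\in gcs}{\mkt{e}\kdot\mkt{c}}$, the whole statement reduces to $\proves b\kdot\mkt{gcs} = \mkt{gcs}$.

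For this I would use the Boolean-algebra fact that $b\kdot b' = b'$ whenever $b,b'$ are tests with $b'\leq b$ (conjunction with an upper bound; derivable from the lattice laws among the axioms of Definition~\ref{def:KAT}). Then, by left-distributivity of $\kdot$ over $\kplus$ and associativity, $b\kdot\mkt{gcs} = \quant{\kplus}{e,c}{(e\gcto c)\in gcs}{b\kdot\mkt{e}\kdot\mkt{c}} = \quant{\kplus}{e,c}{(e\gcto c)\in gcs}{\mkt{e}\kdot\mkt{c}} = \mkt{gcs}$, applying $b\kdot\mkt{e} = \mkt{e}$ termwise. If one prefers to avoid the $\kplus$-indexed notation, the same result follows by a direct induction on the grammar of $gcs$: the base case $gcs = (e\gcto c)$ is just $\mkt{e}\kdot\mkt{e}\kdot\mkt{c} = \mkt{e}\kdot\mkt{c}$ by idempotence of the test $\mkt{e}$; and in the step $gcs = (e\gcto c \gcsep gcs_0)$ one distributes $(\mkt{e}\kplus b_0)\kdot(\mkt{e}\kdot\mkt{c}\kplus\mkt{gcs_0})$ with $b_0 = \mkt{\enab(gcs_0)}$, then uses idempotence and commutativity of tests, monotonicity of $\kdot$ together with $\mkt{e}\leq\kone$ and $b_0\leq\kone$ to get $\mkt{e}\kdot\mkt{gcs_0}\leq\mkt{gcs_0}$ and $b_0\kdot\mkt{e}\kdot\mkt{c}\leq\mkt{e}\kdot\mkt{c}$, absorption $x\leq y \imp x\kplus y = y$, and the induction hypothesis $\proves b_0\kdot\mkt{gcs_0} = \mkt{gcs_0}$ on the last summand, collapsing the right-hand side to $\mkt{e}\kdot\mkt{c}\kplus\mkt{gcs_0} = \mkt{gcs}$.

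There is no genuine obstacle here; the only points requiring care, rather than difficulty, are the bookkeeping identification of $\mkt{\enab(gcs)}$ with the sum of the translated guards, and being explicit that all the Boolean-algebra facts used --- idempotence and commutativity of tests, $b'\leq\kone$ for tests, $\kplus$-idempotence, and $b\kdot b' = b'$ when $b'\leq b$ --- together with monotonicity and distributivity of $\kdot$, are provable from the axioms listed in Definition~\ref{def:KAT}, so that the derivation is valid under $\proves$ with no hypotheses.
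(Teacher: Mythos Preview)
Your proof is correct and follows essentially the same approach as the paper: unfold the definitions, distribute, and use Boolean-algebra facts about tests. The only cosmetic difference is that you first collapse $b\kdot\mkt{e_i}$ to $\mkt{e_i}$ termwise via $\mkt{e_i}\leq b$, whereas the paper's hint fully expands $(\sum_j\mkt{e_j})\kdot(\sum_i\mkt{e_i}\kdot\mkt{c_i})$ into cross terms and then absorbs the off-diagonal ones using $\mkt{e_j}\kdot\mkt{e_i}\leq\mkt{e_i}$; both are the same argument.
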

\begin{proof}
By unfolding definitions, using distributivity (of $;$ over $+$), and using that $\mkt{e_0};\mkt{e_1}\leq\mkt{e_0}$
where $e_0$ and $e_1$ are guards of different guarded commands in $gcs$.
\end{proof}
For example,
$\enab(x>0\gcto y:=1 \gcsep x<z\gcto y:=0) = x>0 \lor x < z$
and the lemma says 
$\mkt{x>0\gcto y:=1 \gcsep x<z\gcto y:=0} = 
\mkt{x>0 \lor x < z}\kdot\mkt{x>0\gcto y:=1 \gcsep x<z\gcto y:=0}$.

\paragraph{Correctness in KAT}

For specifications where the pre- and post-condition are expressible in the KAT,
one can express correctness judgments;
one form is $pre;code\leq pre;code;post$.
This is equivalent to 
$pre;code;\neg post = 0$
and also to   
$pre;code = pre;code;post$.

On this basis, one can dispense with HL in favor of equational reasoning in KAT~\cite{Kozen00}.
In our setting, for expressions $e_0$ and $e_1$ we have that
$ \models c:\spec{e_0}{e_1}$
is equivalent to the equation 
\[ \mkt{e_0}\kdot\mkt{c} \leq \mkt{e_0}\kdot\mkt{c}\kdot\mkt{e_1} \]
being true in $\relKAT$, i.e.,
\(    \mkt{e_0}^\relKAT\kdot\mkt{c}^\relKAT \leq \mkt{e_0}^\relKAT\kdot\mkt{c}^\relKAT\kdot\mkt{e_1}^\relKAT \).

In this paper we do not use the KAT formulation for correctness judgments in general; indeed it would be difficult to reconcile with the shallow embedding of store predicates.
However, we will use this encoding of correctness for a limited purpose, namely certain
axioms about assignment, where the pre- and post-condition will be expressions in GCL.

\paragraph{Command equivalence}

The idea for the equivalence condition $c\kateq d$ in rules \rn{Rewrite} and \rn{rRewrite}  
is that it should mean $H\proves \mkt{c}=\mkt{d}$ for a suitable set of hypotheses
that axiomatize the semantics of some primitive boolean expressions and commands.
A typical axiom for an assignment is a correctness equation like
$\mkt{x\geq 0}\kdot\mkt{x:=x+1} = \mkt{x\geq 0}\kdot\mkt{x:=x+1}\kdot\mkt{x>0}$.
A typical axiom for tests is 
$\mkt{odd(x)}= \neg\mkt{even(x)}$.
As an example, using the latter equation and KAT laws we have for any $c,d$ that 
\[ \mkt{odd(x)} = \neg\mkt{even(x)} \;\proves\;
\mkt{\lifgc{}{ odd(x) \gcto c \gcsep \neg odd(x) \gcto d }} =
\mkt{\lifgc{}{ even(x) \gcto d \gcsep \neg even(x) \gcto c }} \]
The equivalence is true schematically and does not depend on anything about $c$ or $d$.  

To prove our main results we only need a few axioms that will be discussed later.
But for the sake of a straightforward deductive system we 
formulate equivalence in terms of a generic collection of hypotheses.

\begin{definition}\label{def:Hyp}
Define \graybox{$\Hyp$} to be the set of equations on KAT expressions 
given by these two cases:
\begin{itemize}
\item the equation $\mkt{e}=0$ for every boolean expression $e$
such that $e \imp\mathit{false}$ is valid
\item the equation $\mkt{e_0};\mkt{x:=e};\kneg\mkt{e_1}=\kzero$
for all assignments $x:=e$
and boolean expressions $e_0$, $e_1$ 
such that $e_0\imp \subst{e_1}{x}{e}$ is valid.
\end{itemize}
\end{definition}
Validity here refers to the program semantics (Sect.~\ref{sec:unary}).
In light of Lemma~\ref{lem:correctInterp},
the set $\Hyp$ can characterized as follows:\footnote{Using program semantics
and in particular the fact that $\subst{e_1}{x}{e}$ gives the weakest 
precondition for assignment.}
\begin{itemize}
\item $\mkt{e}=0$ for every boolean expression $e$
such that $\mkt{e}^\relKAT = \emptyset$
\item  $\mkt{e_0};\mkt{x:=e};\kneg\mkt{e_1}=\kzero$
for every $x,e,e_0,e_1$ such that 
$\mkt{e_0}^\relKAT;\mkt{x:=e}^\relKAT;\kneg\mkt{e_1}^\relKAT=\emptyset$.
\end{itemize}

\begin{definition}[command equivalence]\label{def:kateq}
Define $\kateq$ by 
\( \graybox{$c \kateq d$} \eqdef \Hyp \proves \mkt{c} = \mkt{d} \). 
\end{definition}
As an example, we have $\lskipc{};\lskipc{} \kateq \lskipc{}$.
Another example is the equivalence
\[ \ldogc{}{ e_0 \gcto c } \kateq 
   \ldogc{}{ e_0 \gcto c;\ldogc{}{e_0\land e_1\gcto c}} \]
(which holds for any $c,e_0,e_1$) used in the loop tiling example of~\cite{BNN16}.

\begin{lemma}\label{lem:Hyp}
\upshape
Every equation in $\Hyp$ holds in $\relKAT$.
\end{lemma}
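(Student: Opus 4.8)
Every equation in $\Hyp$ holds in $\relKAT$.

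The plan is to unfold Definition~\ref{def:Hyp} and check each of the two families of equations separately, using Lemma~\ref{lem:correctInterp} to translate between KAT-expression interpretations in $\relKAT$ and the denotational semantics of GCL. First I would take an equation of the first kind, $\mkt{e}=0$ where $e\imp\mathit{false}$ is valid. By Lemma~\ref{lem:correctInterp}, $\mkt{e}^\relKAT = \{(s,s)\mid \means{e}(s)=\mathit{true}\}$. Validity of $e\imp\mathit{false}$ means $\means{e}\subseteq\means{\mathit{false}}=\emptyset$, i.e. $\means{e}(s)=\mathit{true}$ holds for no store $s$, so the coreflexive set is empty, which is exactly $\kzero^\relKAT$. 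Hence the equation holds in $\relKAT$.

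Next I would take an equation of the second kind, $\mkt{e_0};\mkt{x:=e};\kneg\mkt{e_1}=\kzero$, under the hypothesis that $e_0\imp\subst{e_1}{x}{e}$ is valid. Interpreting in $\relKAT$ and using Lemma~\ref{lem:correctInterp}, the left side is $\{(s,s)\mid\means{e_0}(s)=\mathit{true}\}\kdot\means{\lassg{}{x}{e}}\kdot(\kone^\relKAT\setminus\{(t,t)\mid\means{e_1}(t)=\mathit{true}\})$, where $\kdot$ is relational composition. A pair $(s,u)$ lies in this composite iff $\means{e_0}(s)=\mathit{true}$, $u = \update{s}{x}{\means{e}(s)}$ (since assignment is deterministic), and $\means{e_1}(u)=\mathit{false}$. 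But $u\in\means{e_1}$ iff $\update{s}{x}{\means{e}(s)}\in\means{e_1}$ iff $s\in\subst{\means{e_1}}{x}{e}$, i.e. iff $\means{\subst{e_1}{x}{e}}(s)=\mathit{true}$; and validity of $e_0\imp\subst{e_1}{x}{e}$ forces $\means{\subst{e_1}{x}{e}}(s)=\mathit{true}$ whenever $\means{e_0}(s)=\mathit{true}$. So the three conditions are jointly unsatisfiable, the composite relation is empty, and again the equation holds in $\relKAT$. (One can also phrase this more slickly: by the ``Correctness in KAT'' paragraph, $\mkt{e_0};\mkt{x:=e};\kneg\mkt{e_1}=\kzero$ in $\relKAT$ is equivalent to $\models \lassg{}{x}{e}:\spec{e_0}{e_1}$, which holds precisely because $\subst{e_1}{x}{e}$ is the weakest precondition of the assignment for postcondition $e_1$, so $e_0\imp\subst{e_1}{x}{e}$ suffices.)

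Since $\Hyp$ consists exactly of equations of these two kinds, and every such equation has been shown to hold in $\relKAT$, the lemma follows. There is no real obstacle here: the only care needed is bookkeeping about substitution versus relational composition — specifically the fact that $s\in\subst{P}{x}{e}$ iff $\update{s}{x}{\means{e}(s)}\in P$ (already recorded in Section~\ref{sec:unary}) — and the use of Lemma~\ref{lem:correctInterp} to pass between syntax and semantics. I would present the two cases as two short paragraphs, each a one-line unfolding, without grinding through further detail.
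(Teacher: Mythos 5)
Your proof is correct and follows exactly the route the paper intends: the paper dismisses this lemma as "an easy consequence of the definition of $\Hyp$", relying on the alternate characterization of $\Hyp$ given right after Definition~\ref{def:Hyp} (which itself invokes Lemma~\ref{lem:correctInterp} and the fact that $\subst{e_1}{x}{e}$ is the weakest precondition for assignment). You have simply written out the two one-line case checks that the paper leaves implicit.
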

This is an easy consequence of the definition of $\Hyp$.

\paragraph{Soundness of HL+ and RHL+}

Having defined $\kateq$, 
the key ingredient of the rules \rn{Rewrite} and \rn{rRewrite},
we have completed the definition of HL+ and RHL+.
Finally we can state that the proof rules are sound in the sense that they infer valid judgments from valid premises.

\begin{lemma} 
\upshape
The rules \rn{Rewrite} in Fig.~\ref{fig:HLplus} 
and \rn{rRewrite} in Fig.~\ref{fig:RHL} are sound.
\end{lemma}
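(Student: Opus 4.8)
The plan is to reduce both rules to a single fact: command equivalence respects denotations, i.e. $c\kateq d$ implies $\means{c}=\means{d}$. Once this is in hand, soundness follows because the validity of a (relational) correctness judgment depends on the programs only through their denotations.

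\textbf{Step 1: $\kateq$ respects denotations.} Suppose $c\kateq d$. By Definition~\ref{def:kateq} this unfolds to $\Hyp\proves\mkt{c}=\mkt{d}$. By Lemma~\ref{lem:Hyp}, every equation in $\Hyp$ holds in the relational model $\relKAT$. Hence, by the soundness of equational KAT provability recorded after the definition of $\proves$ (for any model $\anyKAT$, if $H\proves\KE_0=\KE_1$ and all equations of $H$ hold in $\anyKAT$, then $\KE_0^\anyKAT=\KE_1^\anyKAT$), instantiated with $\anyKAT:=\relKAT$ and $H:=\Hyp$, we obtain $\mkt{c}^\relKAT=\mkt{d}^\relKAT$. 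By Lemma~\ref{lem:correctInterp}, $\mkt{c}^\relKAT=\means{c}$ and $\mkt{d}^\relKAT=\means{d}$, so $\means{c}=\means{d}$.

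\textbf{Step 2: discharge \rn{Rewrite}.} The definition of $\models c:\spec{P}{Q}$ in~(\ref{eq:valid}) refers to $c$ only through the pairs $(s,t)$ with $\means{c}\,s\,t$. So if $\means{c}=\means{d}$ then $\models c:\spec{P}{Q}$ iff $\models d:\spec{P}{Q}$. Given the premises $\models c:\spec{P}{Q}$ and $c\kateq d$, Step~1 yields $\means{c}=\means{d}$, hence $\models d:\spec{P}{Q}$. \textbf{Step 3: discharge \rn{rRewrite}.} Analogously, the definition of $\models c\sep c':\rspec{\R}{\S}$ quantifies over $s,s',t,t'$ with $\means{c}\,s\,t$ and $\means{c'}\,s'\,t'$, so it depends on $c,c'$ only via $\means{c},\means{c'}$. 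From the premises $c\kateq d$ and $c'\kateq d'$, Step~1 gives $\means{c}=\means{d}$ and $\means{c'}=\means{d'}$; combined with the premise $\models c\sep c':\rspec{\R}{\S}$ this gives $\models d\sep d':\rspec{\R}{\S}$.

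\textbf{Main obstacle.} There is no real obstacle: the only thing to check carefully is that the cited soundness-of-KAT-provability fact applies, which it does because $\relKAT$ is a genuine KAT model and all equations of $\Hyp$ hold there by Lemma~\ref{lem:Hyp}; everything else is routine unfolding of the validity definitions. (Note this argument does not use any special property of $\Hyp$ beyond Lemma~\ref{lem:Hyp}, so it will remain valid if further sound hypotheses are added.)
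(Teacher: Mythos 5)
Your proposal is correct and follows essentially the same route as the paper's own proof: reduce both rules to the fact that $c\kateq d$ implies $\means{c}=\means{d}$, obtained via Lemma~\ref{lem:Hyp}, soundness of KAT provability in $\relKAT$, and Lemma~\ref{lem:correctInterp}, then observe that validity of (relational) judgments depends on commands only through their denotations. No differences worth noting.
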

\begin{proof}
For \rn{Rewrite}, suppose $c \kateq d$ holds.
That means $\Hyp\proves\mkt{c}=\mkt{d}$ so
$\mkt{c}^\anyKAT = \mkt{d}^\anyKAT$ in any model $\anyKAT$ that satisfies
the equations $\Hyp$.  In particular, by Lemma~\ref{lem:Hyp} we have
$\mkt{c}^\relKAT=\mkt{d}^\relKAT$,
hence $\means{c}=\means{d}$ by Lemma~\ref{lem:correctInterp}.
Suppose the premise of \rn{Rewrite} holds,
i.e., $\models c: \spec{P}{Q}$.
This is a condition on $\means{c}$ ---see (\ref{eq:valid})--- so we have $\models d: \spec{P}{Q}$.

The proof of \rn{rRewrite} is similar.
\end{proof}

Soundness proofs for the other rules is straightforward.

\begin{proposition}\label{prop:HLsound}
\upshape
All the rules of HL+ are sound.
\end{proposition}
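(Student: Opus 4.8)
The plan is to establish soundness rule by rule, since soundness of a finite collection of rules is precisely soundness of each rule in isolation, and for \rn{Rewrite} the work is already done in the preceding lemma. For each of the remaining rules one assumes its premises are valid in the sense of (\ref{eq:valid}) and unfolds the big-step semantics of the command appearing in the conclusion (equivalently, one may work with Lemma~\ref{lem:bigsmall} and the transition rules of Fig.~\ref{fig:progtrans}). The leaf and structural rules are immediate: \rn{Skip} holds because $\means{\skipc}\,s\,t$ forces $t=s$; \rn{Asgn} holds because $\means{\lassg{}{x}{e}}\,s\,t$ forces $t=\update{s}{x}{\means{e}(s)}$, which lies in $P$ exactly when $s\in\subst{P}{x}{e}$ by the defining property of substitution on store sets; \rn{Conseq} follows from $P\subseteq R$ and $S\subseteq Q$; and \rn{False} is vacuous, since no store lies in $\mathit{false}$.

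Next I would handle the compound rules, which use the recursive clauses of the semantics. For \rn{Seq}, a computation $\means{c;d}\,s\,t$ factors as $\means{c}\,s\,u$ followed by $\means{d}\,u\,t$ for some intermediate $u$, so the two premises chain through $u$. For \rn{If}, a computation $\means{\ifgc{gcs}}\,s\,t$ comes from some $e\gcto c$ in $gcs$ with $\means{e}(s)=true$ and $\means{c}\,s\,t$; from $s\in P$ we get $s\in e\land P$ and the corresponding premise yields $t\in Q$. The only genuinely inductive case is \rn{Do}: a terminating computation of $\dogc{gcs}$ from $s$ consists of finitely many, say $k$, executions of loop bodies---each triggered from a state satisfying an enabled guard---followed by an exit step requiring $\neg\enab(gcs)$. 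An induction on $k$, using the premises $c:\spec{e\land P}{P}$ to re-establish $P$ after each body execution, shows that $P$ holds after every iteration and hence at the final store; the exit step adds the conjunct $\neg\enab(gcs)$, giving the postcondition $P\land\neg\enab(gcs)$.

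The step I expect to be the main obstacle is \rn{Ghost}, whose soundness is not visible from the semantic clauses of any single construct but needs an auxiliary correspondence between computations of $c$ and of $\erase(x,c)$. The needed lemma, proved from the definitions of $\ghost$ and $\erase$ by structural induction on $c$, states: if $\ghost(x,c)$ and $\means{\erase(x,c)}\,s\,t$, then $\means{c}\,s\,t'$ for some $t'$ agreeing with $t$ off $x$---in fact $t=\update{t'}{x}{s(x)}$, since the deleted assignments were the only occurrences of $x$ and nothing reads $x$. Given this, assume $\means{\erase(x,c)}\,s\,t$ and $s\in P$: take the witnessing $t'$, apply the premise $c:\spec{P}{Q}$ to obtain $t'\in Q$, and conclude $t\in Q$ from $\indep(x,Q)$, since $t$ and $t'$ agree off $x$ and an $x$-independent set is closed under arbitrary reassignment of $x$. (The hypothesis $\indep(x,P)$ is the natural counterpart of the freshness side condition in a formula-based presentation; it is in fact not needed in the soundness argument for this direction.) Collecting \rn{Ghost}, \rn{Rewrite}, and the routine cases above establishes that every rule of HL+ is sound, proving the proposition.
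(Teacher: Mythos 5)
Your proposal is correct and follows the same route the paper intends: the paper proves \rn{Rewrite} in the lemma immediately preceding the proposition (via $\kateq$ implying equal denotations) and dismisses the remaining rules as straightforward semantic arguments, which is exactly what you carry out. Your treatment of \rn{Ghost} — the simulation lemma relating runs of $\erase(x,c)$ to runs of $c$ and the observation that only $\indep(x,Q)$ is actually needed — correctly fills in the one case the paper leaves entirely implicit.
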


\begin{proposition}\label{prop:RHLsound}
\upshape
All the rules of RHL+ (Fig.~\ref{fig:RHL}) are sound.
\end{proposition}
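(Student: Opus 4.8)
The plan is to verify each rule of Fig.~\ref{fig:RHL} directly against the definition of $\models c\sep c':\rspec{\R}{\S}$, since \rn{rRewrite} has already been shown sound above (using $c\kateq d\imp\means{c}=\means{d}$). Rules \rn{rFalse}, \rn{rConseq}, and \rn{rDisj} are immediate by unfolding: no pair of stores lies in $\mathit{false}$; strengthening the pre-relation and weakening the post-relation preserves validity; and for \rn{rDisj} one case-splits on which of $\Q,\R$ the initial pair lies in. Rules \rn{dSkip}, \rn{dAsgn}, \rn{AsgnSkip}, \rn{SkipAsgn}, and \rn{dSeq} follow from the big-step semantics of Sect.~\ref{sec:unary}: $\means{\skipc}$ is the identity relation, $\means{x:=e}\,s\,t$ holds iff $t=\update{s}{x}{\means{e}(s)}$, so that $(s,s')\in\subst{\R}{x|x'}{e|e'}$ is exactly what forces the pair of updated stores into $\R$ (and similarly for the one-sided variants), and $\means{c;d}$ is $\means{c};\means{d}$, which matches the intermediate relation $\S$ in \rn{dSeq}. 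For \rn{dIf}, any terminated run of $\ifgc{gcs}$ from $s$ proceeds, by Fig.~\ref{fig:progtrans}, via some $e\gcto c$ in $gcs$ with $\means{e}(s)=true$ followed by a terminated run of $c$; combining this with the analogous decomposition for $\ifgc{gcs'}$ reduces the goal to the premise for that pair, whose precondition $\R\land\leftF{e}\land\rightF{e'}$ is then satisfied. Rule \rn{rGhost} uses the standard semantic fact about ghost elimination, relationally: when $\ghost(x,c)$ holds, $\means{\erase(x,c)}\,s\,t$ iff $\means{c}\,s\,\tilde{t}$ for some $\tilde{t}$ agreeing with $t$ off $x$ with $t(x)=s(x)$, and symmetrically for $x',c'$; applying the premise to the corresponding $c,c'$ runs gives a pair of final stores in $\S$, and $\indep(x|x',\S)$ lets us re-adjust the left store at $x$ and the right at $x'$ while staying in $\S$.

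The substantive case is \rn{dDo}. Here I would fix $(s,s')\in\Q$ and terminated runs $\means{\dogc{gcs}}\,s\,t$ and $\means{\dogc{gcs'}}\,s'\,t'$, and argue by induction on the sum of the numbers of iterations in the two runs; the conjuncts $\neg\leftF{\enab(gcs)}$ and $\neg\rightF{\enab(gcs')}$ of the postcondition hold automatically by the loop semantics, so it suffices to establish $(t,t')\in\Q$. In the base case both runs are empty, $t=s$, $t'=s'$, and we are done. In the inductive step at least one guard disjunction is enabled at $(s,s')$, and the side condition applied to $(s,s')\in\Q$ gives three possibilities. If $\leftex{\enab(gcs)}=\rightex{\enab(gcs')}$, then both loops can iterate, and we further split on $\Lrel,\R$ at $(s,s')$: if neither holds, take one iteration of each body via enabled guarded commands $e\gcto c$ and $e'\gcto c'$, so that $(s,s')\in\Q\land\leftF{e}\land\rightF{e'}\land\neg\Lrel\land\neg\R$ and the joint premise returns the intermediate pair to $\Q$; if $\Lrel$ (resp.\ $\R$) holds, take a single left-only (resp.\ right-only) iteration and use the left-only (resp.\ right-only) premise. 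The other two alternatives of the side condition, $\Lrel\land\leftF{\enab(gcs)}$ and $\R\land\rightF{\enab(gcs')}$, directly license a left-only resp.\ right-only iteration through the corresponding premise family. In every subcase the measure strictly decreases, so the induction hypothesis delivers $(t,t')\in\Q$.

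The main obstacle is getting this last induction right: one must match the alignment conditions $\Lrel,\R$ to the three premise families so that a step is always available, which is the relational-logic counterpart of the adequacy argument behind Def.~\ref{def:adequacy} and its manifest-adequacy strengthening. A minor care point is that a loop with a non-enabled guard disjunction does zero iterations, so when the side condition forces a left-only or right-only step one must check that the relevant guard disjunction is enabled --- which is exactly what the $\leftF{\enab(gcs)}$ and $\rightF{\enab(gcs')}$ conjuncts in those disjuncts supply. With \rn{dDo} settled, this completes the proof.
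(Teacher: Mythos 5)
Your proof is correct. The paper itself only details the soundness of \rn{rRewrite} (via $c\kateq d\imp\means{c}=\means{d}$) and declares the remaining rules straightforward, so the one substantive case you supply --- \rn{dDo} --- is not spelled out there; your argument for it is the right one. Your induction on the sum of the iteration counts of the two fixed terminated runs, with the side condition guaranteeing that in each non-terminal configuration one of the three premise families licenses a measure-decreasing step, is exactly the relational-logic analogue of the paper's proof that manifest adequacy implies adequacy (which likewise inducts on the sum of the lengths of the two unary runs), and you correctly observe that the $\leftF{\enab(gcs)}$ and $\rightF{\enab(gcs')}$ conjuncts in the side condition are what guarantee a one-sided iteration actually exists in the given run. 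The remaining rules are handled as the paper intends, including the ghost-elimination correspondence and the use of $\indep(x|x',\S)$ to re-adjust the erased variables in the final stores.
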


\subsection{Normal form reduction}\label{sec:nf}

\paragraph{Instrumenting a program with a program counter}

Choose a variable name $pc$.
We define $\addPC$, a function from commands to commands,
that adds $pc$ as in the example $c0^+$.
The definition of $\addPC$ is by structural recursion, with mutually recursive 
helpers $\addPC_0$ and $\addPC_1$ that map over guarded commands, and is written using the
abbreviations in (\ref{eq:pcabbrev}).
\[\begin{array}{lcl}
\addPC(\lskipc{n}) &\eqdef& \spc  n \,; \lskipc{n} \\
\addPC(\lassg{n}{x}{e}) &\eqdef& \spc  n \,; \lassg{n}{x}{e} \\
\addPC(c;d) &\eqdef& \addPC(c) \,; \addPC(d) \\
\addPC(\lifgc{n}{gcs}) &\eqdef& \spc n \,; \lifgc{n}{(\addPC_0(gcs))} \\
\addPC(\ldogc{n}{gcs}) &\eqdef& \spc n \,; \ldogc{n}{(\addPC_1(n,gcs))} \\[1ex]
\addPC_0(e\gcto c) &\eqdef& e\gcto \addPC(c) \\[.5ex]
\addPC_0(e\gcto c \gcsep gcs) &\eqdef& e\gcto \addPC(c) \gcsep \addPC_0(gcs)\\[1ex]
\addPC_1(n,e\gcto c) &\eqdef& e\gcto \addPC(c);\spc n \\[.5ex]
\addPC_1(n,e\gcto c \gcsep gcs) &\eqdef& e\gcto \addPC(c);\spc n \gcsep \addPC_1(n,gcs)
\end{array}\]
Note that for \keyw{if} commands, $\addPC_0$ maps $\addPC$ over the guarded commands,
and for \keyw{do}, $\addPC_1$ additionally adds a trailing assignment to set $pc$ to 
the loop label.

\begin{lemma}\label{lem:addPC}
\upshape
For any $c$, if $pc$ does not occur in $c$,
then we have $\ghost(pc,\addPC(c))$.
Moreover $ \erase(pc,\addPC(c)) \kateq c$.
\end{lemma}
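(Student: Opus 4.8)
The statement has two parts: first, $\ghost(pc,\addPC(c))$ whenever $pc$ does not occur in $c$; second, $\erase(pc,\addPC(c))\kateq c$. The plan is to prove both by structural induction on $c$, with the mutually recursive helpers $\addPC_0$ and $\addPC_1$ handled by a simultaneous induction on $gcs$.

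For the first part, recall $\ghost(pc,d)$ means $pc$ occurs in $d$ only in assignments to $pc$. Since $\spc n$ is by definition $\lassg{0}{pc}{n}$, every occurrence of $pc$ introduced by $\addPC$, $\addPC_0$, $\addPC_1$ is inside such an assignment; and the hypothesis that $pc$ does not occur in $c$ ensures no stray occurrences remain in the copied-over subcommands, guards, and expressions. First I would set up the induction with a strengthened statement that also asserts $\ghost(pc,\addPC_0(gcs))$ and $\ghost(pc,\addPC_1(n,gcs))$ under the hypothesis that $pc$ does not occur in $gcs$. The cases for $\lskipc n$ and $\lassg n x e$ are immediate; for $c;d$, $\lifgc n{gcs}$, $\ldogc n{gcs}$ one invokes the induction hypothesis for the subterms and observes that the freshly prepended (or appended, in the $\addPC_1$ case) $\spc n$ is an assignment to $pc$, so $\ghost$ is preserved by the command constructors.

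For the second part, the key observation is that $\erase(pc,\cdot)$ replaces every assignment to $pc$ by $\skipc$, and $\erase$ commutes with the command constructors (sequencing, if, do, and the list constructors for $gcs$). So $\erase(pc,\addPC(c))$ is syntactically $c$ with extra $\skipc$'s inserted: one before each atomic/compound subcommand, and in the do-case an extra trailing $\skipc$ in each loop body. One then shows these inserted skips can be removed up to $\kateq$. Concretely, by Definition~\ref{def:kateq}, $\kateq$ is KAT provability from $\Hyp$, and the relevant facts $\skipc;d\kateq d$, $d;\skipc\kateq d$ follow from $\mkt{\skipc}=1$ and the semiring identity laws $1\kdot x = x = x\kdot 1$, which are provable with the empty hypothesis set (hence a fortiori from $\Hyp$). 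Since $\kateq$ is a congruence (as stated in the paper, e.g.\ in the derivation of \rn{SeqSkip} and in Sect.~\ref{sec:c0unary}), these local rewrites compose through all the constructors, and the induction goes through: in each case the induction hypotheses give $\erase(pc,\addPC(c_i))\kateq c_i$ for subterms $c_i$, and then one rewrites away the inserted skips using congruence and the unit laws.

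The main obstacle, such as it is, lies in the do-case and in threading the $gcs$ helpers correctly: $\addPC_1$ inserts $\spc n$ at \emph{two} positions (before each body via the embedded $\addPC$, and as a trailing assignment), so $\erase(pc,\addPC_1(n,gcs))$ has bodies of the shape $\skipc;\erase(pc,\addPC(c));\skipc$ inside $\ldogc n{\cdot}$, and one must check that after collapsing skips one recovers exactly $\ldogc n{gcs}$ — using that $\kateq$ is a congruence with respect to forming guarded-command lists and hence with respect to $\ldogc n{\cdot}$. Likewise one should be slightly careful that the labels introduced by $\addPC$ (it reuses $n$ on the $\skipc n$ in the skip case and on the if/do constructs) do not cause trouble: they are irrelevant to $\kateq$, which is defined via $\mkt{-}$ and $\mkt{-}$ ignores labels. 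Everything else is a routine unfolding of the definitions of $\addPC$, $\erase$, and $\mkt{-}$.
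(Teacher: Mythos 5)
Your proposal is correct and matches the paper's argument: the paper likewise proves $\erase(pc,\addPC(c))\kateq c$ by structural induction on $c$, using $\mkt{\skipc}=\kone$ and the unit laws of sequence (and notes, as you do, that no hypotheses from $\Hyp$ are needed), and dispatches $\ghost(pc,\addPC(c))$ as straightforward. Your treatment of the $\addPC_0$/$\addPC_1$ helpers and of label-irrelevance is just a more explicit spelling-out of the same induction.
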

To show $\erase(pc,\addPC(c)) \kateq c$, 
we must show $\Hyp\proves\mkt{\erase(pc,\addPC(c))} = \mkt{c}$. 
In fact we can show $\proves\mkt{\erase(pc,\addPC(c))} = \mkt{c}$. 
To do so, go by induction on $c$, using that $\mkt{\skipc} = \kone$
and $\kone$ is the unit of sequence.
The proof of $\ghost(pc,\addPC(c))$ is also straightforward.

\paragraph{Normal forms} 

We define the ternary relation $\norm{c}{m}{gcs}$
to relate a command and a label to the gcs that will be the body of its normal form;
see Fig.~\ref{fig:norm}.
To be precise, the relation depends on a choice of $pc$ variable but we leave this implicit,
as it is in the $\tpc \_$ and $\spc \_$ notations of (\ref{eq:pcabbrev}).
We can write ``$\norm{c}{m}{gcs}$ (for $pc$)'' to make the choice explicit.

There are two kinds of guarded commands in a normal form body:
\begin{description}
\item[(test nf)] \quad $\tpc n\land e \gcto \spc m$
\quad\mbox{i.e., $pc = n \land e \gcto \lassg{}{pc}{m}$}
\item[(set nf)] \quad $\tpc n\gcto \lassg{}{x}{e};\spc m$
\quad\mbox{i.e., $pc = n \gcto \lassg{}{x}{e}; \lassg{}{pc}{m}$}
\end{description}
In the normal form for skip, we write $\tpc n\gcto \spc f$ to abbreviate 
$\tpc n\land true\gcto \spc f$ which is a test nf.

An example is $\norm{\lassg{4}{x}{x-1}}{2}{(\tpc 4\gcto x:=x-1;\spc 2)}$.
For the running example, we have $\norm{c0}{6}{gcs}$ where $gcs$ is the body of $d0$
which is defined in Sect.~\ref{sec:c0unary}.

For if-commands the rule in Fig.~\ref{fig:norm} only handles the case of two guarded commands.
For do, there is a rule for one guarded command and a rule for two.
To handle any number of guarded commands,
the general rules are given in the appendix. 
But these special cases are easier to read because they avoid the need to map over guarded command lists. 

\begin{definition}
The \dt{automaton normal form} of a command $c$ with $\lab(c)=n$ and label $f\notin\labs(c)$, for chosen variable $pc$, is 
\[ \spc  n ; \ldogc{}{ gcs } 
\quad\mbox{where } \normSmall{c}{f}{gcs} 
\]
\end{definition}
For brevity we say ``normal form'', as this is the only normal form of interest in this paper.
As with the notations $\tpc n$ and $\spc n$, the notation 
$ \normSmall{c}{f}{gcs} $
does not make explicit the dependence on chosen variable $pc$.

\begin{figure*}
\begin{small}
\begin{mathpar}

\inferrule{}{
\norm{\lskipc{n}}{f}{
(\tpc n\gcto \spc f)}
}

\inferrule{}{
\norm{\lassg{n}{x}{e}}{f}{ 
(\tpc  n \gcto \lassg{}{x}{e}; \spc  f)}
}

\inferrule{
\norm{c}{\lab(d)}{gcs_0} \\
\norm{d}{f}{gcs_1}
}{
\norm{c;d}{f}{gcs_0\gcsep gcs_1}
}

\inferrule{
\norm{c_0}{f}{gcs_0} \\ \norm{c_1}{f}{gcs_1} 
}{ 
\norm{\lifgc{n}{e_0\gcto c_0 \gcsep e_1\gcto c_1}}{f}{
  \tpc  n \land e_0 \gcto \spc \lab(c_0)  \gcsep
  \tpc  n \land e_1 \gcto \spc \lab(c_1)  \gcsep
  gcs_0 \gcsep gcs_1
}}

\inferrule{
\norm{c}{n}{gcs}
}{
\norm{\ldogc{n}{e\gcto c}}{f}{
  \tpc  n\land e \gcto \spc \lab(c)  \gcsep
  \tpc  n\land\neg e \gcto \spc f \gcsep gcs 
}}

\inferrule{
\norm{c_0}{n}{gcs_0} \\ \norm{c_1}{n}{gcs_1} 
}{
\norm{\ldogc{n}{e_0\gcto c_0 \gcsep e_1\gcto c_1}}{f}{
  \tpc  n\land e_0 \gcto \spc \lab(c_0)  \gcsep
  \tpc  n\land e_1 \gcto \spc \lab(c_1)  \gcsep
  \tpc  n\land\neg (e_0 \lor e_1) \gcto \spc f \gcsep
  gcs_0 \gcsep gcs_1
}}

\end{mathpar}
\end{small}
\caption{Normal form bodies
(see Fig.~\ref{fig:norm:general} for the if/do general cases)}\label{fig:norm}
\end{figure*}

A straightforward structural induction on $c$ proves the following:
\begin{lemma}\label{lem:normExists} 
\upshape
For all $c$ and $f$, there is some $gcs$ with $\normSmall{c}{f}{gcs}$.
\end{lemma}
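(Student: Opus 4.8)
The plan is to prove the statement by structural induction on the command $c$, with $f$ universally quantified, so that the induction hypothesis applies to each subcommand paired with \emph{any} chosen exit label. The base cases and the sequencing case follow immediately from the inference rules of Fig.~\ref{fig:norm}; the \keyw{if} and \keyw{do} cases need slightly more care because their general rules (Fig.~\ref{fig:norm:general}) recurse over the list of guarded commands. No semantic facts and no properties of $\ok$ or $pc$-freshness are required: existence of a normal-form body is purely a matter of the inductive definition being exhaustive over the grammar.

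For the leaves: if $c=\lskipc{n}$, the first rule of Fig.~\ref{fig:norm} gives $\norm{\lskipc{n}}{f}{(\tpc n\gcto\spc f)}$, so take $gcs \eqdef (\tpc n\gcto\spc f)$; if $c=\lassg{n}{x}{e}$, the second rule gives $\norm{\lassg{n}{x}{e}}{f}{(\tpc n\gcto \lassg{}{x}{e};\spc f)}$. For $c = c_0;c_1$, apply the induction hypothesis twice: to $c_0$ with exit label $\lab(c_1)$, yielding $gcs_0$ with $\norm{c_0}{\lab(c_1)}{gcs_0}$, and to $c_1$ with exit label $f$, yielding $gcs_1$ with $\norm{c_1}{f}{gcs_1}$; the sequencing rule then produces $\norm{c_0;c_1}{f}{gcs_0\gcsep gcs_1}$.

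For $c=\lifgc{n}{gcs'}$ and $c=\ldogc{n}{gcs'}$ I would use the general rules of Fig.~\ref{fig:norm:general}, which recurse over $gcs'$. Each leaf of that recursion needs a normal form of one of the guarded-command bodies --- relative to exit label $f$ in the \keyw{if} case, and relative to the loop's own label $n$ in the \keyw{do} case --- and such a normal form exists by the induction hypothesis applied to that body, which is a strict subterm of $c$. The rest is a straightforward side induction on the length of the finite nonempty list $gcs'$, assembling the per-branch contributions together with the branch tests (and, for \keyw{do}, the $\tpc n\land\neg\enab(gcs')\gcto\spc f$ clause for loop exit) into a single $gcs$. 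The two-branch rules displayed in Fig.~\ref{fig:norm} are exactly the length-two instances of this construction.

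The only thing resembling an obstacle is bookkeeping: one must phrase the side induction on guarded-command lists so that the outer structural induction hypothesis remains available for every body command occurring in $gcs'$ (immediate, since each such body is a strict subterm of $c$), and check that $\lab(\cdot)$ is defined wherever it is applied (immediate from the grammar). Everything else is routine unfolding of the rules in Figs.~\ref{fig:norm} and~\ref{fig:norm:general}.
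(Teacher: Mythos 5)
Your proposal is correct and matches the paper's proof, which is stated simply as ``a straightforward structural induction on $c$'' with $f$ generalized; your treatment of the \keyw{if}/\keyw{do} cases via a side induction over the guarded-command list is exactly the routine bookkeeping the paper leaves implicit.
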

This does not require $c$ to be $\ok$, but the normal form is only useful if $\okf(c,f)$.
Note that that $gcs$ is uniquely determined by $c$ and $f$, so we could formulate the 
definition as a recursive function, but the relational formulation is convenient for proofs.
And we do not exploit uniqueness.

\paragraph{Enumerating the cases of a normal form}

As remarked preceding Lemma~\ref{lem:VCprog},
there are five kinds of transition in the small step semantics.
Each gives rise to a corresponding form of guarded command in the normal form.

\begin{lemma}\label{lem:nfCases}[guarded commands of a normal form]
\upshape
Suppose $\okf(c,f)$ and $\norm{c}{f}{gcs}$.
Every guarded command in $gcs$ has one of these five forms:
\begin{itemize}
\item $\tpc k\gcto \spc m$, 
for some $k,m$ such that $\sub(k,c)$ is $\lskipc{k}$ and $m=\fsuc(k,c,f)$.
\item $\tpc k\gcto \lassg{}{x}{e};\spc m$,
for some $k,m,x,e$ such that $\sub(k,c)$ is $\lassg{k}{x}{e}$ 
and $m = \fsuc(k,c,f)$.

\item $\tpc k\land e\gcto \spc m$,
for some $k,m,e,d,gcs_0$ such that $\sub(k,c)$ is $\lifgc{k}{gcs_0}$ 
and $m=\lab(d)$ where $e\gcto d$ is in $gcs_0$.

\item $\tpc k\land e\gcto \spc m$, 
for some $k,m,e,d,gcs_0$ such that $\sub(k,c)$ is $\ldogc{k}{gcs_0}$
and $m=\lab(d)$ where $e\gcto d$ is in $gcs_0$.

\item $\tpc k\land \neg\enab(gcs_0)\gcto \spc m$, 
for some $k,m,gcs_0$ such that $\sub(k,c)$ is 
$\ldogc{k}{gcs_0}$ and $m=\fsuc(k,c,f)$.
\end{itemize}
\end{lemma}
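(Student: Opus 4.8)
The plan is to prove the lemma by induction on the derivation of $\norm{c}{f}{gcs}$; since each rule of Fig.~\ref{fig:norm} (and each general if/do rule, Fig.~\ref{fig:norm:general}) is determined by the top-level shape of $c$, this is just structural induction on $c$. Before starting I would record two routine facts about $\ok$: (i) every subcommand of an $\ok$ command is $\ok$, and (ii) if $\ok(c)$ then the label sets of distinct subcommands of $c$ are pairwise disjoint. Together with $f\notin\labs(c)$ these let me re-establish $\okf$ for the subcommand (and the new exit label) in each recursive call to $\norm{\cdot}{\cdot}{\cdot}$, which is what licenses applying the induction hypothesis. Throughout I would freely unfold the defining equations of $\sub$, $\fsuc$ (Fig.~\ref{fig:fsuc}), $\lab$, $\labs$, and $\enab$.

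The base cases $c=\lskipc{n}$ and $c=\lassg{n}{x}{e}$ are immediate: $gcs$ is the single guarded command $\tpc n\gcto\spc f$, resp.\ $\tpc n\gcto\lassg{}{x}{e};\spc f$, and $\sub(n,c)=c$ while $\fsuc(n,c,f)=f$ by definition, so these match the first two forms with $k=n$, $m=f$. For $c=\lifgc{n}{e_0\gcto c_0\gcsep e_1\gcto c_1}$ (the general if case is analogous, mapping over the list), the leading guarded commands of $gcs$ have shape $\tpc n\land e_i\gcto\spc\lab(c_i)$: these are the third (if) form, with $k=n$, witness $\sub(n,c)=\lifgc{n}{\ldots}$, and $m=\lab(c_i)=\lab(d)$ for $d=c_i$ in that list. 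The remaining guarded commands come from $gcs_0$ and $gcs_1$ with $\norm{c_i}{f}{gcs_i}$, and since $\okf(c_i,f)$ holds, the induction hypothesis classifies each of them into the five forms relative to $c_i,f$. It then remains to transport the classification to $c,f$: any label $k$ named in such a form lies in $\labs(c_i)$, so $\sub(k,c)=\sub(k,c_i)$ and $\fsuc(k,c,f)=\fsuc(k,c_i,f)$ by the if-clauses of the definitions (and any $m=\lab(d)$ is context-independent), so the statement is preserved verbatim.

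The loop cases follow the same pattern, with one twist: the recursive call uses exit label $n$ (the loop label) rather than $f$. For $c=\ldogc{n}{e\gcto c_0}$ the body is $\tpc n\land e\gcto\spc\lab(c_0)\gcsep\tpc n\land\neg e\gcto\spc f\gcsep gcs'$ with $\norm{c_0}{n}{gcs'}$; the first guarded command is the fourth (do-body) form, the second is the fifth (loop-exit) form using that $\enab$ of the one-element list $e\gcto c_0$ is $e$ and that $\fsuc(n,c,f)=f$, and the guarded commands of $gcs'$ are handled by the induction hypothesis with $\okf(c_0,n)$ — valid because $\ok(c)$ forces $n\notin\labs(c_0)$. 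Transporting a form from $c_0,n$ to $c,f$ uses $\sub(k,c)=\sub(k,c_0)$ and, by the do-clause of Fig.~\ref{fig:fsuc}, $\fsuc(k,c,f)=\fsuc(k,c_0,n)$ for $k\in\labs(c_0)$; thus the value $m=\fsuc(k,c_0,n)$ recorded in the hypothesis is exactly $\fsuc(k,c,f)$. The two-guarded-command loop and the general do case are identical, writing $\neg\enab(gcs')=\neg(e_0\lor\cdots)$ for the exit guard. The sequencing case $c=c_0;c_1$ with $\norm{c_0}{\lab(c_1)}{gcs_0}$, $\norm{c_1}{f}{gcs_1}$ is likewise: labels in $\labs(c_0)$ use $\sub(k,c)=\sub(k,c_0)$ and $\fsuc(k,c,f)=\fsuc(k,c_0,\lab(c_1))$, matching the exit label $\lab(c_1)$ of the recursive call, while labels in $\labs(c_1)$ (disjoint from $\labs(c_0)$ by $\ok(c)$) use the ``otherwise'' clauses.

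I expect the only real work to be the bookkeeping: discharging the $\okf$ side conditions and label-disjointness facts at each recursive call, and checking in every compound case that the exit label fed to the recursive $\norm{\cdot}{\cdot}{\cdot}$ invocation coincides with the value $\fsuc$ assigns at $c$. Once those alignments are in place, each case reduces to unfolding definitions, and no step is genuinely difficult.
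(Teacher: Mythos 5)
The paper states Lemma~\ref{lem:nfCases} without proof, so there is no official argument to compare against; your structural induction on the derivation of $\norm{c}{f}{gcs}$ is the natural route and it is correct. You have identified the one genuinely load-bearing point: that the exit label passed to each recursive $\norm{\cdot}{\cdot}{\cdot}$ call ($\lab(c_1)$ for the first component of a sequence, the loop label $n$ for a loop body, $f$ for branches of an if) coincides exactly with what the corresponding clause of $\fsuc$ returns for labels of that subcommand, and that $\ok$ supplies the label-disjointness and freshness facts needed to invoke the induction hypothesis; with those alignments checked, the rest is definition unfolding as you say.
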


\subsection{Axioms for normal form equivalence}\label{sec:KATequiv}

For the sake of straightforward presentation, command equivalence has been 
formulated in terms of a single fix set of hypotheses that axiomatize simple
assignments and boolean expressions (Defs.~\ref{def:Hyp} and~\ref{def:kateq}).
However, many useful equivalences require no such hypotheses.
Only a few specific hypotheses are needed to prove the normal form theorem,
and in this section we spell those out.

Observe that  if $\norm{c}{f}{gcs}$ then there is a finite set of instances of this relation that supports the fact, namely the normal forms of subprograms of $c$.
This enables us to define a set of axioms that are useful for reasoning about 
a given program $c$ and its normal form.  The definition is parameterized on $c$ and on the choice of a $pc$ variable and final label.

\begin{definition}\label{def:nfax}
The \dt{normal form axioms}, $\nfax(pc,c,f)$, 
comprises the following set of equations.
\begin{description}
\item[(diffTest)] 
$\mkt{ \tpc i } \kdot \mkt{ \tpc j } = \kzero$  for 
$i$ and $j$ in $\labs(c)\union\{f\}$ such that $i\neq j$
\item[(setTest)]
$\mkt{ \spc  i }\kdot\mkt{ \tpc  i } = \mkt{ \spc  i }$ for $i$ in $\labs(c)\union\{f\}$ 
 
\item[(totIf)]
$\kneg\mkt{\enab(gcs)} = 0$ for every $\lifgc{}{gcs}$ that occurs in $c$

\item[(testCommuteAsgn)] 
$\mkt{\tpc i} ; \mkt{\lassg{}{x}{e}} = \mkt{\lassg{}{x}{e}} ; \mkt{\tpc i}$
for every $\lassg{}{x}{e}$ in $c$ such that  $x\nequiv pc$,
and every $i$ in $\labs(c)\union\{f\}$ 
\end{description}
\end{definition}

\begin{lemma}\label{lem:nfaxHyp}
\upshape
For any $pc,c,f$, the equations $\nfax(pc,c,f)$ 
follow by KAT reasoning from the equations of $\Hyp$.
\end{lemma}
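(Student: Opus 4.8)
The plan is to verify each of the four families of equations comprising $\nfax(pc,c,f)$ separately, deriving each from a finite subset of $\Hyp$ by purely equational KAT reasoning (Boolean-algebra and semiring laws), after unfolding the translation clauses $\mkt{e\land e'}=\mkt{e}\kdot\mkt{e'}$ and $\mkt{\neg e}=\kneg\mkt{e}$. Throughout, recall that $\tpc i$ denotes the boolean expression $pc=i$ and $\spc i$ the assignment $pc:=i$.

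Two cases follow at once from the first clause of $\Hyp$, which supplies $\mkt{e}=\kzero$ for every unsatisfiable boolean expression $e$. For (diffTest): when $i\neq j$ the expression $\tpc i\land\tpc j$, i.e.\ $pc=i\land pc=j$, is unsatisfiable, so $\mkt{\tpc i\land\tpc j}=\kzero$ is in $\Hyp$; since $\mkt{\tpc i\land\tpc j}=\mkt{\tpc i}\kdot\mkt{\tpc j}$ by definition of $\mkt{-}$, this is exactly (diffTest). For (totIf): $c$ is well formed, so by the $\totalIf$ condition of Def.~\ref{def:lang} we have $\means{\enab(gcs)}=\means{true}$ for each $\lifgc{}{gcs}$ in $c$; hence $\neg\enab(gcs)$ is unsatisfiable, $\mkt{\neg\enab(gcs)}=\kzero$ is in $\Hyp$, and $\mkt{\neg\enab(gcs)}=\kneg\mkt{\enab(gcs)}$ gives (totIf).

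The remaining two cases use the second clause of $\Hyp$, which supplies $\mkt{e_0};\mkt{x:=e};\kneg\mkt{e_1}=\kzero$ whenever $e_0\imp\subst{e_1}{x}{e}$ is valid, together with the test manipulation $p=p;(\mkt{e_1}\kplus\kneg\mkt{e_1})=p;\mkt{e_1}\kplus p;\kneg\mkt{e_1}$ and its mirror image. First note that $\mkt{\neg\mathit{false}}=\kone$ is provable from $\Hyp$: $\mathit{false}$ is unsatisfiable, so $\mkt{\mathit{false}}=\kzero$ is in $\Hyp$, and $\mkt{\neg\mathit{false}}=\kneg\mkt{\mathit{false}}=\kneg\kzero=\kone$. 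For (setTest): the instance of the second clause with the assignment taken to be $\spc i$, $e_0:=\neg\mathit{false}$, and $e_1:=\tpc i$ has valid side condition, since $\subst{(pc=i)}{pc}{i}$ is $i=i$ and $\neg\mathit{false}\imp(i=i)$ holds; hence $\mkt{\neg\mathit{false}};\mkt{\spc i};\kneg\mkt{\tpc i}=\kzero$, and $\mkt{\neg\mathit{false}}=\kone$ reduces this to $\mkt{\spc i};\kneg\mkt{\tpc i}=\kzero$, whence $\mkt{\spc i}=\mkt{\spc i};(\mkt{\tpc i}\kplus\kneg\mkt{\tpc i})=\mkt{\spc i};\mkt{\tpc i}\kplus\kzero=\mkt{\spc i};\mkt{\tpc i}$. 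For (testCommuteAsgn) with $x\neq pc$: the equation $\mkt{\tpc i};\mkt{\lassg{}{x}{e}};\kneg\mkt{\tpc i}=\kzero$ is the instance of the second clause with $e_0=e_1:=\tpc i$ (side condition $(pc=i)\imp\subst{(pc=i)}{x}{e}$, valid because $x\neq pc$ and $i$ is a literal, so the substitution acts trivially); and, rewriting $\kneg\mkt{\tpc i}=\mkt{\neg(pc=i)}$ and $\mkt{\tpc i}=\kneg\mkt{\neg(pc=i)}$ via the translation of $\neg$ and involution of $\kneg$, the equation $\kneg\mkt{\tpc i};\mkt{\lassg{}{x}{e}};\mkt{\tpc i}=\kzero$ is the instance with $e_0=e_1:=\neg(pc=i)$, again with valid side condition. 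From $b;p;\kneg b=\kzero$ and $\kneg b;p;b=\kzero$ one obtains $b;p=b;p;b=p;b$ by excluded middle and the unit law, so $\mkt{\tpc i};\mkt{\lassg{}{x}{e}}=\mkt{\lassg{}{x}{e}};\mkt{\tpc i}$, which is (testCommuteAsgn). The main (mild) obstacle is this last case: one must match the exact shape $\mkt{e_0};\mkt{x:=e};\kneg\mkt{e_1}$ of $\Hyp$'s second clause in both directions and confirm the substituted postconditions are unchanged because $x\neq pc$ — but no genuinely new idea beyond the test-complement manipulations is needed, and in particular the argument makes no use of $\ok(c)$ or of $pc$ being fresh for $c$.
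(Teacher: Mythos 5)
Your proof is correct and follows essentially the same route as the paper's: (diffTest) and (totIf) from the unsatisfiable-test clause of $\Hyp$, (setTest) from an assignment-clause instance with a trivially true precondition followed by the excluded-middle decomposition, and (testCommuteAsgn) by splitting the commutation into the two one-sided equations $\mkt{\tpc i};\mkt{\lassg{}{x}{e}};\kneg\mkt{\tpc i}=0$ and $\kneg\mkt{\tpc i};\mkt{\lassg{}{x}{e}};\mkt{\tpc i}=0$, each an instance of $\Hyp$ because $x\not\equiv pc$ makes the substitution trivial. Your use of $\neg\mathit{false}$ for the precondition in (setTest) is, if anything, slightly more careful than the paper's informal use of $\mkt{\skipc}=\kone$ in that slot.
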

\begin{proof}
We make use of the alternate characterization of $\Hyp$ in terms of $\relKAT$
mentioned following Def.~\ref{def:Hyp}.

(diffTest) 
If $i\neq j$ then $\mkt{ \tpc i \land \tpc j }^\relKAT = \emptyset$, 
so $\Hyp$ contains $\mkt{ \tpc i \land \tpc j } = \kzero$.
So we have $\mkt{ \tpc i }\kdot \mkt{ \tpc j } = \mkt{ \tpc i \land \tpc j } =\kzero$
using the definition of $\mkt{-}$.

(setTest)
$\Hyp$ contains $\mkt{\skipc};\mkt{\spc i}\kdot\kneg\mkt{\tpc i}=\kzero$,
and $\mkt{\skipc}$ is $\kone$.
Now observe that 
\(\mkt{\spc i}
= 
\mkt{\spc i}\kdot(\mkt{\tpc i}+\kneg\mkt{\tpc i}) 
= 
\mkt{\spc i}\kdot\mkt{\tpc i} + \kone;\mkt{\spc i}\kdot\kneg\mkt{\tpc i}
= 
\mkt{\spc i}\kdot\mkt{\tpc i}\)
using KAT laws and $\kone;\mkt{\spc i}\kdot\kneg\mkt{\tpc i}=\kzero$.

(totIf)
If $\lifgc{}{gcs}$ that occurs in $c$ then 
we have $(\kneg\mkt{\enab(gcs)})^\relKAT = \emptyset$ 
as a consequence of the $\totalIf$ condition (Def.~\ref{def:lang}).
So the equation $\kneg\mkt{\enab(gcs)} = 0$ is in $\Hyp$.

(testCommuteAsgn)
The equation 
$\mkt{\tpc i} ; \mkt{\lassg{}{x}{e}} = \mkt{\lassg{}{x}{e}} ; \mkt{\tpc i}$,
is equivalent to the conjunction of 
$\mkt{\tpc i} ; \mkt{\lassg{}{x}{e}} ; \kneg\mkt{\tpc i} = 0$
and $\kneg\mkt{\tpc i} ; \mkt{\lassg{}{x}{e}} ; \mkt{\tpc i} = 0$
using KAT laws.  
By definition of $\mkt{-}$ and boolean algebra, 
$\kneg\mkt{\tpc i} ; \mkt{\lassg{}{x}{e}} ; \mkt{\tpc i} = 0$
is equivalent to 
$\mkt{\neg \tpc i} ; \mkt{\lassg{}{x}{e}} ; \kneg\mkt{\neg\tpc i} = 0$.
Both $\tpc i\imp \subst{(\tpc i)}{x}{e}$ and 
$\neg \tpc i\imp \subst{(\neg \tpc i)}{x}{e}$ 
are valid because $pc\not\equiv x$,
so $\Hyp$ contains both equations 
$\mkt{\tpc i} ; \mkt{\lassg{}{x}{e}} ; \kneg\mkt{\tpc i} = 0$
and 
$\mkt{\neg \tpc i} ; \mkt{\lassg{}{x}{e}} ; \kneg\mkt{\neg\tpc i} = 0$.
\end{proof}

\begin{remark}\label{rem:pspace}
\upshape
$\nfax(pc,c,f)$ is finite, and every equation is equivalent to one of the form $\KE=0$.  
So entailments of the form $\nfax(pc,c,f) \proves \KE_0 = \KE_1$
are decidable in PSPACE~\cite{KozenKATcomplex}.
\qed\end{remark}

\begin{lemma}\label{lem:nfax}
\upshape
All equations in $\nfax(pc,c,f)$ are true in $\relKAT$, for any $pc,c,f$.
\end{lemma}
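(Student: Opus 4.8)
The plan is to obtain the lemma as an immediate consequence of Lemma~\ref{lem:nfaxHyp}, Lemma~\ref{lem:Hyp}, and the soundness of KAT provability. Recall the metatheorem noted just after the definition of KAT provability: for any model $\anyKAT$ of KAT, if $H\proves \KE_0 = \KE_1$ and every equation of $H$ holds when interpreted in $\anyKAT$, then $\KE_0^\anyKAT = \KE_1^\anyKAT$. Instantiate $\anyKAT$ with $\relKAT$, which is a model of KAT since it is a relational model. By Lemma~\ref{lem:Hyp}, every equation of $\Hyp$ holds in $\relKAT$. By Lemma~\ref{lem:nfaxHyp}, each equation $\KE_0=\KE_1$ in $\nfax(pc,c,f)$ satisfies $\Hyp\proves \KE_0 = \KE_1$. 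Applying the metatheorem yields $\KE_0^\relKAT = \KE_1^\relKAT$, i.e.\ that equation is true in $\relKAT$. Since this holds for every equation in $\nfax(pc,c,f)$ and for arbitrary $pc,c,f$, the lemma follows.

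As an alternative one could verify the four families of axioms directly against the definition of $\relKAT$, using the relational characterizations of $\mkt{-}$ from Lemma~\ref{lem:correctInterp}: \textbf{(diffTest)} for distinct literals $i\neq j$ the coreflexives $\{(s,s)\mid s(pc)=i\}$ and $\{(s,s)\mid s(pc)=j\}$ are disjoint, so their composition is $\kzero$; \textbf{(setTest)} the image of $\means{pc:=i}$ lies entirely within the states satisfying $pc=i$, so post-composing with $\mkt{\tpc i}$ changes nothing; \textbf{(totIf)} the $\totalIf$ condition of Def.~\ref{def:lang} gives $\means{\enab(gcs)}=\means{true}$, hence $(\kneg\mkt{\enab(gcs)})^\relKAT=\emptyset$; and \textbf{(testCommuteAsgn)} an assignment $x:=e$ with $x\not\equiv pc$ leaves the value of $pc$ unchanged, so the coreflexive $\mkt{\tpc i}$ commutes with $\means{x:=e}$. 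Routing the argument through Lemma~\ref{lem:nfaxHyp} is preferable since that bookkeeping has already been carried out there.

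There is essentially no obstacle: the content resides in Lemmas~\ref{lem:nfaxHyp} and~\ref{lem:Hyp}, and this statement merely packages them. The only point needing care is to apply the soundness metatheorem for $\proves$ with the concrete model $\relKAT$, noting explicitly that $\relKAT$ is a relational model and hence a KAT so that the metatheorem is applicable.
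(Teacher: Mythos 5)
Your proposal is correct and follows exactly the paper's own argument: the paper also derives the lemma directly from Lemma~\ref{lem:nfaxHyp} (the $\nfax$ equations are provable from $\Hyp$) together with Lemma~\ref{lem:Hyp} (the $\Hyp$ equations hold in $\relKAT$), implicitly using soundness of $\proves$. Your explicit invocation of the soundness metatheorem and the optional direct verification are just more detailed renderings of the same reasoning.
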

This holds because the equations follow from $\Hyp$ (Lemma~\ref{lem:nfaxHyp})
and the equations in $\Hyp$ are true in $\relKAT$ (Lemma~\ref{lem:Hyp}).
Note that $c$ need not be $\ok$ for this result.
However, truth of (totIf) does depend on the $\totalIf$ condition (Def.~\ref{def:lang}).

\begin{lemma}\label{lem:nf-conseq}
\upshape 
The following hold for any $c,pc,f$ such that $pc$ does not occur in $c$.
\begin{list}{}{}
\item[(diffTestNeg)] $\nfax(pc,c,f)\proves\mkt{\tpc i} = \mkt{\tpc i};\neg\mkt{\tpc j}$ 
and $\mkt{\tpc i} \leq \neg\mkt{\tpc j}$ 
\\
for $i\neq j$ with $i,j$ in $\labs(c)\union\{f\}$.

\item[(nf-enab-labs)]
\( \nfax(pc,c,f)\proves
\mkt{\enab(gcs)} = \mkt{ \quant{\lor}{i}{i\in\labs(c)}{ \tpc  i }} \)
\\
if $\okf(c,f)$ and $\norm{c}{f}{gcs}$.

\item[(nf-lab-enab)] $\nfax(pc,c,f)\proves \mkt{\tpc \lab(c)}\leq\mkt{\enab(gcs)}$
\\
if $\okf(c,f)$ and $\norm{c}{f}{gcs}$.

\item[(nf-enab-disj)] $\nfax(pc,c,f)\proves \mkt{\enab(gcs_0)};\mkt{\enab(gcs_1)}= 0$
\\
if $\okf(c,f)$ and $gcs_0$ and $gcs_1$ are the normal form bodies of different subprograms of $c$.

\item[(nf-enab-corr)]
\( \nfax(pc,c,f)\proves  
\mkt{\enab(gcs)};\mkt{gcs} = 
\mkt{\enab(gcs)};\mkt{gcs};(\mkt{\enab(gcs)}\kplus\mkt{\tpc  f}) \)
\\
if $\okf(c,f)$ and $\norm{c}{f}{gcs}$ and $pc\notin vars(c)$.
\end{list}
(nf-enab-corr) can be abbreviated 
$\nfax(pc,c,f)\proves \mkt{gcs} : \spec{ \mkt{\enab(gcs)} }{ \mkt{\enab(gcs)}\kplus\mkt{\tpc  f}}$.
\end{lemma}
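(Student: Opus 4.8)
The plan is to derive all five items by purely equational (and propositional/Boolean-algebra) reasoning from the equations $\nfax(pc,c,f)$, handled in the listed order so that later items may invoke earlier ones. Item (diffTestNeg) is immediate: by (diffTest) $\mkt{\tpc i};\mkt{\tpc j}=0$ for $i\neq j$, and since $\mkt{-}$ sends conjunction of tests to $;$ and negation to $\neg$, in the Boolean subalgebra $\mkt{\tpc i}=\mkt{\tpc i};(\mkt{\tpc j}+\neg\mkt{\tpc j})=\mkt{\tpc i};\mkt{\tpc j}+\mkt{\tpc i};\neg\mkt{\tpc j}=\mkt{\tpc i};\neg\mkt{\tpc j}$, whence also $\mkt{\tpc i}=\mkt{\tpc i};\neg\mkt{\tpc j}\leq\neg\mkt{\tpc j}$.

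The substantive item is (nf-enab-labs), which I would prove by structural induction on $c$ — equivalently, on the derivation of $\norm{c}{f}{gcs}$ via Fig.~\ref{fig:norm} (and Fig.~\ref{fig:norm:general} for the general if/do rules). For $\lskipc{n}$ and $\lassg{n}{x}{e}$ the body is a single guarded command with guard $\tpc n$ and $\labs(c)=\{n\}$, so the claim is immediate. For a sequence $c;d$ with $\norm{c}{\lab(d)}{gcs_0}$ and $\norm{d}{f}{gcs_1}$ we have $\enab(gcs_0\gcsep gcs_1)=\enab(gcs_0)\lor\enab(gcs_1)$ and $\labs(c;d)=\labs(c)\union\labs(d)$; the induction hypotheses apply because $\nfax(pc,c,\lab(d))$ and $\nfax(pc,d,f)$ are both subsets of $\nfax(pc,c;d,f)$ — a routine check of Def.~\ref{def:nfax}, using $\okf(c;d,f)$ and $\labs(c)\union\{\lab(d)\}\subseteq\labs(c;d)$. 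For $\lifgc{n}{G}$ the body consists of the branches' normal-form bodies together with one guarded command $\tpc n\land e\gcto\spc{\lab(d')}$ for each $e\gcto d'$ in $G$; the disjunction of those guards translates to $\mkt{\tpc n};\mkt{\enab(G)}$, which equals $\mkt{\tpc n}$ since (totIf) gives $\neg\mkt{\enab(G)}=0$, hence $\mkt{\enab(G)}=1$ in the Boolean subalgebra. For $\ldogc{n}{G}$ the body additionally carries the exit guard $\tpc n\land\neg\enab(G)$, so the $n$-guards sum to $\mkt{\tpc n};(\mkt{\enab(G)}+\neg\mkt{\enab(G)})=\mkt{\tpc n}$ by Boolean algebra alone; combined with the induction hypotheses for the branches and the fact that under $\okf(c,f)$ the label set is the disjoint union of $\{n\}$ with the branch label sets, this gives the claim. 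Then (nf-lab-enab) follows at once, since $\lab(c)\in\labs(c)$ makes $\mkt{\tpc{\lab(c)}}$ a summand of $\mkt{\quant{\lor}{i}{i\in\labs(c)}{\tpc i}}=\mkt{\enab(gcs)}$ and $x\leq x+y$ holds in any idempotent semiring.

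For (nf-enab-disj), apply (nf-enab-labs) to the two subprograms $d_0,d_1$ with bodies $gcs_0,gcs_1$ (whose $\nfax$-equations are again included in those for $c$), distribute the product over the sums to get $\sum_{i\in\labs(d_0),\,j\in\labs(d_1)}\mkt{\tpc i};\mkt{\tpc j}$, and note that the label sets of $d_0,d_1$ are disjoint (this is the force of ``different subprograms'' in every application, and it holds because $c$ is $\ok$), so every term has $i\neq j$ and is $0$ by (diffTest). For (nf-enab-corr), write $\mkt{gcs}$ as the sum of $\mkt{e};\mkt{d}$ over the guarded commands $e\gcto d$ of $gcs$; by Lemma~\ref{lem:nfCases} each such $\mkt{d}$ ends in a factor $\mkt{\spc m}$ with $m\in\labs(c)\union\{f\}$ (for the skip/assignment/do-exit forms $m=\fsuc(k,c,f)$, which lies in $\labs(c)\union\{f\}$ by the definition of $\fsuc$ in Fig.~\ref{fig:fsuc}; for the if/do-continue forms $m=\lab(d')$ for a branch $d'$). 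By (setTest), $\mkt{\spc m}=\mkt{\spc m};\mkt{\tpc m}$, so $\mkt{e};\mkt{d}=\mkt{e};\mkt{d};\mkt{\tpc m}$, and $\mkt{\tpc m}\leq\mkt{\enab(gcs)}+\mkt{\tpc f}$ by (nf-enab-labs); monotonicity of $;$ and right-distributivity over $+$ then yield $\mkt{gcs}\leq\mkt{gcs};(\mkt{\enab(gcs)}+\mkt{\tpc f})$, hence $\mkt{\enab(gcs)};\mkt{gcs}\leq\mkt{\enab(gcs)};\mkt{gcs};(\mkt{\enab(gcs)}+\mkt{\tpc f})$, and the reverse inequality is trivial because $\mkt{\enab(gcs)}+\mkt{\tpc f}$ is a test, i.e.\ $\leq 1$. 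That equality is exactly the abbreviated correctness form in the statement.

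The hard part will be the structural induction in (nf-enab-labs): specifically, the bookkeeping that justifies invoking the induction hypothesis for subprograms at their own normal-form final labels — i.e.\ that $\nfax(pc,d,g)\subseteq\nfax(pc,c,f)$ whenever $d$ is a subprogram of $c$ appearing in the derivation of $\norm{c}{f}{gcs}$ with final label $g$ — and checking that the (totIf)/Boolean-tautology step goes through uniformly for the general, non-binary if/do rules of Fig.~\ref{fig:norm:general}. Everything downstream of (nf-enab-labs) is short KAT and Boolean-algebra calculation.
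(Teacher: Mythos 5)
Your proposal is correct, and for items (diffTestNeg), (nf-enab-labs), (nf-lab-enab) it matches the paper's proof: the paper also does (nf-enab-labs) by rule induction on $\norm{c}{f}{gcs}$, with the \keyw{do} case closed by the explicit exit guard $\tpc n\land\neg\enab(\ldots)$ and the \keyw{if} case closed by (totIf), and it relies on the same (unstated in the paper, correctly flagged by you) monotonicity fact that $\nfax(pc,d,g)\subseteq\nfax(pc,c,f)$ for subprograms $d$ with their normal-form exit labels $g$. Where you genuinely diverge is in the last two items. The paper proves both (nf-enab-disj) and (nf-enab-corr) by a further induction on the normal form relation, and explicitly invokes (testCommuteAsgn) for (nf-enab-corr) (which is why the paper attaches the hypothesis $pc\notin vars(c)$ to that item). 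You instead derive them directly from the already-established (nf-enab-labs) together with Lemma~\ref{lem:nfCases}: for (nf-enab-disj) by distributing the product of the two label-sums and killing each cross term with (diffTest), and for (nf-enab-corr) by observing that every guarded command of a normal form body ends in a factor $\mkt{\spc m}$ with $m\in\labs(c)\union\{f\}$, so (setTest) plants the test $\mkt{\tpc m}\leq\mkt{\enab(gcs)}+\mkt{\tpc f}$ on the right, and the reverse inequality is free since a test is below $\kone$. This is a clean simplification: it proves the stronger precondition-free fact $\mkt{gcs}\leq\mkt{gcs};(\mkt{\enab(gcs)}\kplus\mkt{\tpc f})$, avoids the extra induction, and does not use (testCommuteAsgn) at all. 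One small caveat, which applies equally to the paper's own proof: (nf-enab-disj) as stated speaks of ``different subprograms,'' and your argument (like the paper's) needs the two subprograms to have \emph{disjoint} label sets; that is indeed what holds in every use (sibling components of a sequence or sibling branches), but it is worth making the disjointness assumption explicit rather than reading it off from $\ok$ alone, since nested subprograms are also ``different'' yet share labels.
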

\begin{proof}
(diffTestNeg) follows from (diffTest) using boolean algebra.

(nf-enab-labs) 
By rule induction on the normal form judgment.
\\
For $\ldogc{n}{gcs_0}$, the normal form body is explicitly defined to include, 
not only a command with guard $\tpc n\land e$ for each $e\gcto d$ in $gcs_0$
but also a command with guard $\tpc n\land\neg\enab(\ldots)$ for their complement.
The disjunction of their enabling conditions simplifies to $\tpc n$.
By the normal form  rule and induction hypothesis, the enabling condition for the collected normal form bodies is the disjunction over their label tests, hence the result.
\\
For $\lifgc{n}{gcs_0}$, the normal form body includes disjuncts
$\tpc n\land e$ for each of the guard expressions $e$ in $gcs_0$.
Consider the case of two branches, i.e., $gcs_0$ is $e_0\gcto d_0\gcsep e_1\gcto d_1$,
the disjunction is $(\tpc n\land e_0) \lor (\tpc n\land e_1)$.
Observe
\[\begin{array}{lll}
  & \mkt{(\tpc n\land e_0) \lor( \tpc n\land e_1)}\\
= & \mkt{\tpc n};\mkt{e_0} + \mkt{\tpc n};\mkt{e_1} & \mbox{def $\mkt{-}$} \\
= & \mkt{\tpc n};(\mkt{e_0}+\mkt{e_1}) &\mbox{KAT law} \\ 
= & \mkt{\tpc n};\mkt{e_0\lor e_1} & \mbox{def $\mkt{-}$} \\
= & \mkt{\tpc n} & \mbox{axiom (totIf), unit law} 
\end{array}\]
The rest of the argument is like for \keyw{do}.
The general case (multiple branches) is similar.

(nf-lab-enab) follows from (nf-enab-labs), using boolean algebra and 
definition of $\mkt{-}$ (distributing over $\lor$). 

(nf-enab-disj) is proved by induction on normal form, 
using that by $\ok$ subprograms have distinct labels,
and (nf-enab-labs) and (diffTest).

(nf-enab-corr)
The proof is by induction on the normal form relation and by calculation
using KAT laws and the axioms.
It uses (testCommuteAsgn) so $pc$ must not occur in $c$. 
\end{proof}

With Lemma~\ref{lem:nf-conseq} we are ready to prove the normal form equivalence theorem.
Before proceeding, here is a semantic consequence that will be useful in Sects.~\ref{sec:autUnary} and~\ref{sec:acomplete}.

\begin{lemma}\label{lem:nf-enab-labs}
\upshape
If $\okf(c,f)$ and $\norm{c}{f}{gcs}$ then
$\means{\enab(gcs)} = \means{\quant{\lor}{i}{i\in\labs(c)}{ \tpc  i }}$.
\end{lemma}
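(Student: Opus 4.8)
The plan is to obtain this semantic equality as a direct corollary of the syntactic item (nf-enab-labs) of Lemma~\ref{lem:nf-conseq}, transported from KAT provability to truth in the relational model $\relKAT$ and then back to the program semantics via Lemma~\ref{lem:correctInterp}.

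First I would fix a variable $pc$ that does not occur in $c$; this is consistent with the implicit choice of $pc$ underlying the $\tpc{\_}$ notation in the statement, and it discharges the side condition $pc\notin vars(c)$ required by Lemma~\ref{lem:nf-conseq}. Applying item (nf-enab-labs) of Lemma~\ref{lem:nf-conseq} (whose hypotheses $\okf(c,f)$ and $\norm{c}{f}{gcs}$ are exactly those assumed here) gives
\[ \nfax(pc,c,f)\proves \mkt{\enab(gcs)} = \mkt{\quant{\lor}{i}{i\in\labs(c)}{\tpc  i}} . \]
By Lemma~\ref{lem:nfax}, every equation in $\nfax(pc,c,f)$ is true in $\relKAT$, so by soundness of KAT provability (the basic fact that $H\proves \KE_0=\KE_1$ together with truth of $H$ in a model $\anyKAT$ entails $\KE_0^\anyKAT = \KE_1^\anyKAT$), we get
\[ \mkt{\enab(gcs)}^\relKAT = \mkt{\quant{\lor}{i}{i\in\labs(c)}{\tpc  i}}^\relKAT . \]

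Now both $\enab(gcs)$ and $\quant{\lor}{i}{i\in\labs(c)}{\tpc  i}$ are boolean expressions of GCL: the former is the disjunction of the guards of $gcs$ by definition of $\enab$, and the latter is a finite disjunction since $\labs(c)$ is finite. Hence Lemma~\ref{lem:correctInterp} applies to each, giving $\mkt{\enab(gcs)}^\relKAT = \{(s,s)\mid \means{\enab(gcs)}(s)=true\}$ and $\mkt{\quant{\lor}{i}{i\in\labs(c)}{\tpc  i}}^\relKAT = \{(s,s)\mid \means{\quant{\lor}{i}{i\in\labs(c)}{\tpc  i}}(s)=true\}$. Equating these two coreflexive relations and reading off the underlying sets of stores yields $\means{\enab(gcs)} = \means{\quant{\lor}{i}{i\in\labs(c)}{\tpc  i}}$, which is the claim.

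There is no real obstacle here: the argument is a routine transfer along the chain (syntactic provability from $\nfax$) $\Rightarrow$ (truth in $\relKAT$) $\Rightarrow$ (equality of denotations). The only points requiring minor care are checking that the two expressions in question are genuinely boolean expressions so that the expression clause of Lemma~\ref{lem:correctInterp} is applicable, and noting that the choice of $pc$ can be made fresh for $c$ so that the hypothesis of Lemma~\ref{lem:nf-conseq} is met; the content of the lemma has already been established there.
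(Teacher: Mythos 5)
Your proposal is correct and follows exactly the route the paper takes: the paper's own proof is the one-line observation that the claim "follows by definitions from provability lemma (nf-enab-labs) of Lemma~\ref{lem:nf-conseq}, using Lemmas~\ref{lem:nfax} and~\ref{lem:correctInterp}." Your write-up simply makes explicit the intermediate steps (soundness of KAT provability in $\relKAT$, the coreflexive reading of boolean expressions, and the freshness of $pc$) that the paper leaves implicit.
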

\begin{proof}
Follows by definitions from provability lemma (nf-enab-labs) of Lemma~\ref{lem:nf-conseq},
using Lemmas~\ref{lem:nfax} and~\ref{lem:correctInterp}.
\end{proof}

\begin{remark}\label{rem:totIf}
\upshape
The (totIf) axiom of Def.~\ref{def:nfax} is only used for one purpose: to prove its consequence (nf-enab-labs) in Lemma~\ref{lem:nf-conseq}.
In turn, (nf-enab-labs) is used 
in several ways in the proof of the normal form equivalence Theorem~\ref{thm:normEquiv},
and its consequence Lemma~\ref{lem:nf-enab-labs} is used in the proofs
of the Floyd completeness Theorems~\ref{thm:FloydComplete}
and the alignment completeness Theorem~\ref{thm:acomplete}.

An alternate way to develop the theory is as follows.
First, impose totality of if-commands the syntactic condition 
in Remark~\ref{rem:totIfSyn}.
Second, drop (totIf) from Def.~\ref{def:nfax}.
Third, change the proof of (nf-enab-labs) in Lemma~\ref{lem:nf-conseq}.
Specifically,
the expression $\mkt{\tpc n};\mkt{e_0\lor e_1}$ 
becomes $\mkt{\tpc n};\mkt{e\lor \neg e}$ 
and we proceed 
$\mkt{\tpc n};\mkt{e\lor \neg e}
= \mkt{\tpc n};(\mkt{e}+\neg\mkt{e})
= \mkt{\tpc n}$.
\end{remark}

\subsection{Normal form equivalence theorem}\label{sec:nfthm}

Finally we are ready to prove the main result of Sect.~\ref{sec:KATnf},
which loosely speaking says every command is equivalent to one in automaton normal form.

\begin{theorem}\label{thm:normEquiv} 
\upshape
If $\okf(c,f)$, $pc\notin vars(c)$, and 
$\normSmall{c}{f}{gcs}$ (for $pc$) 
then
\begin{equation}\label{eq:normEquiv}
\spc n; \ldogc{}{gcs} \kateq \addPC(c); \spc f   
\quad\mbox{where $n=\lab(c)$.}
\end{equation}
\end{theorem}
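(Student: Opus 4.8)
The plan is to unfold Definition~\ref{def:kateq} and prove $\Hyp\proves\mkt{\spc n;\ldogc{}{gcs}}=\mkt{\addPC(c);\spc f}$. By Lemma~\ref{lem:nfaxHyp} it suffices to derive this from the finite set $\nfax(pc,c,f)$, and by weakening also from $\nfax(pc,c',\ell)$ for the subprograms $c'$ appearing in a normal-form derivation, since Definition~\ref{def:nfax} shows each such set is contained in $\nfax(pc,c,f)$ (the relevant labels and if/assignment occurrences are a subset, using that the subprogram exit labels lie in $\labs(c)\cup\{f\}$). Unfolding $\mkt{-}$ on both sides, the goal becomes
\[ \nfax(pc,c,f)\proves\; \mkt{\spc n}\kdot\mkt{gcs}^{*}\kdot\kneg\mkt{\enab(gcs)} \;=\; \mkt{\addPC(c)}\kdot\mkt{\spc f},\qquad n=\lab(c), \]
and I would prove it by rule induction on the derivation of $\normSmall{c}{f}{gcs}$ (Fig.~\ref{fig:norm}), carrying the hypotheses $\okf(c,f)$ and ``$pc$ does not occur in $c$'', which pass to subprograms by $\ok$-ness (distinct labels, fresh exit label). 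The recurring tools are Lemma~\ref{lem:enabmkt} ($\mkt{gcs'}=\mkt{\enab(gcs')}\kdot\mkt{gcs'}$), the consequences of Lemma~\ref{lem:nf-conseq} --- especially (nf-enab-labs), (nf-enab-disj) and (nf-enab-corr), which say that a normal-form body activates exactly on its program's labels, distinct subprograms have disjoint activation, and running a body keeps $pc$ inside that program's labels or at its exit label --- together with the axioms (diffTest), (setTest), (testCommuteAsgn), (totIf) and standard Kleene-algebra identities (the $*$-unfold and $*$-induction axioms, denesting $(x+y)^{*}=x^{*}(yx^{*})^{*}$, and the bisimulation rule $u\kdot b=b\kdot s\;\Rightarrow\;u^{*}\kdot b=b\kdot s^{*}$).

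For the two base cases the body is a single guarded command whose image $g$ satisfies $\mkt{\spc n}\kdot g=\mkt{\spc n}\kdot(\text{body step})\kdot\mkt{\spc f}$ by (setTest) and, in the assignment case, (testCommuteAsgn); since $g$ begins with $\mkt{\tpc n}$ and ends with $\mkt{\spc f}$ with $f\neq n$, (diffTest) gives $g\kdot g=0$, so the $*$-unfold axioms yield $\mkt{\spc n}\kdot g^{*}=\mkt{\spc n}+\mkt{\spc n}\kdot g$. Appending $\kneg\mkt{\enab(gcs)}=\kneg\mkt{\tpc n}$ annihilates the $\mkt{\spc n}$ summand (by (setTest)) and is absorbed by the trailing $\mkt{\spc f}$ of the other, leaving $\mkt{\addPC(c)}\kdot\mkt{\spc f}$.

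The $\keyw{if}$ and sequence cases follow a uniform pattern: split $\mkt{gcs}$ into a ``dispatch'' part testing $\tpc n$ and the sub-body images, note that the leading $\mkt{\spc n}$ annihilates every sub-body at the start (by (nf-enab-labs)/(diffTest)), and use (nf-enab-disj)/(nf-enab-corr) to get disjoint activation of the sub-bodies and control of their exit. For $\keyw{if}$ this forces the run to begin with the matching dispatch step, after which only the corresponding sub-body can fire, and the result follows from the two induction hypotheses. For sequence $c;d$ the same facts give $\mkt{gcs_1}\kdot\mkt{gcs_0}=0$, so denesting collapses $(\mkt{gcs_0}+\mkt{gcs_1})^{*}$ to $\mkt{gcs_0}^{*}\kdot\mkt{gcs_1}^{*}$; one then commutes the $\kneg\mkt{\enab}$ factors through (disjointness plus De Morgan) and composes the two induction hypotheses via $\mkt{\addPC(c);\addPC(d)}$.

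The $\keyw{do}$ case is the crux, because the sub-body's exit label is the loop label $n$ itself, so $\mkt{gcs}^{*}$ genuinely iterates and the right-hand side $\mkt{\addPC(\ldogc{n}{e\gcto c})}\kdot\mkt{\spc f}=\mkt{\spc n}\kdot(\mkt{e}\kdot\mkt{\addPC(c)}\kdot\mkt{\spc n})^{*}\kdot\kneg\mkt{e}\kdot\mkt{\spc f}$ also has a star; the two loops must be shown to bisimulate. Writing $a=\mkt{gcs'}$ for the sub-body (with $\normSmall{c}{n}{gcs'}$), I would split $\mkt{gcs}=G_n+a$ into the $\tpc n$-dispatch and the sub-body, use $\mkt{\spc n}\kdot a=0$ and $\mkt{\spc\ell}\kdot a=0$ for $\ell\in\{n,f\}$ together with denesting to rewrite $\mkt{\spc n}\kdot(G_n+a)^{*}$ first as $\mkt{\spc n}\kdot(G_n\kdot a^{*})^{*}$, then as $\mkt{\spc n}\kdot u^{*}\kdot v$, where $u=\mkt{\tpc n}\kdot\mkt{e}\kdot\mkt{\spc{\lab(c)}}\kdot a^{*}$ is one guarded iteration and $v=\mkt{\tpc n}\kdot\kneg\mkt{e}\kdot\mkt{\spc f}$ the exit; the trailing $\kneg\mkt{\enab(gcs)}=\kneg\mkt{\tpc n}\kdot\kneg\mkt{\enab(gcs')}$ then annihilates every summand ending in $a^{*}$ (by (nf-enab-corr), which pins the output of $a^{*}$ started at $\lab(c)$ to $\labs(c)\cup\{n\}$), leaving $\mkt{\spc n}\kdot u^{*}\kdot v=\mkt{\spc n}\kdot u^{*}\kdot\mkt{\tpc n}\kdot\kneg\mkt{e}\kdot\mkt{\spc f}$. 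Finally I would prove the pivotal identity $\mkt{\spc{\lab(c)}}\kdot a^{*}\kdot\mkt{\tpc n}=\mkt{\spc{\lab(c)}}\kdot a^{*}\kdot\kneg\mkt{\enab(gcs')}$ (both sides, prefixed by $\mkt{\spc{\lab(c)}}$, equal $\mkt{\spc{\lab(c)}}\kdot a\kdot a^{*}\kdot\mkt{\tpc n}$, using (nf-enab-corr) and (diffTest)), whence by the induction hypothesis $u\kdot\mkt{\tpc n}=\mkt{\tpc n}\kdot(\mkt{e}\kdot\mkt{\addPC(c)}\kdot\mkt{\spc n})$; the bisimulation rule then gives $\mkt{\spc n}\kdot u^{*}\kdot\mkt{\tpc n}=\mkt{\spc n}\kdot(\mkt{e}\kdot\mkt{\addPC(c)}\kdot\mkt{\spc n})^{*}$, and composing with $\kneg\mkt{e}\kdot\mkt{\spc f}$ yields exactly $\mkt{\addPC(\ldogc{n}{e\gcto c})}\kdot\mkt{\spc f}$. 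The two-guard $\keyw{do}$ rule and the general (appendix) if/do rules go through identically. I expect the bisimulation between the two stars in the $\keyw{do}$ case to be the main obstacle: aligning them requires the star-induction axioms in both directions and strict care that every claim about which control points are live after a sub-body run is actually delivered by (nf-enab-corr), (nf-enab-labs) and (diffTest), not merely argued on the semantic side.
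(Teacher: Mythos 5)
Your proposal is correct and follows essentially the same route as the paper: rule induction on the normal-form derivation, with the KAT calculation driven by (setTest), (diffTest), Lemma~\ref{lem:enabmkt} and the consequences (nf-enab-labs), (nf-enab-disj), (nf-enab-corr) of Lemma~\ref{lem:nf-conseq}, and with the sequence and do cases carrying the real work. The only difference is presentational: where you invoke the denesting identity $(x+y)^*=x^*(yx^*)^*$ and a bisimulation rule, the paper proves the corresponding identities (its steps $(\dagger)$, $(+)$ and the expansion lemma $(\star)$) directly by mutual inequality and $*$-induction; both are derivable in KAT, so the arguments coincide in substance.
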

\begin{proof}
By definition of $\kateq$ we must prove
$\Hyp\proves \mkt{\spc n; \ldogc{}{gcs}} = \mkt{\addPC(c); \spc f}$
using laws of KAT.
In fact we do not need all of $\Hyp$; our proof will show that
$\nfax(pc,c,f)\proves \mkt{\spc n; \ldogc{}{gcs}} = \mkt{\addPC(c); \spc f}$.
(Aside:  By Remark~\ref{rem:pspace}, such a judgment is decidable, so
any instance of Theorem~\ref{thm:normEquiv}, like (\ref{eq:c0d0}), can be decided.
But this does not help prove the theorem,
where we have infinitely many instances to prove
and must use the induction hypotheses.)

The proof goes by rule induction on $\normSmall{c}{f}{gcs}$.
Once definitions are unfolded, the core of the argument in each case 
is a calculation in KAT, using the normal form  axioms.
There is one normal form rule per command form, so we proceed by cases on those forms.
In each case, aside from unfolding definitions of $\addPC$, $\mkt{-}$, etc.,
we use only KAT reasoning and the $\nfax$ hypotheses (Def.~\ref{def:nfax}) 
and their consequences including those in Lemma~\ref{lem:nf-conseq}.

\medskip
\textbf{Case $c$ is $\lskipc{n}$}.
We have $\normSmall{\lskipc{n}}{f}{(\tpc n\gcto \spc f)}$.
Operationally, the loop $\ldogc{}{\tpc n\gcto \spc f}$
iterates exactly once.  This is reflected in the 
following proof of (\ref{eq:normEquiv}) for this case, in which we unroll the loop once.
(Note: in hints we do not mention associativity, unit law for 1, etc.)
\[\begin{array}{lll}
  & \mkt{\spc n ; \ldogc{}{ \tpc n \gcto \spc f} } \\
= & \hint{def $\mkt{-}$, see (\ref{eq:def:mkt}), and def $\enab$ } \\ 
  & \mkt{\spc n}; (\mkt{\tpc n} ; \mkt{\spc f })^*;  \neg\mkt{\tpc n } \\
= & \hint{star unfold, distrib} \\
  & \mkt{\spc n}; \neg\mkt{\tpc n }  + \mkt{\spc n };  \mkt{\tpc n};  \mkt{\spc f };  (\mkt{\tpc n};  \mkt{\spc f })^*;  \neg\mkt{\tpc n } \\
= &\hint{left term is 0, using axiom (setTest) and lemma (diffTestNeg)} \\
  & \mkt{\spc n}; \mkt{\tpc n}; \mkt{\spc f}; (\mkt{\tpc n};\mkt{\spc f})^* ;\neg\mkt{\tpc n} \\
= &\hint{(setTest) for $n$}\\
  & \mkt{\spc n};  \mkt{\spc f }; (\mkt{\tpc n}; \mkt{\spc f })^*; \neg\mkt{\tpc n} \\
= &\hint{(setTest) for $f$} \\
  & \mkt{\spc n};  \mkt{\spc f };  \mkt{\tpc f };  (\mkt{\tpc n};  \mkt{\spc f })^*;  \neg\mkt{\tpc n } \\
= &\hint{axiom (diffTest) with $f\neq n$ from $\okf(c,f)$; KAT fact $p;(q;a)^* = p$ if $p;q=0$, for any $p,q,a$ } \\ 
  & \mkt{\spc n};  \mkt{\spc f };  \mkt{\tpc f }; \neg\mkt{\tpc n } \\
= &\hint{lemma (diffTestNeg), $f\neq n$ by $\okf(c,f)$} \\
  & \mkt{\spc n};  \mkt{\spc f };  \mkt{\tpc f } \\
= &\hint{axiom (setTest)} \\
  & \mkt{\spc n};  \mkt{\spc f } \\
= & \hint{def $\mkt{-}$, 1 is unit of; } \\
  & \mkt{\spc n};  \mkt{\lskipc{n}};  \mkt{\spc f } \\
= & \hint{def $\mkt{-}$} \\
  & \mkt{\spc n ; \lskipc{n}; \spc f }  \\
= & \hint{def $\addPC$} \\
  & \mkt{\addPC(\lskipc{n}); \spc f } 
  \end{array}
\]
\medskip
\textbf{Case $c$ is $x:=e$}.  Similar to skip case; omitted.

\medskip
\textbf{Case $c$ is $c_0;c_1$}. 
So $n=\lab(c_0)$. Let $n_1\eqdef \lab(c_1)$.
Let $p \eqdef \enab(gcs_0)$ and  $q \eqdef \enab(gcs_1)$.
Suppose $\normSmall{c_0;c_1}{f}{gcs_0\gcsep gcs_1}$ where
$\normSmall{c_0}{n_1}{gcs_0}$ and $\normSmall{c_1}{f}{gcs_1}$.
Observe
\[\begin{array}{lll}
  & \mkt{\addPC(c_0;c_1); \spc f }  \\
= & \hint{def $\addPC$, def $\mkt{-}$} \\
  & \mkt{\addPC(c_0)} ; \mkt{\addPC(c_1); \spc f}  \\
= & \hint{ induction hypothesis for $c_1$ } \\
  & \mkt{ \addPC(c_0) } ; \mkt{\spc n_1; \ldogc{}{gcs_1} } \\
= & \hint{ def $\mkt{-}$ } \\
  & \mkt{ \addPC(c_0); \spc n_1}; \mkt{\ldogc{}{gcs_1} } \\
= & \hint{induction hypothesis for $c_0$} \\
  & \mkt{\spc n;\ldogc{}{gcs_0 } } ; \mkt{\ldogc{}{gcs_1} } \\
= & \hint{ def $\mkt{-}$, abbreviations $p=\enab(gcs_0)$, $q = \enab(gcs_1)$} \\
  & \mkt{ \spc n } ; \mkt{ gcs_0 }^* ; \mkt{\neg p} ; \mkt{ gcs_1 }^* ; \mkt{\neg q} \\
= & \hint{step $(\dagger)$, see below} \\
  & \mkt{ \spc n } ; (\mkt{gcs_0} + \mkt{gcs_1})^* ; \mkt{ \neg(p+q) } \\
= & \hint{def $\mkt{-}$, def $\enab$, using $p= \enab(gcs_0)$ and $q = \enab(gcs_1)$} \\
  & \mkt{ \spc n; \ldogc{}{gcs_0 \gcsep gcs_1} } 
  \end{array}
\]
The step $(\dagger)$ relates a sequence of normal forms with the single main normal form.
It is proved using only KAT reasoning from the normal form axioms 
(including Lemma~\ref{lem:nf-conseq}).
To prove step $(\dagger)$ we first introduce more abbreviations besides $p,q$ above.
\[
c  := \mkt{gcs_0}\qquad
d  := \mkt{gcs_1}\qquad
\spc n := \mkt{\spc n} \mbox{ etc.}
\]
This shadowing of $c,\tpc n$ etc.\ is harmless because we are done referring to the originals.
In the following we use juxtaposition instead of semicolon.
Now step $(\dagger)$ can be written
\[ (\dagger)\qquad
 \spc n c^* \neg p d^* \neg q = \spc n (c + d)^* \neg (p+q) 
\] 
To prove $(\dagger)$ we use the following instances and consequences of $\nfax$.
\[\begin{array}{ll}
c = pc & \hint{Lemma~\ref{lem:enabmkt} (recall that $p$ is $\enab(c)$)} \\ 
d = qd & \hint{Lemma~\ref{lem:enabmkt}} \\
\tpc n \leq  p & \hint{a command's label implies its enab, (nf-lab-enab) in Lemma~\ref{lem:nf-conseq}} \\ 
\tpc n_1 \leq  q & \hint{(nf-lab-enab)}\\
pq = 0  & \hint{(nf-enab-disj) }\\
p\ \tpc f = 0 & \hint{from (nf-enab-labs)} \\
q\ \tpc f = 0 & \hint{from (nf-enab-labs)} \\
c : \spec{p}{(p + \tpc n_1)} &  \hint{(nf-enab-corr)}\\
d : \spec{q}{(q + \tpc f)} & \hint{(nf-enab-corr)} 
\\
pd = 0 &\hint{by $d=qd$ and $pq=0$} \\
qc = 0 &\hint{by $c=pc$ and $pq=0$} \\
q = q \neg p &\hint{by bool alg and $pq=0$: $q = q(p+\neg p) = 0+q\neg p$} \\
p\  \tpc n_1 = 0 &\hint{using $pq=0$ and $\tpc n_1\leq q$} \\
\tpc n\  \tpc n_1 = 0 &\hint{by $\tpc n\leq p$, $\tpc n_1\leq q$, $pq=0$} \\
\tpc n d = 0 &\hint{by $\tpc n d = \tpc n q d \leq p q d = 0$} \\
d : \spec{\neg p}{\neg p} & \hint{proved using $d : \spec{q}{(q + \tpc f)}$ etc} \\
c : \spec{p + \tpc n_1}{p + \tpc n_1} &\hint{proved using $c : \spec{p}{(p + \tpc n_1)}$ etc} \\
\spc n c^* \neg p = \spc n c^* \tpc n_1 \neg p &\hint{proved using invariance and preceding lemma}    \\
\spc n c^* = \spc n c^* (p + \neg p \tpc n_1)  
  \end{array}
\]
For the last equation, we have 
$c^* : \spec{p+\tpc n_1}{p+\tpc n_1}$ by invariance and a lemma above.
Using $\tpc n \leq p$ and $\tpc n \tpc n_1 = 0$ 
$c^* : \spec{\tpc n}{p+\tpc n_1}$.
Now use (setTest) to get $\spc n c^* = \spc n c^* (p + \neg p \tpc n_1)$.

We prove $(\dagger)$ by mutual inequality.
To prove $LHS(\dagger) \leq  RHS(\dagger)$, observe that 
\[\begin{array}{lll}
      & \spc n c^* \neg p d^* \neg q \leq  \spc n (c + d)^* \neg (p+q) \\
\impby& \hint{monotonicity of seq} \\
      & c^* \neg p d^* \neg q \leq  (c + d)^* \neg (p+q) \\
\impby& \hint{induction, join property of +} \\ 
      & \neg p d^* \neg q \leq  (c + d)^* \neg (p+q) 
  \;\land\; c (c + d)^* \neg (p+q) \leq  (c + d)^* \neg (p+q) 
  \end{array}\]
We have the first conjunct because
\[\begin{array}{lll}
  & \neg p d^* \neg q  \\
= 
  & \neg p d^* \neg p\neg q  & \hint{lemma $d : \spec{\neg p}{\neg p}$, invariance}\\
= 
  & \neg p d^* \neg (p+q) & \hint{bool alg} \\
\leq
  & d^* \neg (p+q) &\hint{test below 1} \\
\leq 
  & (c+d)^* \neg (p+q) &\hint{* mono, $c\leq c+d$} 
  \end{array}\]
We have the second conjunct because
\[\begin{array}{lll}
     & c (c + d)^* \neg (p+q) \\
\leq 
     & (c + d) (c + d)^* \neg (p+q) &\hint{seq mono} \\
\leq 
     &  (c + d)^* \neg (p+q) & \hint{star fold} 
     \end{array}\]
To prove $RHS(\dagger) \leq LHS(\dagger)$ the key property is that $d$ is not enabled until $c$ disabled:
\[ \mbox{(i)}\qquad \spc n (c + d)^* \leq  \spc n ( c^* + c^* \neg p d^* ) \]
By *-induction, (i) follows from (ii) and (iii).
\[\begin{array}{ll}
\mbox{(ii)}& \quad \spc n \leq  \spc n ( c^* + c^* \neg p d^* )  \\
\mbox{(iii)}& \quad \spc n ( c^* + c^* \neg p d^* ) (c + d) \leq  \spc n ( c^* + c^* \neg p d^* ) 
\end{array}\]
For (ii) we have
$\spc n \leq  \spc n c^* \leq  \spc n ( c^* + c^* \neg p d^* )$ by KAT laws.
\\
For (iii), it is equivalent by distributivity to 
\[
 \spc n c^* c + \spc n c^* d + \spc n c^* \neg p d^* c + \spc n c^* \neg p d^* d \leq  \spc n c^* + \spc n c^* \neg p d^*  
\]
We show each term on the left is below one of the terms of the right.
\begin{itemize}
\item $\spc n c^* c \leq  \spc n c^*$ by star law
\item 
\(\begin{array}[t]{lll}   
    &  \spc n c^* d \\
  = & \spc n c^* (p + \neg p \ \tpc n_1) d  &\hint{by lemma $\spc n c^* = \spc n c^* (p + \neg p \tpc n_1)$ above} \\
  = & \spc n c^* \neg p \ \tpc n_1 d        &\hint{distrib, lemma $pd=0$} \\
\leq& \spc n c^* \neg p d^*           &\hint{by $\tpc n_1 \leq  1$ and $d \leq  d^*$}
\end{array}\)
\item 
\(\begin{array}[t]{lll}   
   & \spc n c^* \neg p d^* c \\
 = & \spc n c^* \neg p d^* \neg p c &\hint{by lemma $d: \spec{\neg p}{\neg p}$  and invariance } \\
 = & 0                         &\hint{by $\neg p c = 0$, $c = p c$ (Lemma~\ref{lem:enabmkt})}\\
\leq& \spc n c^* 
\end{array}\)
\item
$\spc n c^* \neg p d^* d \leq  \spc n c^* \neg p d^*$ by $d^* d \leq  d^*$
\end{itemize}
Finally we have $RHS(\dagger) \leq  LHS(\dagger)$ because
\[\begin{array}{lll}   
  & \spc n (c + d)^* \neg (p+q) \\
= 
  & \spc n (c + d)^* \neg p \neg q  & \hint{bool alg} \\
\leq
  & \spc n ( c^* + c^* \neg p d^* ) \neg p \neg q  & \hint{by (i) above} \\
= 
  & \spc n (c^* \neg p + c^* \neg p d^* \neg p) \neg q &\hint{distrib} \\
=  
  & \spc n c^* \neg p d^* \neg p \neg q  
  &\hint{by $c^* \neg p \leq  c^* \neg p d^* \neg p$ (from  $1\leq d^*$ and test idem)} \\
\leq 
  & \spc n c^* \neg p d^* \neg q  & \hint{tests below 1}
  \end{array}\]
This concludes the proof of the sequence case.

\medskip
\textbf{Case $c$ is $\lifgc{n}{e_0\gcto d_0 \gcsep e_1\gcto d_1}$}. 
This special case, with two guarded commands, is more readable than the general case.
So we are given
\[ 
\norm{\lifgc{n}{e_0\gcto c_0 \gcsep e_1\gcto c_1}}{f}{
  \tpc  n \land e_0 \gcto \spc n_0  \gcsep
  \tpc  n \land e_1 \gcto \spc n_1  \gcsep
  gcs_0 \gcsep gcs_1}
\]
where $n_0 = \lab(c_0)$, $n_1=\lab(c_1)$,
$\norm{c_0}{f}{gcs_0}$, and $\norm{c_1}{f}{gcs_1}$.
Let $p_0 \eqdef \enab(gcs_0)$ and $p_1 \eqdef \enab(gcs_1)$.
The proof is:
\[\begin{array}{lll}
  & 
    \mkt{ \addPC (\lifgc{n}{e_0\gcto c_0 \gcsep e_1\gcto c_1}) ; \spc f } \\
 = & \hint{def $\addPC$} \\
   & \mkt{ \spc n; \lifgc{n}{e_0\gcto \addPC(c_0) \gcsep e_1\gcto\addPC(c_1)}; \spc f }  \\
 = & \hint{def $\mkt{-}$} \\
   & \mkt{ \spc n}; ( \mkt{e_0}; \mkt{\addPC(c_0)} + \mkt{e_1};\mkt{\addPC(c_1)} ); \mkt{\spc f} \\
 = & \hint{distribute} \\
   & \mkt{ \spc n}; ( \mkt{e_0}; \mkt{\addPC(c_0)};\mkt{\spc f} + \mkt{e_1};\mkt{\addPC(c_1)};\mkt{\spc f} ) \\ 
 = & \hint{def $\mkt{-}$} \\
   & \mkt{ \spc n}; ( \mkt{e_0}; \mkt{\addPC(c_0);\spc f} + \mkt{e_1};\mkt{\addPC(c_1);\spc f} ) \\ 
 = & \hint{induction hypotheses for $c_0$ and $c_1$} \\
   & \mkt{ \spc n}; ( \mkt{e_0} ; \mkt{\spc n_0; \ldogc{}{gcs_0}} + \mkt{e_1}; \mkt{\spc n_1; \ldogc{}{gcs_1}  } ) \\
 = & \hint{def $\mkt{-}$, defs $p_0$ and $p_1$ } \\
   & \mkt{\spc n}; (\mkt{e_0}; \mkt{\spc n_0}; \mkt{gcs_0}^*; \neg\mkt{p_0} 
                  + \mkt{e_1}; \mkt{\spc n_1}; \mkt{gcs_1}^*; \neg\mkt{p_1} ) \\
 = & \hint{step $(+)$, see below} \\
   & \mkt{ \spc n}; (\mkt{\tpc n};\mkt{e_0};\mkt{\spc n_0 } 
                     + \mkt{\tpc n};\mkt{e_1};\mkt{\spc n_1} 
                     + \mkt{gcs_0} + \mkt{gcs_1})^* ;  \\
   & \qquad \neg (\mkt{\tpc n};\mkt{e_0} + \mkt{\tpc n};\mkt{e_1} + \mkt{p_0} +\mkt{p_1}) \\
 = &\hint{def $\mkt{-}$} \\
   & \mkt{ \spc n}; (\mkt{\tpc n\land e_0}; \mkt{\spc n_0 } + \mkt{\tpc n\land e_1}; \mkt{\spc n_1} 
                    + \mkt{gcs_0} + \mkt{gcs_1})^* ; 
                       \neg \mkt{(\tpc n\land e_0)\lor(\tpc n\land e_1)\lor p_0 \lor p_1} \\
 = &\hint{def $\mkt{-}$, defs $p_0$ and $p_1$} \\
& \mkt{ \spc n; \ldogc{}{ \tpc n\land e_0 \gcto \spc n_0 \gcsep
                         \tpc n\land e_1 \gcto \spc n_1 \gcsep gcs_0 \gcsep gcs_1 } } 
\end{array}\]
The step marked $(+)$  uses KAT laws and the normal form axioms
(including consequences Lemma~\ref{lem:nf-conseq}).  We omit it,
because it is similar to, but simpler than, the proof for the sequence case.

For the general case, i.e., if-command with more than two guarded commands, or just one, is
essentially the same, and omitted. 
(An if with one guarded command is allowed but pointless, since by $\totalIf$ the guard must be true.)

\medskip
\textbf{Case $c$ is $\ldogc{n}{e\gcto d}$}.  We consider the special case
of a single guarded command because it is more readable than the general case, and all the interesting ingredients are the same as in the general case.  
So we have 
\[ \norm{\ldogc{n}{e\gcto d}}{f}{
  \tpc  n\land e \gcto \spc n_0  \gcsep
  \tpc  n\land\neg e \gcto \spc f \gcsep gcs } \]
where $\normSmall{d}{n}{gcs}$ and $m := \lab(d)$.
After unfolding definitions and applying the induction hypothesis, as in the sequence and if cases, it comes down to this proof goal:
\[ 
\begin{array}{ll} 
%
& \mkt{\spc n};(\mkt{\tpc n};\mkt{e};\mkt{\spc m}
             + \mkt{\tpc n};\neg\mkt{e};\mkt{\spc f}
             + \mkt{gcs})^* ; \neg(\mkt{\tpc n}+\mkt{\enab(gcs)}) \\
= &
 \mkt{\spc n}; (\mkt{e};\mkt{\spc
  m};\mkt{gcs}^*; \neg\mkt{\enab(gcs)})^*; \neg\mkt{e};\mkt{\spc f}
\end{array}
\]
It relates the main normal form (LHS) to a loop containing the body's normal form (RHS, note the nested star).
For brevity we again use juxtaposition for sequence, and set some abbreviations, re-using identifiers since their original bindings are no longer relevant.
\[\begin{array}{l}
e := \mkt{e} \\
c := \mkt{gcs} \\
p := \mkt{\enab(gcs)} \\
b := \mkt{\tpc n};\mkt{e};\mkt{\spc m} + \mkt{\tpc n};\neg\mkt{e};\mkt{\spc f} + \mkt{gcs}   
\end{array}\]
Note that $b$ is the body of the main normal form.
Now the above proof goal can be written
\[ (\ddagger) \qquad
\spc n (e \,\spc m c^* \neg p)^* \neg e \,\spc f = \spc n b^* \neg(\tpc n + p)
\]
We have these lemmas as consequence of the axioms (see similar lemmas and justifications in the preceding case for sequence).
\[
c = p c \qquad
c :\spec{p}{(p + \tpc n)} \qquad
p \, \tpc f = 0 \qquad
\tpc n\, \tpc m = 0 \qquad
\tpc m \leq  p
\]
The crux of the proof is the following expansion lemma (proved later):
\[ (\star) \qquad 
\tpc n b^* =   \tpc n (\tpc n e \, \spc m c^* \neg p)^* \tpc n 
         + \tpc n (\tpc n e \,\spc m c^* \neg p)^* \tpc n e \,\spc m c^* \neg \tpc n p
         + \tpc n (\tpc n e \,\spc m c^* \neg p)^* \tpc n \neg e \,\spc f
\]
The summands represent completed iterations of the original loop, 
completed iterations followed by an incomplete one, and completed
iterations followed by termination.  (These are not mutually exclusive 
and we do not need that.)

Using $(\star)$ we prove $(\ddagger)$ as follows:
\[\begin{array}{lll}
  & RHS(\ddagger) \\ 
= & \hint{(setTest) and de Morgan} \\
  & \spc n \tpc n b^* \neg \tpc n \neg p \\            
= & \hint{expansion lemma $(\star)$, distrib} \\
  & \spc n \tpc n (\tpc n e \spc m c^* \neg p)^* \tpc n  \neg \tpc n \neg p            
 + \spc n \tpc n (\tpc n e \spc m c^* \neg p)^* \tpc n e \spc m c^* \neg \tpc n p \neg \tpc n \neg p            
 + \spc n \tpc n (\tpc n e \spc m c^* \neg p)^* \tpc n \neg e \spc f \neg \tpc n \neg p            
\\ 
= & \hint{cancel first two terms using $\tpc n \neg \tpc n = 0$ 
          and $p \neg p = 0$}\\
  & \spc n \tpc n (\tpc n e \spc m c^* \neg p)^* \tpc n \neg e \spc f \neg \tpc n \neg p \\           
= &\hint{using $\spc f = \spc f\tpc f$, $\tpc f \leq  \neg \tpc n$, $\tpc f \leq  \neg p$, (setTest),
         $c :\spec{p}{(p + \tpc n)}$ and invariance law }\\
  & \spc n (e \spc m c^* \neg p)^* \neg e \spc f \\
= & \\
  & LHS(\ddagger)
\end{array}\]
It remains to prove Lemma $(\star)$, by mutual inequality.

We have $RHS(\star) \leq  LHS(\star)$ because each term of RHS is 
$\leq  LHS(\star)$, as follows.
\begin{itemize}
\item
\(\begin{array}[t]{lll}
     &  \tpc n (\tpc n e \spc m c^* \neg p)^* \tpc n  \\
\leq 
     &  \tpc n ( \tpc n e \spc m c^* )^* 
     &\hint{using tests below 1 twice} \\
\leq 
     & \tpc n ( b b^* )^* 
     &\hint{using $\tpc n e \spc m \leq  b$ and $c^* \leq  b^*$ } \\
\leq &  \tpc n b^* &\hint{star laws}
     \end{array}\)
\item
\(\begin{array}[t]{lll}
    & \tpc n (\tpc n e \spc m c^* \neg p)^* \tpc n e \spc m c^* \neg \tpc n p \\
\leq
    &\tpc n b^* \tpc n e \spc m c^* \neg \tpc n p 
    &\hint{using first bullet}\\
\leq
    &\tpc n b^* \tpc n e \spc m c^* 
    &\hint{tests below 1}\\
\leq
    &\tpc n b^* b b^* 
    &\hint{using $\tpc n e \spc m \leq  b$ and $c^* \leq  b^*$ } \\
\leq
    &\tpc n b^* &\hint{star laws}
    \end{array}\)
\item
\(\begin{array}[t]{lll}
    & \tpc n (\tpc n e \spc m c^* \neg p)^* \tpc n \neg e \spc f \\
\leq
    &\tpc n b^* \tpc n \neg e \spc f 
    &\hint{using first bullet}\\
\leq
    &\tpc n b^* b &\hint{using $\tpc n \neg e \spc f \leq  b$} \\
\leq
    &\tpc n b^* &\hint{star law}\\
    \end{array}\)
\end{itemize}
For the reverse, $LHS(\star) \leq  RHS(\star)$ follows by *-induction and 
join property from the following.

The base: 
\[ \tpc n \leq  \tpc n 1 \tpc n \leq  \tpc n (\tpc n e \spc m c^* \neg p)^* \tpc n \leq  RHS(\star) \]

The induction step is $RHS(\star); b \leq  RHS(\star)$. 
Observe that, by def $(\star)$ and distribution,
\[\begin{array}{lll}
RHS(\star); b & = & 
   \begin{array}[t]{ll}
    \tpc n (\tpc n e \spc m c^* \neg p)^* \tpc n b             
    &  (i)\\
   + \tpc n (\tpc n e \spc m c^* \neg p)^* \tpc n e \spc m c^* \neg \tpc n p b    
   & (ii) \\
   + \tpc n (\tpc n e \spc m c^* \neg p)^* \tpc n \neg e \spc f  b  
   & (iii)
   \end{array}
   \end{array}\]
So we show each term is below $RHS(\star)$.
\begin{list}{}{}
\item[(i)]
\(\begin{array}[t]{lll}
   &  \tpc n (\tpc n e \spc m c^* \neg p)^* \tpc n b \\
=  &\hint{def $b$, and $c = p c$}\\
   & \tpc n (\tpc n e \spc m c^* \neg p)^* \tpc n (\tpc n e \spc m + \tpc n \neg e \spc f + p c)   \\
=  &\hint{distrib, cancel by $\tpc n p = 0$, test idem} \\
   & \tpc n (\tpc n e \spc m c^* \neg p)^* \tpc n e \spc m  
  + \tpc n (\tpc n e \spc m c^* \neg p)^* \tpc n \neg e \spc f 
  \end{array}\)
\\
The second summand is a term of $RHS(\star)$. For the first summand continue
\\
\(\begin{array}{lll}
  & \tpc n (\tpc n e \spc m c^* \neg p)^* \tpc n e \spc m \\
= & \hint{(setTest)} \\
  &\tpc n (\tpc n e \spc m c^* \neg p)^* \tpc n e \spc m \tpc m \\
\leq&\hint{using $1 \leq  c^*$ and $\tpc m \leq  \neg \tpc n$ and $\tpc m \leq  p$} \\
  & \tpc n (\tpc n e \spc m c^* \neg p)^* \tpc n e \spc m c^* \neg \tpc n p
  \end{array}\)\\
and this is in $RHS(\star)$.
\item[(ii)]
\(\begin{array}[t]{lll}
  & \tpc n (\tpc n e \spc m c^* \neg p)^* \tpc n e \spc m c^* \neg \tpc n p b \\
= &\hint{def b} \\
  &\tpc n (\tpc n e \spc m c^* \neg p)^* \tpc n e \spc m c^* \neg \tpc n p (\tpc n e \spc m + \tpc n \neg e \spc f + c)   \\
= &\hint{distrib, cancel by $\neg \tpc n \tpc n = 0$} \\
  &\tpc n (\tpc n e \spc m c^* \neg p)^* \tpc n e \spc m c^* \neg \tpc n p c   \\
= &\hint{by  $c :\spec{p}{p+\tpc n}$} \\
  &\tpc n (\tpc n e \spc m c^* \neg p)^* \tpc n e \spc m c^* \neg \tpc n p c (p + \tpc n)\\
= &\hint{distrib} \\
  &  \tpc n (\tpc n e \spc m c^* \neg p)^* \tpc n e \spc m c^* \neg \tpc n p c p 
  + \tpc n (\tpc n e \spc m c^* \neg p)^* \tpc n e \spc m c^* \neg \tpc n p c \tpc n
  \end{array}\)\\
For the first summand,\\
\(\begin{array}[t]{lll}
    &  \tpc n (\tpc n e \spc m c^* \neg p)^* \tpc n e \spc m c^* \neg \tpc n p c p \\
\leq&\hint{test below 1} \\
    &  \tpc n (\tpc n e \spc m c^* \neg p)^* \tpc n e \spc m c^* c p \\
\leq&\hint{star fold, $p \leq  \neg \tpc n p$  (from $\tpc n p = 0$)} \\
    & \tpc n (\tpc n e \spc m c^* \neg p)^* \tpc n e \spc m c^* \neg \tpc n p 
    \end{array}\)\\
and this is in $RHS(\star)$.
For the second summand,\\
\(\begin{array}[t]{lll}
    & \tpc n (\tpc n e \spc m c^* \neg p)^* \tpc n e \spc m c^* \neg \tpc n p c \tpc n\\
\leq&\hint{test below 1} \\
    &\tpc n (\tpc n e \spc m c^* \neg p)^* \tpc n e \spc m c^* c \tpc n \\
\leq&\hint{using $\tpc n \leq  \neg p$ and star fold}\\
    &\tpc n (\tpc n e \spc m c^* \neg p)^* (\tpc n e \spc m c^* \neg p) \\
\leq&\hint{star fold} \\
    & \tpc n (\tpc n e \spc m c^* \neg p)^* 
    \end{array}\)\\
and this is in $RHS(\star)$.
\item[(iii)]
\(\begin{array}[t]{lll}
  &  \tpc n (\tpc n e \spc m c^* \neg p)^* \tpc n \neg e \spc f  b  \\
= &\hint{def b}\\
  & \tpc n (\tpc n e \spc m c^* \neg p)^* \tpc n \neg e \spc f (\tpc n e \spc m + \tpc n \neg e \spc f + c)   \\
= &\hint{distrib, cancel by $\tpc f \tpc n =0$ and $c = p c$ and $\tpc f p = 0$}\\
  & 0 \\
\leq& \\
    &  RHS(\star)
    \end{array}\)
\end{list}
\end{proof}

\subsection{On alternative presentations of command equivalence}\label{sec:alt}

HL+ is a logic of correctness judgments $c:\spec{P}{Q}$ and RHL+ is a
logic of relational correctness judgments $c\sep d:\rspec{\P}{\Q}$.
As usual with Hoare logic, both logics rely on entailment of unary and relational assertions.
We write the entailments as $P\imp Q$ in \rn{Conseq} and $\P\imp\Q$ in \rn{rConseq} but interpret them to mean set containment in the metalanguage,
a common technique to dodge the well understood issue of a proof system for entailment~\cite{Cook78}.
What is unusual is that the logics also rely on command equivalence ($\kateq$).
Our formalization of command equivalence (Def.~\ref{def:kateq})
is meant to be conceptually clear and in particular to highlight
what is needed to achieve alignment completeness, while abstracting from some details
that can be handled in more than one way.  
This section addresses two issues about the equivalence judgment for which some readers
may have concerns.  Other readers should skip to Sect.~\ref{sec:autUnary}.

The first concern is that one might like a deductive system for proving equivalences.
We sketch one way such a system could be formulated.  
First, include equality of KAT expressions as another judgment of the system
(and treat $KE_0\leq KE_1$ as sugar for $KE_0+KE_1 = KE_1$).
Second, include a single rule for introduction of command equivalence:
\[ \inferrule*[left=KATeq]{ \mkt{c} = \mkt{d} }{ c \kateq d } \]
Third, include schematic axioms and rules for KAT expressions.
For example: 
\begin{mathpar}
\inferrule{ }{ KE_0 \kplus KE_1 = KE_1\kplus KE_0 }

   \inferrule{ KE_1 \kplus KE_0 \kdot KE_2 \leq KE_2 }
             { KE_0^\kstar\kdot KE_1 \leq KE_2 }
\end{mathpar}
which correspond to $x+y=y+x$ and $y \kplus x\kdot z \leq z \imp x^\kstar\kdot y \leq z$
in Def.~\ref{def:KAT}.
Finally, include these two axiom schema (with side conditions) for the
equations of $\Hyp$ (Def.~\ref{def:Hyp}):
\begin{mathpar}
\inferrule{e\imp\mathit{false}}{ \mkt{e}=0 }

\inferrule{e_0\imp \subst{e_1}{x}{e}}{ \mkt{e_0};\mkt{x:=e} = \mkt{e_0};\mkt{x:=e};\mkt{e_1} }
\end{mathpar}
The side conditions use validity of assertions, which are already needed 
for the \rn{Conseq} and \rn{rConseq} rules.
With this approach, any command equivalence needed for an instance of 
\rn{Rewrite} or \rn{rRewrite}, can be established using rule \rn{KATeq}
together with the preceding rules.

A second concern is that the proof systems HL+ and RHL+ are about commands and specs,
yet they include KAT expressions in a subsidiary role.  
It would seem more parsimonious to formulate command laws directly.
In the current setup, a simple general law like $c;\skipc \kateq c$ 
is obtained from $\mkt{c;\skipc} = \mkt{c}$
which in turn holds because $\mkt{c;\skipc}$ is $\mkt{c}\kdot\kone$
and $\mkt{c}\kdot\kone = \mkt{c}$ is an instance of a KAT axiom.
Several of the KAT axioms can be formulated directly as command equalities,
as in the case of $c;\skipc \kateq c$,
and rules of congruence, symmetry, reflexivity, and transitivity can be 
stated directly for $\kateq$.

This raises the question whether we can formulate the KAT structure entirely in terms of commands. 
One approach is to add assume statements to the language of commands. 
These take the form $\keyw{assume}~e$, for program expression $e$ (we do not need assumptions for predicates).  The idea is that $\keyw{assume}~e$ will play the role currently played by translated boolean expressions $\mkt{e}$.
But it is not clear that KAT laws like $\kone\kplus x\kdot x^\kstar = x^\kstar$
can be expressed with the more limited GCL syntax.

There is another approach to reducing the number of syntactic forms.
One can take the KAT expressions of (\ref{eq:KATexp}) as the core syntax
and treat guarded commands as syntax sugar in unary and relational correctness judgments
and proof rules.   
This approach is taken, for example, in O'Hearn's 
presentation of incorrectness logic~\cite{OHearn2019}.

\section{Floyd completeness}\label{sec:autUnary}

In this section we put the normal form equivalence theorem to work
in a proof that an IAM proof can be translated to one in HL+.
This gives a way to prove completeness of HL+ in the sense of Cook,
as spelled out in Corollary~\ref{cor:CookComplete}.
It also provides a pattern that guides the proof of alignment completeness, Theorem~\ref{thm:acomplete}.

\begin{theorem}[Floyd completeness]\label{thm:FloydComplete}
\upshape
Suppose $\okf(c,f)$.
Suppose $an$ is a valid annotation of $\aut(c,f)$ for $\spec{P}{Q}$.
Then $c: \spec{P}{Q}$ can be proved in HL+, using only assertions derived from $an$.
\end{theorem}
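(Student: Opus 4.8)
The plan is to mimic, at the level of the general theorem, the worked example of Section~\ref{sec:c0unary}. The key objects are: a fresh variable $pc$ not occurring in $c$ and for which $P$, $Q$, and every $an(i)$ are independent; the instrumented command $c^+ \eqdef \addPC(c); \spc f$; and the normal form $\spc n; \ldogc{}{gcs}$ where $n = \lab(c)$ and $\normSmall{c}{f}{gcs}$ (which exists by Lemma~\ref{lem:normExists}). By the normal form equivalence Theorem~\ref{thm:normEquiv} we have $\spc n;\ldogc{}{gcs} \kateq \addPC(c);\spc f = c^+$, and by Lemma~\ref{lem:addPC} we have $\ghost(pc, c^+)$ and $\erase(pc, c^+) \kateq c$. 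So the whole proof of $c:\spec{P}{Q}$ reduces to: (i) prove $\spc n; \ldogc{}{gcs} : \spec{P}{Q}$ in HL+, (ii) apply \rn{Rewrite} with the equivalence (\ref{eq:normEquiv}) to get $c^+ : \spec{P}{Q}$, (iii) apply \rn{Ghost} to get $\erase(pc,c^+):\spec{P}{Q}$, and (iv) apply \rn{Rewrite} again with $\erase(pc,c^+)\kateq c$ to conclude $c:\spec{P}{Q}$. Steps (ii)--(iv) are exactly as in the example and use only assertions $P$, $Q$ and the derived invariant.

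For step (i), I would first define the loop invariant built from the annotation:
\[ I \eqdef \big(\bigvee_{i\in\labs(c)\cup\{f\}} \tpc i\big) \;\land\; \bigwedge_{i\in\labs(c)\cup\{f\}} (\tpc i \imp an(i)). \]
Using \rn{Asgn}, from $pc$-freshness for $P$ we get $\spc n : \spec{P}{\tpc n \land P}$, and since $an(n) = P$ (annotations have $an(\init)=P$ and $\lab(c)=\init$ for $\aut(c,f)$), \rn{Conseq} gives $\spc n : \spec{P}{I}$. For the loop, I apply \rn{Do}: its premise requires $d : \spec{e \land I}{I}$ for every guarded command $e \gcto d$ in $gcs$. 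By Lemma~\ref{lem:nfCases} there are exactly five shapes of guarded command in $gcs$, each tied to a label $k\in\labs(c)$ and a successor label $m$, and each corresponds precisely to one of the five rows of Fig.~\ref{fig:VC}. For each shape I would discharge the premise using \rn{Asgn}/\rn{Skip}/\rn{Seq} and \rn{Conseq}, exactly as worked out for the seven cases of $d0$ in Section~\ref{sec:c0unary}: the hypothesis $e \land I$ lets us project onto $\tpc k \land an(k)$, the relevant VC of Fig.~\ref{fig:VC} (valid by assumption) is the implication needed for the body, and the trailing $\spc m$ (where present) re-establishes $\tpc m \land an(m) \imp I$ using $pc$-freshness of $an(m)$. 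The conclusion of \rn{Do} is then $\ldogc{}{gcs} : \spec{I}{I \land \neg\enab(gcs)}$. By Lemma~\ref{lem:nf-enab-labs}, $\enab(gcs)$ is semantically $\bigvee_{i\in\labs(c)}\tpc i$, so $I \land \neg\enab(gcs)$ implies $\tpc f \land an(f)$, hence $Q$ (since $an(f)=Q$); \rn{Conseq} gives $\ldogc{}{gcs} : \spec{I}{Q}$, and \rn{Seq} combines it with $\spc n : \spec{P}{I}$ to give (i).

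The main obstacle is the bookkeeping in the body cases: one must check that for each of the five guarded-command shapes the appropriate VC from Fig.~\ref{fig:VC} is exactly what \rn{Conseq} needs, and that the successor label $m$ is always in $\labs(c)\cup\{f\}$ so that the conjunct $\tpc m \imp an(m)$ of $I$ applies (this is where $\okf(c,f)$ and the definition of $\fsuc$ are used — $\fsuc(k,c,f) \in \labs(c)\cup\{f\}$ always). A secondary subtlety is that $I$ mentions $an(i)$ for $i \in \labs(c)\cup\{f\}$, so one must confirm the annotation is defined on all these points (it is, being a full annotation) and independent of $pc$ on all of them, which is part of the freshness hypothesis on $pc$. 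None of this is deep; it is the routine-but-careful core of the argument. Everything else is assembling \rn{Rewrite} and \rn{Ghost} around it, as in the example.
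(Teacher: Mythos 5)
Your proposal is correct and follows essentially the same route as the paper's proof: the same invariant $I$ built from the annotation, the same case analysis of the normal-form body via Lemma~\ref{lem:nfCases} against the VCs of Fig.~\ref{fig:VC}, the same use of Lemma~\ref{lem:nf-enab-labs} at loop exit, and the same \rn{Rewrite}/\rn{Ghost}/\rn{Rewrite} wrap-up. The only nit is that $\erase(pc,\addPC(c);\spc f)$ is $\erase(pc,\addPC(c));\skipc$, so the final equivalence with $c$ also needs the law $d;\skipc\kateq d$ on top of Lemma~\ref{lem:addPC} — a detail the paper handles explicitly.
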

The proof will use only a single instance of the \rn{Do} rule and no instance of the \rn{If} rule.
More importantly, the judgments use only assertions derived from those of $an$ in simple ways.
In particular, we use boolean combinations of the following:
assertions $an(i)$, boolean expressions that occur in $c$,
and equality tests $pc=n$ of the program counter variable and numeric literals;
and we use substitution instances $\subst{an(i)}{x}{e}$ for assignments
$x:=e$ that occur in $c$.

Nagasamudram and Naumann~\cite{NagasamudramN21} introduce the term Floyd completeness 
and use it for a similar result where they refer to ``associated judgments'' which are not precisely defined but rather highlighted in the theorem's proof.  Inspection of their proof reveals that their associated judgments are built from the annotation by substitution (for assignments) and conjoining boolean expressions (and their negations) that occur in the program.  
The practical importance of both Floyd completeness results is that the HL proof does not require more expressive assertions than used in the IAM proof.  
Compared with~\cite{NagasamudramN21}, our Theorem~\ref{thm:FloydComplete} has two
differences:  their judgments are all about subprograms of $c$ whereas ours involves the derived normal form program,
and for that program we build a loop invariant using tests of the $pc$ 
together with the annotation's assertions.  

Following the semi-formal treatment of ``associated judgments'' in~\cite{NagasamudramN21}, we do not precisely define the ``assertions derived from $an$'' mentioned in Theorem~\ref{thm:FloydComplete}, but instead spell out the proof in detail so the interested reader can reconstruct them.

\begin{proof}
Choose variable $pc$ that is fresh with respect to $c$ and $an(i)$ for all $i$ (hence with respect to $P$ and $Q$).
By Lemma~\ref{lem:normExists} we have some $gcs$
such that $\normSmall{c}{f}{gcs}$.
By Theorem~\ref{thm:normEquiv} we have
\begin{equation}\label{eq:nfEqu}
\spc n; \ldogc{}{gcs} \: \kateq \: \addPC(c); \spc f 
\quad\mbox{where $n=\lab(c)$.}
\end{equation}
For a loop invariant to reason about the normal form, with an eye on
the example we might try this formula:
\[ 1 \leq pc \leq f \land (\land i : 0\leq i\leq f : pc=i \imp an(i)) \]
But this only makes sense if the labels form a contiguous sequence, which
we do not require.  There is no need to reason arithmetically about labels.
We define the invariant $I$ as follows:
\[ I: \quad
(\lor i : i\in\labs(c)\union\{f\} : \tpc i) \land 
(\land i : i\in\labs(c)\union\{f\} : \tpc i \imp an(i)) 
\]
The disjunction says the current value of $pc$ is in $\labs(c)\union\{f\}$.

The next step is to obtain proofs of 
\begin{equation}\label{eq:DOprem}
b:\spec{I\land e}{I} \qquad \mbox{for each $e\gcto b$ in $gcs$ }
\end{equation}
To do so, first note that we have for any $m$ that 
\begin{equation}\label{eq:set}
\spc m : \spec{an(m)}{an(m)\land \tpc m} 
\end{equation}
using rules \rn{Asgn} and \rn{Conseq}, because by freshness $pc$ is not in $an(m)$.
Second, note that by definition of $I$ we have valid implications
\begin{equation}\label{eq:Ian}
I\land \tpc n \imp an(n) \qquad\mbox{and}\qquad an(n)\land \tpc n \imp I 
\qquad\mbox{for any $n$ in $\labs(c)\union\{f\}$}
\end{equation}
Now go by the possible cases of $b$ in (\ref{eq:DOprem}),
which are given by Lemma~\ref{lem:nfCases}.
\begin{itemize}
\item 
$b$ has the ``set nf'' form 
$\tpc n\gcto \lassg{}{x}{e};\spc m$, where $\sub(n,c)$ is $\lassg{n}{x}{e}$ 
and $m = \fsuc(n,c,f)$.

To show: $\lassg{}{x}{e};\spc m : \spec{I\land \tpc n}{I}$.
By rule \rn{Asgn} we have 
$\lassg{}{x}{e} : \spec{\subst{an(m)}{x}{e}}{an(m)}$.
(We are not writing explicit $\proves$ for provability of correctness judgments.)
By the VC in Fig.~\ref{fig:VC} we have
$an(n)\imp \subst{an(m)}{x}{e}$,
so using \rn{Conseq} we get
$\lassg{}{x}{e} : \spec{an(n)}{an(m)}$.
By fact (\ref{eq:set}) we have
$\spc m : \spec{an(m)}{an(m)\land \tpc m}$, so
by rule \rn{Seq} we have $\lassg{}{x}{e};\spc m : \spec{an(n)}{an(m)\land \tpc m}$.
So by \rn{Conseq} using both implications in fact (\ref{eq:Ian}) we get 
$\lassg{}{x}{e};\spc m : \spec{I\land \tpc n}{I}$.
\item 
$b$ has the ``test nf'' form $\tpc n\gcto \spc m$ (i.e., 
$\tpc n\land true\gcto \spc m$),
where $\sub(n,c)$ is $\lskipc{n}$ and $m=\fsuc(n,c,f)$.

To show: $\spc m : \spec{I\land \tpc n}{I}$.
By (\ref{eq:set}) we have
$\spc m : \spec{an(m)}{an(m)\land \tpc m}$.
By VC for skip (Fig.~\ref{fig:VC}) we have
$an(n)\imp an(m)$ so using \rn{Conseq} we get 
$\spc m : \spec{an(n)}{an(m)\land \tpc m}$.
So by \rn{Conseq} using fact (\ref{eq:Ian}) we get 
$\spc m : \spec{I\land \tpc n}{I}$.
\item 
$b$ has the ``test nf'' form $\tpc n\land e\gcto \spc m$ where $\sub(n,c)$ is 
$\lifgc{n}{gcs_0}$ for some $gcs_0$, and $m=\lab(d)$ where $e\gcto d$ is in $gcs_0$.

To show: $\spc m : \spec{I \land \tpc n \land e}{I}$.
We have $\spc m : \spec{an(m)}{an(m)\land \tpc m}$ by (\ref{eq:set}).
We have $an(n)\land e \imp an(m)$ by VC for \keyw{if},
so by \rn{Conseq} we get $\spc m : \spec{an(n)\land e }{an(m)\land \tpc m}$.
So by \rn{Conseq} using fact (\ref{eq:Ian}) we get 
$\spc m : \spec{I\land \tpc n\land e}{I}$.
\item 
$b$ has the ``test nf'' form $\tpc n\land e\gcto \spc m$ where $\sub(n,c)$ is 
$\ldogc{n}{gcs_0}$, for some $gcs_0$, and $m=\lab(d)$ where $e\gcto d$ is in $gcs_0$.
This is just like the preceding case, 
using the corresponding VC for \keyw{do}.
\item 
$b$ has the ``test nf'' form $\tpc n\land \neg\enab(gcs_0)\gcto \spc m$ where $\sub(n,c)$ is 
$\ldogc{n}{gcs_0}$ and $m=\fsuc(n,c,f)$.

To show: $\spc m : \spec{I \land \tpc n \land \neg\enab(gcs_0)}{I}$.
Again, the proof is like the preceding case, using the VC for loop exit.
\end{itemize}
Having established the premises of rule \rn{Do}, we get its conclusion:
\[  \ldogc{}{gcs} : \spec{I}{I\land \neg\enab(gcs)} \]
By Lemma~\ref{lem:nf-enab-labs} and definition of $I$,
$I\land \neg\enab(gcs)$ is equivalent to $I\land \tpc f$, 
so by consequence we get 
\[ \ldogc{}{gcs} : \spec{I \land \tpc n}{I\land \tpc f} \] 
Now $I\land \tpc n$ is equivalent to $an(n)\land \tpc n$
which is $P\land \tpc n$ because $an$ is an annotation for the spec $\spec{P}{Q}$.
Likewise, the postcondition implies $Q$.  So by consequence we get 
\[ \ldogc{}{gcs} : \spec{P \land \tpc n}{Q} \] 
By the assignment rule and consequence using that $pc$ is fresh for $P$ we get
$\spc n:\spec{P}{P\land\tpc n}$,
so using the sequence rule we get
\[ \spc n; \ldogc{}{gcs} : \spec{P}{Q} \]
Then \rn{Rewrite} using (\ref{eq:nfEqu}) yields 
\[ \addPC(c); \spc f : \spec{P}{Q} \]
By Lemma~\ref{lem:addPC} and freshness of $pc$ we have that
$pc$ is ghost in $\addPC(c); \spc f$, also $P$ and $Q$ are independent from $pc$.
So by rule \rn{Ghost} we get 
\[ \erase(pc,\addPC(c);\spc f) : \spec{P}{Q} \] 
Now using Lemma~\ref{lem:addPC} together with the general law $c;\skipc\kateq c$
and transitivity of $\kateq$,\footnote{Both of which are easily
derived using the definition of $\kateq$.} 
we have that $\erase(pc,\addPC(c); \spc f) \kateq c$,
so by \rn{Rewrite}
we get $c: \spec{P}{Q}$.
\end{proof}

\section{RHL+ is alignment complete}\label{sec:acomplete}





Theorem~\ref{thm:FloydComplete} says that a valid annotation of the automaton for a command gives rise to a HL+ proof of its correctness.
The main result of the paper is that a valid annotation of an alignment automaton for a pair of commands gives rise to a RHL+ proof,
provided adequacy is proved in the manner of Lemma~\ref{lem:anLRJF}.

\begin{theorem}\label{thm:acomplete}
\upshape
Suppose 
\begin{list}{}{}
\item[(a)] $\okft(c,f)$ and $\okft(c',f')$
\item[(b)] $an$ is a valid annotation of $\aprod(\aut(c,f),\aut(c',f'),L,R,J)$
for $\rspec{\S}{\T}$.  
\item[(c)] $\hat{an}(i,j)\imp L \lor R \lor J \lor [\fin|\fin']$ for all control points $(i,j)$ of 
$\aprod(\aut(c,f),\aut(c',f'),L,R,J)$
\end{list}
Then the judgment $c\sep c': \rspec{\S}{\T}$ has a proof in RHL+.
\end{theorem}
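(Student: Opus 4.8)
The plan is to mimic the structure of the proof of Theorem~\ref{thm:FloydComplete} (Floyd completeness), replacing the single automaton by the alignment automaton and the \rn{Do} rule by the conditionally aligned loop rule \rn{dDo}. First I would choose a fresh variable $pc$ (distinct from all variables of $c$ and $c'$, hence fresh for $\S$, $\T$, and every $an(i,j)$, and for $L,R,J$ which by assumption mention only those variables). By Lemma~\ref{lem:normExists} pick $gcs$ with $\normSmall{c}{f}{gcs}$ and $gcs'$ with $\normSmall{c'}{f'}{gcs'}$, and by Theorem~\ref{thm:normEquiv} obtain $\spc n; \ldogc{}{gcs} \kateq \addPC(c); \spc f$ (with $n=\lab(c)$) and similarly on the primed side with $n'=\lab(c')$. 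The goal is then to prove $\spc n; \ldogc{}{gcs} \Sep \spc n'; \ldogc{}{gcs'} : \rspec{\S'}{\T'}$ for suitably $pc$-adjusted pre/post, transfer back via \rn{rRewrite}, \rn{rGhost}, and a final \rn{rRewrite} using $\erase(pc,\addPC(c);\spc f)\kateq c$ exactly as in Theorem~\ref{thm:FloydComplete}.

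The heart of the argument is the instantiation of \rn{dDo}. Following the $c4,c5$ example of Sect.~\ref{sec:condEg}, I would take $\Lrel := \encode{L}$, $\R := \encode{R}$, and loop invariant
\[ \Q: \quad an^*\land(\encode{L}\lor\encode{R}\lor\encode{J}\lor\bothF{\tpc f\sep\tpc{f'}}) \]
where $an^*$ is the ``$pc$-indexed conjunction'' of the annotation, namely
\[ an^*: \quad \quant{\land}{(i,j)}{i\in\labs(c)\cup\{f\},\, j\in\labs(c')\cup\{f'\}}{\bothF{\tpc i\sep\tpc j}\imp an(i,j)} \]
conjoined with the disjunction $\quant{\lor}{(i,j)}{\ldots}{\bothF{\tpc i\sep\tpc j}}$ asserting that $pc$ on each side ranges over the right label set. (This uses Lemma~\ref{lem:state-to-store} to relate $an(i,j)$ as a store predicate to its role as a control-indexed relation, and it uses that $\encode{L}$ etc.\ are defined via Def.~\ref{def:pc:encode} and are independent of $pc$ on both sides.) The side condition of \rn{dDo}, $\Q\imp(\leftex{\enab(gcs)}=\rightex{\enab(gcs')}\lor(\encode{L}\land\leftF{\enab(gcs)})\lor(\encode{R}\land\rightF{\enab(gcs')}))$, I would discharge by distributing $\Q$ into its four disjuncts and handling each: an $\encode{L}$-state forces $pc$ on the left into $\labs(c)$, i.e.\ $\leftF{\enab(gcs)}$ by Lemma~\ref{lem:nf-enab-labs}, and liveness of $(L,R,J)$ (Assumption~1) ensures $\encode{L}\land\leftF{\enab(gcs)}$; symmetrically for $\encode{R}$; an $\encode{J}$-state makes both $\enab$'s true so the equality holds; and a $\bothF{\tpc f\sep\tpc{f'}}$-state makes both false. \emph{Here assumption (c) of the theorem is exactly what guarantees $\encode{L}\lor\encode{R}\lor\encode{J}\lor\bothF{\tpc f\sep\tpc{f'}}$ is implied by $an^*$'s control disjunct restricted to feasible pairs}, which is the elegant correspondence promised in the introduction.

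Next come the three families of premises of \rn{dDo}. For the left-only premises, for each $e\gcto b$ in $gcs$ I must show $b\sep\skipc:\rspec{\Q\land\leftF{e}\land\encode{L}}{\Q}$; by Lemma~\ref{lem:nfCases} $b$ is one of the five normal-form shapes, and in each case I would use \rn{AsgnSkip}/\rn{SkipAsgn}/\rn{dSkip} together with the corresponding left-only $pc$-encoded relational VC of Fig.~\ref{fig:RVClo-encoded} (justified by Lemma~\ref{lem:liftRVC} from the primary VCs in assumption (b)) and \rn{rConseq} using the implications packaged into $\Q$ — mirroring the ``set nf''/``test nf'' case analysis in the proof of Theorem~\ref{thm:FloydComplete}, but now everything is on the left with $\skipc$ on the right and the precondition carries the $\encode{L}$ guard that makes infeasible control-pairs collapse via \rn{rFalse}. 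The right-only premises are symmetric. For the joint premises, for each $e\gcto b$ in $gcs$ and $e'\gcto b'$ in $gcs'$ I must show $b\sep b':\rspec{\Q\land\leftF{e}\land\rightF{e'}\land\neg\encode{L}\land\neg\encode{R}}{\Q}$; since $\Q$ supplies $\encode{L}\lor\encode{R}\lor\encode{J}\lor\bothF{\tpc f\sep\tpc{f'}}$, the precondition entails $\encode{J}\lor\bothF{\tpc f\sep\tpc{f'}}$, so by \rn{rDisj} it suffices to handle the $\bothF{\tpc f\sep\tpc{f'}}$ disjunct (immediate by \rn{rFalse}, since the guards of normal-form bodies contradict $\tpc f$ and $\tpc{f'}$) and the $\encode{J}$ disjunct. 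For the latter there are $25$ combinations of normal-form shapes (the rows of Fig.~\ref{fig:RVCjo}); each is dispatched by \rn{dAsgn}/\rn{AsgnSkip}/\rn{SkipAsgn}/\rn{dSkip}, the matching joint $pc$-encoded VC (Fig.~\ref{fig:RVCjo-encoded}, via Lemma~\ref{lem:liftRVC}), and \rn{rConseq}, with the $\encode{J}$ guard killing the incompatible combinations through \rn{rFalse}. Having all premises and the side condition, \rn{dDo} yields $\ldogc{}{gcs}\Sep\ldogc{}{gcs'}:\rspec{\Q}{\Q\land\neg\leftF{\enab(gcs)}\land\neg\rightF{\enab(gcs')}}$; using Lemma~\ref{lem:nf-enab-labs} the postcondition implies $\Q\land\bothF{\tpc f\sep\tpc{f'}}$ hence $an(f,f')=\T$ (as $an$ annotates $\rspec{\S}{\T}$), while $\bothF{\tpc n\sep\tpc{n'}}\land\S$ implies $\Q$ (here one needs that $an(n,n')$, the start annotation, is $\S$, together with assumption (c) to know the start control-pair lies in $L\lor R\lor J$). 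Prefixing $\spc n\sep\spc n'$ by \rn{dAsgn}, \rn{dSeq}, and \rn{rConseq} gives $\spc n;\ldogc{}{gcs}\Sep\spc n';\ldogc{}{gcs'}:\rspec{\S}{\T}$, and then \rn{rRewrite} (with the two normal-form equivalences), \rn{rGhost} (using $\ghost(pc,\addPC(c);\spc f)$ from Lemma~\ref{lem:addPC} and $pc$-independence of $\S,\T$), and a final \rn{rRewrite} (with $\erase(pc,\addPC(c);\spc f)\kateq c$ and its primed counterpart, from Lemma~\ref{lem:addPC} plus $c;\skipc\kateq c$) deliver $c\sep c':\rspec{\S}{\T}$.

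The main obstacle I expect is bookkeeping rather than a deep difficulty: carefully setting up the $pc$-encoded invariant $\Q$ so that (i) it genuinely discharges the \rn{dDo} side condition using only liveness and assumption (c), (ii) each of the $\approx 5+5+25$ VC-shaped premises reduces cleanly to the corresponding encoded VC of Figs.~\ref{fig:RVClo-encoded}--\ref{fig:RVCjo-encoded} via Lemma~\ref{lem:liftRVC} and a \rn{rConseq}, and (iii) the infeasible control-pair premises genuinely collapse via \rn{rFalse} — this is where the $\encode{L},\encode{R},\encode{J}$ guards in the preconditions of \rn{dDo} are essential, and where \rn{rDisj} (the one ``nonstandard'' ingredient the introduction flagged) is used to split the joint premise into its $\encode{J}$ and $\bothF{\tpc f\sep\tpc{f'}}$ parts. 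The subtle point worth stating explicitly is that, unlike the unary case, the invariant $\Q$ must encode the alignment conditions themselves (not merely the annotation), precisely so that manifest adequacy — condition (c) — becomes the inductive step of the loop rule, i.e.\ assumption (c) is what makes \rn{dDo}'s side condition provable from $\Q$.
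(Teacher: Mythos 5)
Your proposal is correct and follows essentially the same route as the paper: normal forms via Theorem~\ref{thm:normEquiv}, rule \rn{dDo} instantiated with $\Lrel:=\encode{L}$, $\R:=\encode{R}$ and a $pc$-indexed invariant built from the annotation, the side condition discharged from assumption (c) together with liveness and Lemma~\ref{lem:nf-enab-labs}, the premises discharged via the encoded VCs of Lemma~\ref{lem:liftRVC}, and the concluding \rn{rRewrite}/\rn{rGhost}/\rn{rRewrite} steps. One bookkeeping correction: the essential use of \rn{rDisjN} is in the \emph{left-only and right-only} premises rather than the joint ones --- each encoded VC of Fig.~\ref{fig:RVClo-encoded} is indexed by a specific right-hand control point $n'$, which a left-only premise does not fix, so you must first split $\Q\land\leftF{\tpc k}$ into the disjunction over all possible right-side labels $j$ before invoking the VC for $((k,j),(m,j))$; in the joint premises the paper gets by without \rn{rDisj}, since the two guards already pin both control points and (being guards of normal-form bodies) contradict $\bothF{\tpc f\sep\tpc f'}$, so condition (c) directly yields $\encode{J}$ there.
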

As in Theorem~\ref{thm:FloydComplete}, the proof we construct uses only relational assertions derived in simple ways from assertions $an(i,j)$, specifically: boolean combinations 
of the annotation's assertions, conjunctions with conditional tests in the code, and substitutions for expressions in assignment commands.

Conditions (b) and (c) imply that
$\aprod(\aut(c,f),\aut(c',f'),L,R,J)$ is manifestly $\S$-adequate (using Lemma~\ref{lem:anLRJF}).
So by Corollary~\ref{cor:relIAM} we get $\aut(c,f),\aut(c',f') \models \rspec{\S}{\T}$
and so by Lemma~\ref{lem:autConsistent} we get $\models c\sep c' : \rspec{\S}{\T}$.
By a different path, we get $\models c\sep c' : \rspec{\S}{\T}$ 
by the conclusion of Theorem~\ref{thm:acomplete} together with 
soundness of RHL+ (Prop.~\ref{prop:RHLsound}).

Restriction (a) in Theorem~\ref{thm:acomplete} is just a technicality.
The $\okf$ condition just says labels of $c$ are unique and do not include $f$. 
Labels have no effect on program semantics so they can always be chosen to satisfy the condition.

\begin{proof}
Suppose $\S,\T,c,c',f,f',L,R,J$ and $an$ satisfy the hypotheses (a)--(c) of the theorem.
Choose variable $pc$ that is fresh with respect to $\S,\T,c,c',an,L,R,J$.
To be precise: $\indep(pc|pc,\S)$, $\indep(pc|pc,\T)$,  $pc$ does not occur in $c$ or $c'$,
and $L$, $R$, $J$ are independent from $pc$ on both sides (see (\ref{eq:indep}).

By Lemma~\ref{lem:normExists} there are $gcs$ and $gcs'$ such that 
$\normSmall{c}{f}{gcs}$ and $\normSmall{c'}{f'}{gcs'}$.
Let $n=\lab(c)$ and $n'=\lab(c')$.
By Theorem~\ref{thm:normEquiv} we have
\begin{equation}\label{eq:nfEquLR}
\begin{array}{l}
\spc n; \ldogc{}{gcs} \: \kateq \: \addPC(c); \spc f 
\\
\spc n'; \ldogc{}{gcs'} \: \kateq \: \addPC(c'); \spc f' 
\end{array}
\end{equation}
Define store relation $\Q$ to be $\Q_{an}\land\Q_{pc}$ where 
\[ \begin{array}{l}
\Q_{an}: \qquad
   \quant{\land}{i,j}{i\in \labs(c)\union\{f\} \land j\in \labs(c')\union\{f'\}
   }{\bothF{\tpc i\sep \tpc j} \imp an(i,j)} 
\\
\Q_{pc}: \qquad 
   \quant{\lor}{i,j}{i\in \labs(c)\union\{f\} \land j\in \labs(c')\union\{f'\}
   }{\bothF{\tpc i\sep \tpc j}}
   \end{array}
\]
We will derive 
\begin{equation}\label{eq:C}
 \ldogc{}{gcs} \sep \ldogc{}{gcs'} : \rspec{\Q}{\Q\land\neg\leftF{\enab(gcs)}\land\neg\rightF{\enab(gcs')}}  
 \end{equation}
using rule \rn{dDo} instantiated with $\Q:=\Q$, $\Lrel:=\encode{L}$, and $\R:=\encode{R}$.
The side condition of \rn{dDo} is
\begin{equation}\label{eq:side} \Q \imp 
(\leftex{\enab(gcs)} = \rightex{\enab(gcs')})
          \lor (\encode{L} \land \leftF{\enab(gcs)})
          \lor (\encode{R} \land \rightF{\enab(gcs')})
\end{equation}
To prove (\ref{eq:side}), first rewrite $\Q$ using distributivity and renaming dummies, to the equivalent form
\[
   \quant{\lor}{i,i'}{}{ 
     \bothF{\tpc i\sep \tpc i'} \land 
     \quant{\land}{k,k'}{}{ 
         (\bothF{\tpc k\sep\tpc k'} \imp an(k,k')) }}
\]
where we omit that $i,k$ range over $\labs(c)\union\{f\}$
and $i',k'$ range over $\labs(c')\union\{f'\}$.
This implies 
\[
   \quant{\lor}{i,i'}{}{  \bothF{\tpc i\sep\tpc i'} \land an(i,i'       ) }
\] 
Thus by (c), any $\Q$-state satisfies
$\encode{L} \lor \encode{R} \lor \encode{J} \lor \bothF{\tpc f\sep\tpc f'}$.
We show each of these disjuncts the right side of (\ref{eq:side}).
\begin{itemize}
\item $\encode{L}$ implies $\encode{L}\land\leftF{\enab(gcs)}$ 
because (i) $L$ is a set of states of the alignment automaton,
with control on the left ranging over $\labs(c)\union\{f\}$,
(ii) by liveness (Assumption~1), $L$ allows transitions 
by $\aut(c,f)$, and so excludes control being at $f$,
and (iii) $\leftF{\enab(gcs)}$ means control on the left is in $\labs(c)$,
             by Lemma~\ref{lem:nf-enab-labs}.
\item $\encode{R}$ implies $\encode{R}\land\rightF{\enab(gcs')}$
for reasons symmetric to the $L$ case
\item $\encode{J}$ implies $\leftF{\enab(gcs)}$ and $\rightF{\enab(gcs')}$ are both true
(using liveness and Lemma~\ref{lem:nf-enab-labs} again),
so $\leftF{\enab(gcs)} = \rightF{\enab(gcs')}$
\item $\bothF{\tpc f\sep\tpc f'}$ implies 
both $\leftF{\enab(gcs)}$ and $\rightF{\enab(gcs')}$ are false (again using  Lemma~\ref{lem:nf-enab-labs}) 
so $\leftF{\enab(gcs)} = \rightF{\enab(gcs')}$
\end{itemize}
So the side condition (\ref{eq:side}) of \rn{dDo} is proved.
Before proceeding to prove the premises for \rn{dDo}, note that we can prove
\begin{equation}\label{eq:setR}
\spc m \sep \spc m' : \rspec{\P}{\P\land \leftF{\tpc m}\land\rightF{\tpc m'}} 
\end{equation}
for any $\P$ in which $pc$ does not occur, and any $m,m'$,
using rules \rn{rAsgn} and \rn{rConseq}.
Also, by definition of $\Q$ we have valid implications 
\begin{equation}\label{eq:IanRel}
\begin{array}[t]{l} 
\Q\land \bothF{\tpc m\sep\tpc m'} \imp an(m,m') 
\qquad\mbox{and}\qquad 
an(m,m') \land \bothF{\tpc m\sep\tpc m'} \imp \Q \\
\mbox{for any $m$ in $\labs(c)\union\{f\}$ and 
$m'$ in $\labs(c')\union\{f'\}$} 
\end{array}
\end{equation}
From condition (c) of the theorem, by definitions we get 
\[ 
an(i,j)\land\bothF{\tpc i\sep\tpc j}\imp\encode{L}\lor\encode{R}\lor\encode{J}\lor\bothF{\tpc \fin\sep\tpc\fin'}
\quad\mbox{for all $i,j$}
\]
and hence
\begin{equation}\label{eq:dliftJ}
an(i,j)\land\bothF{\tpc i\sep\tpc j}\land\neg\encode{L}\land\neg\encode{R}\imp \encode{J}
\quad\mbox{for all $i,j$ with $(i,j)\neq(\fin,\fin')$}
\end{equation}

There are three sets of premises of \rn{dDo} for the loops in (\ref{eq:C}), with these forms:
\begin{list}{}{}
\item[(left-only)]
$b\sep \skipc : \rspec{\Q\land \leftF{e}\land\encode{L} }{\Q}$
for each $e\gcto b$ in $gcs$ 
\item[(right-only)]
$\skipc\sep b' : \rspec{\Q\land \rightF{e'}\land\encode{R} }{\Q}$
for each $e'\gcto b'$ in $gcs'$ 
\item[(joint)]
$b\sep b' : \rspec{\Q\land \bothF{e\sep e'} \land \neg\encode{L} \land \neg\encode{R}}{\Q}$
for each $e\gcto b$ in $gcs$ and $e'\gcto b'$ in $gcs'$
\end{list}
By Lemma~\ref{lem:nfCases}, the guarded commands in $gcs$ and $gcs'$ have five possible forms,
so there are five left-only cases to consider, five right-only, and 25 joint ones.
We start with the latter.

\paragraph{Joint cases}

For each of the five possibilities for $e\gcto b$ in $gcs$ for $c$,
we must consider it with each of the five possibilities for $e'\gcto b'$ in $gcs'$ for $c'$.
First we consider the  cases where both sides are of the same kind.

\begin{itemize}
\item $\spc m\sep \spc m' : \;
\rspec{\Q\land \bothF{\tpc k\sep \tpc k'} \land \neg\encode{L} \land \neg\encode{R}}{\Q}$ , where \\
$\sub(k,c) = \lskipc{k}$, $m=\fsuc(k,c,f)$,
$\sub(k',c') = \lskipc{k'}$, $m=\fsuc(k',c',f')$.

By (\ref{eq:setR}) (and freshness of $pc$) we have
$\spc m \sep \spc m' : \rspec{an(m,m')}{an(m,m')\land \bothF{\tpc m\sep\tpc m'}}$.

So by \rn{rConseq} using the second implication in (\ref{eq:IanRel}) we have
\begin{equation}\label{eq:mm}
\spc m \sep \spc m' : \rspec{an(m,m')}{\Q}
\end{equation}
By the first implication in (\ref{eq:IanRel}) we have
$\Q\land\bothF{\tpc k\sep\tpc k'}\imp an(k,k')$.
So using (\ref{eq:dliftJ}) we get 
$\Q\land\bothF{\tpc k\sep\tpc k'}\land\neg\encode{L}\land\neg\encode{R}\imp \encode{J}$.
By validity of the annotation, we have the VC in the first row of Fig.~\ref{fig:RVCjo},
i.e., $J\land \breve{an}(k,k')\imp \hat{an}(m,m')$.
Then by Lemma~\ref{lem:liftRVC} we get the lifted VC
$\encode{J}\land\bothF{\tpc k\sep\tpc k'}\land an(k,k')\imp an(m,m')$.
So this is valid:
\[ \Q\land \bothF{\tpc k\sep\tpc k'} \land \neg\encode{L} \land \neg\encode{R}
\imp an(m,m') \]
Using this with \rn{rConseq} and (\ref{eq:mm}) yields
$\spc m\sep \spc m' : \;
\rspec{\Q\land \bothF{\tpc k\sep \tpc k'} \land \neg\encode{L} \land \neg\encode{R}}{\Q}$.

\item $\lassg{}{x}{e};\spc m \sep \lassg{}{x'}{e'};\spc m' : \;
\rspec{\Q\land \bothF{\tpc k\sep\tpc k'} \land \neg\encode{L} \land \neg\encode{R}}{\Q}$ , where \\
$\sub(k,c) = \lassg{k}{x}{e}$ , $m = \fsuc(k,c,f)$,
$\sub(k',c') = \lassg{k'}{x'}{e'}$, and $m' = \fsuc(k',c',f')$.

As in the previous case we have 
$\spc m \sep \spc m' : \rspec{an(m,m')}{\Q}$ (see (\ref{eq:mm})).
By rule \rn{dAsgn} we get
$\lassg{}{x}{e} \sep \lassg{}{x'}{e'}: \;\rspec{\subst{an(m,m')}{x|x'}{e|e'}}{an(m,m')}$.
By the VC (Fig.~\ref{fig:RVCjo}) we have 
$J\land \breve{an}(k,k') \imp \subst{\hat{an}(m,m')}{x|x'}{e|e'}$, 
so by Lemma~\ref{lem:liftRVC} we have
$\encode{J}\land\bothF{\tpc k\sep\tpc k'}\land an(k,k') \imp \subst{an(m,m')}{x|x'}{e|e'}$. 
So by \rn{rConseq} we get
\[ \lassg{}{x}{e} \sep \lassg{}{x'}{e'}: \;\rspec{\encode{J}\land\bothF{\tpc k\sep\tpc k'}\land an(k,k')}{an(m,m')} 
\]
As $k,k'$ are not the final labels,
by (\ref{eq:dliftJ}) we have
$an(k,k,)\land\bothF{\tpc k\sep\tpc k'}\land\neg \encode{L}\land\neg \encode{R} \imp \encode{J}$,
and we have $Q\land\bothF{\tpc k\sep\tpc k'}\imp an(k,k')$ by the first implication in (\ref{eq:IanRel}).  Hence we have 
$Q\land\bothF{\tpc k\sep\tpc k'}\land\neg \encode{L}\land \neg\encode{R} \imp
\encode{J}\land\bothF{\tpc k\sep\tpc k'}\land an(k,k')$.
So using \rn{rConseq} we get 
\[ \lassg{}{x}{e} \sep \lassg{}{x'}{e'}: \; \rspec{Q\land\bothF{\tpc k\sep\tpc k'}\land\neg \encode{L}\land \neg\encode{R}}{an(m,m')} 
\]
Combining this and $\spc m \sep \spc m' : \rspec{an(m,m')}{\Q}$ 
by rule \rn{dSeq} completes the proof.
 
\item $\spc m \sep \spc m' : \;
\rspec{\Q\land \leftF{\tpc k\land e}\land\rightF{\tpc k'\land e'} \land \neg\encode{L} \land \neg\encode{R}}{\Q}$ , where \\
$\sub(k,c) = \lifgc{k}{gcs_0}$, $m=\lab(d)$ with $e\gcto d$ in $gcs_0$,
$\sub(k',c') = \lifgc{k'}{gcs'_0}$, $m=\lab(d')$ with $e'\gcto d'$ in $gcs'_0$
--- similar to preceding.

\item $\spc m \sep \spc m' : \;
\rspec{\Q\land \leftF{\tpc k\land e}\land\rightF{\tpc k'\land e'} \land \neg\encode{L} \land \neg\encode{R}}{\Q}$ , where \\
$\sub(k,c) = \ldogc{k}{gcs_0}$, $m=\lab(d)$ with $e\gcto d$ in $gcs_0$,
$\sub(k',c') = \ldogc{k'}{gcs'_0}$, $m=\lab(d')$ with $e'\gcto d'$ in $gcs'_0$ 
--- similar to preceding.

\item $\spc m \sep \spc m' : \;
\rspec{\Q\land \leftF{\tpc k\land\neg\enab(gcs_0)}\land\rightF{\tpc k'\land\neg\enab(gcs'_0)} \land \neg\encode{L} \land \neg\encode{R}}{\Q}$ , where \\
$\sub(k,c) = \ldogc{k}{gcs_0}$, $m=\fsuc(k,c,f)$,
$\sub(k',c') = \ldogc{k'}{gcs'_0}$, $m'=\fsuc(k',c',f')$ ---
similar to preceding.
\end{itemize}

Next we consider the four cases of $\skipc$ on the right and one of the other cases on the left.

\begin{itemize}
\item $\lassg{}{x}{e};\spc m \sep\lskipc{} : \;
\rspec{\Q\land \bothF{\tpc k\sep\tpc k'} \land \neg\encode{L} \land \neg\encode{R}}{\Q}$ , where \\
$\sub(k,c) = \lassg{k}{x}{e}$, $m=\fsuc(k,c,f)$,
$\sub(k',c')=\lskipc{k'}$, and $m' = \fsuc(k',c',f')$.

As in earlier cases, we have
$\spc m\sep\spc m': \rspec{an(m,m')}{\Q}$ (see(\ref{eq:mm}).
By rule \rn{AsgnSkip} we get
\[ \lassg{}{x}{e}\sep\lskipc{}: \;\rspec{\subst{an(m,m')}{x|}{e|}}{an(m,m')} \]
By the VC (Fig.~\ref{fig:RVCjo}) for this case we have 
$J\land \breve{an}(k,k') \imp \subst{\hat{an}(m,m')}{x|}{e|}$, 
so by Lemma~\ref{lem:liftRVC} we have
$\encode{J}\land\bothF{\tpc k\sep\tpc k'}\land an(k,k') \imp \subst{an(m,m')}{x|}{e|}$. 
So by \rn{rConseq} we get
\[ \lassg{}{x}{e} \sep \lskipc{}: \;\rspec{\encode{J}\land\bothF{\tpc k\sep\tpc k'}\land an(k,k')}{an(m,m')} 
\]
As $k,k'$ are not the final labels,
by (\ref{eq:dliftJ}) we have
$an(k,k,)\land\bothF{\tpc k\sep\tpc k'}\land\neg \encode{L}\land\neg \encode{R} \imp \encode{J}$,
and we have $Q\land\bothF{\tpc k\sep\tpc k'}\imp an(k,k')$ by the first implication in (\ref{eq:IanRel}) so we have 
$Q\land\bothF{\tpc k\sep\tpc k'}\land\neg \encode{L}\land \neg\encode{R} \imp
\encode{J}\land\bothF{\tpc k\sep\tpc k'}\land an(k,k')$.
So using \rn{rConseq} we get 
\[ \lassg{}{x}{e} \sep \skipc{}: \; \rspec{Q\land\bothF{\tpc k\sep\tpc k'}\land\neg \encode{L}\land \neg\encode{R}}{an(m,m')} 
\]
Combining this and $\spc m \sep \spc m' : \rspec{an(m,m')}{\Q}$ 
by \rn{dSeq} completes the proof.

\item $\spc m\sep \spc m' : \;
\rspec{\Q\land \bothF{\tpc k\land e\sep\tpc k'} \land \neg\encode{L} \land \neg\encode{R}}{\Q}$ , where \\
$\sub(k,c) = \lifgc{k}{gcs_0}$, $m=\lab(d)$ 
$\sub(k',c') = \lskipc{k'}$, and $m'=\fsuc(k',c',f')$,
with $e\gcto d$ in $gcs_0$ ---
similar to preceding.

\item $\spc m\sep \spc m' : \;
\rspec{\Q\land \bothF{\tpc k\land e\sep\tpc k'} \land \neg\encode{L} \land \neg\encode{R}}{\Q}$ , where \\
$\sub(k,c) = \ldogc{k}{gcs_0}$, $m=\lab(d)$ 
$\sub(k',c') = \lskipc{k'}$, and $m'=\fsuc(k',c',f')$,
with $e\gcto d$ in $gcs_0$ ---
similar to preceding.

\item $\spc m\sep \spc m' : \;
\rspec{\Q\land \bothF{\tpc k\land\neg\enab(gcs_0)\sep \tpc k'} \land \neg\encode{L} \land \neg\encode{R}}{\Q}$ , where \\
$\sub(k,c) = \ldogc{k}{gcs_0}$, $m=\fsuc(k,c,f)$, 
$\sub(k,c) = \lskipc{k}$, and $m=\fsuc(k,c,f)$ --- similar to preceding.
\end{itemize}

We refrain from spelling out details of the remaining cases:
4 cases with assign on right, 4 cases with if on the right, 4 cases with do-enter on the right,
and 4 cases with do-exit on the right.
The arguments are all similar: every case uses a VC and rule \rn{rConseq},
together with (\ref{eq:setR}) which is proved using \rn{dAsgn} and \rn{rConseq}.
Cases that involve an assignment in the original program $c$ or $c'$ also
use rules \rn{dAsgn}, \rn{SkipAsgn}, or \rn{AsgnSkip}.
No other rules are used to establish the premises of rule \rn{dDo}.







\paragraph{Left-only cases}

These cases are proved using the same rules as the joint cases,
plus one additional rule: \rn{rDisj}, in the form \rn{rDisjN} derived from it (Sect.~\ref{sec:derived}).  
This is needed due to the following complication.
The joint cases determine a starting pair and ending pair of control points,
which determines which VC to appeal to.
The left-only cases do not determine a control point on the right side; instead
we have VCs for each possible point on the right (Fig.~\ref{fig:RVClo-encoded}).
So we go by cases on the possible control points on the right, for which purpose
we make the following observation.
In virtue of the conjunct $\Q_{pc}$ of $\Q$, we have that $\Q$ is equivalent to this 
disjunction over control points:
\[ 
\quant{\lor}{i,j}{i\in \labs(c)\union\{f\} \land 
                    j\in \labs(c')\union\{f'\} }{ \Q^{i,j} }
\]
where $\Q^{i,j}$ is defined to say control is at those points:
\[ \Q^{i,j}: \quad \Q\land\bothF{\tpc i\sep\tpc j} \]
Moreover, we have the equivalence
\[ \Q\land\leftF{\tpc i}
 \iff 
\quant{\lor}{j}{j\in \labs(c')\union\{f'\} }{ \Q^{i,j} } \]
Now we consider the left-only premises, by cases.
\begin{itemize}
\item $\spc m\sep \skipc : \rspec{\Q\land \leftF{\tpc k}\land\encode{L} }{\Q}$, where 
$\sub(k,c) = \lskipc{k}$ and $m=\fsuc(k,c,f)$.

In accord with the discussion above, we have that
$\Q\land \leftF{\tpc k}$ is equivalent to 
$\quant{\lor}{j}{}{ \Q^{k,j} }$ (omitting the range $j\in \labs(c')\union\{f'\}$), 
so the goal can be obtained by \rn{rConseq} 
from 
\[ \spc m\sep \skipc : \rspec{\quant{\lor}{j}{}{ \Q^{k,j} }\land\encode{L} }{\Q}
\]
In turn, this can be obtained by derived rule \rn{rDisjN} (Sect.~\ref{sec:derived}) from judgments 
\begin{equation}\label{eq:mkj}
\spc m\sep \skipc : \rspec{\Q^{k,j} \land\encode{L} }{\Q}
\end{equation}
for all $j$ (in range $j\in \labs(c')\union\{f'\}$ that we continue to omit). 

It remains to prove (\ref{eq:mkj}) for arbitrary $j$.  First, by \rn{AsgnSkip} and \rn{rConseq} 
(and $an(m,j)$ independent from $pc$)
we get
\begin{equation}\label{eq:mkjj}
\spc m\sep \skipc : \rspec{an(m,j)\land\rightF{\tpc j}}{an(m,j)\land\bothF{\tpc m\sep\tpc j}} 
\end{equation}
The disjunction over $j$ was introduced so that we can appeal to a VC,
specifically the lifted VC for $((k,j),(m,j))$.  
It is an instance of the first line in Fig.~\ref{fig:RVClo-encoded} and it says this is valid:
\[ \encode{L}\land \bothF{\tpc k\sep\tpc j}\land an(k,j) \imp an(m,j) \]
By definitions we have $\Q^{k,j} \land \encode{L} \imp \encode{L}\land\bothF{\tpc k\sep\tpc j}\land an(k,j) $ 
so we have 
\( \Q^{k,j}\land \encode{L} \imp an(m,j) \).
We also have \( \Q^{k,j}\land \encode{L} \imp \rightF{\tpc j} \), so we have
\( \Q^{k,j}\land \encode{L} \imp an(m,j)\land \rightF{\tpc j} \).
Using this with (\ref{eq:mkjj}), \rn{rConseq} yields
\[ 
\spc m\sep \skipc : \rspec{\Q^{k,j}\land\encode{L}}{an(m,j)\land\bothF{\tpc m\sep\tpc j}} 
\]
So by the second implication in
(\ref{eq:IanRel}) and \rn{rConseq} we get (\ref{eq:mkj}).

\item $\lassg{}{x}{e};\spc m \sep \skipc : \rspec{\Q\land \leftF{\tpc k}\land\encode{L} }{\Q}$, where 
$\sub(k,c) = \lassg{k}{x}{e}$ and $m = \fsuc(k,c,f)$.

As in the preceding case, we use \rn{rDisjN} to reduce the goal to 
the goals 
\begin{equation}\label{eq:Y}
\lassg{}{x}{e};\spc m \sep \skipc : \rspec{\Q^{k,j} \land \encode{L} }{\Q} 
\quad\mbox{for every $j$.}
\end{equation}
So fix $j$.
By \rn{AsgnSkip} and \rn{rConseq} (using that $\rightF{\tpc j}$ is independent from $pc$ on the left) we get
\[ \spc m\sep\skipc : \rspec{ an(m,j)\land\rightF{\tpc j} }{ an(m,j) \land \bothF{\tpc m\sep\tpc j}} \]
from which using (\ref{eq:IanRel}) we get  
\[ \spc m\sep\skipc : \rspec{ an(m,j)\land\rightF{\tpc j} }{ \Q } \]
By \rn{AsgnSkip}, using that $\rightF{\tpc j}$ is independent from $x$ because $pc$ is fresh, we can prove
\[ x:=e\sep\skipc : \rspec{ \subst{an(m,j)}{x|}{e|} \land \rightF{\tpc j}}{an(m,j)\land\rightF{\tpc j}} 
\]
By derived rule \rn{SeqSkip}, from the above we get 
\[ x:=e;\spc m\sep\skipc : \rspec{ \subst{an(m,j)}{x|}{e|} \land \rightF{\tpc j}}{\Q} 
\]
Now by the lifted VC for assignment on the left (Fig.~\ref{fig:RVClo-encoded} second row) we have
\[ \encode{L}\land\bothF{\tpc k\sep\tpc j}\land an(k,j) \imp \subst{an(m,j)}{x\sep}{e\sep}
\]
so by \rn{rConseq} we get 
\[ x:=e;\spc m\sep\skipc : \rspec{ \encode{L}\land\bothF{\tpc k\sep\tpc j}\land an(k,j) }{\Q} 
\]
As noted in the previous case, 
$\Q^{k,j} \land \encode{L} \imp \encode{L}\land\bothF{\tpc k\sep\tpc j}\land an(k,j) $.
Using this, another use of \rn{rConseq} yields (\ref{eq:Y}).

\item $\spc m \sep \skipc : \rspec{\Q\land \leftF{\tpc k\land e}\land\encode{L} }{\Q}$, where \\  
$\sub(k,c)$ is $\lifgc{k}{gcs_0}$ and $m=\lab(d)$ with $e\gcto d$ in $gcs_0$.

This follows by \rn{rDisjN} from the judgments (for all $j$)
\begin{equation}\label{eq:BB}
\spc m \sep \skipc : \rspec{ \Q^{k,j}\land\leftF{e}\land\encode{L} }{ \Q }
\end{equation}
We get 
$\spc m \sep \skipc : \rspec{an(m,j)}{an(m,j)\land\bothF{\tpc m\sep\tpc j}}$ 
using \rn{AsgnSkip} and \rn{rConseq}.
Then another use of \rn{rConseq} with the second implication in (\ref{eq:IanRel}) yields
$ \spc m \sep \skipc : \rspec{an(m,j)}{\Q} $.
We have $\encode{L}\land\bothF{\tpc k\sep\tpc j}\land an(k,j)\land\leftF{e} \imp an(m,j)$ by lifted VC.
We have $Q^{k,j}\land\leftF{e}\land \encode{L}\imp \encode{L}\land\bothF{\tpc k\sep\tpc j}\land an(k,j) \land\leftF{e}$ by def $Q^{k,j}$ and (\ref{eq:IanRel}).
So we have $Q^{k,j}\land\leftF{e}\land \encode{L}\imp an(m,j)$
and by \rn{rConseq} we get (\ref{eq:BB}).
 
\item $\spc m \sep \skipc : \rspec{\Q\land \leftF{\tpc k\land e}\land\encode{L} }{\Q}$, where \\
$\sub(k,c)$ is $\ldogc{k}{gcs_0}$ and $m=\lab(d)$ with $e\gcto d$ in $gcs_0$.

\item $\spc m\sep \skipc : \rspec{\Q\land \leftF{\tpc k\land\neg\enab(gcs_0)}\land\encode{L} }{\Q}$, where \\
$\sub(k,c) = \ldogc{k}{gcs_0}$ and $m=\fsuc(k,c,f)$.

\end{itemize}

\paragraph{Right-only cases}

These are symmetric with the left-only cases and omitted.

\paragraph{Finishing the proof}

Having proved the premises and the side condition (\ref{eq:side}), rule \rn{dDo} yields 
(\ref{eq:C}).   The remaining steps are similar to corresponding steps in the proof of
the Floyd completeness Theorem~\ref{thm:FloydComplete} and we spell them out.

Using Lemma~\ref{lem:nf-enab-labs} twice, and the definition of $\Q$,
we have 
\[ \Q\land\neg\leftF{\enab(gcs)}\land\neg\rightF{\enab(gcs')} 
\imp \Q\land\bothF{\tpc f\sep\tpc f'} \]
So, using the first implication in (\ref{eq:IanRel}) and assumption (c) (which says $an(f,f')$ is $\T$), we can use
\rn{rConseq} with (\ref{eq:C}) to get
\[
 \ldogc{}{gcs} \sep \ldogc{}{gcs'} : \rspec{\Q}{\T}
\]
Using the second implication in (\ref{eq:IanRel}) and assumption (c) (which says $an(n,n')$ is $\S$
since $n,n'$ are the initial control points),
we have $\bothF{\tpc n\sep \tpc n'} \land \S \imp \Q$,
so by \rn{rConseq} we get 
\[
 \ldogc{}{gcs} \sep \ldogc{}{gcs'} : \rspec{\bothF{\tpc n\sep \tpc n'} \land \S}{\T}
\]
By freshness assumption for $pc$, by (\ref{eq:setR})
we have a proof of $\spc n\sep \spc n': \rspec{\S}{\bothF{\tpc n\sep\tpc n'} \land \S}$.
So by \rn{rSeq} we get
\[
 \spc n;\ldogc{}{gcs} \sep \spc n';\ldogc{}{gcs'} : \rspec{\S}{\T}
\]
Now using rule \rn{rRewrite} with the 
equivalences (\ref{eq:nfEquLR}) we get 
\[
\addPC(c); \spc f \sep \addPC(c'); \spc f' : \rspec{\S}{\T}
\]
By freshness of $pc$ and Lemma~\ref{lem:addPC}, 
it has the ghost property for both 
$\addPC(c); \spc f$ and $\addPC(c'); \spc f'$,
and does not occur in $\S$ or $\T$, so by rule \rn{rGhost} we get 
\begin{equation}\label{eq:H}
\erase(pc,\addPC(c); \spc f) \sep \erase(\addPC(c'); \spc f') : \rspec{\S}{\T}
\end{equation}
By definition of $\erase$, we have
$\erase(pc,\addPC(c); \spc f) = \erase(pc,\addPC(c)); \skipc$ 
and also 
\\
$\erase(\addPC(c'); \spc f') = \erase(pc,\addPC(c')); \skipc$. 
So using Lemma~\ref{lem:addPC} together with the general law $c;\skipc\kateq c$
and transitivity of $\kateq$, we have 
\[ \erase(pc, \addPC(c); \spc f) \kateq c \qquad\mbox{and}\qquad 
   \erase(pc, \addPC(c'); \spc f') \kateq c' \]
Using these equivalences with rule \rn{rRewrite}, from (\ref{eq:H}) we obtain the result:
$c  \sep c'  : \rspec{\S}{\T}$. 
\end{proof}

\section{Cook completeness without sequential product}\label{sec:cook}

Applied to HL, the usual notion of logical completeness would say
any true correctness judgment $c:\spec{P}{Q}$ can be derived using the rules. 
The rules involve an ancillary judgment, entailment between assertions,
for which one could adopt a deductive system.  But for a reasonably expressive assertion language there can be no complete deductive system.
Cook~\cite{Cook78} showed that HL is complete, in a \emph{relative} sense:
any true correctness judgment can be derived, given an oracle for entailment.
Cook completeness also requires \emph{expressiveness} of the assertion language, meaning
that weakest preconditions can be expressed, for whatever types of data are manipulated
by the program~\cite{AptOld3}.  
Using shallow embedding of assertions, as we do, is one way to set those issues aside

In this section we revisit Cook completeness in light of our results.
The discussion centers on the following rule, 
which assumes a function $\fdot$ that renames all variables 
in a program to fresh names (with some specific decoration like $'$) and $^\eplus$ that 
converts store relations to predicates on stores with the renamed+original variables.
\begin{mathpar}
\inferrule*[left=SeqProd\quad]
{ d' = \fdot(d) \\ c;d' : \spec{\R^\eplus}{\S^\eplus} }
{ c\sep d : \rspec{\R}{\S} }
\end{mathpar}
Francez~\cite{Francez83} observes that sequential product is complete relative to HL,
but does not work it out in detail.
Beringer~\cite{Beringer11} proves semantic completeness of sequential product
and leverages it to derive rules including two conditionally aligned loop rules.
Barthe et al.~\cite{BartheDArgenioRezk} emphasize the completeness of the sequential product rule 
for the special case of relating a program to itself; it is clear that it holds generally as noted in~\cite{BartheCK-FM11}.
A number of other works have shown Cook completeness of a relational Hoare 
logic based on (their versions of) the sequential product rule 
(e.g.,~\cite{SousaD2016,BartheGHS17,WangDilligLahiriCook}).

Our RHL+ does not include \rn{SeqProd}, nor does it make any other use of unary correctness judgments.  It does enjoy Cook completeness, 
and this is a direct consequence of alignment completeness.
Before spelling that out, we tell the corresponding story for unary correctness.


\begin{corollary}[Cook completeness of HL+]\label{cor:CookComplete}
\upshape
Suppose $\models c: \spec{P}{Q}$
and suppose $\okf(c,f)$.
Then there is a proof of $c: \spec{P}{Q}$ in HL+.
\end{corollary}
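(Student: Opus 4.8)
The plan is to obtain this as an immediate consequence of the Floyd completeness Theorem~\ref{thm:FloydComplete} together with the semantic completeness of the inductive assertion method. First I would use Lemma~\ref{lem:autConsistent}: since $\okf(c,f)$ holds, the hypothesis $\models c:\spec{P}{Q}$ is equivalent to $\aut(c,f)\models\spec{P}{Q}$, so the automaton of $c$ for $f$ satisfies the spec. Next I would invoke Proposition~\ref{prop:iamcomplete}, the semantic completeness of IAM, to produce a valid annotation $an$ of $\aut(c,f)$ for $\spec{P}{Q}$ (concretely, $an(\init)=P$, $an(\fin)=Q$, and $an(n)$ the strongest invariant of traces from $P$ at every other control point, though we need only its existence). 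Finally I would apply Theorem~\ref{thm:FloydComplete} to this valid annotation, which yields a proof of $c:\spec{P}{Q}$ in HL+.

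I expect there to be no real obstacle here; the work has all been done in the earlier sections, and the corollary is essentially a repackaging. The only points worth remarking on are (i) that the side condition $\okf(c,f)$ is exactly what both Lemma~\ref{lem:autConsistent} and Theorem~\ref{thm:FloydComplete} require, and it is harmless since labels do not affect semantics and can always be renamed to be unique and to avoid $f$; and (ii) that, because assertions are shallowly embedded, there is no separate need to assume expressiveness of an assertion language or to formulate completeness relative to a proof system for entailments — the entailments appearing in \rn{Conseq} are simply valid set inclusions, as discussed at the start of Sect.~\ref{sec:cook}.

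If one wishes to highlight the contrast with prior Cook-completeness arguments, I would also note in the proof (or in a following remark) that the HL+ derivation produced this way makes no use of a sequential product rule or of any auxiliary construction beyond the normal-form rewriting internal to Theorem~\ref{thm:FloydComplete}, and that it uses only assertions derived in simple ways from those of $an$ (boolean combinations of the $an(i)$, conjunctions with program tests and with equalities $pc=n$, and substitution instances for assignments). That observation is what makes the result more than a restatement of classical Cook completeness, and it sets up the parallel statement for RHL+ that follows from Theorem~\ref{thm:acomplete} and Corollary~\ref{cor:relIAMcomplete}.
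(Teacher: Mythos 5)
Your proposal is correct and follows exactly the paper's own three-step argument: Lemma~\ref{lem:autConsistent} to pass to $\aut(c,f)\models\spec{P}{Q}$, Proposition~\ref{prop:iamcomplete} to obtain a valid annotation, and Theorem~\ref{thm:FloydComplete} to convert it into an HL+ proof. The additional remarks on the harmlessness of $\okf(c,f)$ and on the absence of any sequential-product machinery also match the paper's surrounding discussion.
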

\begin{proof}
Suppose $\models c: \spec{P}{Q}$.
By Lemma~\ref{lem:autConsistent} we have $\aut(c,f) \models \spec{P}{Q}$.
By Prop.~\ref{prop:iamcomplete} there is a valid annotation $an$
of $\aut(c,f)$ for $\spec{P}{Q}$.
So by the Floyd completeness Theorem~\ref{thm:FloydComplete} there is a proof in HL+
of $c: \spec{P}{Q}$.
\end{proof}
The assumption $\okf(c,f)$ can be ignored: 
wlog we can always relabel the program.

\begin{theorem}[Cook completeness of RHL+]
\upshape
Suppose $\models c\sep c': \rspec{\S}{\T}$.
Then there is a proof of $c\sep c': \rspec{\S}{\T}$ in RHL+.
\end{theorem}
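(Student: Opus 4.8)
The plan is to derive this immediately from alignment completeness (Theorem~\ref{thm:acomplete}) together with semantic completeness of adequate alignment automata (Corollary~\ref{cor:relIAMcomplete}), exactly paralleling the route from Theorem~\ref{thm:FloydComplete} to Corollary~\ref{cor:CookComplete} in the unary case, and in particular without any appeal to unary correctness judgments or to a sequential-product rule.

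First I would relabel $c$ and $c'$ and pick fresh exit labels $f,f'$ so that $\okf(c,f)$ and $\okf(c',f')$; since relabeling does not change denotations, $\models c\sep c':\rspec{\S}{\T}$ is preserved, and hypothesis (a) of Theorem~\ref{thm:acomplete} is discharged. By Lemma~\ref{lem:autConsistent}, the hypothesis gives $\aut(c,f),\aut(c',f')\models\rspec{\S}{\T}$. Corollary~\ref{cor:relIAMcomplete} then supplies alignment conditions $L,R,J$ and an annotation $an$ of $\aprod(\aut(c,f),\aut(c',f'),L,R,J)$ for $\spec{\S}{\T}$ such that the alignment automaton is manifestly $\S$-adequate and $an$ is valid. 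Recall that the witness constructed in the proof of that corollary is the ``worst possible'' one: $L$, $R$, and $J$ are each taken to be the set of all product states except $[\fin|\fin']$.

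To invoke Theorem~\ref{thm:acomplete} I need hypotheses (b) and (c). Validity of $an$ --- hypothesis (b) --- is immediate. For (c), which demands $\hat{an}(i,j)\imp L\lor R\lor J\lor[\fin|\fin']$ at every control-point pair, the point is simply that for this choice $L\lor R\lor J\lor[\fin|\fin']$ is the set of \emph{all} product states, so (c) holds trivially. The one wrinkle is Assumption~1 (all alignment conditions are live): the $L,R,J$ above need not be live for automata built from programs --- a state whose left control is $\fin$ but whose right control differs from $\fin'$ lies in $L$ yet has no left successor --- so I would first pass to the live restrictions $L^\nbl,R^\nbl,J^\nbl$ of Lemma~\ref{lem:LRJmin}. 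That lemma gives an identical alignment automaton, so $an$ is still a valid annotation of it; and because an automaton of a program is stuck only at its exit label (Lemma~\ref{lem:autLive}), a short computation shows $L^\nbl\lor R^\nbl\lor J^\nbl\lor[\fin|\fin']$ is still all of the product state space, so (c) survives the restriction. Now Theorem~\ref{thm:acomplete} applied to $\S,\T,c,c',f,f',L^\nbl,R^\nbl,J^\nbl,an$ yields a proof of $c\sep c':\rspec{\S}{\T}$ in RHL+.

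There is no genuine obstacle here; all the work resides in Theorem~\ref{thm:acomplete} and in the already-established semantic completeness results. The only step calling for a little care is the bookkeeping around the liveness convention, and the observation that the degenerate, all-covering alignment automaton of Corollary~\ref{cor:relIAMcomplete} --- legitimate precisely because it carries the maximal number of (largely vacuous, \emph{false}-annotated) verification conditions --- satisfies condition (c) of Theorem~\ref{thm:acomplete}, which is what lets the match between adequacy and the side condition of \rn{dDo} do the heavy lifting.
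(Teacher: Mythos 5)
Your proposal is correct and follows essentially the same route as the paper: reduce to $\aut(c,f),\aut(c',f')\models\rspec{\S}{\T}$ via Lemma~\ref{lem:autConsistent}, invoke Corollary~\ref{cor:relIAMcomplete} with the all-covering choice $L=R=J=\neg[\fin|\fin']$ so that condition (c) of Theorem~\ref{thm:acomplete} holds trivially, and conclude by alignment completeness. Your extra step of passing to the live restrictions $L^\nbl,R^\nbl,J^\nbl$ via Lemma~\ref{lem:LRJmin} and rechecking that their union with $[\fin|\fin']$ still covers all product states is a welcome refinement: the paper disposes of liveness with only a parenthetical remark in Corollary~\ref{cor:relIAMcomplete} (subtracting $[\fin|\fin']$ alone does not in fact make $L$, $R$, $J$ live, since e.g.\ a state at $(\fin,n')$ with $n'\neq\fin'$ remains in $L$), so your bookkeeping closes a small gap rather than introducing a genuinely different argument.
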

\begin{proof}   
Suppose $\models c\sep c': \rspec{\S}{\T}$, and assume wlog
that $\okf(c,f)$ and $\okf(c',f')$. 
Using Lemma~\ref{lem:autConsistent} 
we have $\aut(c,f), \aut(c',f') \models \rspec{\S}{\T}$.
By Cor.~\ref{cor:relIAMcomplete} there are $L,R,J,an$ such that 
$an$ is a valid annotation of $\aprod(\aut(c,f),\aut(c',f'),L,R,J)$ for $\spec{\S}{\T}$ and 
$\aprod(\aut(c,f),\aut(c',f'),L,R,J)$ is manifestly $\S$-adequate.
By inspection of the proof of Cor.~\ref{cor:relIAMcomplete} we can choose 
$L$, $R$, $J$ to all be simply $\neg[\fin|\fin']$. 
So we have for any $i,j$ that 
$\hat{an}(i,j)\imp L \lor R \lor J \lor [\fin|\fin']$ 
which is condition (c) of Theorem~\ref{thm:acomplete}.
Conditions (a) and (b) hold as well, so by the theorem we 
get a proof of 
$c\sep c': \rspec{\S}{\T}$ in RHL+.
\end{proof}

Although RHL+ does not include unary judgments or the sequential product rule,
it effectively embeds HL by way of the one-sided rules like \rn{AsgnSkip}
and (derived rule) \rn{SeqSkip}.
Moreover we get the effect of \rn{SeqProd} with the rule
\[
\inferrule{
c\sep \skipc : \rspec{\P}{\Q} \\
\skipc\sep d : \rspec{\Q}{\R} 
}{
c\sep d : \rspec{\P}{\R}
}
\]
which can be derived using \rn{dSeq} and \rn{rRewrite} with
$ c \kateq \skipc;c$ and $d \kateq d;\skipc$.

\section{Conclusion}\label{sec:discuss}

We augmented a collection of relational Hoare logic rules with a straightforward
rule for elimination of ghost variables, and a rule for deriving one correctness
judgment from another by rewriting the commands involved to equivalent ones.
The chosen notion of equivalence is that of Kleene algebra with tests (KAT),
allowing for the use of hypotheses to axiomatize the meaning of primitive
commands and expressions when reasoning with KAT.  Using a small set of
hypotheses, we prove that any command is equivalent to one in automaton normal
form, once it is instrumented with assignments to a ``program counter''
variable.  On this basis, we show that any correctness judgment proved by the
inductive assertion method using an alignment automaton can be turned into a
proof in RHL+ using essentially the same assertions.  This shows that RHL+ is
alignment complete.  As a consequence, we obtain Cook completeness for RHL+.  To
the best of our knowledge, this is the first Cook completeness result for a
relational Hoare logic that does not rely on unary judgments and a sequential
product (self-composition) rule.

The key idea is that an program is equivalent to one structured similarly to an 
automaton, so an automaton-based annotation can be adapted correctness judgments
in a deductive proof.  While KAT provides a well understood notion of equivalence,
other sound equivalences could also serve as basis for alignment complete logics.

Alignment serves to enable use of relatively simple assertions,
but our results are not restricted to any particular assertion language.
In Remark~\ref{rem:moduli} we point out that the example of Sect.~\ref{sec:condEg}
can be adapted to one that requires non-linear arithemetic in alignment conditions,
but nonetheless admits a proof with simple assertions that do not express full functional correctness.

In practice, many verification tools work directly with verification conditions, although some interactive tools are directly based on a Hoare logic (e.g.,~\cite{cao2018vst}) and modular tools implicitly implement linking rules for modular verification of procedures and their callers.  Deductive systems for verification are nonetheless interesting for various reasons.  One practical use for deductive rules is for independently checkable certificates to represent proofs that may be found by other means.  The alignment completeness result shows that, as in the case of unary Hoare logic, a small collection of mostly syntax-directed rules suffices to represent proofs of relational correctness judgments.  

Imperative commands are at the core of most practical programming languages but such languages also
include heap operations and other features that can have runtime faults, for which one should augment partial correctness with the avoidance of fault.
For fault-avoiding relational correctness, the notion of adequacy needs to be refined to ensure that even divergent traces are covered;\footnote{See for example
the adequacy Theorem~7.11 and it's supporting lemma~C.9 in~\cite{BNNN19}.}
this is spelled out by Nagasamudram and Naumann~\cite{NagasamudramN21}.
To extend our alignment complete logic to such a language, the command equivalence needs to be revisited.  Although KAT is sound for trace models, the standard translation of loops 
to KAT discards their divergent executions, see Eqn.~(\ref{eq:def:mkt}). 
One approach would be to replace the translation $\mkt{-}$, in the definition of $\kateq$, by a translation that maps each command to a KAT term that describes all prefixes of its traces.  (This is easy to define.) If two commands agree on all traces up to any fault, then they have the same faults, so the rewrite rules would still be sound.  We conjecture that Theorem~\ref{thm:normEquiv} 
holds for this definition of $\kateq$.  Note that the approach still uses KAT reasoning and does not require to model failure in the command equivalence.  (That could be done, however,
using FailKAT~\cite{Mamouras17}.)


\bibliographystyle{ACM-Reference-Format}
\bibliography{biblio}

\vfill
\pagebreak
\appendix
\section{Appendix: Technical details}\label{app:details}

Definition of \graybox{$\enab(gcs)$} (an expression for the condition under which guarded command list $gcs$ is enabled):
\[
\begin{array}{lcl}
\enab(e\gcto c) & = & e \\
\enab(e\gcto c \gcsep gcs) & = & e \lor \enab(gcs) 
\end{array}\]
Definition of \graybox{$\lab(c)$}:
\[
\begin{array}{lcl}
\lab(\lskipc{n})            & = & n \\
\lab(\lassg{n}{x}{e})       & = & n \\
\lab(c;d)                 & = & \lab(c) \\
\lab(\lifgc{n}{gcs})   & = & n \\
\lab(\ldogc{n}{gcs})    & = & n 
\end{array}
\]
Definition of \graybox{$\labs(c)$}:
\[
\begin{array}{lcl}
\labs(\lskipc{n})           & = & \{ n \}\\
\labs(\lassg{n}{x}{e})      & = &\{ n\} \\
\labs(c;d)                & = & \labs(c)\union \labs(d) \\
\labs(\lifgc{n}{gcs})  & = & \{n\} \union \{ m \mid e\gcto c \mbox{ is in }gcs \land
                                             m\in labs(c) \} \\
\labs(\ldogc{n}{gcs})  & = & \{n\} \union \{ m \mid e\gcto c \mbox{ is in }gcs \land
                                             m\in labs(c) \} 
\end{array}
\]
Definition of \graybox{$\ok(c)$} which says $c$ has unique labels, all of which are positive.
\[
\begin{array}{lcl}
\ok(\lskipc{n})            & = & n >  0 \\
\ok(\lassg{n}{x}{e})       & = & n >  0 \\
\ok(c;d)                 & = & \labs(c)\intersect \labs(d)=\emptyset\land \ok(c)\land \ok(d) \\
\ok(\lifgc{n}{gcs})    & = & n > 0 \land n \notin \labs(gcs) \land \okg(gcs) \\
\ok(\ldogc{n}{gcs})    & = & n > 0 \land n \notin \labs(gcs) \land \okg(gcs) \\
\okg(e\gcto c) & = & \ok(c) \\
\okg(e\gcto c \gcsep gcs) & = & \ok(c)\land \labs(c)\intersect \labs(gcs)=\emptyset \land
             \okg(gcs)
\end{array}
\]
Definition of \graybox{$\sub(m,c)$}, the sub-command of $c$ at $m$,
assuming $\ok(c)$, and $m\in \labs(c)$.
\[
\begin{array}{lcl}
\sub(m,\lskipc{m})        & = & \lskipc{m} \\
\sub(m,\lassg{m}{x}{e})   & = & \lassg{m}{x}{e} \\
\sub(m,c;d)               & = & \sub(m,c)\mbox{ , if $m\in \labs(c)$}  \\
                          & = & \sub(m,d)\mbox{ , otherwise} \\
\sub(m, \lifgc{n}{gcs}) & = &  \lifgc{n}{gcs} \mbox{ , if $n=m$}  \\
                           & = &  \subg(m,gcs) \mbox{ , otherwise}  \\
\sub(m, \ldogc{n}{gcs}) & = &  \ldogc{n}{gcs} \mbox{ , if $n=m$}  \\
                           & = &  \subg(m,gcs) \mbox{ , otherwise}  \\

\subg(m, e\gcto c) & = & \sub(m,c) \\
\subg(m, e\gcto c \gcsep gcs) & = &\sub(m,c) \mbox{ , if $m\in \labs(c)$ }\\
                 & = & \subg(m,gcs) \mbox{ , otherwise } 
\end{array}
\]
Definition of \graybox{$\erase(x,c)$}, the command $c$ with each assignment to $x$ replaced by $\skipc$.
\[
\begin{array}{l@{\;}c@{\;}l}
\erase(x,\lskipc{n})          & = & \lskipc{n} \\
\erase(x,\lassg{n}{x}{e})     & = & \lskipc{n} \\
\erase(x,\lassg{n}{y}{e})     & = & \lassg{n}{y}{e} \quad\mbox{for } y\nequiv x\\
\erase(x,\lifgc{n}{gcs})      & = & \lifgc{n}{\map(\erasegc,gcs)} \\
\erase(x,\ldogc{n}{gcs})      & = & \ldogc{n}{\map(\erasegc,gcs)} \\[1ex]
\erasegc(x,e\gcto c)            & = & e \gcto erase(x,c)
\end{array}
\]

\medskip

\newcommand{\bodies}{\mathconst{bodies}}
\newcommand{\concat}{\mathconst{concat}}

\begin{figure*}
\begin{small}
\begin{mathpar}

\inferrule{}{
\norm{\lskipc{n}}{f}{
(\tpc n\gcto \spc f)}
}

\inferrule{}{
\norm{\lassg{n}{x}{e}}{f}{ 
(\tpc  n \gcto \lassg{}{x}{e}; \spc  f)}
}

\inferrule{
\norm{c}{\lab(d)}{gcs_0} \\
\norm{d}{f}{gcs_1}
}{
\norm{c;d}{f}{gcs_0\gcsep gcs_1}
}

\inferrule{
\norm{\bodies(gcs)}{f}{\mathit{nfs}}
}{ 
\norm{ \lifgc{n}{gcs} }{f}{ 
  map((\lambda(e\gcto c)\,.\, \tpc n\land e\gcto \spc\lab(c)),\, gcs)
  \gcsep \concat(\mathit{nfs})
}}

\inferrule{
\norm{\bodies(gcs)}{f}{\mathit{nfs}}
}{ 
\norm{ \ldogc{n}{gcs} }{f}{ 
  map((\lambda(e\gcto c)\,.\, \tpc n\land e\gcto \spc\lab(c)),\, gcs)
  \gcsep \tpc n\land\neg\enab(gcs) \gcto \spc f
  \gcsep \concat(\mathit{nfs})
}}

\inferrule{ \norm{c}{f}{gcs} }{ \norm{[c]}{f}{[gcs]} }

\inferrule{ \norm{c}{f}{gcs} \\ \norm{cs}{f}{\mathit{nfs}} }
{\norm{c::cs}{gcs::\mathit{nfs} }}

\end{mathpar}
\end{small}
\caption{Normal form bodies.}\label{fig:norm:general}
\end{figure*}

Fig.~\ref{fig:norm:general} defines the normal form relation.
It includes the general cases for if- and do-commands,
which subsume the special cases given in Fig.~\ref{fig:norm}.
The definition of normal form bodies for commands is mutually inductive 
with the definition of normal form body list for nonempty command lists,
which is given by the last two rules in Fig.~\ref{fig:norm:general}.
The rules use square brackets for singleton list and $::$ for list cons.
They use $\bodies(gcs)$ and $\concat(nfs)$ defined by
\[\begin{array}{lll}
\bodies(e\gcto c) &=& [c] \\
\bodies(e\gcto c\gcsep gcs) &=& c::\bodies(gcs)\\
\concat([gcs]) &=& gcs\\
\concat(gcs:nfs) &=& gcs \gcsep \concat(nfs)
\end{array}\]












\section{Appendix: Additional examples}

\newcommand{\ccol}{{\, : \;}}
\newcommand{\lassgX}[3]{#2:=#3}

\begin{figure}
\begin{small}
  \begin{tikzpicture}
    [->,every state/.append style={fill=gray!10},
initial text=$ $,
auto,node distance=2.5cm,line width=0.1mm,inner sep=1pt,
scale=0.70,transform shape]
  \node[state] (lck1) {$1,1$};
  \node[state, right of=lck1] (lck2) {$2,2$};
  \node[state, right of=lck2, xshift=-0.1cm] (lck3) {$3,3$};
  \node[state, right of=lck3, xshift=5em] (lck4) {$4,4$};
  \node[state, right of=lck4, xshift=0.5cm] (lck5) {$5,5$};
  \node[state, below of=lck5, yshift=1cm] (lck6) {$6,6$};
  \node[state, below of=lck3, xshift=-1.5cm] (ro4) {$3,4$};
  \node[state, right of=ro4] (ro5) {$3,5$};
  \draw
  (lck1) edge node[above]{$\lassgX{}{y}{x}$}
              node[below]{$\lassgX{}{y'}{x'}$} (lck2)
  
  (lck2) edge node[above,yshift=0.1cm]{$\lassgX{}{z}{1}$}
              node[below]{$\lassgX{}{z'}{1}$} (lck3)

  (lck3) edge node[above]{$y\neq 0$}
              node[below]{$y'\neq 0 \land \graybox{$y\nless y'$}$} (lck4)
  
  (lck4) edge node[above]{$\lassgX{}{z}{z*y}$} node[below]{$\lassgX{}{z'}{z'*y'}$} (lck5)



  (lck3) edge[bend right=10] node[left,yshift=-0.2cm,xshift=0.2cm]{$y'\neq0 \land \graybox{$y<y'$}$} (ro4)

  (ro4) edge[bend right=10] node[below]{$\lassgX{}{z'}{z'*y'}$} (ro5)


  (ro5) edge[bend right=10] node[right,xshift=-0.5cm,yshift=-0.25cm]{$\lassgX{}{y'}{y'-1}$} (lck3)

  (lck3) edge[bend right=20] node[above,xshift=1.2cm,yshift=-0.18cm]{$y=0$}
                            node[below,xshift=1.2cm,yshift=-0.18cm]{$y'=0$} (lck6)
  (lck5) edge[bend right=40]  node[above]{$\lassgX{}{y}{y-1}$}
                              node[below]{$\lassgX{}{y'}{y'-1}$} (lck3)
  ;
  \end{tikzpicture}
\end{small}
\vspace*{-2ex}
\caption{Conditionally aligned product for \\
$c1:\quad  y:= x; z:= 1; 
\ldogc{}{y \neq 0 \gcto  z:= z*y; y:= y-1 }$}\label{fig:aut3}
\end{figure}

\begin{example}
This shows that invariance of $L\lor R\lor J\lor [\fin|\fin']$ is not necessary for adequacy (apropos Lemma~\ref{lem:anLRJF}).
Command $c1$ is taken from~\cite{NagasamudramN21} where it is named $c0$.
\[ c1: \quad y:= x; z:= 1; 
\ldogc{}{y \neq 0 \gcto  z:= z*y; y:= y-1 } \]
Fig.~\ref{fig:aut3} shows an alignment automaton for proving $c1 \sep c1 \ccol \rspec{0\leq x\leq x'}{z\leq z'}$.

Omitting the nodes $(3,4)$ and $(3,5)$ and their edges, 
and the test $\nless$, gives an automaton suitable for proving 
$c1 \sep c1 \ccol \rspec{0\leq x = x'}{z = z'}$.

We can also keep the nodes, and the
edges $(3,3)\to(3,4)$ and $(3,4)\to(3,5)$ but omit the edge
$(3,5)\to(3,3)$,
and omit both tests $y<y$ and $y\nless y'$.
This gives a product that is $(0\leq x=x')$-adequate 
but also has gratuitous paths that get stuck.
By annotating $(3,4)$ and $(3,5)$ true, we get a valid annotation.
This product does not have $R$ disjoint from $J$
(they overlap at $(3,3)$),
and at $(3,5)$ we do not have $L\lor R\lor J\lor [\fin|\fin']$.
\qed\end{example}

\end{document}